\documentclass[11pt]{amsart}
\ifdefined\pdfminorversion\pdfminorversion=7\fi
\usepackage[margin=1.1in]{geometry}
\usepackage[T1]{fontenc}
\usepackage[utf8]{inputenc}
\usepackage{lmodern}
\usepackage{microtype}
\usepackage{amsmath,amssymb,amsfonts,amsthm,mathtools}
\usepackage{bm}
\usepackage{booktabs,tabularx,array}
\usepackage{graphicx}
\graphicspath{{figures/}}
\usepackage{float}
\usepackage{enumitem}
\usepackage{xcolor}
\usepackage[colorlinks=true,linkcolor=blue!50!black,citecolor=blue!50!black,urlcolor=blue!50!black]{hyperref}
\usepackage[nameinlink,noabbrev]{cleveref}
\usepackage{aliascnt}
\allowdisplaybreaks
\newtheorem{theorem}{Theorem}[section]
\newaliascnt{proposition}{theorem}
\newtheorem{proposition}[proposition]{Proposition}
\aliascntresetthe{proposition}
\newaliascnt{corollary}{theorem}
\newtheorem{corollary}[corollary]{Corollary}
\aliascntresetthe{corollary}
\newaliascnt{lemma}{theorem}
\newtheorem{lemma}[lemma]{Lemma}
\aliascntresetthe{lemma}
\theoremstyle{definition}
\newaliascnt{definition}{theorem}
\newtheorem{definition}[definition]{Definition}
\aliascntresetthe{definition}
\newaliascnt{assumption}{theorem}
\newtheorem{assumption}[assumption]{Assumption}
\aliascntresetthe{assumption}
\newaliascnt{remark}{theorem}
\newtheorem{remark}[remark]{Remark}
\aliascntresetthe{remark}
\newaliascnt{example}{theorem}
\newtheorem{example}[example]{Example}
\aliascntresetthe{example}
\newtheorem*{uremark}{Remark}

\crefname{theorem}{theorem}{theorems}
\Crefname{theorem}{Theorem}{Theorems}
\crefname{proposition}{proposition}{propositions}
\Crefname{proposition}{Proposition}{Propositions}
\crefname{corollary}{corollary}{corollaries}
\Crefname{corollary}{Corollary}{Corollaries}
\crefname{lemma}{lemma}{lemmas}
\Crefname{lemma}{Lemma}{Lemmas}
\crefname{definition}{definition}{definitions}
\Crefname{definition}{Definition}{Definitions}
\crefname{assumption}{assumption}{assumptions}
\Crefname{assumption}{Assumption}{Assumptions}
\crefname{remark}{remark}{remarks}
\Crefname{remark}{Remark}{Remarks}
\crefname{example}{example}{examples}
\Crefname{example}{Example}{Examples}

\newcommand{\R}{\mathbb{R}}
\newcommand{\E}{\mathbb{E}}
\newcommand{\Prob}{\mathbb{P}}
\newcommand{\Law}{\mathcal{L}}
\newcommand{\1}{\mathbf{1}}
\newcommand{\Wone}{\mathcal{W}_1}
\newcommand{\dd}{\mathrm{d}}
\newcommand{\supp}{\mathrm{supp}}

\newcommand{\ind}{\mathrm{ind}}

\newcommand{\norm}[1]{\left\lVert #1 \right\rVert}
\newcommand{\abs}[1]{\left\lvert #1 \right\rvert}

\title[Low-rank feedback reductions and directed-kernel limits]{Low-rank and graphon limits for dynamic threshold distress contagion in heterogeneous financial networks}
\author{Pengbin Feng}\thanks{An earlier version of the model appeared in the author's Ph.D. dissertation \cite{feng2021thesis}.}
\date{}
\hypersetup{
  pdftitle={Dynamic distress contagion in financial networks: low-rank reductions and graphon limits},
  pdfauthor={Pengbin Feng},
  pdfsubject={Low-rank reductions and directed-kernel limits for dynamic threshold distress contagion},
  pdfkeywords={systemic risk, financial networks, threshold distress contagion, occupation-time distress, directed kernels, nonlinear feedback systems, low-rank approximation}
}

\begin{document}

\begin{abstract}
We study distress contagion in financial networks with weighted directed exposures. Losses accumulate while counterparties are below a threshold, allowing institutions to recover. A representation with \(K\) exposure factors reduces the \(N\)-institution dynamics exactly to \(K\) feedback coordinates. For bounded Lipschitz losses, Wasserstein stability of the reduced system and aligned \(L^1\) stability of the directed-kernel equation give error bounds separating population sampling from kernel approximation. For the hard threshold, every bounded nonnegative kernel has a greatest cumulative-distress solution, selected by vanishing positive-side regularization. An Osgood condition on the mass near the threshold along one reference path yields uniqueness, stability, deterministic approximation bounds, and convergence under sampled latent labels. Rank-one examples show that this condition is sharp for uniqueness criteria based only on threshold-layer mass. A branchwise condition verifies the required regularity from the initial profile and kernel. Numerical examples examine low-rank reduction, approximation error, and solution selection; an application to disclosed EBA sovereign holdings constructs exposure factors and evaluates sensitivity bounds.

\end{abstract}

\subjclass[2020]{60K35, 60J60, 91G40, 91G45, 05C82}
\keywords{systemic risk, financial networks, threshold distress contagion, occupation-time distress, directed kernels, nonlinear feedback systems, low-rank approximation}

\maketitle
\bigskip

\section{Introduction}

Financial networks combine heterogeneous exposures with a few shared channels of stress transmission. Static clearing and cascade models retain bilateral structure and characterize terminal losses \cite{eisenberg2001,cont2010,amini2016,elliott2014,glasserman2015}. Dynamic approaches include diffusion models of interbank borrowing and lending \cite{carmona2015} and heterogeneous contagion driven by default times \cite{nadtochiy2020,feinstein2021}. Payment systems, tiered interbank markets, central clearing, and bank--NBFI networks suggest an intermediate approach: retain differences between institutions while representing exposures through a small number of factors \cite{boss2004,craig2014,aldasoro2017,veraart2025,bcbs2025nbfi}.

Our reference model is the homogeneous occupation-time system
\begin{equation}
\label{eq:intro-meanfield}
X_t^{i,N}=x_i^N+\mu t-\frac1N\sum_{j=1}^N\int_0^t\ell(X_s^{j,N})\,\dd s,
\qquad 1\le i\le N,
\end{equation}
where \(\ell\) is a bounded distress-cost function. For the hard rule \(\ell(x)=\mathbf 1_{\{x\le0\}}\), losses accumulate during each visit below zero, and a positive drift can permit recovery. In the large-population equation, the feedback is the scalar \(\beta(t)=\int_0^t\!\int\ell\,\dd\nu_s\,\dd s\), where \(\nu_s\) is the state law. Replacing equal exposures by \(K\) factor pairs preserves this closure with \(K\) coordinates and distinguishes stress senders from receivers. Increasing-rank approximations then connect these systems to a directed-kernel equation.

Graphon mean-field models provide population limits and concentration estimates for heterogeneous interacting systems \cite{bayraktar2023,bayraktar2022,bayraktar2023b,allmeier2025,coppini2025}, together with methods for statistical estimation \cite{bayraktar2025np}. Graphon control and games use continuum kernels to approximate large network problems \cite{gao2020,parise2018,caineshuang2020,amini2023graphon}. Related work studies irreversible threshold contagion on graphons \cite{erol2023}, default cascades on inhomogeneous random networks \cite{detering2019,amini2024}, and dynamic financial clearing \cite{banerjee2025}. Feinstein and S{\o}jmark's heterogeneous impact--exposure model \cite{feinsteinsojmark2023} uses absorbing first-passage losses. Here losses depend on the current state, so repeated threshold crossings make the mass near the threshold central to uniqueness and stability.

The paper has three main contributions.
\begin{enumerate}[label={\rm(\arabic*)},leftmargin=2.2em]
\item \emph{Exact reduction and population limits.} A \(K\)-factor exposure matrix gives an exact \(K\)-dimensional feedback system. For bounded Lipschitz losses, the reduced equation is Wasserstein-stable in the type law, with constants uniform in \(K\) under the stated normalization and factor bounds. The joint state--factor law has an explicit transport representation. Combining this estimate with aligned \(L^1\) kernel stability separates sampling and approximation errors.

\item \emph{Hard-threshold dynamics.} Every bounded nonnegative finite or continuum kernel has a greatest cumulative-distress solution, selected by the positive-side ramp. An Osgood threshold-tube condition along one path yields uniqueness, aligned \(L^p\) stability, approximation bounds, and sampled-label convergence. Constant-kernel examples establish sharpness for uniqueness criteria based only on tube mass. A branchwise condition gives explicit log--Osgood examples; bounded threshold density gives the optimal deterministic perturbation exponent \(p/(p+1)\) and the sampled upper bound \(O_{\Prob}(\sqrt{\log N/N})\). Uniform transversality gives a linear \(L^\infty\) estimate.

\item \emph{Network examples and computations.} Core--periphery, multi-CCP, and multiplex networks give concrete factor interpretations. Numerical experiments examine population and truncation errors, directed feedback, and ramp selection. Disclosed EBA sovereign holdings provide a six-bank illustration and a 116-institution factor construction with sensitivity and resampling calculations.
\end{enumerate}

Sections~2--4 develop the finite-rank and directed-kernel models and their stability theory. Sections~5--6 give financial interpretations and numerical examples, and Section~7 concludes. Detailed proofs are collected in \cref{app:proofs}.

\section{Dynamic contagion model and low-rank factorization}
\label{sec:model}

\subsection{From default cascades to a dynamic contagion benchmark}
\label{sec:derivation}

The following static cascade and debt-service model motivate the signs and loss channel in \eqref{eq:finite-network}. In the bilateral interpretation, $e_{ij}\ge0$ is the raw exposure of institution $i$ to institution $j$, namely the amount owed by $j$ to $i$, and $e_{ii}=0$. Bilateral positions are recorded separately, so the interbank assets and liabilities of $i$ are $\sum_j e_{ij}$ and $\sum_j e_{ji}$. The variable $x_i$ denotes net wealth outside the interbank book. The common-exposure representations in \cref{sec:finite-network} also permit a positive diagonal.

\smallskip
\emph{One clearing date: the static cascade.} Consider first a single clearing date. Let $\theta_0\in[0,1)$ be the recovery rate and $\theta:=1-\theta_0$ the loss rate: a defaulted institution $j$ pays its creditor $i$ only $\theta_0 e_{ij}$, while the full obligations owed \emph{by} $i$ enter its pre-resolution buffer. Take $\mathbb D_0:=\{i:x_i\le0\}$ as the initial resolution set, triggered by the outside-book buffer. Losses caused by $\mathbb D_0$ can push further institutions below the threshold, giving the cascade
\begin{equation}
\label{eq:cascade}
X_i^{(k+1)}
:=x_i+\sum_{j=1}^N\bigl(e_{ij}-e_{ji}\bigr)-\theta\sum_{j=1}^N e_{ij}\,\1_{\mathbb D_k}(j)
\quad\text{for } i\notin\mathbb D_0,
\qquad
\mathbb D_{k+1}:=\bigl\{i:X_i^{(k+1)}\le0\bigr\},
\end{equation}
with $X_i^{(k+1)}:=x_i$ for $i\in\mathbb D_0$: institutions in $\mathbb D_0$ are already in resolution and their books are not updated, whereas an institution that defaults \emph{during} the cascade continues to mark losses---only the sign of its buffer propagates. The default sets increase, the buffers decrease from the first round on, and since every round that changes the default set enlarges it by at least one institution, the default set stabilizes by round $N$. One further buffer update gives $X^{(N+1)}$, the \emph{greatest} solution of the fixed-point system
\begin{equation}
\label{eq:cascade-fixedpoint}
X_i=x_i+\1_{\{x_i>0\}}\Bigl[\sum_{j=1}^N\bigl(e_{ij}-e_{ji}\bigr)-\theta\sum_{j=1}^N e_{ij}\,\1_{\{X_j\le0\}}\Bigr],
\qquad 1\le i\le N;
\end{equation}
the short induction argument is recorded in \cref{prop:cascade-maxsol}. This fixed-recovery cascade follows the balance-sheet loss mechanism studied in the static literature \cite{cont2010,amini2016}.

\smallskip
\emph{From one date to a horizon.} Now place the same balance sheets on a horizon $[0,T]$ and let $r>0$ be the interest rate. Suppose for simplicity that no interbank loan matures before $T$, so that interbank cash flows over $[0,T]$ consist only of debt service. Over a short window $[t,t+\Delta t]$, a solvent counterparty $j$ services the exposure $e_{ij}$ in full, whereas a counterparty currently below the threshold pays only the recovery fraction $\theta_0$ of the contractual debt service; the resulting interest income of $i$ on that position is
\[
r\,e_{ij}\,\1_{\{X^j>0\}}+\theta_0\,r\,e_{ij}\,\1_{\{X^j\le0\}}
\;=\;
r\,e_{ij}-\theta\,r\,e_{ij}\,\1_{\{X^j\le0\}},
\]
per unit time. Thus, while $j$ remains below the threshold, this position reduces the buffer of institution $i$ at rate $\theta r e_{ij}$. The buffer records recovered asset receipts against full contractual liabilities, while the non-interbank book drifts at rate $\mu_i$. Its window update is
\begin{equation}
\label{eq:window-update}
X_{t+\Delta t}^{i}
=X_t^{i}
+\Delta t\Bigl[\mu_i+r\sum_{j=1}^N\bigl(e_{ij}-e_{ji}\bigr)\Bigr]
-\theta r\,\Delta t\sum_{j=1}^N e_{ij}\,\1_{\{X_{t+\Delta t}^{j}\le0\}}.
\end{equation}
This implicit window relation uses the end-of-window distress state. Passing formally to continuous time gives occupation-time feedback. If an additional absorbing convention freezes each institution at its first threshold crossing, the model becomes
\begin{equation}
\label{eq:mechanism-ode}
X_t^i=x_i+\int_0^{\tau_i\wedge t}\Bigl[\mu_i+r\sum_{j=1}^N\bigl(e_{ij}-e_{ji}\bigr)-\theta r\sum_{j=1}^N e_{ij}\,\1_{\{X_s^j\le0\}}\Bigr]\dd s,
\qquad
\tau_i:=\inf\{t\ge0:X_t^i\le0\}\wedge T.
\end{equation}
Between successive default times the dynamics are affine. The event-driven construction in \cref{prop:mechanism-wellposed} gives a unique solution. The analysis below retains deterministic outside-book drift and static exposures, but removes absorption.

\smallskip
\emph{The recoverable transmission model.} We make three choices. First, dense scaling \(e_{ij}=e_{ij}^N/N\), with uniformly bounded \(e_{ij}^N\), gives order-one aggregate exposure. Second, we take a common outside-book drift \(\mu\) and a bounded loss function \(\ell:\R\to[0,\ell_\ast]\). The debt-service specialization has \(\ell(x)=\theta r\mathbf 1_{\{x\le0\}}\); the general model allows the loss intensity to be specified independently of \(r\), including the unit hard rule. Third, we remove the stopping time \(\tau_i\), so institutions transmit losses while distressed and can subsequently recover. These choices give \eqref{eq:finite-network}. For homogeneous weights \(e_{ij}^N\equiv1\), the imbalance term vanishes for every \(r\), recovering \eqref{eq:intro-meanfield}.

\subsection{The finite network}
\label{sec:finite-network}

Consider $N$ financial institutions with states $(X_t^{i,N})_{1\le i\le N}$, where $X_t^{i,N}$ denotes the cash position, reserve, or marked-to-market buffer of bank $i$ at time $t$. Let $e_{ij}^N\ge 0$ be a dense-normalized transmission weight from institution $j$ to institution $i$, and let $\ell:\R\to[0,\ell_\ast]$ be a bounded loss function. Under the three modeling choices above, the deterministic contagion dynamics are
\begin{equation}
\label{eq:finite-network}
X_t^{i,N} = x_i^N + \mu t + \frac{r t}{N}\sum_{j=1}^N (e_{ij}^N-e_{ji}^N) - \frac{1}{N}\sum_{j=1}^N e_{ij}^N \int_0^t \ell(X_s^{j,N})\,\dd s,
\end{equation}
for \(1\le i\le N\). The drift \(\mu\in\R\) is common, and \(r\in\R\) scales the net exposure imbalance. Bilateral liabilities may impose \(e_{ii}^N=0\); in common-exposure models a positive diagonal records an institution's own contribution to a shared loss channel. The diagonal cancels from the imbalance term.

For \(\ell(x)=\mathbf 1_{\{x\le0\}}\), finite atomic systems can have several solutions. With nonnegative weights, the positive-side ramp selects the greatest cumulative-distress path, equivalently the smallest state path; see \cref{cor:indicator-canonical-sample,thm:indicator-greatest}. This dynamic convention differs from the greatest-state convention of the static cascade. The estimate in \cref{thm:indicator-finiteN} applies to any Borel selection.

We report the instantaneous threshold fraction \(N^{-1}\sum_i\mathbf 1_{\{X_t^{i,N}\le0\}}\). In the continuum model, cumulative distress is the occupation time
\[
H_t(u):=\int_0^t\mathbf 1_{\{X_s(u)\le0\}}\,\dd s.
\]
The current-state rule permits recovery after a threshold crossing.

\subsection{Rank-\texorpdfstring{$K$}{K} exposure matrices}
\label{sec:rankK}

A representation with $K$ factor pairs has algebraic rank at most $K$ and gives $K$ feedback coordinates.

Fix $K\in\mathbb{N}$. We assume that the exposure matrix admits the factorization
\begin{equation}
\label{eq:rankK-factorization}
e_{ij}^N = \frac{1}{K}\sum_{k=1}^K a_{i,k}^N b_{j,k}^N,
\end{equation}
where $a_{i,k}^N$ measures the exposure sensitivity of bank $i$ to factor $k$ and $b_{j,k}^N$ measures the contribution of bank $j$ to losses transmitted through factor $k$. In the nonnegative-exposure interpretation one usually takes the factors nonnegative, or otherwise verifies that the resulting sum in \eqref{eq:rankK-factorization} is nonnegative. The bounded-Lipschitz estimates allow signed factors under uniform bounds; the hard-threshold results specify the additional conditions they use. Rank one gives a generalized mean-field model, while finite $K$ represents several transmission channels. The following example records the homogeneous, rank-one, and rank-two special cases.

\begin{example}[From mean field to rank two]
\label{ex:meanfield-to-rank2}
\emph{(i) Homogeneous mean field.} Take $K=1$ and $a_{i,1}=b_{j,1}=1$ for all $i,j$. Then $e_{ij}^N\equiv 1$ and \eqref{eq:finite-network} is exactly the introductory model \eqref{eq:intro-meanfield}: every institution feels the same aggregate distress $\frac1N\sum_j \ell(X^{j,N})$, and the large-population dynamics close through one scalar feedback.

\emph{(ii) Generalized mean field.} Keep $K=1$ but allow general loadings, $e_{ij}^N=a_{i,1}^N b_{j,1}^N$. There is still a single transmission channel, but it is now weighted: $b_{j,1}^N$ measures how much stress institution $j$ \emph{sends into} the channel, while $a_{i,1}^N$ measures how strongly institution $i$ \emph{depends on} it. For example, in a market intermediated by a single clearing hub, $b$ measures the contribution of an institution to hub-level stress and $a$ measures its sensitivity to losses allocated through the hub. One scalar function summarizes the network feedback, with distinct sender and receiver loadings.

\emph{(iii) Rank two: core--periphery.} Split the population into a core (mass $\pi_c$) and a periphery (mass $1-\pi_c$) and take $K=2$ with
\[
a_{i,1}=\gamma_1\1_{\{i\in\mathrm{core}\}},\quad
b_{j,1}=\delta_1\1_{\{j\in\mathrm{periph}\}},\qquad
a_{i,2}=\gamma_2\1_{\{i\in\mathrm{periph}\}},\quad
b_{j,2}=\delta_2\1_{\{j\in\mathrm{core}\}}.
\]
Then $e_{ij}^N=\frac12(a_{i,1}b_{j,1}+a_{i,2}b_{j,2})$ is nonzero only \emph{across} the two tiers: factor $1$ is the collection channel through which peripheral distress hits the core, and factor $2$ is the redistribution channel through which core distress hits the periphery. The contagion state is driven by exactly two macroscopic loss coordinates, $\beta_1$ (losses collected from the periphery) and $\beta_2$ (losses redistributed by the core), matching the rank-two row of \cref{tab:examples}; \cref{sec:numerics-exact-low-rank} quantifies this specification numerically.

\emph{(iv) Toward the graphon.} Viewed on the continuum type space of \cref{sec:graphon}, cases (i)--(iii) correspond, respectively, to the constant kernel $W\equiv1$, the product kernel $W(u,v)=a(u)b(v)$, and a $2\times2$ block kernel. General bounded kernels arise as limits of increasing-rank approximations, with finite-rank models providing quantitative surrogates.
\end{example}

Introduce the type vector
\[
z_i^N=(x_i^N,a_{i,1}^N,\dots,a_{i,K}^N,b_{i,1}^N,\dots,b_{i,K}^N)\in\mathcal Z:=\R^{2K+1},
\]
and the empirical type law
\[
\mu_0^N := \frac{1}{N}\sum_{i=1}^N \delta_{z_i^N}.
\]
We write a generic point of $\mathcal Z$ as $z=(x,a,b)$ with $a=(a_1,\dots,a_K)$ and $b=(b_1,\dots,b_K)$. The norm on $\mathcal Z$ is the $\ell^1$ norm,
\[
|z| := |x| + \sum_{k=1}^K |a_k| + \sum_{k=1}^K |b_k|.
\]
\subsection{The rank-\texorpdfstring{$K$}{K} feedback system}
\label{sec:unified-system}

The finite network and its large-population limit use the same construction with different type measures. Fix a probability measure $\varrho\in\mathcal P(\mathcal Z)$ with integrable factor coordinates and define the means
\[
\bar a_k^{\varrho}:=\int a_k\,\varrho(\dd z), \qquad \bar b_k^{\varrho}:=\int b_k\,\varrho(\dd z),
\]
and the imbalance coefficient
\begin{equation}
\label{eq:Rlimit}
\Lambda^{\varrho}(z):=\frac{r}{K}\sum_{k=1}^K \bigl(a_k\bar b_k^{\varrho}-b_k\bar a_k^{\varrho}\bigr),
\qquad z=(x,a,b)\in\mathcal Z.
\end{equation}
Given a deterministic vector $\beta=(\beta_1,\dots,\beta_K)$ of feedback variables, the associated state map is
\begin{equation}
\label{eq:Xlimit-map}
X_t^{\varrho}(z)=x+\mu t+t\Lambda^{\varrho}(z)-\frac{1}{K}\sum_{k=1}^K a_k\beta_k(t),
\end{equation}
and the system is closed by requiring
\begin{equation}
\label{eq:beta-limit-correct}
\beta_k(t)=\int_0^t m_k(s)\,\dd s,
\qquad
m_k(t)=\int b_k\,\ell(X_t^{\varrho}(z))\,\varrho(\dd z).
\end{equation}
We use $\varrho=\mu_0^N$ for the empirical system, writing $\bar a_k^N,\bar b_k^N,\Lambda^N,X_t^N,\beta_k^N,m_k^N$, and $\varrho=\mu_0$ for the large-population system, where the superscript is omitted. The following lemma identifies the empirical system exactly with the finite network:

\begin{lemma}[Exact reformulation of the finite network]
\label{lem:reformulation}
Let $\ell$ be bounded and measurable and take $\varrho=\mu_0^N$. If $\beta^N$ solves the closed system \eqref{eq:Xlimit-map}--\eqref{eq:beta-limit-correct}, then $X_t^{i,N}:=X_t^N(z_i^N)$, $1\le i\le N$, solves the finite-network equation \eqref{eq:finite-network}. Conversely, if $(X^{i,N})_{1\le i\le N}$ solves \eqref{eq:finite-network}, then the vector $\beta^N$ defined from these paths by \eqref{eq:beta-limit-correct} satisfies the closed system, and $X_t^{i,N}=X_t^N(z_i^N)$.
\end{lemma}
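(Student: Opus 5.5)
The argument is a direct algebraic identification: with $\varrho=\mu_0^N$ the integrals in \eqref{eq:Rlimit}--\eqref{eq:beta-limit-correct} become averages over $i$. We substitute the factorization \eqref{eq:rankK-factorization} into each term of \eqref{eq:finite-network}. Two identities carry the whole proof, and both are finite-sum manipulations: one for the imbalance term and one for the loss term.

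\emph{Imbalance term.} Since $\bar a_k^N=N^{-1}\sum_j a_{j,k}^N$ and $\bar b_k^N=N^{-1}\sum_j b_{j,k}^N$, we obtain
\[
\frac1N\sum_{j=1}^N\bigl(e_{ij}^N-e_{ji}^N\bigr)
=\frac1K\sum_{k=1}^K\Bigl(a_{i,k}^N\,\frac1N\sum_j b_{j,k}^N-b_{i,k}^N\,\frac1N\sum_j a_{j,k}^N\Bigr)
=\frac1r\Lambda^N(z_i^N)
\]
when $r\neq0$. More precisely, $rt$ times the left side equals $t\Lambda^N(z_i^N)$ for every $r$, including $r=0$.

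\emph{Loss term.} Exchange the finite sum with the time integral, which is legitimate because $\ell$ is bounded and measurable and the paths are measurable in $s$. This gives
\[
\frac1N\sum_j e_{ij}^N\int_0^t\ell(X_s^{j,N})\,\dd s
=\frac1K\sum_k a_{i,k}^N\int_0^t\frac1N\sum_j b_{j,k}^N\ell(X_s^{j,N})\,\dd s .
\]

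\emph{Forward direction.} Suppose $\beta^N$ solves the closed system, and set $X^{i,N}_t:=X^N_t(z_i^N)$. Then
\[
m_k^N(s)=N^{-1}\sum_j b_{j,k}^N\,\ell\bigl(X^N_s(z_j^N)\bigr)=N^{-1}\sum_j b_{j,k}^N\,\ell(X_s^{j,N}).
\]
Hence $\beta_k^N(t)$ equals the inner integral in the loss-term identity, and the loss term equals $K^{-1}\sum_k a_{i,k}^N\beta_k^N(t)$. Substituting both identities into \eqref{eq:Xlimit-map} at $z=z_i^N$ reproduces \eqref{eq:finite-network} term by term.

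\emph{Converse direction.} Given network paths, define $m_k^N(s):=N^{-1}\sum_j b_{j,k}^N\ell(X^{j,N}_s)$ and $\beta_k^N(t):=\int_0^t m_k^N$. The same two identities show that \eqref{eq:finite-network} reads $X^{i,N}_t=X^N_t(z_i^N)$, where $X^N$ is the state map \eqref{eq:Xlimit-map} built from this $\beta^N$. Substituting this back into the definition of $m_k^N$ shows that $m_k^N(s)=\int b_k\,\ell(X_s^N(z))\,\mu_0^N(\dd z)$, so \eqref{eq:beta-limit-correct} holds and $\beta^N$ solves the closed system.

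\emph{Possible obstacle.} The only subtlety is repeated types. If $z_i^N=z_j^N$ for some $i\neq j$, the map $z\mapsto X^N_t(z)$ is still well defined, because the network paths of $i$ and $j$ satisfy identical equations. In the converse direction, however, this must be checked rather than assumed, since hard-threshold systems can have non-unique solutions. This is handled by the fact that the right side of \eqref{eq:finite-network} depends on $i$ only through $z_i^N$: $x_i^N$, $a_{i,\cdot}^N$ and $b_{i,\cdot}^N$ appear explicitly, and the remaining sums over $j$ are common to all $i$. Consequently the identity $X^{i,N}_t=X^N_t(z_i^N)$ holds pathwise with no uniqueness input, and $m_k^N$ is consistently defined as an integral against $\mu_0^N$.
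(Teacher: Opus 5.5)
Your proposal is correct and follows the paper's proof. Both substitute the factorization into the imbalance and contagion terms, identify $\frac1N\sum_j e_{ij}^N$, $\frac1N\sum_j e_{ji}^N$ and the loss sum with the $\mu_0^N$-averages that define $\Lambda^N$ and $\beta^N$, and read the resulting atom-by-atom identity in both directions. Your remark on repeated types is a useful addition: you define $m_k^N$ directly from the paths and note that the right side of \eqref{eq:finite-network} depends on $i$ only through $z_i^N$, which makes explicit a point the paper leaves implicit in the phrase ``atom by atom.''
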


\begin{proof}
Integrals against $\mu_0^N$ are averages over the atoms $z_1^N,\dots,z_N^N$. Substituting the factorization \eqref{eq:rankK-factorization} into the contagion term of \eqref{eq:finite-network} gives
\[
\frac1N\sum_{j=1}^N e_{ij}^N\int_0^t \ell(X_s^{j,N})\,\dd s
=\frac1K\sum_{k=1}^K a_{i,k}^N\,\Bigl(\frac1N\sum_{j=1}^N b_{j,k}^N\int_0^t \ell(X_s^{j,N})\,\dd s\Bigr)
=\frac1K\sum_{k=1}^K a_{i,k}^N\beta_k^N(t),
\]
where the last equality holds once $X_s^{j,N}=X_s^N(z_j^N)$ and $\beta^N$ is given by \eqref{eq:beta-limit-correct}. Similarly, $\frac1N\sum_j e_{ij}^N=\frac1K\sum_k a_{i,k}^N\bar b_k^N$ and $\frac1N\sum_j e_{ji}^N=\frac1K\sum_k b_{i,k}^N\bar a_k^N$, so the imbalance term of \eqref{eq:finite-network} equals $t\Lambda^N(z_i^N)$. Hence \eqref{eq:finite-network} for the family $(X^{i,N})_i$ is identical, atom by atom, to \eqref{eq:Xlimit-map}--\eqref{eq:beta-limit-correct} evaluated at $z=z_i^N$; reading this identity in the two directions proves both claims.
\end{proof}

Thus the finite and limiting systems differ only in their type measures. The factorization reduces the dynamical unknown from $N$ state paths to $K$ feedback coordinates, after which individual states are recovered from \eqref{eq:Xlimit-map}.

The same factorization gives a matrix-free implementation. If $q_j(t):=\ell(X_t^{j,N})$ and $E^N=K^{-1}AB^\top$, then the network feedback is
\[
E^Nq(t)=\frac1K A\bigl(B^\top q(t)\bigr).
\]
Evaluating the empirical right-hand side and storing the factors cost $O(NK)$ operations and memory, compared with $O(N^2)$ for a dense matrix. Further grouping or quadrature can reduce the cost of the empirical average.

Taking $\varrho=\mu_0$ gives the large-population model studied next.

\section{The finite-rank generalized mean-field limit}
\label{sec:finite-rank-limit}

Fix $K\in\mathbb N$. We assume bounded factor loadings and specify initial-state regularity in each result.

\begin{assumption}
\label{ass:bounded-types}
There exists $M>0$ such that for every $N$ and every $1\le k\le K$,
\[
|a_k|\le M, \qquad |b_k|\le M
\]
for $\mu_0^N$-almost every $z=(x,a,b)$ and for $\mu_0$-almost every $z=(x,a,b)$.
\end{assumption}

\begin{remark}
\label{rem:bounded-factors-only}
Only the factors are bounded. Wasserstein estimates also require $\E|X_0|<\infty$, while the density theorem below assumes a compactly supported $C^1$ joint density.
\end{remark}

\subsection{Reduction to a \texorpdfstring{$K$}{K}-dimensional feedback system}

The finite-rank model is deterministic once the feedback vector $\beta$ is known.

\begin{proposition}[Finite-dimensional feedback equation]
\label{prop:beta-ode}
Let $\ell:\R\to[0,\ell_\ast]$ be bounded and measurable. The limit system \eqref{eq:Xlimit-map}--\eqref{eq:beta-limit-correct} is equivalent to
\begin{equation}
\label{eq:feedback-ode}
\beta_k(t)=\int_0^t F_k(s,\beta(s))\,\dd s, \qquad 1\le k\le K,
\end{equation}
where
\begin{equation}
\label{eq:Fk}
F_k(t,\beta):=\int b_k\,\ell\!\left(x+\mu t+t\Lambda(z)-\frac{1}{K}\sum_{j=1}^K a_j\beta_j\right)\mu_0(\dd z).
\end{equation}
Every solution is absolutely continuous and satisfies the nonautonomous ODE $\dot\beta_k(t)=F_k(t,\beta(t))$ for almost every $t$, with $\beta_k(0)=0$. The same representation holds for the finite-$N$ system after replacing $\mu_0$ and $\Lambda$ by $\mu_0^N$ and $\Lambda^N$. If $\ell$ is Lipschitz, then $F$ is globally Lipschitz in $\beta$, uniformly on compact time intervals, and the rank-$K$ system is globally well posed.
\end{proposition}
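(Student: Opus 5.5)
The equivalence is a substitution argument. Given a solution $\beta$ of \eqref{eq:Xlimit-map}--\eqref{eq:beta-limit-correct}, I insert the state map \eqref{eq:Xlimit-map} into the definition of $m_k$ in \eqref{eq:beta-limit-correct}. This shows $m_k(t)=F_k(t,\beta(t))$ pointwise in $t$, and hence \eqref{eq:feedback-ode}. Conversely, given $\beta$ solving \eqref{eq:feedback-ode}, I define $X^{\varrho}_t$ by \eqref{eq:Xlimit-map} and set $m_k(t):=F_k(t,\beta(t))$. The closure \eqref{eq:beta-limit-correct} then holds by construction.

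For measurability I argue as follows. The map $(t,z)\mapsto x+\mu t+t\Lambda(z)-K^{-1}\sum_j a_j\beta_j(t)$ is jointly Borel whenever $\beta$ is Borel; in the closed system $\beta$ is continuous, being an integral. Composing with the bounded measurable $\ell$ and multiplying by $b_k$ gives a jointly measurable integrand. Fubini--Tonelli then makes $t\mapsto F_k(t,\beta(t))$ measurable, and the bound
\[
|F_k|\le M\ell_\ast
\]
from \cref{ass:bounded-types} makes it integrable. Consequently $\beta$ is Lipschitz with constant $M\ell_\ast$, hence absolutely continuous, and $\dot\beta_k=F_k(t,\beta(t))$ a.e.\ by Lebesgue differentiation, with $\beta(0)=0$. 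The finite-$N$ statement is the same argument with $\mu_0^N$ and $\Lambda^N$, since \cref{ass:bounded-types} holds for $\mu_0^N$ too. Combined with \cref{lem:reformulation}, this links the ODE to the network.

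For the Lipschitz part, suppose $\ell$ has Lipschitz constant $L$. For $\beta,\beta'\in\R^K$,
\[
|F_k(t,\beta)-F_k(t,\beta')|\le \int |b_k|\,L\,\frac1K\sum_j|a_j|\,|\beta_j-\beta'_j|\,\mu_0(\dd z)\le \frac{LM^2}{K}\,|\beta-\beta'|_{1}.
\]
This bound is uniform in $t$, so it holds in particular on compact time intervals. Since $F$ is also bounded and measurable in $t$, global existence and uniqueness follow. I would run a Picard iteration on $C([0,T];\R^K)$ with a weighted sup norm $\sup_t e^{-\lambda t}|\beta(t)|_1$ for $\lambda$ large, which gives a contraction on the full interval $[0,T]$ at once. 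Equivalently, one can cite Carathéodory theory together with Grönwall. Uniqueness for the ODE transfers to the closed system by the equivalence above.

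The main obstacle is only the measurability and well-definedness of $F_k(t,\beta(t))$ when $\ell$ is merely measurable. The hard-threshold case has no continuity, so the integrand is measurable but $F$ need not be continuous. The approach therefore relies on joint Borel measurability rather than on any continuity of $F$, and I would state it carefully in that form.
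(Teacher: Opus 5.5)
Your proposal is correct and follows the paper's proof. The paper substitutes the state map in both directions, derives $|F_k(t,\beta)-F_k(t,\tilde\beta)|\le M^2L_\ell\|\beta-\tilde\beta\|_\infty$ from the factor bounds, and invokes Carath\'eodory theory (your weighted-norm Picard iteration is an equivalent route); your added care about joint measurability of $t\mapsto F_k(t,\beta(t))$ fills in a detail the paper leaves implicit.
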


\begin{proof}
Substitution of the state map into the feedback integral gives \eqref{eq:feedback-ode}; conversely, a solution of this equation defines $X$ by \eqref{eq:Xlimit-map}. If $\ell$ is Lipschitz with constant $L_\ell$, then for any $\beta,\tilde\beta\in\R^K$,
\[
\abs{F_k(t,\beta)-F_k(t,\tilde\beta)}
\le \frac{L_\ell}{K}\int |b_k|\sum_{j=1}^K |a_j|\,|\beta_j-\tilde\beta_j|\,\mu_0(\dd z)
\le M^2L_\ell\norm{\beta-\tilde\beta}_\infty.
\]
The Carath\'eodory existence and uniqueness theorem therefore applies.
\end{proof}

\subsection{Wasserstein stability for bounded Lipschitz losses}

The following theorem establishes continuity of the limiting dynamics with respect to the type distribution.

\begin{theorem}
\label{thm:finite-rank-w1}
Suppose \cref{ass:bounded-types} holds and $\ell:\R\to[0,\ell_\ast]$ is Lipschitz with constant $L_\ell$. If $\mu_0^N\to\mu_0$ in $\Wone$, then for every $T>0$ there exists $C_T<\infty$, depending only on $T,M,r,\ell_\ast$, and $L_\ell$, such that
\begin{equation}
\label{eq:w1-main}
\sup_{0\le t\le T} \Wone(\nu_t^N,\nu_t) \le C_T\,\Wone(\mu_0^N,\mu_0),
\end{equation}
where
\[
\nu_t^N := (\Xi_t^N)_\#\mu_0^N,
\qquad
\nu_t := (\Xi_t)_\#\mu_0,
\]
with
\[
\Xi_t^N(z):=(X_t^N(z),a,b), \qquad \Xi_t(z):=(X_t(z),a,b).
\]
In particular, the law of a representative bank together with its static exposure coefficients converges in $\Wone$ uniformly on compact time intervals.
\end{theorem}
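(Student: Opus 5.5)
The plan is to compare the two systems through their feedback vectors and then transport the comparison to the state--factor laws by coupling. Fix an optimal $\Wone$ coupling $\pi$ of $\mu_0^N$ and $\mu_0$ on $\mathcal Z\times\mathcal Z$. By \cref{ass:bounded-types}, $|a_k|,|b_k|\le M$ $\pi$-a.s. on both marginals, so the factor means and the imbalance coefficients are controlled by the coupling. Specifically,
\[
|\bar a_k^N-\bar a_k|+|\bar b_k^N-\bar b_k|\le \Wone(\mu_0^N,\mu_0),
\qquad
|\Lambda^N(z)-\Lambda(z)|\le 2|r|M\,\Wone(\mu_0^N,\mu_0).
\]
Since $\Lambda$ is Lipschitz in $(a,b)$ with constant $\le 2|r|M$ on the bounded set, we get for $\pi$-a.e. pair $(z,z')$
\[
|\Lambda^N(z)-\Lambda(z')|\le 2|r|M\bigl(\Wone(\mu_0^N,\mu_0)+|z-z'|\bigr).
\]
The key observation is that the state map \eqref{eq:Xlimit-map} is affine in $x$, $\Lambda$ and $\beta$. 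For $\pi$-a.e. $(z,z')$ and $t\le T$ this gives
\[
|X_t^N(z)-X_t(z')|\le |x-x'|+T|\Lambda^N(z)-\Lambda(z')|+\frac1K\sum_j\bigl(|a_j-a_j'|\,|\beta_j(t)|+M|\beta^N_j(t)-\beta_j(t)|\bigr).
\]
We also have the a priori bound $|\beta_j(t)|\le M\ell_\ast t$, because $|F_k|\le M\ell_\ast$. Hence
\[
|X_t^N(z)-X_t(z')|\le C(|z-z'|+\Wone(\mu_0^N,\mu_0))+M\,\|\beta^N(t)-\beta(t)\|_\infty,
\]
with $C$ depending only on $T,M,r,\ell_\ast$.

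Next I would run a Gronwall argument on $\delta(t):=\|\beta^N(t)-\beta(t)\|_\infty$, using \cref{prop:beta-ode} for both systems. I write
\[
m_k^N(s)-m_k(s)=\int\bigl[b_k\,\ell(X_s^N(z))-b_k'\,\ell(X_s(z'))\bigr]\,\pi(\dd z,\dd z'),
\]
and split the integrand into $(b_k-b_k')\ell(X_s^N(z))$, bounded by $\ell_\ast|z-z'|$, plus $b_k'[\ell(X_s^N(z))-\ell(X_s(z'))]$, bounded by $ML_\ell|X_s^N(z)-X_s(z')|$. Integrating against $\pi$ and using $\int|z-z'|\,\dd\pi=\Wone(\mu_0^N,\mu_0)$ yields
\[
\delta(t)\le \int_0^t\bigl(C'\Wone(\mu_0^N,\mu_0)+M^2L_\ell\,\delta(s)\bigr)\dd s.
\]
Gronwall then gives $\sup_{t\le T}\delta(t)\le C''\Wone(\mu_0^N,\mu_0)$.

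Finally, $(\Xi_t^N,\Xi_t)_\#\pi$ is a coupling of $\nu_t^N$ and $\nu_t$. Its transport cost is at most
\[
\int\bigl(|X_t^N(z)-X_t(z')|+|a-a'|+|b-b'|\bigr)\,\dd\pi\le C_T\,\Wone(\mu_0^N,\mu_0),
\]
by the pointwise estimate above and the bound on $\delta$. Every constant involves only $T,M,r,\ell_\ast,L_\ell$; the $1/K$ normalization makes each bound uniform in $K$ when sums over $j$ are absorbed by $\max_j$.

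The main obstacle is bookkeeping the coupling correctly in the imbalance term, since $\Lambda^N$ and $\Lambda$ use different means. It is also where integrability enters: $\E|X_0|<\infty$ is needed so that $\Wone$ is finite and optimal couplings exist. Everything else is a standard Lipschitz–Gronwall argument.
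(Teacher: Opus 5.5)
Your proposal is correct and follows the paper's proof. Both use an optimal coupling, the affine estimate on the coupled state maps (with the imbalance mismatch integrating to $4|r|M\,\Wone(\mu_0^N,\mu_0)$), the split of $m_k^N-m_k$ into a factor-transport term and a Lipschitz-loss term, Gronwall, and the pushed-forward coupling $(\Xi_t^N,\Xi_t)_\#\pi$. The only cosmetic difference is that you close Gronwall on $\|\beta^N(t)-\beta(t)\|_\infty$ rather than on the coupled state error $D_t^N=\int|X_t^N(z)-X_t(\tilde z)|\,\pi(\dd z,\dd\tilde z)$; the two choices are equivalent.
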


\begin{proof}[Proof sketch]
Couple the initial type laws optimally and compare their state maps. The factor means and loadings contribute a multiple of the initial Wasserstein distance. The Lipschitz loss bounds the feedback difference by this distance and the coupled state error, so Gronwall's lemma closes the estimate. Adding the unchanged factor coordinates gives the joint-law bound; see \cref{app:proofs-sec3}.
\end{proof}

\begin{remark}
\label{rem:uniform-K}
Under the $1/K$ normalization and the uniform factor bounds, the constant $C_T$ in \cref{thm:finite-rank-w1} is independent of $K$. This uniformity is used in the finite-rank-to-kernel approximation theorem.
\end{remark}

\begin{remark}
\label{rem:sampled-types}
Let $Z_1,Z_2,\ldots$ be an i.i.d.\ sequence with law $\mu_0\in\mathcal P_1(\mathcal Z)$ and set $\mu_0^N=N^{-1}\sum_{i=1}^N\delta_{Z_i}$. Almost sure weak convergence of the empirical measures and the strong law for their first moments give $\Wone(\mu_0^N,\mu_0)\to0$ almost surely. \Cref{thm:finite-rank-w1} then yields almost sure convergence of the state--factor laws.
\end{remark}

\subsection{Transport PDE for the limiting density}

The next result identifies the density of the joint law of the limiting state and the static factors.

\begin{theorem}
\label{thm:transport-pde}
Assume \cref{ass:bounded-types}, let $\ell:\R\to[0,\ell_\ast]$ be bounded and measurable, and let $\beta$ solve \eqref{eq:Xlimit-map}--\eqref{eq:beta-limit-correct} on $[0,T]$. Suppose that $\mu_0$ has a compactly supported $C^1$ density $\Phi$ on $\mathcal Z$.

Then \(\beta\) is absolutely continuous, with \(\dot\beta_k(t)=m_k(t)\) for a.e. \(t\), and for every \(t\in[0,T]\) the measure \(\nu_t=(\Xi_t)_\#\mu_0\) admits a density \(f(t,\cdot)\) on \(\mathcal Z\), given by
\begin{equation}
\label{eq:density-explicit}
f(t,x,a,b)=\Phi\!\left(x-\Theta_t(a,b),a,b\right),
\end{equation}
where
\[
\Theta_t(a,b):=\mu t+\frac{rt}{K}\sum_{k=1}^K(a_k\bar b_k-b_k\bar a_k)-\frac{1}{K}\sum_{k=1}^K a_k\beta_k(t).
\]
Moreover, \(f\) solves the transport equation
\begin{equation}
\label{eq:transport-pde}
\partial_t f+\partial_x(v_f f)=0,\qquad f(0,x,a,b)=\Phi(x,a,b),
\end{equation}
in the distributional sense on \((0,T)\times\mathcal Z\), with velocity field
\begin{equation}
\label{eq:velocity-field}
v_f(t,a,b)=\mu+\frac{r}{K}\sum_{k=1}^K(a_k\bar b_k-b_k\bar a_k)-\frac{1}{K}\sum_{k=1}^K a_km_k(t),
\end{equation}
where
\begin{equation}
\label{eq:mk-density}
m_k(t)=\int_{\mathcal Z}\tilde b_k\,\ell(y)\,f(t,y,\tilde a,\tilde b)\,\dd y\,\dd\tilde a\,\dd\tilde b.
\end{equation}
Equivalently, for every \(\varphi\in C_c^1([0,T)\times\mathcal Z)\),
\begin{multline}
\label{eq:transport-weak-form}
\int_0^T\!\!\int_{\mathcal Z}
f(t,x,a,b)\Bigl(\partial_t\varphi(t,x,a,b)
+v_f(t,a,b)\partial_x\varphi(t,x,a,b)\Bigr)
\,\dd x\,\dd a\,\dd b\,\dd t\\
+\int_{\mathcal Z}\Phi(x,a,b)\varphi(0,x,a,b)\,\dd x\,\dd a\,\dd b=0.
\end{multline}
For every $k$, the feedback $m_k$ is continuous in time. Thus $f$ also solves \eqref{eq:transport-pde} classically.
\end{theorem}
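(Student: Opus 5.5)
The plan is to use the explicit form of the state map. Since $\Lambda(z)$ depends only on $(a,b)$, \eqref{eq:Xlimit-map} gives $X_t(z)=x+\Theta_t(a,b)$. Hence $\Xi_t(x,a,b)=(x+\Theta_t(a,b),a,b)$ is a shear: a translation in $x$ with fibrewise constant shift.

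\emph{Step 1: regularity of $\beta$.} By \cref{prop:beta-ode}, $\beta_k(t)=\int_0^t m_k(s)\,\dd s$ with $|m_k|\le M\ell_\ast$. So $\beta$ is Lipschitz, hence absolutely continuous, with $\dot\beta_k=m_k$ a.e.

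\emph{Step 2: the density formula.} The map $\Xi_t$ is a $C^1$-in-$(x,a,b)$ bijection of $\mathcal Z$. Its inverse is $(y,a,b)\mapsto(y-\Theta_t(a,b),a,b)$, and its Jacobian determinant is $1$, since the Jacobian matrix is triangular with unit diagonal. Change of variables then gives $\int\varphi\,\dd\nu_t=\int\varphi(y,a,b)\,\Phi(y-\Theta_t(a,b),a,b)\,\dd y\,\dd a\,\dd b$, which is \eqref{eq:density-explicit}. Substituting this density into \eqref{eq:beta-limit-correct} yields \eqref{eq:mk-density}.

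\emph{Step 3: continuity of $m_k$.} Write
\[
m_k(t)=\int \tilde b_k\,\ell(y)\,\Phi\bigl(y-\Theta_t(\tilde a,\tilde b),\tilde a,\tilde b\bigr)\,\dd y\,\dd\tilde a\,\dd\tilde b .
\]
The map $t\mapsto\Theta_t(a,b)$ is continuous, uniformly for $(a,b)$ in the compact projection of $\supp\Phi$, because $\beta$ is continuous and $\Theta$ is affine in $t$, $a$, $b$. Also $\Phi$ is uniformly continuous and compactly supported, and $\ell$ is only bounded measurable. Dominated convergence therefore gives continuity of $m_k$. The domination uses the bound $M\ell_\ast\|\Phi\|_\infty$ times the indicator of a fixed compact set containing all translated supports for $t\in[0,T]$. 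The key point is that the measurable, possibly discontinuous, $\ell$ is integrated against a continuously moving smooth density, so no regularity of $\ell$ is needed. I expect this to be the main obstacle in spirit, but it is settled by translation continuity (equivalently, continuity of translations in $L^1$).

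\emph{Step 4: the transport equation.} Continuity of $m_k$ upgrades Step 1 to $\beta\in C^1$, and then $\partial_t\Theta_t=v_f(t,a,b)$ exactly as in \eqref{eq:velocity-field}. Since $f(t,x,a,b)=\Phi(x-\Theta_t,a,b)$ with $\Phi\in C^1$ and $\Theta\in C^1$ in $t$, $f$ is $C^1$. The chain rule gives $\partial_tf=-v_f\,\partial_x\Phi(\cdot)=-v_f\,\partial_xf$. Because $v_f$ does not depend on $x$, this is $\partial_tf+\partial_x(v_ff)=0$ pointwise, which is the classical form, with $f(0,\cdot)=\Phi$ since $\Theta_0=0$.

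For the weak form \eqref{eq:transport-weak-form}, multiply the classical equation by $\varphi\in C_c^1([0,T)\times\mathcal Z)$ and integrate. Integrate by parts in $t$; the boundary term at $t=0$ produces $\int\Phi\varphi(0)$, and there is no term at $T$ since $\varphi$ vanishes near $T$. Integrate by parts in $x$; there are no boundary terms by compact support. Alternatively, one can differentiate $t\mapsto\int\varphi(t,\Xi_t(z))\,\mu_0(\dd z)$ directly, which only needs $\beta$ absolutely continuous. This gives the distributional statement even before Step 3.
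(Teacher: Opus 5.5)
Your proposal is correct and follows essentially the paper's own argument. The shear/translation structure of $\Xi_t$ gives the density formula, dominated convergence on the continuously translated densities gives continuity of $m_k$ even though $\ell$ is only measurable, and the chain rule then yields the transport equation. The only difference is ordering: the paper derives the weak form directly from the absolutely continuous map $t\mapsto\varphi(t,\Xi_t(z))$, which is the alternative you mention, before upgrading to the classical equation, whereas you obtain the classical equation first and integrate by parts.
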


\begin{proof}[Proof sketch]
Translation in the initial-state coordinate gives \eqref{eq:density-explicit}. Continuity of these translated densities in $L^1$ makes $m_k$ continuous even when $\ell$ is merely measurable. Differentiating the formula gives the transport equation; see \cref{app:proofs}.
\end{proof}
\subsection{Discontinuous threshold losses}

For the hard threshold, a bound on the mass near zero makes the averaged feedback Lipschitz even though the loss itself is discontinuous.

\begin{assumption}
\label{ass:density}
For every $T>0$ and $C>0$, there exists $M_{\rho}(T,C)<\infty$ such that for every $t\in[0,T]$ and every $\beta\in[-C,C]^K$, the scalar random variable
\[
\Psi_t(z,\beta):=x+\mu t+t\Lambda(z)-\frac{1}{K}\sum_{k=1}^K a_k\beta_k
\]
under $\mu_0$ admits a density bounded by $M_{\rho}(T,C)$.
\end{assumption}

\begin{remark}[Sufficient conditions for \cref{ass:density}]
\label{rem:density-sufficient}
\Cref{ass:density} is a regularity condition on one-dimensional threshold projections. A convenient sufficient condition is the following conditional-density bound. Let $\rho_{A,B}:=\Law(A,B)$. Suppose that under \(\mu_0\) the initial state \(X_0\) admits a regular conditional density \(f_{X_0\mid A,B}(\cdot\mid a,b)\) given \((A,B)=(a,b)\) such that
\begin{equation}
\label{eq:conditional-density}
\operatorname*{ess\,sup}_{(a,b)\sim\rho_{A,B}}
\norm{f_{X_0\mid A,B}(\cdot\mid a,b)}_{L^\infty(\R)}
\le \bar M<\infty.
\end{equation}
Then \cref{ass:density} holds with \(M_\rho(T,C)=\bar M\) for every \(T,C>0\). For fixed $t$ and $\beta$, write
\[
\Psi_t=X_0+G_t(A,B;\beta),
\qquad
G_t(a,b;\beta):=\mu t+t\Lambda(a,b)-\frac1K\sum_{k=1}^K a_k\beta_k.
\]
Conditional on $(A,B)=(a,b)$, the variable $\Psi_t$ is a translation of $X_0$ and therefore has density $y\mapsto f_{X_0\mid A,B}(y-G_t(a,b;\beta)\mid a,b)$. Averaging over $(A,B)$ preserves the bound $\bar M$. The same argument remains uniform when the population means entering $\Lambda$ are replaced by deterministic vectors $(\alpha,\gamma)\in[-M,M]^{2K}$, because these parameters only translate the conditional law. This uniform form is used in \cref{thm:indicator-finiteN}. Condition \eqref{eq:conditional-density} holds, in particular, in each of the following cases:
\begin{enumerate}[label={\rm(\roman*)},leftmargin=2.2em]
\item \(X_0\) is independent of \((A,B)\) and has bounded density;
\item \(X_0\) is lognormal, independently of \((A,B)\), or conditionally lognormal with uniformly bounded conditional densities;
\item $X_0$ is conditionally Pareto with parameters $x_m(a,b)>0$ and $\alpha(a,b)>0$ satisfying $\operatorname*{ess\,sup}_{a,b}\alpha(a,b)/x_m(a,b)<\infty$; the conditional density is bounded by this ratio.
\end{enumerate}
Wasserstein estimates additionally require $\E|X_0|<\infty$; for a fixed Pareto law this is equivalent to $\alpha>1$. Translating the initial state preserves the density bound.
\end{remark}

\begin{theorem}
\label{thm:indicator}
Assume \cref{ass:bounded-types,ass:density} and let $\ell(x)=\1_{\{x\le 0\}}$. Then the limiting rank-$K$ system \eqref{eq:Xlimit-map}--\eqref{eq:beta-limit-correct} is well posed.
\end{theorem}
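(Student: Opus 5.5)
By \cref{prop:beta-ode}, the limiting system is equivalent to the $K$-dimensional integral equation \eqref{eq:feedback-ode}, now with
\[
F_k(t,\beta)=\int b_k\,\1_{\{\Psi_t(z,\beta)\le 0\}}\,\mu_0(\dd z).
\]
So well-posedness amounts to existence and uniqueness of an absolutely continuous solution of $\beta(t)=\int_0^tF(s,\beta(s))\,\dd s$ on every $[0,T]$. I will treat this as a Carath\'eodory ODE and show that $F$ is bounded, measurable in $t$, and locally Lipschitz in $\beta$ uniformly in $t\in[0,T]$. The Lipschitz constant will come from the density bound in \cref{ass:density}, not from the loss.

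\emph{A priori bounds.} Since $|b_k|\le M$ and $0\le\ell\le1$, we have $|F_k|\le M$. Hence any solution satisfies $\|\beta(t)\|_\infty\le MT=:C$ on $[0,T]$. All relevant $\beta$ therefore lie in the box $[-C,C]^K$. This lets me use the constant $M_\rho(T,C)$, and a Lipschitz bound on this box yields global existence with no blow-up. Measurability of $t\mapsto F_k(t,\beta)$ follows from Fubini, because $(t,z)\mapsto \1_{\{\Psi_t(z,\beta)\le0\}}$ is jointly Borel.

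\emph{Lipschitz estimate (main step).} Fix $\beta,\tilde\beta\in[-C,C]^K$ and set $\delta:=\frac MK\sum_j|\beta_j-\tilde\beta_j|\le M\|\beta-\tilde\beta\|_\infty$. Then $|\Psi_t(z,\beta)-\Psi_t(z,\tilde\beta)|\le\delta$ for $\mu_0$-a.e. $z$. The two indicators can differ only when $\Psi_t(z,\beta)\in(-\delta,\delta]$, since if $\Psi_t(z,\beta)$ lies outside this interval both quantities have the same sign. This gives
\[
|F_k(t,\beta)-F_k(t,\tilde\beta)|\le M\,\mu_0\bigl(|\Psi_t(\cdot,\beta)|\le\delta\bigr)\le 2M\,M_\rho(T,C)\,\delta\le 2M^2M_\rho(T,C)\,\|\beta-\tilde\beta\|_\infty .
\]
Thus $F$ is uniformly Lipschitz on $[0,T]\times[-C,C]^K$.

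To apply Carath\'eodory on all of $\R^K$, I extend $F$ by composing with the coordinatewise projection onto $[-C,C]^K$. The extended field is globally Lipschitz and bounded. By the a priori bound, solutions of the extended equation stay in the box, so they coincide with solutions of the original equation. Picard iteration, or Carath\'eodory's theorem, then gives a unique absolutely continuous solution on $[0,T]$. Uniqueness holds among all solutions, because any solution obeys the same a priori bound and Gronwall applies.

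\emph{Continuous dependence.} Since $T$ is arbitrary, the solutions on different horizons are consistent, which gives global existence and uniqueness; the states are then defined by \eqref{eq:Xlimit-map}. If well-posedness is also meant to include continuous dependence, the same Lipschitz bound plus Gronwall gives stability of $\beta$ under perturbations of the drift. I expect no serious obstacle. The only delicate points are that the density bound must be applied uniformly over the whole box, which is exactly what \cref{ass:density} provides, and that the boundary event $\{\Psi=0\}$ carries no mass, which is automatic because $\Psi_t$ has a density.
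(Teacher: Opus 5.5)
Your proposal is correct and matches the paper's proof essentially step for step. Both arguments confine every solution a priori to $[-MT,MT]^K$ and obtain the Lipschitz constant $2M^2M_\rho(T,MT)$ from the threshold-tube mass. Both then extend the field by coordinatewise projection onto the cube and invoke Carath\'eodory theory, with $T$ arbitrary giving global well-posedness.
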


\begin{proof}[Proof sketch]
Any solution is confined a priori to the cube $\mathcal C_T=[-MT,MT]^K$. On this cube the indicator feedback map is Lipschitz: if two feedback vectors differ by $\delta$ in the sup norm, the corresponding indicators can disagree only on a threshold tube of width $M\delta$, whose $\mu_0$-mass is at most $2M_\rho(T,MT)M\delta$ by \cref{ass:density}. Carath\'eodory theory for the feedback ODE then yields global existence and uniqueness. The complete proof is given on page~\pageref{proof:thm:indicator} in \cref{app:proofs}.
\end{proof}

\begin{corollary}[Indicator-loss transport equation]
\label{cor:indicator-transport}
Under \cref{thm:indicator}, if $\mu_0$ has a compactly supported $C^1$ density $\Phi$, the unique indicator solution has density \eqref{eq:density-explicit} and satisfies \eqref{eq:transport-pde} classically, with $\ell(y)=\mathbf1_{\{y\le0\}}$ in \eqref{eq:mk-density}.
\end{corollary}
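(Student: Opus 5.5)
The argument applies \cref{thm:transport-pde} to the solution supplied by \cref{thm:indicator}. That theorem needs only a bounded measurable loss and a solution $\beta$ of \eqref{eq:Xlimit-map}--\eqref{eq:beta-limit-correct} on $[0,T]$. First, \cref{thm:indicator}, under \cref{ass:bounded-types,ass:density}, gives a unique $\beta$ for the indicator loss $\ell(y)=\mathbf 1_{\{y\le0\}}$. This $\ell$ is bounded by $\ell_\ast=1$ and Borel measurable, so every hypothesis of \cref{thm:transport-pde} holds on each $[0,T]$. We add the compactly supported $C^1$ density $\Phi$ of $\mu_0$.

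\Cref{thm:transport-pde} then yields the explicit density \eqref{eq:density-explicit}, with $\Theta_t$ built from this $\beta$. It also gives the distributional transport equation \eqref{eq:transport-pde}, with velocity \eqref{eq:velocity-field} and feedback $m_k$ given by \eqref{eq:mk-density} with the indicator inserted. By the change of variables $y=x+\Theta_t(a,b)$, the pushforward $\nu_t=(\Xi_t)_\#\mu_0$ has density $f(t,\cdot)$. Hence $\int \tilde b_k\mathbf 1_{\{y\le0\}}f(t,y,\tilde a,\tilde b)\,\dd y\,\dd\tilde a\,\dd\tilde b=\int b_k\mathbf 1_{\{X_t(z)\le0\}}\mu_0(\dd z)$. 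This shows that \eqref{eq:mk-density} agrees with the closure \eqref{eq:beta-limit-correct}, so the density describes the same (unique) solution.

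Next comes the classical statement, which is the only step beyond citation. Continuity of $m_k$ comes from \cref{thm:transport-pde}: translations of the $C^1$ compactly supported $\Phi$ are continuous in $L^1$, and $\Theta_t$ is continuous in $t$ uniformly on the bounded factor support. Consequently $\beta\in C^1$, and $\Theta_t(a,b)$ is $C^1$ in $t$ with $\partial_t\Theta_t=v_f(t,a,b)$. Since $\Phi\in C^1$, the formula $f=\Phi(x-\Theta_t,a,b)$ is $C^1$ in $(t,x)$. The chain rule then gives $\partial_tf=-v_f\,\partial_x\Phi(x-\Theta_t,a,b)=-\partial_x(v_ff)$, because $v_f$ does not depend on $x$.

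The main point to check is that the discontinuity of $\ell$ does not break the continuity of $m_k$, and hence the $C^1$ regularity of $f$ in $t$. This is exactly what the $L^1$-translation argument in \cref{thm:transport-pde} provides for merely measurable $\ell$. I would also note that \cref{ass:density} is used only through \cref{thm:indicator}, to get uniqueness.
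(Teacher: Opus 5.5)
Your proposal is correct and matches the paper's proof, which is the one-line application of \cref{thm:transport-pde} to the unique indicator solution supplied by \cref{thm:indicator}. Your additional checks (that the indicator is bounded and measurable, continuity of $m_k$ via $L^1$-translation, and the chain-rule derivation of the classical equation) simply re-derive conclusions already contained in \cref{thm:transport-pde}.
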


\begin{proof}
Apply \cref{thm:transport-pde} to the solution from \cref{thm:indicator}.
\end{proof}

\begin{remark}[Regulatory bunching and atoms]
\label{rem:regulatory-atoms}
Regulatory or accounting bunching provides one mechanism by which \cref{ass:density} can fail. If many reported buffers are pinned to the same supervisory minimum, the shifted projection $\Psi_t$ can develop an atom or sharp near-atom near the distress threshold. Exact atoms can destroy uniqueness, while near-atoms enlarge $M_{\rho}(T,C)$ and weaken the stability constants. For nonnegative exposures, the greatest-distress selector still supplies a solution; atoms instead create selection sensitivity.
\end{remark}

\begin{remark}
\label{rem:indicator-empirical}
The projection-density assumption concerns the limiting law $\mu_0$. For atomic empirical measures, a probabilistic threshold-mass estimate is available: if $Z_1,\dots,Z_N$ are i.i.d.\ samples from $\mu_0$ and $Y_i:=\Psi_t(Z_i,\beta)$ for fixed $(t,\beta)$, then the empirical threshold mass inherits a high-probability small-ball bound from the density of $Y_i$. Indeed, if \cref{ass:density} holds and $\varepsilon,\eta>0$, the Dvoretzky--Kiefer--Wolfowitz inequality yields
\[
\Prob\!\left( \sup_{x\in\R} \frac1N\sum_{i=1}^N \1_{\{\abs{Y_i-x}\le \varepsilon\}} > 2M_\rho(T,C)\varepsilon + 2\eta \right)
\le 2e^{-2N\eta^2}.
\]
Thus the projected empirical mass in an $\varepsilon$-neighborhood of the threshold is typically of order $\varepsilon+N^{-1/2}$. The quantile-matched constructions in \cref{sec:numerics} replace random sampling by deterministic quantiles. For the sampled feedback equation, the analysis requires a uniform bound over the entire feedback cube, which is provided by the next lemma.
\end{remark}

\begin{lemma}[Uniform empirical process bound for the fixed-rank indicator classes]
\label{lem:indicator-vc}
Assume \cref{ass:bounded-types}. Let $Z_1,\dots,Z_N$ be i.i.d.\ with law $\mu_0$, write $\mathbb P_N:=N^{-1}\sum_{i=1}^N \delta_{Z_i}$ and $\mathbb P f:=\int f\,\dd\mu_0$, and fix $T>0$. Set $\mathcal C_T:=[-MT,MT]^K$. For $\alpha,\gamma\in[-M,M]^K$, $t\in[0,T]$, $\beta\in\mathcal C_T$, and $1\le k\le K$, define
\[
g_{t,\beta,\alpha,\gamma,k}(z):= b_k\,\1\!\left\{x+\mu t+\frac{rt}{K}\sum_{j=1}^K (a_j\gamma_j-b_j\alpha_j)-\frac{1}{K}\sum_{j=1}^K a_j\beta_j\le 0\right\},
\qquad z=(x,a,b)\in\mathcal Z.
\]
Let $\mathcal G_{T,K}$ be the class of these functions restricted to
\[
\mathcal Z_M:=\{(x,a,b):\max_{1\le j\le K}(|a_j|\vee|b_j|)\le M\},
\]
which contains the support of $\mu_0$. Then $\mathcal G_{T,K}$ is a bounded VC-subgraph class with envelope $M$. For every $N\ge2$, there is a measurable random variable $Z_{N,K}$ such that
\[
\sup_{g\in\mathcal G_{T,K}} |(\mathbb P_N-\mathbb P)g|\le Z_{N,K},
\qquad
\E Z_{N,K}\le C(M)\sqrt{\frac{K\log N}{N}}.
\]
\end{lemma}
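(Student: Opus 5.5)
The argument has two parts: a VC-dimension bound for the class, and a standard maximal inequality for VC-subgraph classes. For the first part, fix $k$ and note that each $g\in\mathcal G_{T,K}$ is a product of the fixed coordinate function $z\mapsto b_k$ with the indicator of a half-space in the $2K+1$ variables $(x,a,b)$. Indeed, the condition inside the indicator can be written as
\[
x+\sum_j c_j a_j+\sum_j d_j b_j+e\le 0,
\]
with coefficients depending on $(t,\beta,\alpha,\gamma)$:
\[
c_j=\frac{rt\gamma_j-\beta_j}{K},\qquad d_j=-\frac{rt\alpha_j}{K},\qquad e=\mu t.
\]
So for each $k$ the indicator sets lie in the class of affine half-spaces of $\R^{2K+1}$. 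That class has VC dimension at most $2K+2$ (Radon/Dudley).

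Next I pass from the indicators to the functions $g$ themselves. Multiplying a VC class of sets by one fixed function keeps the subgraph class VC, with dimension $O(K)$. The subgraph of $b_k\1_H$ is $\{(z,s): s< b_k\1_H(z)\}$, which equals
\[
\bigl(H\cap\{s<b_k\}\bigr)\cup\bigl(H^c\cap\{s<0\}\bigr).
\]
This is a fixed Boolean combination of one set from a VC class and fixed sets, so it is VC with dimension $O(K)$ (e.g.\ van der Vaart--Wellner, Lemma 2.6.18). Taking the union over $k=1,\dots,K$ multiplies the shatter coefficient by $K$. This adds only $O(\log K)$ to the effective dimension, which is harmless since $K\le$ const$\cdot$dimension. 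On $\mathcal Z_M$ we have $|g|\le|b_k|\le M$, so $M$ is an envelope. I take $\mathcal Z_M$ to contain the support by \cref{ass:bounded-types}.

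For the second part, define
\[
Z_{N,K}:=\sup_{g\in\mathcal G_{T,K}}|(\mathbb P_N-\mathbb P)g|.
\]
Measurability is the main technical nuisance, since the class is indexed by the uncountable parameter set $[0,T]\times\mathcal C_T\times[-M,M]^{2K}$. I will handle it by showing the class is pointwise measurable. The key fact is that, for each $z$, the map $(t,\beta,\alpha,\gamma)\mapsto g(z)$ is the indicator of a closed condition in the parameters. So every $g$ is a pointwise limit of $g$'s with rational parameters: approach from the side where the affine form is strictly smaller, e.g.\ replace $x$-offset by $\mu t-\epsilon$ via choosing rational parameters making the affine form decrease, or more simply enlarge the class by a harmless rational offset $e\in\mathbb Q$. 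If pointwise measurability proves awkward, I will instead define $Z_{N,K}$ as the measurable cover (outer envelope) of the supremum. The statement only asks for a measurable dominating variable, so this suffices.

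Then symmetrization gives $\E Z_{N,K}\le 2\E\sup_g|N^{-1}\sum\epsilon_i g(Z_i)|$. Conditionally on the sample, Sauer's lemma bounds the number of distinct sign patterns of $(\1_H(Z_i))_i$ by $(eN/V)^{V}$ with $V=O(K)$ per $k$, hence $K(eN)^{O(K)}$ in total. The vectors $(g(Z_i))_i$ are $(b_k(Z_i)\1_H(Z_i))_i$, so their number is the same. Each has Euclidean norm at most $M\sqrt N$. Massart's finite-class lemma then gives a conditional bound
\[
M\sqrt{2\log(K(eN)^{CK})/N}\le C(M)\sqrt{K\log N/N}
\]
for $N\ge2$, using $\log K\le K\log N$.

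The obstacle I expect is keeping the constant dependent only on $M$ and not on $K,T,\mu,r$. The finite-pattern Massart route achieves this directly, because the parameters only affect which half-spaces arise, not the count. I would prefer it over Dudley entropy integrals precisely for that reason.
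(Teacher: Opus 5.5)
Your proposal follows the paper's proof essentially step for step. Both use affine half-spaces in $\R^{2K+1}$ with VC dimension at most $2K+2$, write the subgraph of $b_k\1_H$ as a fixed Boolean combination, pick up a factor $K$ in the growth bound from the union over $k$, and finish with symmetrization and a finite-class (Massart-type) Rademacher bound over at most $K(N+1)^{2K+2}$ patterns.

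For measurability, the paper uses exactly your ``enlarge the class'' option. It defines $Z_{N,K}$ as the supremum over $b_k\1_H$ with $H$ ranging over all rational-coefficient half-spaces, and approximates each closed half-space pointwise by moving its constant term outward by $1/n$. You should commit to that option: rational approximation inside the parametrized class can fail at boundary points, since for instance when $\mu=0$ no parameter shifts the constant term uniformly.
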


\begin{proof}[Proof sketch]
The thresholds form a class of affine halfspaces in $\R^{2K+1}$. Enlarge it to a pointwise separable halfspace class, multiply by the bounded coordinates $b_k$, and apply symmetrization and the VC growth bound; see \cref{app:proofs}.
\end{proof}

\begin{theorem}[Fixed-rank finite-$N$ convergence estimate for selected indicator solutions]
\label{thm:indicator-finiteN}
Assume \cref{ass:bounded-types} and the conditional-density condition \eqref{eq:conditional-density} of \cref{rem:density-sufficient}. Let $N\ge2$ and let $Z_1,\dots,Z_N$ be i.i.d.\ with law $\mu_0$, define
\[
\mu_0^N:=\frac1N\sum_{i=1}^N \delta_{Z_i},
\qquad
\bar a_k^N:=\int a_k\,\dd\mu_0^N,
\qquad
\bar b_k^N:=\int b_k\,\dd\mu_0^N,
\]
and, for $\beta\in\mathcal C_T=[-MT,MT]^K$, set
\[
\Lambda^N(z):=\frac{r}{K}\sum_{k=1}^K (a_k\bar b_k^N-b_k\bar a_k^N),
\qquad
\Psi_t^N(z,\beta):=x+\mu t+t\Lambda^N(z)-\frac1K\sum_{k=1}^K a_k\beta_k,
\]
\[
F_k^N(t,\beta):=\int b_k\,\1_{\{\Psi_t^N(z,\beta)\le 0\}}\,\mu_0^N(\dd z),
\qquad 1\le k\le K.
\]
Let $\beta$ be the unique solution of the deterministic limit equation from \cref{thm:indicator}. Let $S_N:\mathcal Z^N\to C([0,T];\mathcal C_T)$ be Borel measurable and set $\beta^N=S_N(Z_1,\ldots,Z_N)$. Assume that, for $\mu_0^{\otimes N}$-almost every sample,
\[
\beta_k^N(t)=\int_0^t F_k^N(s,\beta^N(s))\,\dd s,
\qquad 1\le k\le K.
\]
Then $\beta^N$ is absolutely continuous, and there exists $C_T<\infty$, depending only on $T$, $M$, $|r|$, and $\bar M$, such that
\[
\E\Bigl[\sup_{0\le t\le T} \norm{\beta^N(t)-\beta(t)}_\infty\Bigr]
\le C_T\sqrt{\frac{K\log N}{N}}.
\]
If, in addition, $\E|X_0|<\infty$, then
\[
\E\Bigl[\sup_{0\le t\le T} \Wone(\nu_t^N,\nu_t)\Bigr]
\le C_T\sqrt{\frac{K\log N}{N}} + C_T\E\Wone(\mu_0^N,\mu_0),
\]
where $\nu_t^N:=(\Xi_t^N)_\#\mu_0^N$, $\nu_t:=(\Xi_t)_\#\mu_0$, and
\begin{align*}
\Xi_t^N(z)&:=\Bigl(x+\mu t+t\Lambda^N(z)-\frac1K\sum_{k=1}^K a_k\beta_k^N(t),a,b\Bigr),\\
\Xi_t(z)&:=\Bigl(x+\mu t+t\Lambda(z)-\frac1K\sum_{k=1}^K a_k\beta_k(t),a,b\Bigr).
\end{align*}
\end{theorem}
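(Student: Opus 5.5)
We compare the sampled integral equation with the deterministic limit equation through a Gronwall argument. The sampling error enters only through a supremum over the indicator class of \cref{lem:indicator-vc}, together with the empirical factor means. Write $F_k$ for the limit feedback map \eqref{eq:Fk} with $\ell=\1_{(-\infty,0]}$.

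First, since $|b_k|\le M$ on the support, $|F_k^N|\le M$. Hence $\beta^N$ is Lipschitz, so it is absolutely continuous and stays in $\mathcal C_T$; the limit $\beta$ also stays in $\mathcal C_T$. For $s\le T$ we decompose
\[
F_k^N(s,\beta^N(s))-F_k(s,\beta(s))
=\bigl[F_k^N(s,\beta^N(s))-\tilde F_k(s,\beta^N(s))\bigr]
+\bigl[\tilde F_k(s,\beta^N(s))-F_k(s,\beta^N(s))\bigr]
+\bigl[F_k(s,\beta^N(s))-F_k(s,\beta(s))\bigr],
\]
where $\tilde F_k(s,\beta):=\int b_k\1_{\{\Psi^N_s(z,\beta)\le0\}}\,\mu_0(\dd z)$. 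In $\tilde F_k$ the integrating measure is the true law, but the means inside $\Lambda^N$ are frozen at the empirical values $(\bar a^N,\bar b^N)\in[-M,M]^{2K}$.

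The three brackets are handled as follows.
\begin{enumerate}[label={\rm(\roman*)},leftmargin=2.2em]
\item The first bracket is $(\mathbb P_N-\mathbb P)g$ with $g=g_{s,\beta^N(s),\bar a^N,\bar b^N,k}\in\mathcal G_{T,K}$. The parameters are random, but the class is indexed uniformly over them, so the bracket is at most $Z_{N,K}$.
\item For the second bracket, the two indicators differ only when $\Psi_s$ lies within $|s\,r|\,K^{-1}\sum_j(|a_j||\bar b_j^N-\bar b_j|+|b_j||\bar a_j^N-\bar a_j|)\le T|r|M\,D_N$ of $0$, where $D_N:=\max_j(|\bar a_j^N-\bar a_j|+|\bar b_j^N-\bar b_j|)$. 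By the uniform conditional-density bound of \cref{rem:density-sufficient}, this event has $\mu_0$-mass at most $2\bar M T|r|M D_N$, so the bracket is bounded by $2\bar M T|r|M^2D_N$. Since the factors are bounded, $\E D_N\le C(M)N^{-1/2}$.
\item The third bracket is controlled by the tube estimate from the proof of \cref{thm:indicator}: it is at most $L\norm{\beta^N(s)-\beta(s)}_\infty$ with $L=2\bar M M^2$. This uses \eqref{eq:conditional-density}, which gives \cref{ass:density} with $M_\rho=\bar M$.
\end{enumerate}

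Integrating these bounds and setting $e(t)=\sup_{u\le t}\norm{\beta^N(u)-\beta(u)}_\infty$ gives
\[
e(t)\le T(Z_{N,K}+C D_N)+L\int_0^t e(s)\,\dd s.
\]
Gronwall's inequality then yields $e(T)\le e^{LT}T(Z_{N,K}+CD_N)$ pathwise. Taking expectations and using \cref{lem:indicator-vc} gives the bound $C_T\sqrt{K\log N/N}$. The $N^{-1/2}$ term is absorbed because $K\ge1$ and $\log N\ge\log2$.

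For the Wasserstein statement, fix $t$ and couple $\nu_t^N$ and $\nu_t$ through an optimal coupling $\pi$ of $(\mu_0^N,\mu_0)$, pushed forward by $(\Xi^N_t,\Xi_t)$. For $(z,z')\sim\pi$ we estimate $|\Xi^N_t(z)-\Xi_t(z')|$. We have $|x-x'|+|a-a'|+|b-b'|\le|z-z'|$. The term $t|\Lambda^N(z)-\Lambda(z')|$ is at most $C(T,M,r)(|z-z'|+D_N)$. The feedback term $K^{-1}|\sum_k(a_k\beta^N_k-a'_k\beta_k)|$ is at most $MTK^{-1}\sum_k|a_k-a'_k|+M e(T)$. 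Integrating over $\pi$ gives
\[
\sup_t\Wone(\nu^N_t,\nu_t)\le C_T\bigl(\Wone(\mu_0^N,\mu_0)+D_N+e(T)\bigr),
\]
and taking expectations gives the claim. The integrability $\E|X_0|<\infty$ is needed only so that $\Wone(\mu_0^N,\mu_0)$ is finite.

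\textbf{Main obstacle.} The delicate step is measurability combined with the use of random parameters inside the empirical-process bound. We rely on the uniform class of \cref{lem:indicator-vc} and the measurable envelope $Z_{N,K}$ it supplies, together with Borel measurability of $S_N$, so that $e(T)$ is a random variable and the pathwise Gronwall bound can be integrated. We also need to check that the tube estimate holds uniformly over the frozen empirical means. \Cref{rem:density-sufficient} provides exactly this uniformity, and it is why the stronger conditional condition \eqref{eq:conditional-density} is assumed instead of \cref{ass:density} alone.
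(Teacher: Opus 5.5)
Your argument is correct and is essentially the paper's proof. It uses the same ingredients:
\begin{itemize}
\item a uniform empirical-process term over $\mathcal G_{T,K}$, with the means frozen at their sample values and bounded by $Z_{N,K}$ from \cref{lem:indicator-vc};
\item a mean-perturbation term controlled by a thin tube around the threshold and the density bound;
\item the Lipschitz constant $2M^2\bar M$ of the deterministic field, followed by Gronwall;
\item a coupling argument for the state--factor laws.
\end{itemize}
For the coupling, the paper passes through $(\Xi_t)_\#\mu_0^N$ via the triangle inequality, while you use one optimal coupling of $(\mu_0^N,\mu_0)$ directly; the two are equivalent.

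One small correction. $D_N$ is a maximum over $2K$ centred bounded sample means. So $\E D_N$ is of order $M\sqrt{\log(2K)/N}$ (Hoeffding plus a union bound), not $C(M)N^{-1/2}$ with a $K$-free constant. Since $\log(2K)\le 2K\log N$ for $N\ge2$, the corrected bound is still absorbed into $C_T\sqrt{K\log N/N}$, and your conclusion stands.
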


\begin{proof}[Proof sketch]
Separate the feedback-field error into a uniform empirical-process term and a perturbation of the factor means. The first is controlled by \cref{lem:indicator-vc}; the second uses the conditional density and concentration of the sample means. Gronwall's lemma gives the feedback-path estimate. Coupling the state maps on the sampled atoms then gives the state-law bound; see \cref{app:proofs-sec3}.
\end{proof}

\begin{corollary}[Canonical Borel selection for nonnegative sampled networks]
\label{cor:indicator-canonical-sample}
In the setting of \cref{thm:indicator-finiteN}, assume that the induced sampled exposure matrix
\[
e_{ij}^N=\frac1K\sum_{k=1}^K a_{i,k}^N b_{j,k}^N
\]
is entrywise nonnegative for every sample outside a null set. Then the sampled hard-threshold system has a greatest cumulative-distress solution $H^{N,+}$ and a smallest state solution $X^{N,+}$ on that event. Define $H^{N,+}$ on the Borel set where the factors satisfy the bound $M$ and the exposure matrix is nonnegative, and extend it by zero elsewhere. The map from $(Z_1,\ldots,Z_N)$ to $H^{N,+}\in C([0,T];\R^N)$ is Borel measurable, as is
\[
\beta_k^{N,+}(t):=\frac1N\sum_{j=1}^N b_{j,k}^N H_j^{N,+}(t),
\qquad 1\le k\le K.
\]
This feedback path satisfies the sampled equation in \cref{thm:indicator-finiteN}. Consequently the estimates of that theorem hold for this canonical solution. Nonnegativity is required of the exposure matrix, allowing signed factors.
\end{corollary}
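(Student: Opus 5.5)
We construct the greatest cumulative-distress solution by monotone iteration in the finite system, with the sample held fixed. Fix a sample on the Borel event $\mathcal E_N$ where all factors satisfy the bound $M$ and $E^N=(e^N_{ij})$ is entrywise nonnegative. In terms of occupation times $H_j(t)=\int_0^t\1_{\{X^{j}_s\le0\}}\dd s$, the hard-threshold system \eqref{eq:finite-network} reads $X^i_t=x_i+c_it-\frac1N\sum_j e_{ij}H_j(t)$, where $c_i=\mu+\frac rN\sum_j(e_{ij}-e_{ji})$. Consider the map
\[
\mathcal T(H)_i(t)=\int_0^t\1\Bigl\{x_i+c_is-\tfrac1N\textstyle\sum_j e_{ij}H_j(s)\le0\Bigr\}\dd s
\]
on the set $\mathcal H$ of $\R^N$-valued paths that are nondecreasing and $1$-Lipschitz with $H(0)=0$. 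Nonnegativity of $e_{ij}$ makes $\mathcal T$ monotone: $H\le\tilde H$ pointwise implies $\mathcal T(H)\le\mathcal T(\tilde H)$. The set $\mathcal H$ is a complete lattice under the pointwise order, with top element $H_j(t)=t$. Tarski's theorem therefore gives a greatest fixed point $H^{N,+}$. The corresponding state $X^{N,+}$ is the smallest solution state, because $X$ is decreasing in $H$.

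For the constructive and measurable version, iterate downward: $H^{(0)}_j(t)=t$ and $H^{(n+1)}=\mathcal T(H^{(n)})$. This sequence is nonincreasing and equi-Lipschitz, so it converges uniformly to a limit $H^\infty$. The key step is to show that $H^\infty$ is a fixed point. The state $X^{(n)}$ increases to $X^\infty$, so $\1\{X^{(n)}_s\le0\}\to\1\{X^\infty_s\le0\}$ pointwise: the indicator of a closed sublevel set is upper semicontinuous along increasing sequences, because $X^{(n)}\le0$ for all $n$ iff $X^\infty\le0$. Dominated convergence then gives $\mathcal T(H^{(n)})\to\mathcal T(H^\infty)$. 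Any fixed point lies below $H^{(0)}$, hence below every $H^{(n)}$ by monotonicity, hence below $H^\infty$. So $H^\infty=H^{N,+}$.

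Measurability follows from the iteration. For fixed $t$, each $H^{(n)}(t)$ is a Borel function of $(Z_1,\dots,Z_N)$. This holds by induction: the integrand is jointly Borel in $(s,Z)$, since it is the indicator of a Borel set built from continuous data and the previous iterate, and Fubini preserves measurability. Pointwise limits of Borel maps are Borel. Because the paths are continuous, Borel measurability into $C([0,T];\R^N)$ follows from measurability of the evaluation maps at rational times. The event $\mathcal E_N$ is Borel, and extending by zero preserves measurability. Then $\beta^{N,+}_k$ is a finite linear combination with Borel coefficients, so it is measurable as well.

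It remains to verify the hypotheses of \cref{thm:indicator-finiteN}. By \cref{lem:reformulation}, $\beta^{N,+}$ solves the sampled feedback equation with $F^N_k$ on $\mathcal E_N$. Here one checks that $\Psi^N_t(z_i,\beta^{N,+}(t))$ equals $X^{i,+}_t$, using the factorization and the identity for the imbalance term. Also $|\beta^{N,+}_k(t)|\le Mt$, so $\beta^{N,+}$ takes values in $\mathcal C_T$. The theorem only requires the equation on a set of full $\mu_0^{\otimes N}$-measure. By \cref{ass:bounded-types} and the hypothesis, $\mathcal E_N$ has full measure, so the zero extension off $\mathcal E_N$ is harmless. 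The theorem applies with $S_N$ equal to this map. The main obstacle is the limit-passage step, namely upper semicontinuity of the indicator under the monotone limit; the sign of $e_{ij}$ is exactly what makes the iterates monotone. Signed factors are allowed because only the sign of $E^N$ enters.
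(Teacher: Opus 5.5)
Your proposal is correct and follows essentially the same route as the paper: downward iteration of the order-preserving occupation-time map from the top profile $H_j^{(0)}(t)=t$, and passage to the limit via $\1_{\{a_n\le0\}}\downarrow\1_{\{a\le0\}}$ for $a_n\uparrow a$. It then upgrades Borel measurability of the iterates to $C([0,T];\R^N)$ through the common Lipschitz bound and rational-time evaluations, extends by zero off the full-measure Borel event, and identifies the result with the sampled feedback equation via \cref{lem:reformulation}. Your Tarski argument is correct but redundant, since the constructive iteration already produces the greatest fixed point.
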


\begin{proof}[Proof sketch]
On the finite probability space, the indicator feedback map is order preserving. Iteration from the top profile $H_i^{(0)}(t)=t$ decreases to a hard fixed point that dominates every other fixed point. The same path is the full vanishing limit of the positive-side ramp. Finite upper iterates are Borel functions of the sample, and their monotone convergence is uniform in time, which makes the limit a Borel $C([0,T];\R^N)$-valued map. Exact low-rank reformulation then gives the displayed feedback path. The complete argument is given on page~\pageref{proof:cor:indicator-canonical-sample} in \cref{app:proofs}.
\end{proof}

\begin{corollary}[Sampled stability for a fixed regularization]
\label{cor:regularized-sampled}
Fix $\varepsilon>0$ and let $\ell_\varepsilon$ be the Lipschitz regularization in \eqref{eq:elleps}. Assume \cref{ass:bounded-types} and $\E|X_0|<\infty$. The deterministic and sampled rank-$K$ feedback equations with loss $\ell_\varepsilon$ have unique global solutions, denoted by $\beta^\varepsilon$ and $\beta^{N,\varepsilon}$. The map
\[
(Z_1,\ldots,Z_N)\longmapsto \beta^{N,\varepsilon}\in C([0,T];\R^K)
\]
is Borel measurable. If $\mu_0^N$ is the empirical law of $N$ i.i.d.\ samples from $\mu_0$, then, for every $T>0$,
\[
\sup_{0\le t\le T}
\Wone\!\left(\nu_t^{N,\varepsilon},\nu_t^{\varepsilon}\right)
\le C_{T,\varepsilon}\,\Wone(\mu_0^N,\mu_0)
\qquad\text{a.s.},
\]
where
\[
\nu_t^{N,\varepsilon}:=(\Xi_t^{N,\varepsilon})_\#\mu_0^N,
\qquad
\nu_t^{\varepsilon}:=(\Xi_t^{\varepsilon})_\#\mu_0
\]
are the joint state--factor laws defined as in \cref{thm:finite-rank-w1}. The constant depends only on $T$, the factor bound, $|r|$, and the Lipschitz constant $1/\varepsilon$. In particular, the same estimate holds after taking expectations whenever $\E\Wone(\mu_0^N,\mu_0)<\infty$.
\end{corollary}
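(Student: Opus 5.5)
The argument reduces to \cref{prop:beta-ode} and \cref{thm:finite-rank-w1} applied with the Lipschitz loss $\ell_\varepsilon$. From \eqref{eq:elleps}, $\ell_\varepsilon$ takes values in $[0,1]$ and is Lipschitz with constant $1/\varepsilon$.

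\textbf{Well-posedness.} By \cref{prop:beta-ode}, the feedback field $F^\varepsilon$ built from $\mu_0$, and likewise $F^{N,\varepsilon}$ built from any empirical measure satisfying the factor bound $M$, is globally Lipschitz in $\beta$ with constant $M^2/\varepsilon$, uniformly in $t\in[0,T]$. It is also bounded by $M$, and it is measurable in $t$. Carath\'eodory theory therefore gives unique global solutions $\beta^\varepsilon$ and $\beta^{N,\varepsilon}$. Both stay in the cube $\mathcal C_T$, since $|\dot\beta_k|\le M$.

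\textbf{Measurability.} I would realize $\beta^{N,\varepsilon}$ as the uniform limit of Picard iterates:
\[
\beta^{(0)}\equiv0,\qquad \beta^{(n+1)}_k(t)=\int_0^t F^{N,\varepsilon}_k\bigl(s,\beta^{(n)}(s)\bigr)\,\dd s .
\]
The field $F^{N,\varepsilon}_k(s,\beta)=N^{-1}\sum_i b_{i,k}\,\ell_\varepsilon\bigl(\Psi^N_s(Z_i,\beta)\bigr)$ is jointly continuous in the sample, $s$ and $\beta$. Indeed, $\Lambda^N$ is a polynomial in the sample through the means $\bar a^N,\bar b^N$, and $\ell_\varepsilon$ is continuous. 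By induction, each iterate is then a continuous map from $\mathcal Z^N$ into $C([0,T];\R^K)$. The standard Picard bound
\[
\norm{\beta^{(n)}-\beta^{N,\varepsilon}}_\infty\le \frac{MT\,(M^2T/\varepsilon)^n}{n!}
\]
is uniform over samples whose factors are bounded by $M$. So $\beta^{N,\varepsilon}$ is a uniform limit of continuous maps on that Borel set, hence continuous there. Extending it by zero on the complementary null set gives a Borel map.

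\textbf{Stability.} For each fixed sample outside a null set, \cref{ass:bounded-types} holds for both $\mu_0^N$ and $\mu_0$. Apply \cref{thm:finite-rank-w1} pathwise, with $\mu_0^N$ treated as a deterministic type law. Its proof never uses convergence $\mu_0^N\to\mu_0$; it only compares two type measures with bounded factors and first moments. Here $\E|X_0|<\infty$ makes $\Wone(\mu_0^N,\mu_0)$ finite almost surely. This yields
\[
\sup_{t\le T}\Wone\bigl(\nu_t^{N,\varepsilon},\nu_t^\varepsilon\bigr)\le C_{T,\varepsilon}\,\Wone(\mu_0^N,\mu_0),
\]
with $C_{T,\varepsilon}$ depending on $T$, $M$, $|r|$, $\ell_\ast\le1$ and $L_\ell=1/\varepsilon$. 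Taking expectations gives the last assertion.

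\textbf{Main obstacle.} The delicate point is confirming that the constant in \cref{thm:finite-rank-w1} is genuinely non-asymptotic. The Gronwall argument there must be run for an arbitrary pair of type laws, which the proof sketch in \cref{app:proofs-sec3} indicates. The measurability of $(Z_1,\dots,Z_N)\mapsto\Wone(\mu_0^N,\mu_0)$ must also be justified. It follows from Lipschitz continuity of this map in the atoms with respect to the averaged $\ell^1$ distance.
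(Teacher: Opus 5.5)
Your proposal is correct and follows the paper's own argument. The $\ell_\varepsilon$-fields are globally Lipschitz, so Picard iteration gives unique solutions and, through Borel (indeed continuous) iterates, measurability of the sampled solution map; the coupling--Gronwall proof of \cref{thm:finite-rank-w1}, which never uses convergence of $\mu_0^N$, then gives the bound pathwise with $\ell_\ast=1$ and $L_\ell=1/\varepsilon$. The only cosmetic difference is that you restrict to the factor-bounded set and extend by zero. This step is unnecessary, because for every fixed sample the empirical field is Lipschitz with a sample-dependent constant, so the Picard iterates converge everywhere and their limit is Borel directly, which is how the paper argues.
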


\begin{proof}
For fixed $\varepsilon>0$, the sampled and deterministic vector fields are globally Lipschitz. Picard iteration gives unique solutions and, since every Picard iterate is a Borel function of the sampled types, also gives Borel measurability of the sampled solution map. The stability estimate is \cref{thm:finite-rank-w1} applied to $\ell_\varepsilon$.
\end{proof}

The regularized estimate depends on $\varepsilon$ through $\operatorname{Lip}(\ell_\varepsilon)=1/\varepsilon$. Threshold regularity gives the uniform smoothing bound in \cref{prop:graphon-indicator-smoothing}.

\begin{remark}[Canonical selections via smoothing]
\label{rem:indicator-selection}
For nonnegative exposures, the canonical selected path in \cref{cor:indicator-canonical-sample} is obtained by replacing $\1_{\{x\le 0\}}$ with the positive-side Lipschitz regularizations
\begin{equation}
\label{eq:elleps}
\ell_\varepsilon(x):=
\begin{cases}
1, & x\le 0,\\
1-x/\varepsilon, & 0<x<\varepsilon,\\
0, & x\ge \varepsilon,
\end{cases}
\qquad \varepsilon>0,
\end{equation}
solving the regularized sampled ODE, and letting $\varepsilon\downarrow0$. By \cref{thm:indicator-positive-ramp}, the cumulative-distress profiles decrease to the greatest hard solution, while the state profiles increase to its associated state. The positive-side regularization \eqref{eq:elleps} is used in the stylized experiments of \cref{sec:numerics}. Section~\ref{sec:real-data} instead uses the negative-side ramp, which converges pointwise to the strict convention $\1_{\{x<0\}}$. The two ramps have the same limit under a threshold-density condition but can select different dynamics for atomic data; for example, with $W=0$, $x_0=0$, and $\mu=r=0$, the non-strict hard solution has $H_t=t$, while every negative-side regularized solution has $H_t=0$.
\end{remark}

\begin{remark}[Rate separation and the low-rank advantage]
\label{rem:rate-separation}
For fixed $K$, the feedback error in \cref{thm:indicator-finiteN} is of order $\sqrt{K\log N/N}$. The state-law bound also contains $\E\Wone(\mu_0^N,\mu_0)$, an empirical-measure term on $\R^{2K+1}$. Its rate depends on dimension and tail behavior; under a moment condition $\E|Z|^q<\infty$ for some $q>1$, Fournier and Guillin \cite{fournier2015} give quantitative bounds. The two terms quantify different statistical tasks: estimating the $K$ aggregate feedback coordinates and estimating the full type distribution.
\end{remark}

\begin{remark}
\label{rem:indicator-finiteN-interpretation}
Under the conditional-density assumption of \cref{thm:indicator-finiteN}, all measurable sampled selections converge to the same unique limit. For nonnegative exposure matrices, \cref{cor:indicator-canonical-sample} supplies such a selection.
\end{remark}

\section{The directed-kernel (labeled-graphon) formulation}
\label{sec:graphon}

A bounded directed kernel represents general dense exposure patterns. We first establish bounded-Lipschitz stability and a finite-rank approximation bound, then treat hard thresholds through order and threshold regularity. \Cref{tab:result-map} summarizes the results.

A graphon here is a bounded directed kernel on $(I,\lambda)$, where $I=[0,1]$ and $\lambda$ is Lebesgue measure. All convergence estimates use the stated $L^p$ norms on this fixed, aligned latent space.

\begin{table}[!t]
\centering
\caption{Map of the main results and their standing assumptions.}
\label{tab:result-map}
\small
\renewcommand{\arraystretch}{1.15}
\setlength{\tabcolsep}{4pt}
\begin{tabularx}{\textwidth}{>{\raggedright\arraybackslash}p{0.25\textwidth}>{\raggedright\arraybackslash}p{0.22\textwidth}>{\raggedright\arraybackslash}X}
\toprule
Regime & Results & Key assumptions\\
\midrule
Bounded-Lipschitz, fixed rank & \cref{thm:finite-rank-w1} & bounded factors; $\Wone$ convergence of type laws\\
Bounded-Lipschitz, graphon & \cref{thm:graphon-wellposed,thm:bridge} & bounded kernels; $L^1$ kernel/profile convergence; admissible bounded-factor approximants\\
Indicator, finite nonnegative networks & \cref{thm:indicator-finiteN,cor:indicator-canonical-sample} & canonical greatest solution; threshold density only for the limit estimate\\
Indicator, factorized graphon & \cref{thm:factorized-indicator} & density/transversality of scalar threshold projections\\
Indicator, arbitrary nonnegative kernel & \cref{thm:indicator-greatest,thm:indicator-positive-ramp} & bounded nonnegative kernel; greatest cumulative-distress convention\\
Indicator uniqueness and strong bridge & \cref{prop:indicator-osgood,thm:indicator-osgood-stability,cor:indicator-osgood-bridge} & Osgood tube modulus of the target path; aligned $L^p$ approximation\\
Indicator, sampled latent labels & \cref{thm:indicator-sampled-label} & deterministic weights $W(U_i,U_j)$; Osgood-regular target path\\
Indicator, symmetric linear refinement & \cref{thm:uniformly-transverse-stability,thm:restricted-indicator-bridge} & uniformly transverse family; $L^\infty$ kernel/profile control\\
\bottomrule
\end{tabularx}
\end{table}

\Cref{cor:trigonometric-family} gives a family for which uniform transversality can be checked explicitly.

\subsection{Directed graphon contagion equation}

A kernel \(W\in L^\infty(I^2)\) represents the exposure of type \(u\) to counterparty type \(v\), with the same orientation as $e_{ij}^N$ in \eqref{eq:finite-network}. Financial exposures are nonnegative; bounded signed kernels also enter the analysis through spectral approximations.

The unadorned letter $W$ denotes a bounded measurable directed kernel on $I^2$. We write $W_1,W_2$ for two kernels in stability estimates, $W^{(K)}$ for rank-$K$ approximants, $\Pi_KW$ for block averages, and $W^{\mathrm{blk}},W^{\mathrm{sm}}$ for the block and smooth components used in the numerical examples. The graphon data are the deterministic pair $(W,x_0)$ with $x_0\in L^\infty(I)$. The Wasserstein distance is denoted by $\Wone$. Define
\begin{equation}
\label{eq:graphon-R}
R_W(u):=\int_I \bigl(W(u,v)-W(v,u)\bigr)\,\dd v,
\end{equation}
and the nonlinear operator
\begin{equation}
\label{eq:gammaW}
(\Gamma_W \varphi)(u):=\int_I W(u,v)\,\ell(\varphi(v))\,\dd v.
\end{equation}
The graphon contagion equation reads
\begin{equation}
\label{eq:graphon-equation}
X_t(u)=x_0(u)+\mu t + r t R_W(u)-\int_0^t (\Gamma_W X_s)(u)\,\dd s.
\end{equation}
This is the directed weighted continuum analogue of \eqref{eq:finite-network}. As in the finite system, $\lambda(\{u:X_t(u)\le0\})$ is an instantaneous threshold fraction, whereas the cumulative profile $H_t$ introduced below is an occupation-time variable.

\begin{uremark}[Deterministic data and sampling]
Equation~\eqref{eq:graphon-equation} is deterministic: $W$ and $x_0$ are fixed measurable functions, and $\nu_t=\Law(X_t(U))$ for $U\sim\mathrm{Unif}(I)$ is the pushforward of Lebesgue measure. Random initial capital can be incorporated by augmenting the latent type, and sampled models are introduced in \cref{rem:sampled-types,thm:indicator-finiteN,thm:indicator-sampled-label,cor:indicator-sampled-bridge}.
\end{uremark}

\subsection{Well-posedness and stability}

\begin{theorem}
\label{thm:graphon-wellposed}
Let $\ell:\R\to[0,\ell_\ast]$ be bounded and Lipschitz with constant $L_\ell$, and let $W\in L^\infty(I^2)$ and $x_0\in L^\infty(I)$. Then for every $T>0$ the graphon equation \eqref{eq:graphon-equation} has a unique solution
\[
X\in C([0,T];L^\infty(I)).
\]
Moreover, if $(W_1,x_0^1)$ and $(W_2,x_0^2)$ generate solutions $X^1$ and $X^2$, and if
\[
\norm{W_1}_{L^\infty} \le M_W,
\qquad
\norm{W_2}_{L^\infty} \le M_W,
\]
then
\begin{equation}
\label{eq:graphon-stability}
\sup_{0\le t\le T}\norm{X_t^1-X_t^2}_{L^1}
\le C_T\Bigl(\norm{x_0^1-x_0^2}_{L^1}+\norm{W_1-W_2}_{L^1(I^2)}\Bigr),
\end{equation}
for a constant $C_T$ depending only on $T,|r|,\ell_\ast,L_\ell$, and $M_W$.
\end{theorem}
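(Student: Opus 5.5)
Existence and uniqueness follow from a Banach fixed-point argument in $C([0,T];L^\infty(I))$, and stability from a Gronwall estimate in $L^1$.

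\emph{Existence and uniqueness.} Define the map $\mathcal T$ on $\mathcal X_T:=C([0,T];L^\infty(I))$ by
\[
(\mathcal T X)_t(u):=x_0(u)+\mu t+rtR_W(u)-\int_0^t(\Gamma_W X_s)(u)\,\dd s .
\]
First, $\mathcal T$ maps $\mathcal X_T$ into itself:
\begin{itemize}
\item $\norm{R_W}_{L^\infty}\le 2\norm{W}_{L^\infty}$;
\item $\norm{\Gamma_W\varphi}_{L^\infty}\le \ell_\ast\norm{W}_{L^\infty}$ for every measurable $\varphi$;
\item $s\mapsto\Gamma_WX_s$ is continuous into $L^\infty(I)$, because
\[
\norm{\Gamma_W\varphi-\Gamma_W\psi}_{L^\infty}\le \norm{W}_{L^\infty}L_\ell\norm{\varphi-\psi}_{L^1}\le \norm{W}_{L^\infty}L_\ell\norm{\varphi-\psi}_{L^\infty}.
\]
\end{itemize}
So the time integral is a Bochner integral and $t\mapsto(\mathcal TX)_t$ is Lipschitz into $L^\infty$.

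The same Lipschitz bound gives
\[
\norm{(\mathcal TX)_t-(\mathcal TY)_t}_{L^\infty}\le L_\ell\norm{W}_\infty\int_0^t\norm{X_s-Y_s}_{L^\infty}\dd s .
\]
Hence $\mathcal T$ is a contraction in the weighted norm $\sup_t e^{-\kappa t}\norm{X_t}_{L^\infty}$ for $\kappa>L_\ell\norm{W}_\infty$. The Banach fixed-point theorem then gives a unique solution on every $[0,T]$. If the Bochner-measurability issue in $L^\infty$ becomes an obstacle, I would instead work pointwise in $u$ with an a.e.-defined representative via Picard iteration. Since the nonlinearity is applied pointwise and the kernel integral is in $v$, this causes no real difficulty.

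\emph{Stability.} Set $\delta_t:=\norm{X^1_t-X^2_t}_{L^1}$ and split the difference of the two equations term by term.
\begin{itemize}
\item Initial term: contributes $\norm{x_0^1-x_0^2}_{L^1}$.
\item Imbalance term: contributes $|r|t\norm{R_{W_1}-R_{W_2}}_{L^1}\le 2|r|T\norm{W_1-W_2}_{L^1(I^2)}$, using Fubini and the swap $(u,v)\mapsto(v,u)$.
\item Feedback term: write
\[
\Gamma_{W_1}X^1_s-\Gamma_{W_2}X^2_s=\int(W_1-W_2)(u,v)\ell(X^1_s(v))\dd v+\int W_2(u,v)\bigl(\ell(X^1_s(v))-\ell(X^2_s(v))\bigr)\dd v .
\]
Its $L^1(I)$ norm is at most $\ell_\ast\norm{W_1-W_2}_{L^1(I^2)}+M_WL_\ell\delta_s$.
\end{itemize}
Integrating the three contributions in time yields
\[
\delta_t\le \norm{x_0^1-x_0^2}_{L^1}+(2|r|+\ell_\ast)T\norm{W_1-W_2}_{L^1}+M_WL_\ell\int_0^t\delta_s\dd s .
\]
Gronwall's lemma then gives \eqref{eq:graphon-stability} with $C_T=(1+(2|r|+\ell_\ast)T)e^{M_WL_\ell T}$. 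This constant depends only on the stated quantities; notably it does not involve $\norm{x_0}_\infty$.

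The main step requiring care is the well-posedness functional setting, namely measurability and continuity in $L^\infty$. The stability estimate itself is routine once the feedback difference is split as above.
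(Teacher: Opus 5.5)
Your proposal is correct and follows the paper's proof essentially step for step. The same Lipschitz bound $\|\Gamma_W\varphi-\Gamma_W\psi\|_{L^\infty}\le\|W\|_{L^\infty}L_\ell\|\varphi-\psi\|_{L^1}$ drives a contraction argument in $C([0,T];L^\infty(I))$; your Bielecki-weighted Banach fixed point is interchangeable with the paper's Picard iteration and factorial estimate. The stability part uses the identical splitting of $R_{W_1}-R_{W_2}$ and $\Gamma_{W_1}X^1-\Gamma_{W_2}X^2$, followed by Gronwall in $L^1$. Your measurability hedge is unnecessary: a continuous $L^\infty(I)$-valued path on $[0,T]$ has compact, hence separable, range and is therefore strongly measurable, as the paper notes.
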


\begin{proof}[Proof sketch]
Well-posedness follows from a Picard iteration in $C([0,T];L^\infty(I))$, because $\Gamma_W$ is Lipschitz from $L^1$, hence from $L^\infty$, into $L^\infty$ with constant $\norm{W}_{L^\infty}L_\ell$. For the stability estimate, subtract the two equations, use $\norm{R_{W_1}-R_{W_2}}_{L^1}\le 2\norm{W_1-W_2}_{L^1(I^2)}$ and $\norm{\Gamma_{W_1}\varphi-\Gamma_{W_2}\psi}_{L^1}\le \ell_\ast\norm{W_1-W_2}_{L^1(I^2)}+M_WL_\ell\norm{\varphi-\psi}_{L^1}$, and apply Gronwall's lemma in $L^1$. The complete proof is given on page~\pageref{proof:thm:graphon-wellposed} in \cref{app:proofs}.
\end{proof}

\begin{corollary}
\label{cor:graphon-approx}
Let $\ell:\R\to[0,\ell_\ast]$ be bounded and Lipschitz, and let $x_0\in L^\infty(I)$. Suppose $W_K\in L^\infty(I^2)$ and $x_0^K\in L^\infty(I)$ satisfy
\[
\sup_{K\ge 1}\norm{W_K}_{L^\infty}<\infty,
\qquad
W_K\to W\text{ in }L^1(I^2),
\qquad
x_0^K\to x_0\text{ in }L^1(I).
\]
If $X^K$ and $X$ solve \eqref{eq:graphon-equation} with data $(W_K,x_0^K)$ and $(W,x_0)$, respectively, then
\[
\sup_{0\le t\le T}\norm{X_t^K-X_t}_{L^1(I)}\to 0
\qquad \text{for every }T>0.
\]
\end{corollary}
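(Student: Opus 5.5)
This is a direct consequence of the stability estimate \eqref{eq:graphon-stability} in \cref{thm:graphon-wellposed}. The plan is to apply that estimate to the pair $(W_K,x_0^K)$ and $(W,x_0)$ with a single constant that does not depend on $K$.

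First fix the uniform bound. Set
\[
M_W:=\max\Bigl(\sup_{K\ge1}\norm{W_K}_{L^\infty},\,\norm{W}_{L^\infty}\Bigr),
\]
which is finite by hypothesis. Note that $\norm{W}_{L^\infty}\le\sup_K\norm{W_K}_{L^\infty}$ in any case: $L^1$ convergence gives an a.e.\ convergent subsequence, so the limit inherits the uniform bound. With this choice, every pair $(W_K,W)$ satisfies the norm hypothesis of \cref{thm:graphon-wellposed} with the same $M_W$.

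Next, \cref{thm:graphon-wellposed} gives unique solutions $X^K,X\in C([0,T];L^\infty(I))$, since all data are bounded. Its stability estimate then yields
\[
\sup_{0\le t\le T}\norm{X_t^K-X_t}_{L^1}\le C_T\Bigl(\norm{x_0^K-x_0}_{L^1}+\norm{W_K-W}_{L^1(I^2)}\Bigr).
\]
The constant $C_T$ depends only on $T$, $|r|$, $\ell_\ast$, $L_\ell$ and $M_W$, so it is independent of $K$. Letting $K\to\infty$ and using the assumed $L^1$ convergences, the right-hand side tends to zero.

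The only point needing care is that $C_T$ is uniform in $K$. This is exactly why the corollary assumes a uniform $L^\infty$ bound on the kernels; without it, $M_W$ would grow with $K$ and the argument would fail. Beyond this there is no real obstacle.
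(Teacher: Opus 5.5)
Your proposal is correct and takes the same route as the paper, which simply calls the corollary immediate from the stability estimate of \cref{thm:graphon-wellposed}. Your remark that $\norm{W}_{L^\infty}\le\sup_K\norm{W_K}_{L^\infty}$, obtained from an a.e.\ convergent subsequence, supplies the detail the paper leaves implicit: it guarantees that the limit kernel is bounded, so $X$ exists and the stability constant is uniform in $K$.
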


\begin{proof}
Immediate from \cref{thm:graphon-wellposed}.
\end{proof}

\begin{theorem}[Low-rank reduction principle for admissible bounded-factor families]
\label{thm:bridge}
Let \(\ell:\R\to[0,\ell_\ast]\) be bounded and Lipschitz. Let \(W\in L^\infty(I^2)\) and \(x_0\in L^\infty(I)\). For each \(K\in\mathbb N\), suppose that the following assumptions hold.

\begin{enumerate}[label={\rm(A\arabic*)},leftmargin=2.5em]
\item \emph{Bounded-factor approximation.} There are coefficient maps \(a_k^{(K)},b_k^{(K)}\in L^\infty(I)\) such that
\[
W^{(K)}(u,v)=\frac{1}{K}\sum_{k=1}^K a_k^{(K)}(u)b_k^{(K)}(v).
\]

\item \emph{Uniform boundedness.} For some \(M<\infty\),
\[
\norm{W}_{L^\infty}
+
\sup_{K\ge1}
\left(
\norm{x_0^{(K)}}_{L^\infty}
+\max_{1\le k\le K}\norm{a_k^{(K)}}_{L^\infty}
+\max_{1\le k\le K}\norm{b_k^{(K)}}_{L^\infty}
\right)
\le M.
\]
The factor bounds imply \(\norm{W^{(K)}}_{L^\infty}\le M^2\) after enlarging the constant if necessary, so the approximating kernels are uniformly bounded.

\item \emph{Continuum approximation.}
\[
W^{(K)}\to W\quad\text{in }L^1(I^2),
\qquad
x_0^{(K)}\to x_0\quad\text{in }L^1(I).
\]
\end{enumerate}

Let \(U\sim{\rm Unif}(I)\) and define the rank-\(K\) type law
\[
\mu_0^{(K)}
:=
\Law\bigl(x_0^{(K)}(U),a^{(K)}(U),b^{(K)}(U)\bigr),
\]
where \(a^{(K)}=(a_1^{(K)},\dots,a_K^{(K)})\) and \(b^{(K)}=(b_1^{(K)},\dots,b_K^{(K)})\). Let \(\mu_0^{N,K}\) be empirical measures on \(\R^{2K+1}\) satisfying
\[
\supp(\mu_0^{N,K})\subset[-M,M]^{2K+1}.
\]
Denote by \(\nu_t^{N,K}\) the law of the state component in the finite-\(N\) rank-\(K\) system associated with \(\mu_0^{N,K}\), by
\[
\nu_t^K:=\Law(X_t^K(U))
\]
the state law of the rank-\(K\) graphon model with kernel \(W^{(K)}\) and profile \(x_0^{(K)}\), and by
\[
\nu_t:=\Law(X_t(U))
\]
the state law of the target graphon model with data \((W,x_0)\). Then for every \(T>0\) there exists \(C_T<\infty\), depending only on \(T,r,\ell_\ast,L_\ell\), and \(M\), such that
\begin{equation}
\label{eq:bridge-bound}
\sup_{0\le t\le T}\Wone(\nu_t^{N,K},\nu_t)
\le C_T\Wone(\mu_0^{N,K},\mu_0^{(K)})
+C_T\left(
\norm{W^{(K)}-W}_{L^1(I^2)}
+\norm{x_0^{(K)}-x_0}_{L^1(I)}
\right).
\end{equation}
\end{theorem}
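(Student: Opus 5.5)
The plan is to use the triangle inequality with the rank-$K$ graphon model as the intermediate object:
\[
\Wone(\nu_t^{N,K},\nu_t)\le \Wone(\nu_t^{N,K},\nu_t^K)+\Wone(\nu_t^K,\nu_t).
\]
The first term is a population-sampling error at fixed rank. The second is a kernel-approximation error in the continuum.

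For the first term, I will identify the rank-$K$ graphon model with the rank-$K$ feedback system of \cref{sec:unified-system}, run with type law $\varrho=\mu_0^{(K)}$. Substituting $W^{(K)}=K^{-1}\sum_k a_k^{(K)}\otimes b_k^{(K)}$ into \eqref{eq:graphon-R} and \eqref{eq:gammaW} gives
\[
R_{W^{(K)}}(u)=K^{-1}\sum_k\bigl(a_k(u)\bar b_k-b_k(u)\bar a_k\bigr),
\qquad
\int_I W^{(K)}(u,v)\,\ell(X_s(v))\,\dd v=K^{-1}\sum_k a_k(u)\,m_k(s).
\]
Here the means and the $m_k$ are integrals over $U\sim\mathrm{Unif}(I)$, which equal integrals against $\mu_0^{(K)}$ because $X_s(v)$ depends on $v$ only through its type. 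This is the continuum analogue of \cref{lem:reformulation}. Uniqueness in \cref{thm:graphon-wellposed} and \cref{prop:beta-ode} shows that $X_t^K(u)=X_t^{\mu_0^{(K)}}(z(u))$, so $\nu_t^K$ is the state marginal of $(\Xi_t)_\#\mu_0^{(K)}$.

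Both $\mu_0^{N,K}$ and $\mu_0^{(K)}$ are supported in $[-M,M]^{2K+1}$ by (A2), so \cref{ass:bounded-types} holds with constant $M$. \Cref{thm:finite-rank-w1} then bounds the joint state--factor Wasserstein distance by $C_T\Wone(\mu_0^{N,K},\mu_0^{(K)})$. The state marginal is the image under the $1$-Lipschitz projection onto the $x$-coordinate in the $\ell^1$ norm, so the same bound holds for $\Wone(\nu_t^{N,K},\nu_t^K)$. By \cref{rem:uniform-K}, $C_T$ does not depend on $K$. I expect this to be the main point needing care. I would recheck the Gronwall argument of \cref{thm:finite-rank-w1} to confirm two things. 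First, the feedback Lipschitz constant $M^2L_\ell$ from \cref{prop:beta-ode} is $K$-free in the $\norm{\cdot}_\infty$ norm. Second, the cost of the coupled $(a,b)$ discrepancy, measured in $\ell^1$ over $2K$ coordinates and multiplied by $1/K$-normalized sums, stays bounded by a $K$-free multiple of the $\ell^1$ coupling distance. Note that $1/K\sum_k|a_k-\tilde a_k|\,|\beta_k|\le MT\cdot\frac1K|a-\tilde a|_1$, which is even better than needed. The mean discrepancies $|\bar a_k-\bar a_k'|$ are each at most $\Wone$, and the $1/K$ averaging keeps the sum $K$-free.

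For the second term, I will couple $\nu_t^K$ and $\nu_t$ through the same uniform label $U$, which gives $\Wone(\nu_t^K,\nu_t)\le\norm{X_t^K-X_t}_{L^1(I)}$. Then \eqref{eq:graphon-stability} applies with $M_W=M^2\vee M$, giving the bound $C_T(\norm{x_0^{(K)}-x_0}_{L^1}+\norm{W^{(K)}-W}_{L^1(I^2)})$. That constant depends only on $T,|r|,\ell_\ast,L_\ell,M$. Adding the two bounds and taking the supremum over $t\in[0,T]$ gives \eqref{eq:bridge-bound}.

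A side point to verify is that the profile $x_0^{(K)}$ and the coefficient maps are bounded measurable, so the graphon solution lies in $C([0,T];L^\infty)$. This makes the identification with the type-level system legitimate for almost every $u$. The type-level state is a deterministic function of $z$, so the pushforward statements are exact.
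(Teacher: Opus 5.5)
Your proposal is correct and follows the paper's proof: a triangle inequality through the rank-$K$ graphon law $\nu_t^K$, then \cref{thm:finite-rank-w1} applied to the pair $(\mu_0^{N,K},\mu_0^{(K)})$ and projected onto the state coordinate for the sampling term, and the common-label coupling with \eqref{eq:graphon-stability} for the truncation term. Your explicit identification of the rank-$K$ graphon solution with the type-level state map (via uniqueness) and your check that the Gronwall constant is $K$-free fill in details the paper only asserts; the only slip is cosmetic, since the bound on $|\beta_k|$ is $MT\ell_\ast$ rather than $MT$.
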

\begin{proof}[Proof sketch]
Apply \cref{thm:finite-rank-w1} to the pair $(\mu_0^{N,K},\mu_0^{(K)})$, which controls the finite-population sampling error; couple the rank-$K$ and target graphon state laws through the common uniform variable $U$ and invoke the stability estimate of \cref{thm:graphon-wellposed}, which controls the truncation error; conclude by the triangle inequality. The complete proof is given on page~\pageref{proof:thm:bridge} in \cref{app:proofs}.
\end{proof}

\begin{remark}[Indicator loss along graphon truncations]
\label{rem:graphon-indicator}
\Cref{thm:bridge} is the bounded-Lipschitz bridge. For $\ell(x)=\1_{\{x\le0\}}$, \cref{thm:indicator-one-sided-stability,cor:indicator-strong-bridge} provide the hard-threshold analogue under a tube bound along the target path only: aligned $L^p$ approximation gives the sharp H\"older exponent $p/(p+1)$, and nonnegative approximants have canonical greatest solutions. The next proposition records a different, stronger $O(\varepsilon)$ comparison when an entire rank-$K$ family satisfies a uniform density bound.
\end{remark}

\begin{proposition}[Conditional uniform smoothing error along rank-$K$ truncations]
\label{prop:graphon-indicator-smoothing}
For $\varepsilon>0$ let $\ell_\varepsilon$ be the piecewise-linear regularization of the threshold indicator defined in \eqref{eq:elleps}. In the setting of \cref{thm:bridge}, assume the rank-$K$ type laws $\mu_0^{(K)}$ satisfy \cref{ass:density} on $[0,T]$ with a constant $M_\rho(T,MT)$ independent of $K$. Let $X^{K,\varepsilon}$ and $X^{K,\mathrm{ind}}$ denote the rank-$K$ graphon solutions associated with $(W^{(K)},x_0^{(K)})$ and losses $\ell_\varepsilon$ and $\1_{\{x\le 0\}}$, respectively. Then there exists $C_T<\infty$, independent of $K$ and $\varepsilon$, such that
\[
\sup_{0\le t\le T}\norm{X_t^{K,\varepsilon}-X_t^{K,\mathrm{ind}}}_{L^1(I)}\le C_T\varepsilon,
\qquad
\sup_{0\le t\le T}\Wone\bigl(\Law(X_t^{K,\varepsilon}(U)),\Law(X_t^{K,\mathrm{ind}}(U))\bigr)\le C_T\varepsilon.
\]
\end{proposition}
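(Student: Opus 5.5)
Fix $K$ and work with the rank-$K$ feedback representation. By \cref{lem:reformulation} and \cref{prop:beta-ode}, transported from the finite atomic setting to the continuum type law $\mu_0^{(K)}$, the rank-$K$ graphon solutions are
\[
X_t^{K,\varepsilon}(u)=\Psi_t\bigl(z(u),\beta^\varepsilon(t)\bigr),
\qquad
X_t^{K,\mathrm{ind}}(u)=\Psi_t\bigl(z(u),\beta(t)\bigr),
\qquad
z(u)=\bigl(x_0^{(K)}(u),a^{(K)}(u),b^{(K)}(u)\bigr).
\]
Here $\beta^\varepsilon$ and $\beta$ solve \eqref{eq:feedback-ode} with losses $\ell_\varepsilon$ and $\1_{\{x\le0\}}$, respectively. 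Both are well posed: the first by Lipschitz theory, the second by \cref{thm:indicator} applied to $\mu_0^{(K)}$. Since $0\le\ell_\varepsilon\le1$ and $|b_k|\le M$, both paths lie in $\mathcal C_T=[-MT,MT]^K$. Both state maps use the same $\Lambda$, because the factor means are unchanged. Hence
\[
|X_t^{K,\varepsilon}(u)-X_t^{K,\mathrm{ind}}(u)|\le M\norm{\beta^\varepsilon(t)-\beta(t)}_\infty .
\]
Both claimed bounds therefore follow from
\[
\sup_{t\le T}\norm{\beta^\varepsilon(t)-\beta(t)}_\infty\le C_T\varepsilon ,
\]
with $C_T$ independent of $K$. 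For the $\Wone$ bound, couple through the common $U$.

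To prove this, write $F^\varepsilon$ and $F$ for the two vector fields, and split
\[
F_k^\varepsilon(t,\beta^\varepsilon)-F_k(t,\beta)
=\bigl[F_k^\varepsilon(t,\beta^\varepsilon)-F_k(t,\beta^\varepsilon)\bigr]
+\bigl[F_k(t,\beta^\varepsilon)-F_k(t,\beta)\bigr].
\]

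The first bracket is the consistency error. Since $\ell_\varepsilon-\1_{\{x\le0\}}$ is supported on $(0,\varepsilon)$ and bounded by $1$, it is at most
\[
M\,\mu_0^{(K)}\bigl(0<\Psi_t(\cdot,\beta^\varepsilon)<\varepsilon\bigr)\le M\,M_\rho(T,MT)\,\varepsilon .
\]
This uses \cref{ass:density} on the cube $\mathcal C_T$, where $\beta^\varepsilon$ lives.

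The second bracket is the Lipschitz estimate from the proof of \cref{thm:indicator}. If $\norm{\beta^\varepsilon-\beta}_\infty=\delta$, then $|\Psi_t(z,\beta^\varepsilon)-\Psi_t(z,\beta)|\le M\delta$. The two indicators can therefore differ only when $|\Psi_t(z,\beta)|\le M\delta$, a tube of mass at most $2M_\rho M\delta$. The second bracket is thus bounded by $2M^2M_\rho\,\delta$.

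Integrating the ODE in time then gives
\[
\norm{\beta^\varepsilon(t)-\beta(t)}_\infty\le MM_\rho\,\varepsilon\, t+2M^2M_\rho\int_0^t\norm{\beta^\varepsilon(s)-\beta(s)}_\infty\,\dd s .
\]
Gronwall's lemma yields $C_T=MM_\rho T e^{2M^2M_\rho T}$, which depends only on $M$, $T$ and the $K$-uniform constant $M_\rho(T,MT)$. It is independent of $K$ and $\varepsilon$ because of the $1/K$ normalization: $\frac1K\sum_j|a_j|\le M$.

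The main point to check is uniformity in $K$ and the legitimacy of using the density bound at the moving parameter $\beta^\varepsilon(t)$ rather than along one fixed path. This is why the assumption asks for a bound uniform over the whole cube. I must also confirm that the cube radius $MT$ matches the constant $M_\rho(T,MT)$ in the hypothesis. One more detail: if $x_0^{(K)}$ is not literally part of the type law in the form used by \cref{ass:density}, I would note that $\mu_0^{(K)}$ is the pushforward of Lebesgue measure under $z(\cdot)$, so $\lambda$-masses of tubes equal $\mu_0^{(K)}$-masses.
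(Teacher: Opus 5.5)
Your proposal is correct and follows the paper's proof essentially step for step. You use the same split of the field difference into two terms. The first is a consistency term bounded by $M M_\rho(T,MT)\,\varepsilon$ via $\abs{\ell_\varepsilon(y)-\1_{\{y\le0\}}}\le\1_{\{0<y<\varepsilon\}}$. The second is controlled by the $K$-uniform Lipschitz constant $2M^2M_\rho(T,MT)$ of the indicator field on $[-MT,MT]^K$ from \cref{thm:indicator}, followed by Gronwall on the feedback paths. The state bounds then come, as in the paper, from the state map with $\max_k\norm{a_k^{(K)}}_{L^\infty}\le M$, which gives an $L^\infty(I)$ and hence $L^1(I)$ bound, and from coupling through the common $U$ for the $\Wone$ bound.
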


\begin{proof}[Proof sketch]
Since $|\ell_\varepsilon(y)-\1_{\{y\le0\}}|\le\1_{\{0<y<\varepsilon\}}$, the uniform density hypothesis makes the smoothed and indicator feedback fields differ by at most $MM_\rho(T,MT)\,\varepsilon$ on the feedback cube, uniformly in $K$; the indicator field is Lipschitz there with a $K$-uniform constant, so Gronwall's inequality yields an $O(\varepsilon)$ bound on the feedback paths, which the state map transfers to $L^1$ and to $\Wone$. The complete proof is given on page~\pageref{proof:prop:graphon-indicator-smoothing} in \cref{app:proofs}.
\end{proof}

The proof applies verbatim to the negative-side ramp used in \cref{sec:real-data}, with the tube $\{-\varepsilon<y\le0\}$ replacing $\{0<y<\varepsilon\}$ in the bound above.

The truncation experiment in \cref{app:graphon-truncation} uses a bounded Lipschitz loss; \cref{sec:numerics-piecewise-indicator,sec:numerics-piecewise-indicator-trig-bridge} examine transverse indicator families.

\begin{remark}[SVD truncations and rank selection]
\label{rem:svd}
If $W$ is Hilbert--Schmidt, a singular-value expansion
\[
W(u,v)=\sum_{m=1}^\infty \sigma_m\phi_m(u)\psi_m(v)
\]
gives the finite-rank partial sums
\[
W^{(K)}(u,v)=\sum_{m=1}^K \sigma_m\phi_m(u)\psi_m(v).
\]
The partial sums converge in $L^2$, hence in $L^1(I^2)$,
\[
\norm{W-W^{(K)}}_{L^1(I^2)}
\le \norm{W-W^{(K)}}_{L^2(I^2)}
=\Bigl(\sum_{m>K}\sigma_m^2\Bigr)^{1/2}.
\]
For the bounded kernels considered here, each finite partial sum is also bounded: the singular-function identities and Cauchy--Schwarz give
\[
\norm{\phi_m}_{L^\infty},\ \norm{\psi_m}_{L^\infty}
\le \frac{\norm{W}_{L^\infty}}{\sigma_m},\qquad \sigma_m>0.
\]
Uniform bounds as $K\to\infty$ require further control. \Cref{rem:bridge-factor-gap} states the factor condition used in the sampled bridge. Signed partial sums can be used in the regularized indicator bridge of \cref{cor:indicator-strong-bridge}; nonnegative approximants also admit greatest hard solutions.

\label{rem:rank-selection}%
The bound in \cref{thm:bridge} holds for each chosen truncation level. In applications it may be selected from singular-value decay, a task-specific approximation criterion, or an economically specified set of transmission channels. Spectral error and dynamical error need not rank approximations in the same order; the empirical comparison in \cref{sec:full-sample} illustrates this distinction.
\end{remark}

\begin{proposition}[Constructive block-constant $L^1$ approximation]
\label{prop:block-L1-approx}
Let $W\in L^\infty(I^2)$. For each $K\in\mathbb N$, partition $I$ into the equal subintervals
\[
I_j^{(K)}:=\Bigl[\frac{j-1}{K},\frac{j}{K}\Bigr),\qquad 1\le j\le K,
\]
and define the block-constant conditional expectation
\[
\Pi_K W(u,v)
:= \sum_{i,j=1}^K \1_{I_i^{(K)}}(u)\1_{I_j^{(K)}}(v)
\frac{1}{|I_i^{(K)}||I_j^{(K)}|}
\int_{I_i^{(K)}\times I_j^{(K)}} W(s,t)\,\dd s\,\dd t.
\]
Then $\norm{\Pi_KW}_{L^\infty}\le \norm{W}_{L^\infty}$, the integral operator associated with $\Pi_KW$ has rank at most $K$, and
\[
\norm{\Pi_KW-W}_{L^1(I^2)}\to 0
\qquad\text{as }K\to\infty.
\]
If in addition $W$ is piecewise $C^1$ on finitely many rectangles and the partial derivatives are bounded on each smooth piece, then
\[
\norm{\Pi_KW-W}_{L^1(I^2)} = O(K^{-1}).
\]
\end{proposition}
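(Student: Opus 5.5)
The plan is to treat $\Pi_K$ as a conditional expectation with respect to the finite $\sigma$-algebra $\mathcal F_K\otimes\mathcal F_K$ generated by the product blocks $I_i^{(K)}\times I_j^{(K)}$. The three claims are then handled in order: the sup bound, the rank bound, and $L^1$ convergence together with its rate.

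\emph{Sup bound.} On each block, $\Pi_KW$ equals the average of $W$ over that block. An average is bounded in absolute value by $\norm{W}_{L^\infty}$, so $\norm{\Pi_KW}_{L^\infty}\le\norm{W}_{L^\infty}$.

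\emph{Rank bound.} Write
\[
\Pi_KW(u,v)=\sum_{i=1}^K\1_{I_i^{(K)}}(u)\,g_i(v),\qquad g_i(v):=\sum_{j=1}^K c_{ij}\1_{I_j^{(K)}}(v),
\]
where $c_{ij}$ is the block average. The range of the associated integral operator lies in $\mathrm{span}\{\1_{I_i^{(K)}}\}$, which has dimension $K$. In the normalization of \eqref{eq:rankK-factorization}, this is the factorization with $a_k=K\1_{I_k^{(K)}}$ and $b_k=g_k$. These factors are bounded by $K\norm{W}_{L^\infty}$, which does not matter for this proposition.

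\emph{$L^1$ convergence.} I would use density. Given $\eta>0$, choose a continuous $\tilde W$ on $I^2$ with $\norm{W-\tilde W}_{L^1}<\eta$. Conditional expectation is an $L^1$ contraction, so
\[
\norm{\Pi_KW-W}_{L^1}\le 2\eta+\norm{\Pi_K\tilde W-\tilde W}_{L^1}.
\]
Uniform continuity of $\tilde W$ gives $\norm{\Pi_K\tilde W-\tilde W}_{L^\infty}\le\omega_{\tilde W}(\sqrt2/K)\to0$. Letting $K\to\infty$ and then $\eta\to0$ proves convergence. An alternative route is martingale convergence along $K=2^n$, but the density argument covers all $K$ directly.

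\emph{Rate $O(K^{-1})$.} This is the main obstacle, because it needs a counting argument for blocks that meet the discontinuity set. Blocks lying inside a single smooth rectangle are handled by the mean value inequality:
\[
|W(s,t)-W(u,v)|\le L(|s-u|+|t-v|)\le 2L/K,
\]
where $L$ bounds the partial derivatives. Such blocks therefore contribute at most $2L/K$ in total $L^1$ error. A bad block is one crossed by an edge of one of the finitely many rectangles. Each edge is a horizontal or vertical segment, so it meets at most $2K$ blocks (allowing for edges lying on grid lines). Hence there are $O(K)$ bad blocks, each of area $K^{-2}$. On a bad block the error is at most $2\norm{W}_{L^\infty}$, so the bad blocks together contribute $O(K^{-1})$. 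Adding the two contributions gives $\norm{\Pi_KW-W}_{L^1(I^2)}=O(K^{-1})$, with a constant depending on $L$, $\norm{W}_{L^\infty}$, and the number of rectangles.
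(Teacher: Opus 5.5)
Your proposal is correct and follows the paper's proof essentially step for step. Both use block averages for the $L^\infty$ bound, and a range contained in the span of the $K$ interval indicators for the rank bound. For convergence, both use density of continuous functions, the $L^1$ contractivity of $\Pi_K$, and uniform continuity. For the rate, both split the cells into interior cells, where the mean-value bound gives $O(K^{-1})$, and boundary cells, which have total area $O(K^{-1})$ and error at most $2\norm{W}_{L^\infty}$.

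Your explicit count of at most $2K$ cells per rectangle edge adds detail rather than changing the route. Likewise your remark that the induced factors $a_k=K\1_{I_k^{(K)}}$ grow with $K$ is accurate and consistent with \cref{rem:bridge-factor-gap}.
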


\begin{proof}[Proof sketch]
The $L^\infty$ bound follows from Jensen's inequality, and the range of the associated operator is contained in the $K$-dimensional space of functions constant on the $u$-partition. For $L^1$ convergence, approximate $W$ by a continuous function, use that $\Pi_K$ is an $L^1$ contraction, and apply uniform continuity to the continuous approximation. In the piecewise-$C^1$ case, cells interior to a smooth rectangle contribute $O(K^{-1})$, while cells meeting the finitely many rectangle boundaries have total area $O(K^{-1})$. The complete proof is given on page~\pageref{proof:prop:block-L1-approx} in \cref{app:proofs}.
\end{proof}

\begin{remark}
\label{rem:block-L1-approx}
\Cref{prop:block-L1-approx} supplies a universal constructive continuum approximation family for \cref{thm:graphon-wellposed,cor:graphon-approx}. If $W\ge0$, its block averages remain nonnegative, so \cref{cor:indicator-strong-bridge} applies to their greatest hard solutions whenever the target path satisfies the tube bound. Admissibility for the sampled bridge is treated in \cref{rem:bridge-factor-gap}.
\end{remark}

\begin{remark}[Block and spectral approximations versus the bridge theorem]
\label{rem:bridge-factor-gap}
The finite-$N$ bridge theorem, \cref{thm:bridge}, assumes that each approximant already comes with a representation
\[
W^{(K)}(u,v)=\frac1K\sum_{k=1}^K a_k^{(K)}(u)b_k^{(K)}(v)
\]
whose factor maps are uniformly bounded in $K$. This condition is stronger than boundedness and finite operator rank of each approximant. For example, it holds when $W$ has a fixed finite representation
\[
W(u,v)=\frac1R\sum_{r=1}^R \alpha_r(u)\gamma_r(v),
\]
with fixed $R$ and uniformly bounded factors. For $K$ a multiple of $R$, duplicate each pair $K/R$ times. This gives the same kernel with $K$ uniformly bounded factors, as in the exact low-rank benchmarks of \cref{sec:numerics-exact-low-rank}.
\end{remark}

\subsection{Factorized kernels and the infinite decomposition model}

An infinite decomposition representation can be written as a factorized graphon.

\begin{proposition}
\label{prop:factorized-kernel}
Let $(\Theta,\nu)$ be a finite measure space and assume that
\[
W(u,v)=\int_\Theta a(u,\theta)b(v,\theta)\,\nu(\dd\theta)
\]
with bounded measurable coefficients $a$ and $b$. Then the graphon equation \eqref{eq:graphon-equation} is equivalent to
\begin{equation}
\label{eq:factorized-graphon}
X_t(u)=x_0(u)+\mu t+r tR_W(u)-\int_\Theta a(u,\theta)C_t(\theta)\,\nu(\dd\theta),
\end{equation}
where
\begin{equation}
\label{eq:Ctheta}
C_t(\theta)=\int_0^t\int_I b(v,\theta)\ell(X_s(v))\,\dd v\,\dd s.
\end{equation}
\end{proposition}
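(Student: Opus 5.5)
The plan is a Fubini computation, carried out separately in the two directions of the equivalence. Fix a solution path \(X\) (say \(X\in C([0,T];L^\infty(I))\), or any path for which \(s\mapsto\ell(X_s(v))\) is jointly measurable in \((s,v)\)). Set \(q_s(v):=\ell(X_s(v))\in[0,\ell_\ast]\). Substituting the factorization into the operator \eqref{eq:gammaW} gives
\[
\int_0^t(\Gamma_W X_s)(u)\,\dd s=\int_0^t\int_I\int_\Theta a(u,\theta)b(v,\theta)\,\nu(\dd\theta)\,q_s(v)\,\dd v\,\dd s .
\]
The integrand is jointly measurable on \([0,t]\times I\times\Theta\). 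It is bounded by \(\norm{a}_\infty\norm{b}_\infty\ell_\ast\), and the measure \(\dd s\otimes\dd v\otimes\nu\) is finite because \(\nu(\Theta)<\infty\). Fubini--Tonelli therefore lets us pull \(a(u,\theta)\) out and integrate first in \((s,v)\). The result is \(\int_\Theta a(u,\theta)C_t(\theta)\,\nu(\dd\theta)\), with \(C_t\) defined as in \eqref{eq:Ctheta}.

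The imbalance term \(rtR_W(u)\) is common to both equations and needs no change. We only note that \(R_W\) is well defined and bounded, because \(|W|\le\norm{a}_\infty\norm{b}_\infty\nu(\Theta)\).

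For the forward direction, let \(X\) solve \eqref{eq:graphon-equation}. The identity above turns \eqref{eq:graphon-equation} into \eqref{eq:factorized-graphon}, with \(C_t\) built from the same path.

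For the converse, let a pair \((X,C)\) satisfy \eqref{eq:factorized-graphon} with \(C\) given by \eqref{eq:Ctheta} in terms of \(X\) itself. Reading the same Fubini identity backward recovers \eqref{eq:graphon-equation}. Both equations are required for every \(t\) and for a.e. \(u\), and the identity holds for each \(t\) and every \(u\), so no exceptional sets need to be tracked beyond measurability.

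The only delicate point is the joint measurability of \((s,v,\theta)\mapsto a(u,\theta)b(v,\theta)\ell(X_s(v))\). Measurability in \((v,\theta)\) is given. I would obtain joint measurability in \((s,v)\) by working with a measurable version of \(X\). Continuity of \(t\mapsto X_t\) into \(L^\infty\), or more weakly into \(L^1\), provides such a version, and composition with the measurable function \(\ell\) preserves measurability. Once this is in place, the rest is bookkeeping, since boundedness of every factor together with finiteness of \(\nu\) makes all integrals absolutely convergent.
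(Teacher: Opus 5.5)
Your proposal is correct and follows the paper's proof: substitute the factorization into the contagion term of \eqref{eq:graphon-equation} and use Fubini's theorem to pull $a(u,\theta)$ outside the $(s,v)$-integration, reading the identity in both directions. Your added justification (boundedness of $a$, $b$, $\ell$, finiteness of $\nu$, and a jointly measurable version of $X$) is the bookkeeping the paper leaves implicit.
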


\begin{proof}
Insert the factorization of $W$ into \eqref{eq:graphon-equation} and apply Fubini's theorem:
\begin{align*}
\int_0^t\int_I W(u,v)\ell(X_s(v))\,\dd v\,\dd s
&= \int_0^t\int_I\int_\Theta a(u,\theta)b(v,\theta)\ell(X_s(v))\,\nu(\dd\theta)\,\dd v\,\dd s \\
&= \int_\Theta a(u,\theta)\left(\int_0^t\int_I b(v,\theta)\ell(X_s(v))\,\dd v\,\dd s\right)\nu(\dd\theta).
\end{align*}
Substitution yields \eqref{eq:factorized-graphon}--\eqref{eq:Ctheta}.
\end{proof}

\begin{remark}
\label{rem:factorized-sigma-finite}
The algebraic reformulation in \cref{prop:factorized-kernel} extends to $\sigma$-finite measure spaces $(\Theta,\nu)$ whenever Fubini's theorem is justified, for example under
\[
\int_\Theta \norm{a(\cdot,\theta)}_{L^\infty(I)}\norm{b(\cdot,\theta)}_{L^\infty(I)}\,\nu(\dd\theta)<\infty.
\]
The indicator theorem below uses a finite measure $\nu$ and bounded factors.
\end{remark}

\begin{assumption}
\label{ass:factorized-density}
Let $T>0$ and $C>0$. There exists $M_{\rho}^{\Theta}(T,C)<\infty$ such that for every $t\in[0,T]$ and every measurable $c:\Theta\to\R$ with $\norm{c}_{L^\infty(\Theta,\nu)}\le C$, the scalar random variable
\begin{equation}
\label{eq:factorized-density-proj}
\Psi_t(U,c):=x_0(U)+\mu t+r tR_W(U)-\int_\Theta a(U,\theta)c(\theta)\,\nu(\dd\theta),
\qquad U\sim\lambda,
\end{equation}
admits a density bounded by $M_{\rho}^{\Theta}(T,C)$.
\end{assumption}

\begin{proposition}[Uniform transversality implies \cref{ass:factorized-density}]
\label{prop:factorized-density-transverse}
Assume that for every $t\in[0,T]$ and every measurable $c:\Theta\to\R$ with $\norm{c}_{L^\infty(\Theta,\nu)}\le C$, the profile $u\mapsto \Psi_t(u,c)$ is piecewise $C^1$, admits at most $J$ monotone branches, and satisfies $\abs{\partial_u \Psi_t(u,c)}\ge m>0$ on each branch. Then \cref{ass:factorized-density} holds with $M_{\rho}^{\Theta}(T,C)=J/m$.
\end{proposition}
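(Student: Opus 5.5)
We bound the distribution function of \(Y:=\Psi_t(U,c)\) directly on short intervals. Fix \(t\in[0,T]\) and \(c\) with \(\norm{c}_{L^\infty(\Theta,\nu)}\le C\), and write \(\psi(u):=\Psi_t(u,c)\). By hypothesis \(I\) splits, up to finitely many endpoints, into at most \(J\) intervals \(I_1,\dots,I_J\). On each \(I_j\), \(\psi\) is \(C^1\) and strictly monotone with \(\abs{\psi'}\ge m\). For any interval \(A=(y,y+h]\) we will show
\[
\lambda\bigl(\{u:\psi(u)\in A\}\bigr)\le \sum_{j=1}^J\lambda\bigl(\{u\in I_j:\psi(u)\in A\}\bigr)\le J\,\frac hm .
\]

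\textbf{The per-branch estimate.} On a branch \(I_j\), \(\psi\) is a \(C^1\) bijection onto an interval. Its inverse \(\psi_j^{-1}\) is \(C^1\) with derivative \(\abs{(\psi_j^{-1})'}=1/\abs{\psi'\circ\psi_j^{-1}}\le 1/m\). Hence \(\psi_j^{-1}\) is \(1/m\)-Lipschitz, and the preimage of \(A\) in \(I_j\) is an interval of length at most \(h/m\). Alternatively, the change of variables \(\int_{I_j}\1_A(\psi(u))\,\dd u=\int_{\psi(I_j)}\1_A(y)\,\abs{(\psi_j^{-1})'(y)}\,\dd y\) gives the same bound, and in fact shows that the pushforward of \(\lambda|_{I_j}\) has density \(\abs{(\psi_j^{-1})'}\,\1_{\psi(I_j)}\le 1/m\).

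\textbf{Conclusion.} Summing over branches, the law of \(Y\) is absolutely continuous. Its density is \(\sum_j \abs{(\psi_j^{-1})'}\1_{\psi(I_j)}\le J/m\) almost everywhere, and the finitely many branch endpoints carry no mass. This bound is uniform in \(t\in[0,T]\) and in admissible \(c\), so \cref{ass:factorized-density} holds with \(M_\rho^{\Theta}(T,C)=J/m\).

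\textbf{Main obstacle.} The only delicate step is interpreting ``piecewise \(C^1\) with at most \(J\) monotone branches.'' We read it as a partition of \(I\) into finitely many subintervals on whose interiors \(\psi\) is \(C^1\) with \(\abs{\psi'}\ge m\), which forces strict monotonicity on each. Under this reading, breakpoints form a finite, hence null, set. Absolute continuity of the pushforward needs no further regularity, since each branch inverse is Lipschitz. No earlier result from the paper is required; this is the one-dimensional coarea / change-of-variables formula.
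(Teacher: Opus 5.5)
Your proposal is correct and follows essentially the same route as the paper. Partition $I$ into at most $J$ monotone $C^1$ branches, use the one-dimensional change of variables to show that each branchwise pushforward of Lebesgue measure has density $\1_{\psi(I_j)}/\abs{\psi'\circ\psi_j^{-1}}\le 1/m$, and sum over branches to obtain the bound $J/m$ uniformly in $t$ and $c$; your alternative argument that each branch inverse is $1/m$-Lipschitz is the same observation the paper uses in \cref{prop:level-set-control}.
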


\begin{proof}
Fix $t\in[0,T]$ and $c$ with $\norm{c}_{L^\infty(\Theta,\nu)}\le C$. Let $I_1,\dots,I_n$ be a partition of $I$ into monotone $C^1$ branches of $u\mapsto \Psi_t(u,c)$, with $n\le J$. On each branch the restriction $g_j:=\Psi_t(\cdot,c)|_{I_j}$ is strictly monotone, so by the one-dimensional change-of-variables formula its pushforward of Lebesgue measure has density
\[
\rho_j(y)=\1_{g_j(I_j)}(y)\,\frac{1}{\abs{g_j'(g_j^{-1}(y))}}\le \frac{1}{m}.
\]
The law of $\Psi_t(U,c)$ for $U\sim\lambda$ is the sum of these branchwise pushforwards, hence it has density
\[
\rho(y)=\sum_{j=1}^n \rho_j(y)\le \frac{n}{m}\le \frac{J}{m}.
\]
Therefore \cref{ass:factorized-density} holds with $M_{\rho}^{\Theta}(T,C)=J/m$. Piecewise-smooth block, core--periphery, and trigonometric specifications satisfying the stated branchwise transversality conditions fall into this regime.
\end{proof}

\begin{theorem}
\label{thm:factorized-indicator}
Assume the factorized representation of \cref{prop:factorized-kernel}, let $\ell(x)=\1_{\{x\le 0\}}$, and set
\[
B_\ast:=\norm{b}_{L^\infty(I\times\Theta)}.
\]
Fix $T>0$. If $B_\ast>0$, suppose \cref{ass:factorized-density} holds with radius parameter $TB_\ast$; if $B_\ast=0$, no density assumption is needed. Then there exists a unique
\[
C\in C\bigl([0,T];L^1(\Theta,\nu)\bigr)\cap L^\infty([0,T]\times\Theta)
\]
solving the integral feedback equation
\begin{equation}
\label{eq:factorized-feedback-indicator}
C_t(\theta)=\int_0^t F_s(C_s)(\theta)\,\dd s,
\qquad
F_t(c)(\theta):=\int_I b(v,\theta)\1_{\{\Psi_t(v,c)\le 0\}}\,\dd v,
\end{equation}
for every $t\in[0,T]$ and a.e. $\theta$. This solution satisfies $\abs{C_t(\theta)}\le tB_\ast$. Consequently the profile \(X\) defined by \eqref{eq:factorized-graphon} belongs to \(C([0,T];L^\infty(I))\) and is the unique indicator solution of \eqref{eq:graphon-equation}.
\end{theorem}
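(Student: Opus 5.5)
The plan follows the proof sketch of \cref{thm:indicator}, with the finite vector $\beta\in\R^K$ replaced by the function $c\in L^\infty(\Theta,\nu)$. The work is done on the closed set
\[
\mathcal B_T:=\{C\in C([0,T];L^1(\Theta,\nu)) : \abs{C_t(\theta)}\le tB_\ast \text{ for a.e. }\theta\}.
\]
Since $\abs{F_t(c)(\theta)}\le \int_I\abs{b(v,\theta)}\,\dd v\le B_\ast$, any solution of \eqref{eq:factorized-feedback-indicator} automatically lies in $\mathcal B_T$. This gives the a priori bound $\abs{C_t(\theta)}\le tB_\ast$, so every input $c=C_s$ satisfies $\norm{c}_{L^\infty}\le TB_\ast$, which is exactly the radius at which \cref{ass:factorized-density} is imposed. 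If $B_\ast=0$, then $F\equiv0$ and $C\equiv0$ is the only solution, so that case is trivial. We also need joint measurability of $(s,\theta)\mapsto F_s(C_s)(\theta)$; this follows from Fubini together with measurability of $(s,v)\mapsto\Psi_s(v,C_s)$, which in turn uses continuity of $s\mapsto C_s$ in $L^1$ and boundedness of $a$.

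The key step is a Lipschitz estimate for $F_t$ on the ball $\{\norm{c}_{L^\infty}\le TB_\ast\}$, measured in $L^1(\Theta,\nu)$. Take two admissible $c,\tilde c$ and set $A_\ast:=\norm{a}_{L^\infty}$. Then
\[
\abs{\Psi_t(v,c)-\Psi_t(v,\tilde c)}\le A_\ast\norm{c-\tilde c}_{L^1(\nu)}=:\delta .
\]
The two indicators can differ only where $\abs{\Psi_t(v,c)}\le\delta$. By \cref{ass:factorized-density}, this tube has Lebesgue mass at most $2M_\rho^\Theta(T,TB_\ast)\,\delta$. Hence, for every $\theta$,
\[
\abs{F_t(c)(\theta)-F_t(\tilde c)(\theta)}\le 2B_\ast A_\ast M_\rho^\Theta\norm{c-\tilde c}_{L^1(\nu)}.
\]
Integrating in $\theta$ over the finite measure $\nu$ gives an $L^1$ Lipschitz bound with constant $L:=2\nu(\Theta)B_\ast A_\ast M_\rho^\Theta$. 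The same pointwise bound also gives an $L^\infty$ bound. The important feature is that the threshold perturbation is controlled by the $L^1(\nu)$ norm of $c-\tilde c$, because $a$ is bounded, and this is what makes the scheme close in $L^1$.

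Existence and uniqueness then follow from a Picard or Banach argument. Define
\[
(\mathcal T C)_t:=\int_0^t F_s(C_s)\,\dd s .
\]
This map sends $\mathcal B_T$ into itself: the bound $tB_\ast$ is immediate, and continuity in $L^1$ follows because the integrand is bounded. Using the weighted norm $\sup_t e^{-2Lt}\norm{C_t}_{L^1}$, $\mathcal T$ is a contraction, so it has a unique fixed point. Alternatively, one can apply Gronwall to the difference of two solutions. Finally, define $X$ by \eqref{eq:factorized-graphon}. Then $X\in C([0,T];L^\infty(I))$, since
\[
\norm{X_t-X_s}_{L^\infty}\le\abs{\mu}\abs{t-s}+2\abs{r}\norm{W}_\infty\abs{t-s}+A_\ast B_\ast\nu(\Theta)\abs{t-s}.
\]
By \cref{prop:factorized-kernel}, $X$ solves \eqref{eq:graphon-equation} with $\ell=\1_{\{x\le0\}}$. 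Conversely, any indicator solution $X$ of \eqref{eq:graphon-equation} yields, through \eqref{eq:Ctheta}, a function $C$ solving \eqref{eq:factorized-feedback-indicator}. This $C$ must be the unique one, and then $X$ is determined by \eqref{eq:factorized-graphon}.

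The main obstacle is making the measurability and the a.e.-in-$\theta$ bookkeeping rigorous. The density assumption is stated for measurable $c$ with an $L^\infty$ bound, so the iterates must be kept pointwise bounded by $TB_\ast$; this is why the contraction is run on $\mathcal B_T$ rather than on all of $C([0,T];L^1)$. One must also check that the Bochner or Lebesgue integral defining $\mathcal T C$ agrees with the pointwise-in-$\theta$ integral, and a Fubini argument on $[0,t]\times\Theta$ should handle this.
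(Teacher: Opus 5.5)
Your proposal is correct and follows essentially the paper's proof: the a priori bound $\abs{C_t}\le tB_\ast$, the threshold-tube Lipschitz estimate for $F_t$ in $L^1(\Theta,\nu)$ with constant $2\nu(\Theta)B_\ast A_\ast M_\rho^\Theta(T,TB_\ast)$, a fixed point on the closed set of $L^1$-continuous paths bounded by $tB_\ast$, and the factorized reformulation to recover $X$ and transfer uniqueness. The differences are minor. You use a weighted-norm contraction where the paper uses the factorial Picard bound. You also define the time integral pointwise through joint measurability and Fubini, whereas the paper adds a term $(\abs{\mu}+2\abs{r}\norm{W}_{L^\infty})\abs{t-s}$ to the tube estimate, which gives joint norm-continuity of $(t,c)\mapsto F_t(c)$ and hence a Bochner integral.
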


\begin{proof}[Proof sketch]
The threshold-density bound makes $F_t(c)$ jointly continuous in $(t,c)$ and Lipschitz in the $L^1(\Theta,\nu)$ distance on the ball $|c|\le TB_\ast$. Picard iteration preserves $|C_t|\le tB_\ast$ and gives the unique feedback path; \eqref{eq:factorized-graphon} then recovers $X$. See \cref{app:proofs}.
\end{proof}

\subsection{Indicator-loss graphon equation beyond factorization}
\label{sec:graphon-indicator-beyond-factorization}

For the indicator loss, cumulative distress gives a closed feedback equation. Nonnegative kernels make the feedback map order preserving, so existence does not require a factorization. A bound on the mass near the threshold then gives uniqueness and stability. Proofs appear in \cref{app:proofs,app:indicator-completion}.

For $t\in[0,T]$ and $u\in I$, define the cumulative distress profile
\begin{equation}
\label{eq:Ht-definition}
H_t(u):=\int_0^t \1_{\{X_s(u)\le 0\}}\,\dd s.
\end{equation}
Then the indicator-loss graphon equation \eqref{eq:graphon-equation} is equivalent to
\begin{equation}
\label{eq:graphon-indicator-H}
X_t(u)=x_0(u)+\mu t+r tR_W(u)-\int_I W(u,v)H_t(v)\,\dd v.
\end{equation}
Introducing the threshold functional
\begin{equation}
\label{eq:Psi-graphon-general}
\Psi_t(u,h):=x_0(u)+\mu t+r tR_W(u)-\int_I W(u,v)h(v)\,\dd v,
\end{equation}
gives the integral feedback equation
\begin{equation}
\label{eq:H-feedback-general}
H_t(u)=\int_0^t \1_{\{\Psi_s(u,H_s)\le 0\}}\,\dd s.
\end{equation}
This extends the finite-rank feedback equation \eqref{eq:beta-limit-correct} to general kernels.

For a jointly measurable admissible profile $H$, write
\[
(\mathcal T_WH)_t(u):=\int_0^t
\1_{\{\Psi_s(u,H_s)\le0\}}\,\dd s.
\]
The upper iteration below selects the greatest cumulative-distress solution, including when the threshold law has atoms.

\begin{theorem}[Greatest hard-indicator solution]
\label{thm:indicator-greatest}
Let $W\in L^\infty(I^2)$ satisfy $W\ge0$ a.e. and let $x_0\in L^\infty(I)$. For every $T>0$, \eqref{eq:H-feedback-general} has a greatest solution $H^+$: every other hard solution $H$ satisfies
\[
H_t(u)\le H_t^+(u)
\quad\text{for a.e. }u\text{ and every }t\in[0,T].
\]
For each comparison, jointly measurable representatives satisfy this order for every $t$ on one full-measure subset of $I$.
It is obtained from the upper iteration
\begin{equation}
\label{eq:indicator-upper-iteration}
H_t^{(0)}(u):=t,
\qquad H^{(n+1)}:=\mathcal T_WH^{(n)}.
\end{equation}
The iterates decrease a.e. to $H^+$ and converge in $C([0,T];L^p(I))$ for every $1\le p<\infty$. The associated state
\[
X_t^+=x_0+\mu t+rtR_W-WH_t^+
\]
belongs to $C([0,T];L^\infty(I))$ and is the smallest state profile among all hard solutions.
\end{theorem}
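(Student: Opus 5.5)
The plan is a monotone (Tarski/Kleene-type) iteration on the lattice of admissible cumulative-distress profiles. Let $\mathcal A$ be the set of jointly measurable $H:[0,T]\times I\to\R$ such that, for a.e.\ $u$, $t\mapsto H_t(u)$ is nondecreasing and $1$-Lipschitz with $H_0(u)=0$. For such $H$ we have $0\le H_t\le t$. First I will check that $\mathcal T_W$ maps $\mathcal A$ into $\mathcal A$. Joint measurability of $(s,u)\mapsto\Psi_s(u,H_s)$ follows from Fubini. The time integral of an indicator is nondecreasing and $1$-Lipschitz in $t$.

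Second, I will show that $\mathcal T_W$ is order preserving. If $H\le\tilde H$ a.e., then $W\ge0$ gives $\int W(u,v)H_s(v)\,\dd v\le\int W(u,v)\tilde H_s(v)\,\dd v$, so $\Psi_s(u,H_s)\ge\Psi_s(u,\tilde H_s)$. Hence $\1_{\{\Psi_s(u,H_s)\le0\}}\le\1_{\{\Psi_s(u,\tilde H_s)\le0\}}$, and integrating gives $\mathcal T_WH\le\mathcal T_W\tilde H$. Since $H^{(0)}_t=t$ dominates every element of $\mathcal A$, $H^{(1)}\le H^{(0)}$. By induction the iterates decrease a.e., and each iterate dominates every solution $H=\mathcal T_WH$. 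Define $H^+:=\lim_n H^{(n)}=\inf_n H^{(n)}$ pointwise.

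The equicontinuity in $t$ gives uniform-in-$t$ convergence for a.e.\ $u$. Indeed, monotone convergence at rational times plus the $1$-Lipschitz bound yields pointwise convergence for all $t$. Dini's theorem, or the Lipschitz bound directly, upgrades this to uniform convergence on $[0,T]$. This uses a single full-measure set of $u$, which gives the "every $t$ on one full-measure set" clause. Dominated convergence in $u$ then gives convergence in $C([0,T];L^p(I))$ for $p<\infty$, and $H^+\in\mathcal A$.

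The main obstacle is verifying that $H^+$ is a fixed point, because the indicator is discontinuous. Set $\psi^{(n)}_s(u):=\Psi_s(u,H^{(n)}_s)$. This sequence increases in $n$, converging to $\Psi_s(u,H^+_s)$ by monotone or dominated convergence in $v$ ($W$ bounded). The key observation is one-sided: $\{y\le0\}$ is closed, and the convergence is from below. So if $\Psi_s(u,H^+_s)\le0$, then every $\psi^{(n)}_s(u)\le\Psi_s(u,H^+_s)\le0$, and all indicators equal $1$. If $\Psi_s(u,H^+_s)>0$, then $\psi^{(n)}_s(u)>0$ for large $n$. Thus $\1_{\{\psi^{(n)}_s(u)\le0\}}\to\1_{\{\Psi_s(u,H^+_s)\le0\}}$ pointwise; this is exactly upper semicontinuity of $\1_{\{\cdot\le0\}}$ matched with a decreasing $H$. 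Dominated convergence in $s$ then gives $H^+=\mathcal T_WH^+$. Maximality follows because every solution lies below every $H^{(n)}$, hence below $H^+$.

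Finally, for the state, $X^+_t=x_0+\mu t+rtR_W-WH^+_t$ is bounded: $W,x_0\in L^\infty$, $|R_W|\le2\norm{W}_{L^\infty}$, and $0\le H^+\le T$. It is continuous into $L^\infty$ because
\[
\norm{W(H^+_t-H^+_{t'})}_{L^\infty}\le\norm{W}_{L^\infty}\abs{t-t'}.
\]
Since $W\ge0$ and $H\le H^+$, every hard solution's state satisfies $X_t\ge X^+_t$. So $X^+$ is the smallest state profile.
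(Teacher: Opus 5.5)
Your proposal is correct and follows the paper's proof essentially step for step. The shared steps are order preservation from $W\ge0$, the decreasing upper iteration from $H^{(0)}_t=t$, and the one-sided continuity $\1_{\{a_n\le0\}}\downarrow\1_{\{a\le0\}}$ for $a_n\uparrow a$ with dominated convergence in time to obtain the fixed point; maximality then comes by induction, and minimality of the state by applying the nonnegative operator $W$. The only difference is cosmetic: you reach $C([0,T];L^p(I))$ through uniform-in-time convergence for a.e.\ $u$ (Dini) and dominated convergence in $u$, whereas the paper proves $L^p$ convergence at each fixed time and upgrades it with equicontinuity and a finite time net.
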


The iteration uses the non-strict threshold convention: $a_n\uparrow a$ implies $\1_{\{a_n\le0\}}\downarrow\1_{\{a\le0\}}$. The positive-side ramp preserves this selection.

\begin{theorem}[The positive-side ramp selects the greatest solution]
\label{thm:indicator-positive-ramp}
Under the hypotheses of \cref{thm:indicator-greatest}, let $H^{\varepsilon,+}$ be the unique cumulative profile for the ramp $\ell_\varepsilon$ in \eqref{eq:elleps}. If $0<\varepsilon'<\varepsilon$, then
\[
H^{\varepsilon',+}\le H^{\varepsilon,+}.
\]
Moreover, as $\varepsilon\downarrow0$,
\[
H^{\varepsilon,+}\downarrow H^+
\quad\text{a.e.},
\qquad
H^{\varepsilon,+}\longrightarrow H^+
\quad\text{in }C([0,T];L^p(I)),\quad 1\le p<\infty,
\]
and the states converge in $C([0,T];L^\infty(I))$.
\end{theorem}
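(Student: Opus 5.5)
Recall that for $\varepsilon>0$ the ramp $\ell_\varepsilon$ is bounded and Lipschitz with constant $1/\varepsilon$. By \cref{thm:graphon-wellposed}, the ramp problem has a unique solution $X^{\varepsilon}\in C([0,T];L^\infty(I))$. Set $H^{\varepsilon,+}_t:=\int_0^t\ell_\varepsilon(X^\varepsilon_s)\,\dd s$. With this notation,
\[
X^\varepsilon_t=x_0+\mu t+rtR_W-WH^{\varepsilon,+}_t,
\]
so $H^{\varepsilon,+}$ is the unique fixed point of the ramp operator
\[
(\mathcal T^\varepsilon_W h)_t(u):=\int_0^t\ell_\varepsilon\bigl(\Psi_s(u,h_s)\bigr)\,\dd s .
\]
The proof rests on three order properties of this operator. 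Since $W\ge0$, the map $h\mapsto\Psi_s(\cdot,h)$ is order reversing, and $\ell_\varepsilon$ is nonincreasing, so $\mathcal T^\varepsilon_W$ is order preserving. Because $\ell_\varepsilon$ is nonincreasing in $\varepsilon$ pointwise, we have $\ell_{\varepsilon'}\le\ell_\varepsilon$ for $\varepsilon'<\varepsilon$. Finally, $\ell_\varepsilon\ge\1_{\{x\le0\}}$ for every $\varepsilon$.

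\emph{Step 1: Monotonicity in $\varepsilon$.} Fix $\varepsilon'<\varepsilon$ and run the Picard iteration for $\mathcal T^{\varepsilon'}_W$ from the top profile $h^{(0)}_t=t$. This iteration converges because the field is Lipschitz: Gronwall in $C([0,T];L^\infty)$, or at least in $L^1$. By induction, $h^{(n)}\le H^{\varepsilon,+}$ for all $n$, since
\[
h^{(n+1)}=\mathcal T^{\varepsilon'}h^{(n)}\le \mathcal T^{\varepsilon}h^{(n)}\le\mathcal T^\varepsilon H^{\varepsilon,+}=H^{\varepsilon,+}.
\]
The base case is $H^{\varepsilon,+}_t\le t$, which holds because $\ell_\varepsilon\le1$. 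Passing to the limit gives $H^{\varepsilon',+}\le H^{\varepsilon,+}$. The same induction with $\mathcal T_W$ in place of $\mathcal T^{\varepsilon'}$, together with $\mathcal T_W\le\mathcal T^\varepsilon$, starts from the upper iteration \eqref{eq:indicator-upper-iteration} and shows $H^{(n)}\le H^{\varepsilon,+}$. Hence $H^+\le H^{\varepsilon,+}$.

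\emph{Step 2: The limit is a hard solution.} The family $H^{\varepsilon,+}$ is decreasing in $\varepsilon$ and bounded below by $H^+$, so it has an a.e. limit $\bar H\ge H^+$. Each $H^{\varepsilon,+}$ is $1$-Lipschitz in $t$, which lets us fix jointly measurable representatives and take the limit along a sequence $\varepsilon_n\downarrow0$, with the order holding for all $t$ off one null set. By monotone convergence, $WH^{\varepsilon,+}_s\downarrow W\bar H_s$ pointwise. Therefore $\Psi_s(u,H^{\varepsilon,+}_s)\uparrow\Psi_s(u,\bar H_s)$.

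The key point, and the expected main obstacle, is to show that $\ell_\varepsilon(a_\varepsilon)\to\1_{\{a\le0\}}$ whenever $a_\varepsilon\uparrow a$ as $\varepsilon\downarrow0$.
\begin{itemize}
\item If $a\le0$, then $a_\varepsilon\le0$ for all $\varepsilon$, so $\ell_\varepsilon(a_\varepsilon)=1$.
\item If $a>0$, then $a_\varepsilon>a/2$ eventually, so $\ell_\varepsilon(a_\varepsilon)=0$ once $\varepsilon<a/2$.
\end{itemize}
This is exactly where the positive-side ramp and the non-strict convention interact favourably. Dominated convergence in $s$ then gives $\bar H=\mathcal T_W\bar H$. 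So $\bar H$ is a hard solution, and maximality from \cref{thm:indicator-greatest} forces $\bar H\le H^+$. Combined with Step 1, $\bar H=H^+$.

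\emph{Step 3: Upgrading the convergence.} For $C([0,T];L^p(I))$ convergence, note that $0\le H^{\varepsilon,+}_t-H^+_t\le 2t$, so dominated convergence gives $L^p$ convergence at each fixed $t$. Uniformity in $t$ follows from monotonicity and the uniform $1$-Lipschitz bound in $t$: pointwise convergence on a finite grid of times, plus equicontinuity, yields convergence that is uniform in time. For the states, we have
\[
\|X^\varepsilon_t-X^+_t\|_{L^\infty}\le\|W\|_{L^\infty}\|H^{\varepsilon,+}_t-H^+_t\|_{L^1},
\]
so the states converge in $C([0,T];L^\infty(I))$, and the convergence is increasing.

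The only delicate points are measure-theoretic: choosing jointly measurable representatives, and checking that the order comparisons hold on a common full-measure set for all $t$. Both are handled as in \cref{thm:indicator-greatest}, using continuity in $t$.
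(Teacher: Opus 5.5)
Your overall route is the paper's. It compares iterations to get monotonicity in $\varepsilon$ and the lower bound $H^+\le H^{\varepsilon,+}$. It then identifies the limit along $\varepsilon_m\downarrow0$ as a hard solution, using that $a_\varepsilon\uparrow a$ forces $\ell_\varepsilon(a_\varepsilon)\to\1_{\{a\le0\}}$, and concludes with maximality and the equicontinuity upgrade. Steps 2 and 3 are fine, but Step 1 is wrong as written, and Step 2 depends on it.

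You assert by induction that the $\mathcal T^{\varepsilon'}_W$-iterates started at the top profile satisfy $h^{(n)}\le H^{\varepsilon,+}$. At $n=0$ this says $t\le H^{\varepsilon,+}_t$, which is the opposite of the ``base case'' $H^{\varepsilon,+}_t\le t$ that you check. This is not an indexing slip, because the claim fails for $n=1$ too. Take $W\equiv1$, $\mu=r=0$, $x_0(u)=u$:
\begin{itemize}
\item The ramp fixed point has $\int_I H^{\varepsilon,+}_t(u)\,\dd u=\frac\varepsilon2(e^t-1)$, so for small $\varepsilon$ one has $H^{\varepsilon,+}_1(u)=0$ whenever $u\ge2\varepsilon$.
\item Meanwhile $h^{(1)}_1(u)\ge\int_0^1\1_{\{u\le s\}}\,\dd s=1-u$.
\end{itemize}
The same argument for the hard upper iteration, claiming $H^{(n)}\le H^{\varepsilon,+}$, fails in the same example. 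So as written you have proved neither $H^{\varepsilon',+}\le H^{\varepsilon,+}$ nor $H^+\le H^{\varepsilon,+}$.

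The repair uses only the order properties you already listed; either version below closes Step 1.
\begin{itemize}
\item \emph{Term-by-term comparison.} Compare the two iterations started from the same top profile: $h^{(n+1)}_{\varepsilon'}=\mathcal T^{\varepsilon'}h^{(n)}_{\varepsilon'}\le\mathcal T^{\varepsilon}h^{(n)}_{\varepsilon'}\le\mathcal T^{\varepsilon}h^{(n)}_{\varepsilon}=h^{(n+1)}_{\varepsilon}$. Likewise $H^{(n)}\le h^{(n)}_\varepsilon$. Then pass to the limits; this is the paper's ``comparison of the upper iterations''.
\item \emph{Super- and subsolutions.} Since $\mathcal T^{\varepsilon'}H^{\varepsilon,+}\le\mathcal T^{\varepsilon}H^{\varepsilon,+}=H^{\varepsilon,+}$, the $\mathcal T^{\varepsilon'}$-iteration started at $H^{\varepsilon,+}$ decreases to $H^{\varepsilon',+}$. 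Since $H^+=\mathcal T_WH^+\le\mathcal T^{\varepsilon}H^+$, the $\mathcal T^{\varepsilon}$-iteration started at $H^+$ increases to $H^{\varepsilon,+}$; this is the paper's argument for the lower bound.
\end{itemize}
Both versions use that Picard iteration for the Lipschitz ramp field converges to the unique fixed point from any admissible starting profile. With Step 1 corrected, the rest of your proof goes through as the paper's does.
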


\begin{remark}[Ramp conventions]
\label{rem:indicator-two-ramps}
The positive-side ramp satisfies $\ell_\varepsilon(0)=1$ and selects $H^+$. The negative-side ramp $\max\{0,\min\{1,-x/\varepsilon\}\}$ used in \cref{sec:real-data} has value zero at the threshold and converges pointwise to $\1_{\{x<0\}}$. The two ramps can therefore select different limits when threshold contact has positive mass.
\end{remark}

\begin{assumption}
\label{ass:graphon-indicator-density}
Let $T>0$ and $C>0$. There exists $M_\rho^W(T,C)<\infty$ such that for every $t\in[0,T]$ and every $h\in L^\infty(I)$ with $\norm{h}_{L^\infty}\le C$, the scalar random variable
\begin{equation}
\label{eq:graphon-density-projection}
\Psi_t(U,h)=x_0(U)+\mu t+r tR_W(U)-\int_I W(U,v)h(v)\,\dd v,
\qquad U\sim\lambda,
\end{equation}
admits a density bounded by $M_\rho^W(T,C)$.
\end{assumption}

\begin{theorem}
\label{thm:graphon-indicator-general}
Let $W\in L^\infty(I^2)$ and $x_0\in L^\infty(I)$, and let $\ell(x)=\1_{\{x\le 0\}}$. Assume \cref{ass:graphon-indicator-density} holds with $C=T$. Then there exists a unique
\[
H\in C([0,T];L^1(I))\cap L^\infty([0,T]\times I)
\]
such that
\[
0\le H_t(u)\le t\qquad\text{for every }t\in[0,T]\text{ and for a.e. }u\in I,
\]
\[
\norm{H_t-H_s}_{L^1(I)}\le |t-s|\qquad\text{for all }s,t\in[0,T],
\]
and \eqref{eq:H-feedback-general} holds for every $t\in[0,T]$ and almost every $u\in I$. Consequently, the profile
\begin{equation}
\label{eq:X-from-H-general}
X_t(u)=x_0(u)+\mu t+r tR_W(u)-\int_I W(u,v)H_t(v)\,\dd v
\end{equation}
belongs to $C([0,T];L^\infty(I))$ and is the unique solution of the graphon contagion equation \eqref{eq:graphon-equation} with indicator loss.
\end{theorem}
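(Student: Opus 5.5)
We prove \cref{thm:graphon-indicator-general} by a Picard/contraction argument for the cumulative-distress map $\mathcal T_W$ on the closed set
\[
\mathcal A_T:=\bigl\{H\in C([0,T];L^1(I)):\ 0\le H_t\le t\ \text{a.e.},\ \norm{H_t-H_s}_{L^1}\le|t-s|\bigr\},
\]
working with jointly measurable representatives. Note that $W$ may be signed here, so the order argument of \cref{thm:indicator-greatest} is unavailable and the threshold density has to carry the whole argument.

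\emph{Step 1: invariance.} For $H\in\mathcal A_T$, the map $(s,u)\mapsto\Psi_s(u,H_s)$ is jointly measurable, because $s\mapsto WH_s$ is continuous into $L^\infty$ with $\norm{WH_s-WH_r}_{L^\infty}\le\norm W_{L^\infty}\norm{H_s-H_r}_{L^1}$. Hence $(\mathcal T_WH)_t(u)$ is well defined and lies in $[0,t]$. It is also $1$-Lipschitz in $t$ pointwise in $u$, so in particular in $L^1$.

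\emph{Step 2: Lipschitz estimate from the density bound.} This is the key step. Fix $s$ and $h,\tilde h$ with $\norm h_{L^\infty},\norm{\tilde h}_{L^\infty}\le T$, and set $\delta(u):=\int W(u,v)(h-\tilde h)(v)\,\dd v$, so that $\norm\delta_{L^\infty}\le\norm W_{L^\infty}\norm{h-\tilde h}_{L^1}=:\eta$. The indicators $\1_{\{\Psi_s(u,h)\le0\}}$ and $\1_{\{\Psi_s(u,\tilde h)\le0\}}$ can differ only where $\Psi_s(u,h)\in(-\eta,\eta]$. By \cref{ass:graphon-indicator-density} with $C=T$, this set has Lebesgue measure at most $2M_\rho^W(T,T)\eta$. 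Therefore
\[
\norm{(\mathcal T_WH)_t-(\mathcal T_W\tilde H)_t}_{L^1}\le 2M_\rho^W\norm W_{L^\infty}\int_0^t\norm{H_s-\tilde H_s}_{L^1}\,\dd s.
\]
The one subtle point is that the density bound must be applied to the fixed function $h=H_s$. This is legitimate because $\norm{H_s}_{L^\infty}\le s\le T$.

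\emph{Step 3: fixed point and the solution $X$.} Step 2 is a Volterra-type estimate. With the weighted norm $\sup_t e^{-\kappa t}\norm{H_t}_{L^1}$ and $\kappa$ large, $\mathcal T_W$ becomes a strict contraction on the complete set $\mathcal A_T$. The Banach fixed point theorem then gives a unique fixed point $H$, and Gronwall's lemma gives uniqueness among all admissible profiles with $0\le H_t\le t$. Define $X$ by \eqref{eq:X-from-H-general}. Its continuity into $L^\infty$ follows from the $L^1$-Lipschitz property of $H$ and the bound $\norm{WH_t-WH_s}_{L^\infty}\le\norm W_{L^\infty}|t-s|$; the $t$-linear terms are bounded because $x_0,R_W\in L^\infty$.

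\emph{Step 4: equivalence with \eqref{eq:graphon-equation}.} We use Fubini. If $X$ solves the indicator graphon equation in $C([0,T];L^\infty)$, then $H_t(u):=\int_0^t\1_{\{X_s(u)\le0\}}\,\dd s$ lies in $\mathcal A_T$ and $\int_0^t\Gamma_WX_s\,\dd s=WH_t$. Hence $X$ is given by \eqref{eq:X-from-H-general} and $H$ solves \eqref{eq:H-feedback-general}; by Step 3, $H$ is the unique fixed point. The converse direction is the same computation read backwards.

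We expect the main obstacle to be measurability and a.e. bookkeeping: we need one full-measure set in $u$ on which \eqref{eq:H-feedback-general} holds for every $t$. This follows because both sides are continuous in $t$ for each fixed $u$, so agreement on rational $t$ suffices.
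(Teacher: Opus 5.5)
Your proposal is correct and follows the paper's proof essentially step for step: the threshold-density bound makes the indicator feedback $L^1$-Lipschitz on the ball $\norm{h}_{L^\infty}\le T$ with constant $2M_\rho^W(T,T)\norm{W}_{L^\infty}$, and a fixed-point argument on the closed set $\{0\le H_t\le t\}\subset C([0,T];L^1(I))$ gives existence and uniqueness of $H$, which Fubini then transfers to the state equation in both directions. The differences are cosmetic. You use a weighted (Bielecki) norm where the paper uses Picard iteration with a factorial estimate. You also define $\mathcal T_WH$ pointwise through jointly measurable representatives, whereas the paper defines it as an $L^1$-valued Bochner integral, for which it additionally checks time-continuity of $s\mapsto G_s(H_s)$ via the $D_\ast\abs{t-s}$ term.
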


\begin{proposition}[Uniqueness under a threshold-tube bound]
\label{prop:indicator-osgood}
Let $(H,X)$ be one hard-indicator solution for a bounded kernel $W$ and put $M_W:=\norm{W}_{L^\infty}$. Suppose that, for a continuous nondecreasing function $\omega:[0,\infty)\to[0,1]$ with $\omega(0)=0$ and $\omega(a)>0$ for $a>0$,
\[
\sup_{0\le t\le T}\lambda\{u:\abs{X_t(u)}\le a\}\le\omega(a),
\qquad a\ge0,
\]
and
\[
\int_{0+}\frac{\dd s}{\omega(M_Ws)}=\infty,
\]
with the usual interpretation if $M_W=0$. Then $(H,X)$ is the unique hard solution for these data. In particular, the linear threshold-tube bound
\begin{equation}
\label{eq:indicator-reference-tube}
\sup_{0\le t\le T}\lambda\{u:\abs{X_t(u)}\le a\}\le La,
\qquad a\ge0,
\end{equation}
implies uniqueness.
\end{proposition}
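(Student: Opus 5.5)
Let $(\tilde H,\tilde X)$ be a second hard solution with the same data. The plan is to compare it with the reference solution $(H,X)$ through the scalar deviation
\[
D(t):=\sup_{0\le s\le t}\norm{\tilde H_s-H_s}_{L^\infty(I)}\le t.
\]
The state representation \eqref{eq:graphon-indicator-H} gives $\tilde X_t-X_t=-\int_I W(\cdot,v)(\tilde H_t-H_t)(v)\,\dd v$. Hence $\norm{\tilde X_t-X_t}_{L^\infty}\le M_W D(t)$. In particular $D$ is nondecreasing and continuous, since both $H$ and $\tilde H$ are $1$-Lipschitz in $t$ pointwise.

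The key pointwise observation is that $\1_{\{\tilde X_s(u)\le0\}}\ne\1_{\{X_s(u)\le0\}}$ forces $|X_s(u)|\le|\tilde X_s(u)-X_s(u)|\le M_WD(s)$. Thus the two indicators can differ only on the threshold tube of the reference path. We use the tube hypothesis only along the reference path, exactly as assumed.

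The main obstacle is converting this into an estimate for $D$. The difference $\tilde H_t(u)-H_t(u)=\int_0^t(\1_{\{\tilde X_s\le0\}}-\1_{\{X_s\le0\}})\,\dd s$ is controlled in $L^\infty$ only by time spent in the tube by a single $u$, whereas the hypothesis controls Lebesgue mass in $u$. So I would work instead with the $L^1$ quantity
\[
E(t):=\sup_{s\le t}\norm{\tilde H_s-H_s}_{L^1}.
\]
I would also use $\norm{\tilde X_s-X_s}_{L^\infty}\le M_W\norm{\tilde H_s-H_s}_{L^1}$. This holds because $W$ is bounded, so $|\int W(u,v)g(v)\,\dd v|\le M_W\norm{g}_{L^1}$ for every $u$.

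With this sup-over-$u$ bound in hand, we integrate over $u$. The two indicators differ at time $s$ only on $\{u:|X_s(u)|\le M_W E(s)\}$, whose measure is at most $\omega(M_WE(s))$. This yields
\[
\norm{\tilde H_t-H_t}_{L^1}\le\int_0^t\lambda\{u:\1_{\{\tilde X_s\le0\}}\ne\1_{\{X_s\le0\}}\}\,\dd s\le\int_0^t\omega(M_WE(s))\,\dd s .
\]
Since the right side is nondecreasing in $t$, it follows that $E(t)\le\int_0^t\omega(M_WE(s))\,\dd s$.

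The Osgood lemma, with $\omega(M_W\cdot)$ continuous, nondecreasing, positive off $0$ and satisfying $\int_{0+}\dd s/\omega(M_Ws)=\infty$, then forces $E\equiv0$ on $[0,T]$. To spell this out: if $E(t_0)>0$, let $G(t):=\int_0^t\omega(M_WE)$. On the interval after the last zero $t_1$ of $G$ we have $G'\le\omega(M_WG)$. Integrating $\dd G/\omega(M_WG)\le\dd t$ from $t_1$ gives a finite left side equal to the divergent integral, a contradiction.

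From $E\equiv0$ we get $\tilde H_t=H_t$ a.e. for each $t$, and hence $\tilde X=X$ in $C([0,T];L^\infty(I))$ by the state formula. If $M_W=0$ the states do not depend on $H$ at all, the indicators coincide, and uniqueness is immediate. The linear tube bound \eqref{eq:indicator-reference-tube} is the special case $\omega(a)=\min(La,1)$ (take $L>0$), for which $\int_{0+}\dd s/(LM_Ws)=\infty$.

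A minor measurability point remains: joint measurability of representatives is needed so that Fubini applies when integrating the indicator difference over $u$ and $s$. I would cite the jointly measurable representatives already used in \cref{thm:indicator-greatest}.
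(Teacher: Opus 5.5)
Your proposal is correct and is essentially the paper's proof: you bound $\norm{\tilde X_s-X_s}_{L^\infty}$ by $M_W$ times the $L^1$ gap of the cumulative profiles, confine the indicator mismatch to the reference tube of that radius (whose mass is at most $\omega(M_W E(s))$), and close with the Bihari--Osgood lemma, exactly as the paper does with $d_t=\norm{H_t-K_t}_{L^1}$. Replacing $d_t$ by its running supremum $E$ is a harmless convenience, and the abandoned $L^\infty$ quantity $D$ can simply be dropped.
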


\begin{proposition}[Sharpness of the Osgood condition]
\label{prop:indicator-osgood-sharp}
Fix $c>0$ and let $F$ be the continuous distribution function of a bounded nonnegative profile $x_0$, with $F(0)=0$ and $F(a)>0$ for $a>0$ near zero. For $W\equiv c$ and $\mu=r=0$, the hard system is unique if and only if
\begin{equation}
\label{eq:indicator-rank-one-osgood-iff}
\int_{0+}\frac{\dd y}{F(cy)}=\infty.
\end{equation}
If the integral is finite, it has a continuum of delayed solutions. Every continuous nondecreasing tube modulus that fails the Osgood integral is realized near zero by a bounded nonnegative rank-one example with multiple solutions. Thus the integral condition is sharp for uniqueness guarantees based on threshold-tube mass.
\end{proposition}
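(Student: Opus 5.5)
For \(W\equiv c\) and \(\mu=r=0\) we have \(R_W\equiv0\), and by \eqref{eq:graphon-indicator-H} every hard solution has the form \(X_t(u)=x_0(u)-c\,\beta(t)\) with the scalar feedback \(\beta(t):=\int_I H_t(v)\,\dd v\). The plan is to reduce the whole system, in both directions, to the scalar autonomous equation \(\dot\beta=F(c\beta)\), \(\beta(0)=0\), and then to invoke the classical Osgood dichotomy for that equation.

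\emph{Step 1 (reduction).} Let \(H\) be a hard solution. Integrating \eqref{eq:H-feedback-general} over \(u\) and using Fubini gives
\[
\beta(t)=\int_0^t\lambda\{u:x_0(u)\le c\beta(s)\}\,\dd s=\int_0^t F(c\beta(s))\,\dd s .
\]
Here \(F\) is continuous, so it does not matter whether the threshold event is taken non-strict or strict. Conversely, let \(\beta\) be a continuous nondecreasing solution of this scalar equation. Define
\[
H_t(u):=\int_0^t\1_{\{x_0(u)\le c\beta(s)\}}\,\dd s .
\]
Integrating in \(u\) and using the scalar equation shows \(\int_I H_t=\beta(t)\). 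Hence \(\Psi_s(u,H_s)=x_0(u)-c\beta(s)\), and \(H\) solves \eqref{eq:H-feedback-general}. The map \(\beta\mapsto H\) is injective because \(\beta=\int_I H\). Therefore hard solutions are in bijection with solutions \(\beta\) of the scalar equation, and uniqueness of \(H\) is equivalent to uniqueness of \(\beta\). Note also that \(\beta\equiv0\) is always a solution, since \(F(0)=0\).

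\emph{Step 2 (dichotomy).} Put \(G(y):=\int_0^y \dd z/F(cz)\) for small \(y>0\).

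If \(G\) is finite, then \(G\) is continuous and strictly increasing near \(0\). For each \(\tau\in[0,T)\) define \(\beta_\tau(t):=0\) for \(t\le\tau\) and \(\beta_\tau(t):=G^{-1}(t-\tau)\) for \(t>\tau\), continued as the unique increasing solution once \(c\beta\) leaves the region where \(F\) is small; this is possible since \(F\ge F(a)>0\) there. Each \(\beta_\tau\) satisfies \(\frac{\dd}{\dd t}G(\beta_\tau)=1\), hence solves the scalar equation. Via Step 1 this yields a continuum of delayed hard solutions.

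If instead \(\int_{0+}\dd y/F(cy)=\infty\), suppose some solution has \(\beta(t_1)>0\). Solutions are nondecreasing and nonnegative, so let \(t_0:=\sup\{t:\beta(t)=0\}<t_1\). On \((t_0,t_1]\) we have \(\beta>0\) and \(\dot\beta=F(c\beta)>0\). Integrating \(\dot\beta/F(c\beta)=1\) over \([s,t_1]\) gives \(\int_{\beta(s)}^{\beta(t_1)}\dd y/F(cy)=t_1-s\le t_1-t_0\). Letting \(s\downarrow t_0\), the left side diverges, a contradiction. Hence \(\beta\equiv0\) and the solution is unique.

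The one delicate point is this change of variables near the zero set. I will handle it by working on compact subintervals \([s,t_1]\) with \(s>t_0\), where \(\beta\) is absolutely continuous and bounded away from zero, so the substitution is legitimate.

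\emph{Step 3 (realization of a given modulus).} Let \(\omega\) be continuous and nondecreasing with \(\int_{0+}\dd s/\omega(M_Ws)<\infty\). Take \(c:=M_W\).
\begin{itemize}
\item Choose \(a_0>0\) and define a distribution function \(F\) by \(F:=\omega\) on \([0,a_0]\), interpolated linearly from \(\omega(a_0)\) up to \(1\) on \([a_0,a_0+1]\).
\item Let \(x_0:=F^{-1}\) be the quantile function on \(I\). It is bounded, nonnegative, and has distribution function exactly \(F\).
\item The zero solution \(\beta\equiv0\) has \(X_t=x_0\) for all \(t\), so \(\lambda\{|X_t|\le a\}=F(a)=\omega(a)\) for \(a\le a_0\). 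Thus it realizes the modulus \(\omega\) near zero.
\item Finiteness of \(\int_{0+}\dd y/F(cy)=\int_{0+}\dd y/\omega(M_Wy)\) gives multiple solutions by Step 2.
\end{itemize}
Consequently no uniqueness criterion that depends only on tube mass can weaken the Osgood integral in \cref{prop:indicator-osgood}.
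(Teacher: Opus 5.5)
Your proposal is correct and follows the paper's proof essentially step for step. The steps are the reduction to \(\dot\beta=F(c\beta)\), \(\beta(0)=0\), with reconstruction \(H_t(u)=\int_0^t\mathbf 1_{\{x_0(u)\le c\beta(s)\}}\,\dd s\); delayed solutions built from the inverse of \(\int_0^y \dd z/F(cz)\); and realization of a non-Osgood modulus by taking \(x_0\) to be the quantile function of a continuous extension of \(\omega\). Your separation-of-variables argument on \([s,t_1]\) spells out the scalar Osgood criterion that the paper only cites.
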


For example, near zero,
\[
\omega(a)=Ca\bigl(\log(e/a)\bigr)^\gamma
\]
satisfies the Osgood condition exactly when $\gamma\le1$, whereas $\omega(a)=Ca^\alpha$ with $0<\alpha<1$ does not.

The next criterion bounds threshold tubes directly from $(x_0,W)$, before solving the equation.

\begin{proposition}[A branchwise criterion from the initial profile and kernel]
\label{prop:indicator-primitive-osgood}
Let $0=a_0<a_1<\cdots<a_J=1$ and $I_j=(a_{j-1},a_j)$.  For every $j$,
suppose that there are $\sigma_j\in\{-1,1\}$, $\theta_j\in[0,1)$, and a
continuous strictly increasing $\chi_j:[0,|I_j|]\to[0,\infty)$ with
$\chi_j(0)=0$ such that, for $\lambda^2$-a.e. ordered pair $u<v$ in $I_j$,
\begin{align}
\label{eq:indicator-primitive-initial}
\sigma_j\bigl(x_0(v)-x_0(u)\bigr)
&\ge\chi_j(v-u),\\
\label{eq:indicator-primitive-network}
T\norm{W(v,\cdot)-W(u,\cdot)}_{L^1(I)}
+|r|T\abs{R_W(v)-R_W(u)}
&\le\theta_j\chi_j(v-u).
\end{align}
Define
\begin{align}
\label{eq:indicator-primitive-inverse}
\chi_j^{\leftarrow}(y)
&:=\sup\{s\in[0,|I_j|]:\chi_j(s)\le y\},\\
\label{eq:indicator-primitive-tube}
\omega_{\rm pr}(a)
&:=1\wedge\sum_{j=1}^J
\chi_j^{\leftarrow}\!\left(\frac{2a}{1-\theta_j}\right).
\end{align}
For every jointly measurable $h$ with $0\le h_t(u)\le t$, set
\[
X_t^h(u):=x_0(u)+\mu t+rtR_W(u)-\int_IW(u,z)h_t(z)\,\dd z.
\]
Then, for every such $h$,
\begin{equation}
\label{eq:indicator-primitive-uniform-tube}
\sup_{t\le T}\lambda\{u:\abs{X_t^h(u)-q}\le a\}
\le\omega_{\rm pr}(a),
\qquad q\in\R,\quad a\ge0.
\end{equation}
Consequently, if
\begin{equation}
\label{eq:indicator-primitive-osgood}
\int_{0+}\frac{\dd a}{\omega_{\rm pr}(a)}=\infty,
\end{equation}
the hard equation has at most one solution; if $W\ge0$, it has exactly one
solution by \cref{thm:indicator-greatest}.
\end{proposition}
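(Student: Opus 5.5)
The plan is to prove the uniform tube bound \eqref{eq:indicator-primitive-uniform-tube} branch by branch, and then to deduce uniqueness from \cref{prop:indicator-osgood}.

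Fix $h$ with $0\le h_t\le t$, a time $t\le T$, a level $q\in\R$ and a width $a\ge0$. Since $\lambda$ gives no mass to the finitely many breakpoints $a_j$, it suffices to show
\[
\lambda\{u\in I_j:\abs{X_t^h(u)-q}\le a\}\le\chi_j^{\leftarrow}\!\Bigl(\frac{2a}{1-\theta_j}\Bigr)
\]
for each $j$, and then sum over $j$; the cap at $1$ in $\omega_{\rm pr}$ is trivial.

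For $u<v$ in $I_j$, the key estimate is a strict monotonicity bound for $X_t^h$ along the branch:
\[
\sigma_j\bigl(X_t^h(v)-X_t^h(u)\bigr)\ge\sigma_j\bigl(x_0(v)-x_0(u)\bigr)-|r|t\abs{R_W(v)-R_W(u)}-\int_I\abs{W(v,z)-W(u,z)}\,h_t(z)\,\dd z.
\]
Using $h_t\le t\le T$, the last two terms are bounded by the left side of \eqref{eq:indicator-primitive-network}. Combining this with \eqref{eq:indicator-primitive-initial} gives
\[
\sigma_j\bigl(X_t^h(v)-X_t^h(u)\bigr)\ge(1-\theta_j)\chi_j(v-u)
\]
for $\lambda^2$-a.e.\ ordered pair $u<v$ in $I_j$.

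Now let $S:=\{u\in I_j:\abs{X_t^h(u)-q}\le a\}$. Any two points $u<v$ of $S$ satisfy $\abs{X_t^h(v)-X_t^h(u)}\le 2a$. For a.e.\ such pair this forces $\chi_j(v-u)\le 2a/(1-\theta_j)$, hence $v-u\le\chi_j^{\leftarrow}(2a/(1-\theta_j))=:\delta$.

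\textbf{The main obstacle is the almost-everywhere nature of the hypotheses.} The pair inequality holds only off a $\lambda^2$-null set of pairs, so the naive conclusion "$S$ has diameter at most $\delta$'' is not immediate. I would handle this with density points. Suppose $\lambda(S)>\delta$. Then there are Lebesgue density points $u_0<v_0$ of $S$ with $v_0-u_0>\delta$. Small neighborhoods $A\ni u_0$ and $B\ni v_0$ intersected with $S$ have positive measure, and every pair in $A\times B$ has $v-u>\delta$. Since $(S\cap A)\times(S\cap B)$ has positive $\lambda^2$-measure, it contains a pair where the inequality holds, which is a contradiction. Hence $\lambda(S)\le\delta$.

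One technical point also needs checking: $X_t^h$ is defined only a.e.\ in $u$ for each fixed $t$. The a.e.\ pair condition is therefore applied to a fixed representative at each $t$. No uniformity in $t$ is needed, because the tube bound is taken as a supremum of per-$t$ estimates.

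For the consequence, any hard solution $(H,X)$ has $0\le H_t\le t$, so $X=X^H$. The bound \eqref{eq:indicator-primitive-uniform-tube} with $q=0$ gives the hypothesis of \cref{prop:indicator-osgood} with $\omega=\omega_{\rm pr}$.

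$\omega_{\rm pr}$ is nondecreasing with $\omega_{\rm pr}(0)=0$ and $\omega_{\rm pr}(a)>0$ for $a>0$. If it is not continuous, I would replace it by a continuous nondecreasing majorant that still fails integrability at $0$, or observe that the Osgood argument only uses monotonicity. The Osgood integral with $\omega_{\rm pr}(M_Ws)$ is comparable, by the change of variables $a=M_Ws$, to \eqref{eq:indicator-primitive-osgood}. \Cref{prop:indicator-osgood} then gives that the hard solution is unique.

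When $W\ge0$, \cref{thm:indicator-greatest} provides existence, so there is exactly one solution.
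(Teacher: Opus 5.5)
Your proof is correct and follows the paper's argument: the same branchwise lower bound $\sigma_j\bigl(X_t^h(v)-X_t^h(u)\bigr)\ge(1-\theta_j)\chi_j(v-u)$, a measure-theoretic step that upgrades the almost-everywhere pair bound to $\lambda(E_j)\le\chi_j^{\leftarrow}\bigl(2a/(1-\theta_j)\bigr)$ (you use Lebesgue density points where the paper uses the essential infimum and supremum of the tube set), and then \cref{prop:indicator-osgood} after the change of variables $a=M_Ws$. Your hedge about continuity is unnecessary: $\chi_j^{\leftarrow}$ equals $\chi_j^{-1}$ on $[0,\chi_j(|I_j|)]$ and the constant $|I_j|$ beyond it, so $\omega_{\rm pr}$ is already a continuous admissible modulus and no replacement majorant is needed.
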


\begin{remark}[A condition involving only kernel sections]
\label{rem:indicator-primitive-kernel-only}
Since
\[
\abs{R_W(v)-R_W(u)}
\le\norm{W(v,\cdot)-W(u,\cdot)}_{L^1}
+\norm{W(\cdot,v)-W(\cdot,u)}_{L^1},
\]
condition \eqref{eq:indicator-primitive-network} follows from
\[
T(1+|r|)\norm{W(v,\cdot)-W(u,\cdot)}_{L^1}
+|r|T\norm{W(\cdot,v)-W(\cdot,u)}_{L^1}
\le\theta_j\chi_j(v-u).
\]
Thus the criterion depends only on $x_0$ and the row and column sections of $W$, up to null sets.
\end{remark}

\begin{corollary}[A log--Osgood class]
\label{cor:indicator-primitive-log-osgood}
Assume \cref{prop:indicator-primitive-osgood}.  Suppose that for each branch
there are $c_j>0$, $s_j>0$, $A_j\ge e|I_j|$, and
$\gamma_j\in[0,1]$ such that
\begin{equation}
\label{eq:indicator-primitive-log-separation}
\chi_j(s)\ge c_j\frac{s}{[\log(A_j/s)]^{\gamma_j}},
\qquad0<s\le\min\{s_j,|I_j|\}.
\end{equation}
With $\bar\gamma:=\max_j\gamma_j$, there are $C,A,a_*>0$ such that
\begin{equation}
\label{eq:indicator-primitive-log-tube}
\omega_{\rm pr}(a)\le Ca[\log(A/a)]^{\bar\gamma},
\qquad0<a\le a_*.
\end{equation}
Hence \eqref{eq:indicator-primitive-osgood} holds. If $W\ge0$, the hard solution is unique and satisfies the tube assumption in the stability and convergence results below.
\end{corollary}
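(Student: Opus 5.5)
We bound \(\chi_j^{\leftarrow}\) branch by branch and then sum over the finitely many branches. Fix \(j\) and set \(\bar s_j:=\min\{s_j,|I_j|\}\). The first step inverts the lower bound \eqref{eq:indicator-primitive-log-separation}. Because \(\chi_j\) is strictly increasing and continuous, \(\chi_j^{\leftarrow}(y)\) is the unique \(s\) with \(\chi_j(s)=y\) whenever \(y<\chi_j(|I_j|)\). If that \(s\le\bar s_j\), then
\[
y\ge c_j s[\log(A_j/s)]^{-\gamma_j}.
\]
The function \(g_j(s):=s[\log(A_j/s)]^{-\gamma_j}\) is increasing on \((0,|I_j|]\), since \(A_j\ge e|I_j|\) gives \(\log(A_j/s)\ge1\) and
\[
g_j'(s)=[\log(A_j/s)]^{-\gamma_j}\bigl(1+\gamma_j/\log(A_j/s)\bigr)>0.
\]
So \(s\le g_j^{-1}(y/c_j)\). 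Moreover, for \(y\le\chi_j(\bar s_j)\) we have \(\chi_j^{\leftarrow}(y)\le\bar s_j\), so this regime covers all small \(y\).

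The second step estimates \(g_j^{-1}\) near zero. We claim
\[
g_j^{-1}(w)\le C_j\,w[\log(A'_j/w)]^{\gamma_j}
\]
for small \(w\), with suitable constants \(C_j,A_j'\). To see this, put \(s=g_j^{-1}(w)\), so \(s=w[\log(A_j/s)]^{\gamma_j}\). Since \(s\le|I_j|\le A_j/e\) we get \(s\ge w\), hence \(\log(A_j/s)\le\log(A_j/w)\). This gives \(s\le w[\log(A_j/w)]^{\gamma_j}\) directly. Combining with \(y=2a/(1-\theta_j)\), we obtain for \(a\le a_j^\ast\), where \(a_j^\ast>0\) is small enough that \(2a/(1-\theta_j)\le\min\{\chi_j(\bar s_j),c_j\}\),
\[
\chi_j^{\leftarrow}\!\Bigl(\frac{2a}{1-\theta_j}\Bigr)\le\frac{2a}{c_j(1-\theta_j)}\Bigl[\log\frac{c_j(1-\theta_j)A_j}{2a}\Bigr]^{\gamma_j}.
\]

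The third step unifies the exponents and constants. Choose \(A\ge\max_j c_j(1-\theta_j)A_j/2\) with \(A\ge e\), and take \(a_\ast\le\min_j a_j^\ast\) with \(a_\ast\le A/e\). Then every logarithm above is bounded by \(\log(A/a)\ge1\), and \([\log(A/a)]^{\gamma_j}\le[\log(A/a)]^{\bar\gamma}\). Summing over \(j\le J\), and using the cap \(1\wedge\) only as an upper bound, gives \eqref{eq:indicator-primitive-log-tube} with \(C=\sum_j 2/(c_j(1-\theta_j))\).

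The Osgood integral then follows from \(\bar\gamma\le1\):
\[
\int_0^{a_\ast}\frac{\dd a}{Ca[\log(A/a)]^{\bar\gamma}}=\infty,
\]
as the substitution \(y=\log(A/a)\) turns it into \(\int^\infty y^{-\bar\gamma}\,\dd y/C\). By comparison, \eqref{eq:indicator-primitive-osgood} holds.

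For \(W\ge0\), \cref{prop:indicator-primitive-osgood} gives existence and uniqueness. For the tube assumption, apply \eqref{eq:indicator-primitive-uniform-tube} with \(h=H^+\) and \(q=0\). This is admissible because \(0\le H_t^+\le t\) by \cref{thm:indicator-greatest}. The resulting modulus is \(\omega_{\rm pr}\), which is continuous and nondecreasing. It is positive for \(a>0\) because \(\chi_j^{\leftarrow}(y)>0\) when \(y>0\), by continuity of \(\chi_j\) at \(0\). It also satisfies the Osgood condition with \(M_W\) scaling: \(\int_{0+}\dd s/\omega(M_Ws)=\infty\) follows from the substitution \(a=M_Ws\).

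The main obstacle is the inversion in the first and second steps, namely keeping the constants uniform. The condition \(A_j\ge e|I_j|\) is exactly what guarantees the monotonicity of \(g_j\) and the inequality \(s\ge w\). The rest is bookkeeping of finitely many constants.
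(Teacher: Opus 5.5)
Your proof follows the same route as the paper's. You bound $\chi_j^{\leftarrow}$ branch by branch using the log lower bound, sum over the finitely many branches, and obtain the Osgood divergence from the substitution $y=\log(A/a)$. Your treatment of the $W\ge0$ consequences, taking $h=H^+$ and $q=0$ in \eqref{eq:indicator-primitive-uniform-tube}, is more explicit than the paper's. Only the inversion technique differs:
\begin{itemize}
\item The paper tests the candidate $q(y)=C_0y[\log(B/y)]^{\gamma}$, with $C_0>1/c$ and $B$ large. It checks $\chi(q(y))>y$ and concludes $\chi^{\leftarrow}(y)\le q(y)$ by strict monotonicity.
\item You invert the increasing function $g_j$ and use the self-bound $g_j^{-1}(w)\ge w$. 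This yields the explicit constants $C_0=1/c_j$ and $B=c_jA_j$.
\end{itemize}

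One step fails as written: the smallness threshold for $a_j^\ast$. Step~2 needs $w=2a/(c_j(1-\theta_j))\le g_j(|I_j|)$, so that $g_j^{-1}(w)\le|I_j|\le A_j/e$. That is what gives $g_j^{-1}(w)\ge w$ and $\log(A_j/w)\ge1$. Your condition $2a/(1-\theta_j)\le\min\{\chi_j(\bar s_j),c_j\}$ does not ensure this, because $\chi_j(\bar s_j)$ can be much larger than $c_jg_j(\bar s_j)$, and $g_j(|I_j|)\le|I_j|$ can be well below $1$.

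Here is a concrete failure. Take $|I_j|=s_j=\tfrac12$, $A_j=e/2$, $c_j=\gamma_j=1$, $\theta_j=0$, and $\chi_j(s)=2s$; this $\chi_j$ satisfies \eqref{eq:indicator-primitive-log-separation}. With $2a=0.9$ your condition holds. Yet $\chi_j^{\leftarrow}(0.9)=0.45$, while your displayed bound equals $0.9\log(e/1.8)\approx0.371$.

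The repair is immediate. Choose $a_j^\ast$ so that $2a/(1-\theta_j)\le c_jg_j(\bar s_j)$. This also implies $2a/(1-\theta_j)\le\chi_j(\bar s_j)$, by the hypothesis at $s=\bar s_j$. With this threshold the rest of your argument, including the choice of $A$ and $a_\ast$, goes through unchanged.
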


\begin{example}[Log--Osgood uniqueness beyond bounded density]
\label{ex:indicator-rank-one-log-osgood}
Fix $0<\gamma\le1$, $c>0$, and $b\in\R$, and set
\[
x_0(u):=b+c\int_0^u[\log(e/s)]^{-\gamma}\,\dd s,
\qquad W(u,v):=w(v),\qquad r=0,
\]
where $w\in L^\infty(I)$ is nonzero and nonnegative.  The kernel rows are
identical and the increasing integrand gives
\[
x_0(v)-x_0(u)
\ge c\int_0^{v-u}[\log(e/s)]^{-\gamma}\,\dd s
\ge c_*\frac{v-u}{[\log(e/(v-u))]^\gamma}.
\]
Thus \cref{cor:indicator-primitive-log-osgood} applies with one branch and $\theta_1=0$. The density of $x_0(U)$ is unbounded:
\[
p_0(x)=\frac1c\left[\log\!\left(
\frac{e}{x_0^{-1}(x)}\right)\right]^\gamma\longrightarrow\infty
\quad\text{as }x\downarrow b.
\]
The feedback translates the entire profile by the same amount, preserving the separation bound.
\end{example}

For the perturbation results below, let $(H^1,X^1)$ and $(H^2,X^2)$ be hard solutions generated by $(W_1,x_0^1)$ and $(W_2,x_0^2)$ with the same $\mu,r$, put $M_1:=\norm{W_1}_{L^\infty}$, and define
\begin{equation}
\label{eq:indicator-delta-p}
\delta_p:=\norm{x_0^1-x_0^2}_{L^p(I)}
+T(1+2\abs r)\norm{W_1-W_2}_{L^p(I^2)}.
\end{equation}

\begin{theorem}[Stability under an Osgood tube]
\label{thm:indicator-osgood-stability}
Assume that the reference solution satisfies
\begin{equation}
\label{eq:indicator-reference-osgood-tube}
\sup_{t\le T}\lambda\{\abs{X_t^1}\le a\}\le\omega(a),
\qquad a\ge0,
\end{equation}
for a modulus $\omega$ with the continuity, monotonicity, and positivity properties in \cref{prop:indicator-osgood}, satisfying the Osgood divergence. For $1\le p<\infty$ define
\begin{align}
\label{eq:indicator-effective-modulus}
\Omega_{p,\delta}(z)
&:=1\wedge\inf_{\eta>0}
\left\{\omega(z+\eta)+\left(\frac{\delta}{\eta}\right)^p\right\},\\
\label{eq:indicator-effective-primitive}
\Phi_{p,\delta}(a)
&:=\int_0^a\frac{\dd s}{\Omega_{p,\delta}(s)}.
\end{align}
For $p=\infty$, set $\Omega_{\infty,\delta}(z):=\omega(z+\delta)$ and $\Phi_{\infty,\delta}(a):=\int_0^a\omega(s+\delta)^{-1}\,\dd s$, and in both cases use
\[
\Phi_{p,\delta}^{-1}(y):=\inf\{a\ge0:\Phi_{p,\delta}(a)\ge y\}.
\]
If $M_1>0$, then, for every solution of the second hard system,
\begin{align}
\label{eq:indicator-osgood-H}
M_1\sup_{t\le T}\norm{H_t^1-H_t^2}_{L^1}
&\le\Phi_{p,\delta_p}^{-1}(M_1T),\\
\label{eq:indicator-osgood-X}
\sup_{t\le T}\norm{X_t^1-X_t^2}_{L^p}
&\le\delta_p+\Phi_{p,\delta_p}^{-1}(M_1T),\\
\label{eq:indicator-osgood-q}
\sup_{t\le T}\norm{\1_{\{X_t^1\le0\}}-\1_{\{X_t^2\le0\}}}_{L^1}
&\le\Omega_{p,\delta_p}\!\left(\Phi_{p,\delta_p}^{-1}(M_1T)\right).
\end{align}
When $M_1=0$, the corresponding estimates are
\[
\sup_{t\le T}\norm{H_t^1-H_t^2}_{L^1}\le T\Omega_{p,\delta_p}(0),
\qquad
\sup_{t\le T}\norm{X_t^1-X_t^2}_{L^p}\le\delta_p,
\]
and the indicator mismatch is at most $\Omega_{p,\delta_p}(0)$. All right-hand sides vanish as $\delta_p\to0$.
\end{theorem}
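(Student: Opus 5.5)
We reduce the comparison to a scalar integral inequality for the feedback gap and close it with an Osgood-type comparison. Set
\[
e(t):=M_1\norm{H_t^1-H_t^2}_{L^1},\qquad
D_t:=X_t^2-X_t^1 .
\]

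\textbf{Step 1: decompose the state difference.} From \eqref{eq:X-from-H-general},
\[
D_t=(x_0^2-x_0^1)+rt(R_{W_2}-R_{W_1})-(W_2-W_1)H^2_t-W_1(H^2_t-H^1_t).
\]
Since $0\le H^2_t\le T$ and $\norm{R_{W_1}-R_{W_2}}_{L^p}\le 2\norm{W_1-W_2}_{L^p(I^2)}$, the first three terms form a function $g_t$ with $\norm{g_t}_{L^p}\le\delta_p$. For $(W_2-W_1)H^2_t$ we use Jensen/Hölder on the inner integral. The last term is bounded pointwise: $\norm{W_1(H^2_t-H^1_t)}_{L^\infty}\le e(t)$.

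This gives \eqref{eq:indicator-osgood-X} immediately once $e$ is controlled. For $p=\infty$ the same bound holds with $\norm{g_t}_{L^\infty}\le\delta$.

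\textbf{Step 2: indicator mismatch.} Suppose $\1_{\{X^1_t\le0\}}\ne\1_{\{X^2_t\le0\}}$ at $u$. Then $\abs{X^1_t(u)}\le\abs{D_t(u)}\le e(t)+\abs{g_t(u)}$. Fix $\eta>0$ and split into two cases.
\begin{itemize}
\item If $\abs{g_t(u)}\le\eta$, then $u$ lies in the tube $\{\abs{X^1_t}\le e(t)+\eta\}$, which has mass at most $\omega(e(t)+\eta)$ by \eqref{eq:indicator-reference-osgood-tube}.
\item Otherwise $u\in\{\abs{g_t}>\eta\}$, which has mass at most $(\delta_p/\eta)^p$ by Chebyshev.
\end{itemize}
The mismatch measure is also trivially at most $1$. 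Taking the infimum over $\eta$ gives
\[
\norm{\1_{\{X_t^1\le0\}}-\1_{\{X_t^2\le0\}}}_{L^1}\le\Omega_{p,\delta_p}(e(t)).
\]
For $p=\infty$, take $\eta=\delta$ directly.

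\textbf{Step 3: integral inequality.} By \eqref{eq:H-feedback-general}, $\norm{H^1_t-H^2_t}_{L^1}\le\int_0^t(\text{mismatch at }s)\,\dd s$. Hence
\[
e(t)\le M_1\int_0^t\Omega_{p,\delta_p}(e(s))\,\dd s .
\]
The function $\Omega_{p,\delta}$ is nondecreasing, because $\omega$ is and the infimum preserves this. It is also positive, since $\omega(z+\eta)>0$. The standard Bihari/Osgood comparison for the continuous majorant $y(t):=M_1\int_0^t\Omega(e)\,\dd s$ then applies: we have $\dot y=M_1\Omega(e)\le M_1\Omega(y)$, so $\Phi_{p,\delta}(y(t))\le M_1t$ and therefore $e(t)\le y(t)\le\Phi^{-1}_{p,\delta}(M_1T)$. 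This is \eqref{eq:indicator-osgood-H}.

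Substituting this bound into Step 2, using the monotonicity of $\Omega$, gives \eqref{eq:indicator-osgood-q}. For $M_1=0$ the $W_1$ term vanishes, so $e\equiv0$ in Step 2 and the mismatch is at most $\Omega_{p,\delta_p}(0)$; integrating over $[0,T]$ gives the $H$ bound.

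\textbf{Step 4: vanishing as $\delta_p\to0$ (main obstacle).} Here we need $\Phi^{-1}_{p,\delta}(M_1T)\to0$. Choosing $\eta=\delta^{1/2}$ shows that $\Omega_{p,\delta}(s)\le\omega(s+\delta^{1/2})+\delta^{p/2}$, which converges pointwise to $\omega(s)$. Fix $a>0$. By the Osgood divergence, $\int_0^a\dd s/\omega(s)=\infty$. Fatou's lemma then gives $\liminf_{\delta\to0}\Phi_{p,\delta}(a)=\infty$, so $\Phi_{p,\delta}(a)>M_1T$ for small $\delta$, and hence $\Phi^{-1}_{p,\delta}(M_1T)<a$.

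Continuity of $\omega$ then sends the mismatch bound to $\omega(0)=0$. The care needed is in handling the measurability of the mismatch sets and the fact that $\Omega$ may be discontinuous; the latter is avoided by working with the nondecreasing envelope in the comparison.
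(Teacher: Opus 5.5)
Your proposal is correct and follows the paper's proof essentially step for step. Both use the same split $X_t^1-X_t^2=-W_1(H_t^1-H_t^2)+A_t$ with $\norm{A_t}_{L^p}\le\delta_p$, and the same $\eta$-splitting with Markov's inequality to get the mismatch bound $\Omega_{p,\delta_p}(M_1d(t))$. Both then close with a Bihari--Osgood comparison: you use the integral majorant $y$, whereas the paper separates variables in $d'\le\Omega_{p,\delta_p}(M_1d)$. Finally, both obtain the vanishing statement from Fatou's lemma with $\eta=\sqrt\delta$.

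Two small remarks. At $\delta_p=0$ one has $\Omega_{p,0}=\omega$, which vanishes at $0$, so your chain-rule step there must be replaced by the Osgood lemma proper; the paper invokes \cref{prop:indicator-osgood} at this point. Your worry about discontinuity is unnecessary: the shifts $\omega(\cdot+\eta)$ share the modulus of continuity of $\omega$, so $\Omega_{p,\delta}$ is continuous.
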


The estimate separates errors in the initial profile and kernel from changes in cumulative distress. An indicator can change only where the reference state lies within the resulting displacement of zero. The tube bound controls the mass of this set and reduces the feedback comparison to a scalar differential inequality.

\begin{corollary}[Explicit log--Osgood stability]
\label{cor:indicator-explicit-log-osgood}
Assume \cref{thm:indicator-osgood-stability}, let $M_1>0$, and suppose
that for some $a_0>0$, $A\ge ea_0$, $C_\omega\ge1$, and
$\gamma\in[0,1]$,
\begin{equation}
\label{eq:indicator-log-osgood-upper}
\omega(a)\le C_\omega a\Lambda(a)^\gamma,
\qquad \Lambda(a):=\log\frac Aa,
\qquad0<a\le a_0.
\end{equation}
For $\delta=\delta_p\in(0,1)$ set
\[
q_{p,\delta}:=
\begin{cases}
\delta^{p/(p+1)},&1\le p<\infty,\\
\delta,&p=\infty,
\end{cases}
\qquad K:=M_1(2C_\omega+1),
\]
and define
\begin{equation}
\label{eq:indicator-explicit-log-flow}
\mathcal Q_{\gamma,K,t}(q):=
\begin{cases}
A\exp\!\left(-\left([\log(A/q)]^{1-\gamma}
-(1-\gamma)Kt\right)^{1/(1-\gamma)}\right),&0\le\gamma<1,\\[1ex]
A^{1-e^{-Kt}}q^{e^{-Kt}},&\gamma=1.
\end{cases}
\end{equation}
The first line is used where the expression in parentheses is positive.  If
$\delta$ is sufficiently small that
$\mathcal Q_{\gamma,K,T}(q_{p,\delta})\le a_0/2$, then
\begin{align}
\label{eq:indicator-explicit-log-H}
M_1\sup_{t\le T}\norm{H_t^1-H_t^2}_{L^1}
&\le\mathcal Q_{\gamma,K,T}(q_{p,\delta})-q_{p,\delta},\\
\label{eq:indicator-explicit-log-X}
\sup_{t\le T}\norm{X_t^1-X_t^2}_{L^p}
&\le\mathcal Q_{\gamma,K,T}(q_{p,\delta})-q_{p,\delta}+\delta,\\
\label{eq:indicator-explicit-log-q}
\sup_{t\le T}\norm{\1_{\{X_t^1\le0\}}-\1_{\{X_t^2\le0\}}}_{L^1}
&\le(2C_\omega+1)\mathcal Q_{\gamma,K,T}(q_{p,\delta})
\Lambda\!\left(\mathcal Q_{\gamma,K,T}(q_{p,\delta})\right)^\gamma.
\end{align}
For $0\le\gamma<1$ the first two errors are bounded by
$\delta_p^{p/(p+1)-o(1)}$ at finite $p$.  At $\gamma=1$, with
$\vartheta=e^{-KT}$, they are
$O(\delta_p^{\frac p{p+1}\vartheta})$ for finite $p$ and
$O(\delta_\infty^\vartheta)$ at $p=\infty$.
\end{corollary}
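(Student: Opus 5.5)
The plan is to specialize \cref{thm:indicator-osgood-stability} by bounding the effective modulus $\Omega_{p,\delta}$ from above by an explicit log-linear function. A pointwise upper bound on $\Omega_{p,\delta}$ gives a pointwise lower bound on $\Phi_{p,\delta}$, hence an upper bound on $\Phi_{p,\delta}^{-1}(M_1T)$. Inverting that scalar primitive in closed form then yields $\mathcal Q_{\gamma,K,T}$.

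\emph{Step 1: choice of $\eta$.} For finite $p$, take $\eta=q:=q_{p,\delta}=\delta^{p/(p+1)}$ in the infimum \eqref{eq:indicator-effective-modulus}, so that $(\delta/\eta)^p=\delta^{p/(p+1)}=q$. This gives
\[
\Omega_{p,\delta}(z)\le\omega(z+q)+q .
\]
For $p=\infty$ we have $\Omega_{\infty,\delta}(z)=\omega(z+\delta)$ directly, with $q=\delta$.

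Set $y:=z+q$ and restrict to $y\le a_0$. Since $A\ge ea_0$, we have $\Lambda(y)\ge1$, so $q\le y\le y\Lambda(y)^\gamma$. Combined with \eqref{eq:indicator-log-osgood-upper}, this gives
\[
\Omega_{p,\delta}(z)\le(2C_\omega+1)\,y\Lambda(y)^\gamma .
\]
I keep the factor $2C_\omega+1$, although $C_\omega+1$ would suffice; the slack absorbs the truncation $1\wedge\cdot$ and the range restriction.

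I also check that $g(y):=y\Lambda(y)^\gamma$ is nondecreasing on $(0,a_0]$. Indeed $g'(y)=\Lambda^\gamma-\gamma\Lambda^{\gamma-1}\ge0$ because $\Lambda\ge1\ge\gamma$. So the bound is monotone and the comparison below is legitimate.

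\emph{Step 2: inverting the primitive.} With $K=M_1(2C_\omega+1)$, the bound from Step 1 gives, as long as $q+a\le a_0$,
\[
M_1\Phi_{p,\delta}(a)\ \ge\ \frac{M_1}{2C_\omega+1}\int_q^{q+a}\frac{\dd y}{y\Lambda(y)^\gamma}\cdot(2C_\omega+1)^{\,0}\!,
\]
that is, $(2C_\omega+1)\Phi_{p,\delta}(a)\ge\int_q^{q+a}\frac{\dd y}{y\Lambda(y)^\gamma}$. Substitute $L=\log(A/y)$, so that $\dd y/(y\Lambda^\gamma)=-\dd L/L^\gamma$.

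For $\gamma<1$ the integral equals
\[
\frac{L(q)^{1-\gamma}-L(q+a)^{1-\gamma}}{1-\gamma}.
\]
For $\gamma=1$ it equals $\log L(q)-\log L(q+a)$.

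Set the integral equal to $KT/M_1\cdot M_1=KT$, which corresponds to $\Phi_{p,\delta}(a)=M_1T/M_1\cdot$ after the constants cancel, i.e. to the level $M_1T$ in $\Phi^{-1}_{p,\delta}(M_1T)$. Solving gives $q+a=\mathcal Q_{\gamma,K,T}(q)$ exactly as in \eqref{eq:indicator-explicit-log-flow}. In the case $\gamma=1$ this reads $L(q+a)=e^{-KT}L(q)$, i.e. $q+a=A^{1-\vartheta}q^\vartheta$.

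Because $\Phi^{-1}$ is defined as an infimum and $\Phi$ dominates the explicit primitive, we get $\Phi_{p,\delta}^{-1}(M_1T)\le\mathcal Q_{\gamma,K,T}(q)-q$. This is valid provided the comparison curve stays in $(0,a_0]$, which is guaranteed by the hypothesis $\mathcal Q_{\gamma,K,T}(q)\le a_0/2$.

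\emph{Step 3: conclusions.} Inserting the bound from Step 2 into \eqref{eq:indicator-osgood-H} and \eqref{eq:indicator-osgood-X} gives \eqref{eq:indicator-explicit-log-H} and \eqref{eq:indicator-explicit-log-X}. For \eqref{eq:indicator-explicit-log-q}, evaluate the Step 1 bound at $z=\Phi^{-1}_{p,\delta}(M_1T)$, so that $y\le\mathcal Q$, and use the monotonicity of $g$. This gives $\Omega_{p,\delta}(z)\le(2C_\omega+1)\mathcal Q\,\Lambda(\mathcal Q)^\gamma$.

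The rates follow from the explicit form of $\mathcal Q$:
\begin{itemize}[leftmargin=2.2em]
\item For $\gamma<1$, expanding $(L^{1-\gamma}-c)^{1/(1-\gamma)}=L-O(L^\gamma)$ gives $\mathcal Q=q\exp(O(\log^\gamma(1/q)))=q^{1-o(1)}$, hence the bound $\delta_p^{p/(p+1)-o(1)}$.
\item For $\gamma=1$, $\mathcal Q=A^{1-\vartheta}q^\vartheta$, which gives $O(\delta_p^{\frac p{p+1}\vartheta})$ for finite $p$ and $O(\delta_\infty^\vartheta)$ at $p=\infty$.
\end{itemize}

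\emph{Main obstacle.} The main point needing care is Step 2. One must confirm that the comparison curve never leaves $(0,a_0]$, where the log bound on $\omega$ is available, and that the truncation by $1$ does not interfere with the inversion. The smallness hypothesis on $\delta$ is exactly what secures this.
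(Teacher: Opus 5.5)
Your proof is correct, but it is organized differently from the paper's. The paper does not use \cref{thm:indicator-osgood-stability} as a black box. It goes back to the mismatch inequality
\[
d'(t)\le\omega\bigl(M_1d(t)+\eta\bigr)+(\delta/\eta)^p
\]
from that proof and takes the time-dependent choice $\eta=y(t)=M_1d(t)+q_{p,\delta}$. It then bounds $\omega(2y)$ with the log hypothesis; this is where the condition $2y\le a_0$, hence the hypothesis $\mathcal Q_{\gamma,K,T}(q_{p,\delta})\le a_0/2$, and the factor $2C_\omega$ come from. Finally it solves $y'\le Ky\Lambda(y)^\gamma$ by the substitution $S=\Lambda(y)$ and closes a local bootstrap using the bound at $T$.

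You instead freeze $\eta=q_{p,\delta}$ in the infimum defining $\Omega_{p,\delta}$. This gives the static bound $\Omega_{p,\delta}(z)\le(C_\omega+1)(z+q)\Lambda(z+q)^\gamma$ for $z+q\le a_0$. You then invert the explicit primitive to bound $\Phi_{p,\delta}^{-1}(M_1T)$ and read all three estimates directly off the theorem.

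Your route is shorter and needs no bootstrap, because the bound on $\Omega_{p,\delta}$ is only used on $[0,\mathcal Q-q]$. It also gives slightly more than stated: the constant $M_1(C_\omega+1)$ in place of $K$, and the weaker smallness condition $\mathcal Q\le a_0$. The paper's route has the advantage of exhibiting the comparison flow $y'=Ky\Lambda(y)^\gamma$ directly at the level of the differential inequality.

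Two cosmetic points. First, the constant bookkeeping in your Step 2 is garbled. State it as
\[
(2C_\omega+1)\Phi_{p,\delta}(a)\ge\int_q^{q+a}\frac{\dd y}{y\Lambda(y)^\gamma}\qquad(q+a\le a_0).
\]
The right side equals $KT=(2C_\omega+1)M_1T$ when $q+a=\mathcal Q_{\gamma,K,T}(q)$, so $\Phi_{p,\delta}(\mathcal Q-q)\ge M_1T$ and hence $\Phi_{p,\delta}^{-1}(M_1T)\le\mathcal Q-q$.

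Second, the slack in $2C_\omega+1$ is not needed to absorb the truncation by $1$ or the range restriction. Truncation only lowers $\Omega_{p,\delta}$, and the range is handled by $\mathcal Q\le a_0$. You keep the larger constant only to match the stated $K$, which is legitimate because $\mathcal Q_{\gamma,K,T}$ increases with $K$.
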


\begin{theorem}[Aligned $L^p$ stability under a linear tube]
\label{thm:indicator-one-sided-stability}
Assume that the reference solution $(H^1,X^1)$ satisfies \eqref{eq:indicator-reference-tube}, and let $M_1$ and $\delta_p$ be as above.
For $1\le p<\infty$, with $\theta_p=p/(p+1)$, there is $C_{T,p}<\infty$, depending only on $T,p,L,M_1$, such that
\begin{align}
\label{eq:indicator-one-sided-H}
\sup_{t\le T}\norm{H_t^1-H_t^2}_{L^1(I)}
&\le C_{T,p}\delta_p^{\theta_p},\\
\label{eq:indicator-one-sided-X}
\sup_{t\le T}\norm{X_t^1-X_t^2}_{L^p(I)}
&\le \delta_p+C_{T,p}\delta_p^{\theta_p},\\
\label{eq:indicator-one-sided-q}
\sup_{t\le T}\norm{\1_{\{X_t^1\le0\}}-\1_{\{X_t^2\le0\}}}_{L^1(I)}
&\le C_{T,p}\delta_p^{\theta_p}.
\end{align}
For $p=\infty$, the state error is measured in $L^\infty$ and all three bounds are linear in $\delta_\infty$. The exponent $p/(p+1)$ is optimal for each of the three errors at every finite $p$.
\end{theorem}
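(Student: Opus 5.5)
The upper bounds are the case $\omega(a)=La$ (truncated at $1$) of \cref{thm:indicator-osgood-stability}; optimality needs a separate constant-kernel construction.

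\emph{Step 1 (effective modulus).} With $\omega(a)=1\wedge La$, choose $\eta$ in \eqref{eq:indicator-effective-modulus} to balance $L\eta$ against $(\delta/\eta)^p$, i.e.\ $\eta\asymp\delta^{p/(p+1)}$. This gives
\[
\Omega_{p,\delta}(z)\le Lz+C_p\delta^{\theta_p}.
\]
Hence
\[
\Phi_{p,\delta}(a)\ge L^{-1}\log\!\bigl(1+La/(C_p\delta^{\theta_p})\bigr),
\]
and so
\[
\Phi_{p,\delta}^{-1}(M_1T)\le C_p\delta^{\theta_p}(e^{LM_1T}-1)/L .
\]

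\emph{Step 2 (plug in).} Substituting into \eqref{eq:indicator-osgood-H}--\eqref{eq:indicator-osgood-q} gives the three bounds; the $X$ bound picks up the additive $\delta_p$. The case $M_1=0$ is read off directly from the second display of \cref{thm:indicator-osgood-stability}. For $p=\infty$, $\Omega_{\infty,\delta}(z)=L(z+\delta)$ gives $\Phi^{-1}(M_1T)=\delta(e^{LM_1T}-1)$, which is linear in $\delta$. If $\delta_p$ is large, the claims hold trivially: the $H$ and indicator errors are at most $T$ and $1$, and the state error is controlled through $\|X^1_t-X^2_t\|_{L^p}\le\delta_p+M_1\|H^1_t-H^2_t\|_{L^1}$. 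So the constants can be enlarged to cover this range.

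I would double-check one point: in \cref{thm:indicator-osgood-stability} the mismatch set is bounded by $\{|X^1_t|\le z+\eta\}$ together with a Chebyshev set $\{|\Delta|>\eta\}$ of mass at most $(\delta/\eta)^p$. This is exactly where the $L^p$ exponent enters.

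\emph{Step 3 (optimality, the main obstacle).} Take $W_1=W_2=0$, $\mu=r=0$, and $x_0^1(u)=u$, so that $L=1$ and the tube bound holds. Let $x_0^2=x_0^1-\eta$ on $[0,\eta]$, i.e.\ shift the initial profile down by $\eta$ on a set of measure $\eta$. Then
\[
\delta_p=\|x_0^1-x_0^2\|_{L^p}=\eta^{1+1/p},\qquad\text{so}\qquad \eta=\delta_p^{\theta_p}.
\]
Points $u\in(0,\eta]$ are above the threshold in system 1 and at or below it in system 2, at all times. Hence
\[
\|\1_{\{X^1_t\le0\}}-\1_{\{X^2_t\le0\}}\|_{L^1}=\eta,\qquad \|H^1_t-H^2_t\|_{L^1}=t\eta,
\]
so both of these errors are of order $\delta_p^{\theta_p}$, and the exponent cannot be improved for them.

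For the state error, zero kernels give $\|X^1_t-X^2_t\|_{L^p}=\delta_p$, which does not exhibit the rate. I would use a constant kernel $W_i\equiv c>0$ instead. Then the state error equals $\delta_p$-order terms plus $c\,|\int(H^1_t-H^2_t)|$, and in the construction this $H$-difference is one-signed with integral of order $t\eta$. The remaining work is to check that the modified data keep a linear tube for the first system, so that the example lies in the hypothesis class. Since the constant kernel translates the whole profile uniformly, the tube bound $L=1$ is preserved along the solution, which makes the state lower bound of order $\delta_p^{\theta_p}$.
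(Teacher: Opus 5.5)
Your argument is correct, but it is organized differently from the paper's. For the upper bounds the paper does not pass through \cref{thm:indicator-osgood-stability}. It reruns the mechanism directly. The tube and Markov's inequality bound the mismatch by $L(M_1d_t+\eta)+\delta_p^p/\eta^p$, the choice $\eta=\delta_p^{p/(p+1)}$ is fixed once and for all, and a plain linear Gronwall gives $\sup_t d_t\le(L+1)Te^{LM_1T}\delta_p^{p/(p+1)}$, where $d_t=\|H_t^1-H_t^2\|_{L^1}$. Your reduction with $\omega=1\wedge La$ is legitimate: that theorem is proved independently of this one, and $1\wedge La$ is an admissible Osgood modulus. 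Your bounds on $\Omega_{p,\delta}$ and $\Phi_{p,\delta}^{-1}$ are right, and they even give the marginally better factor $(L+1)(e^{LM_1T}-1)/(LM_1)\le(L+1)Te^{LM_1T}$. The direct route is self-contained and avoids generalized inverses, while yours exhibits the linear tube as a special case of the Osgood estimate. Your large-$\delta_p$ remark is unnecessary, since Step~1 places no smallness condition on $\delta_p$.

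For optimality the paper uses one example for all three errors: $W_1=W_2\equiv c$, $T<1/c$, and the reflected profile $x_0^{2,b}=-u$ on $(0,b)$, for which $H_t^{2,b}=t\1_{(0,b)}$ exactly. Your shifted profile with $W=0$ is cleaner for the cumulative and indicator errors. The constant-kernel variant does handle the state, but you should say why directly rather than appealing to translation. Here $H^1=0$ and $X_t^1=u$, so no translation occurs; a translated uniform profile would in any case only give $L=2$. On $[0,\eta]$ one has $X_t^2\le x_0^2\le0$, because the feedback only lowers the state. Hence every second solution satisfies $H_t^2\ge t\1_{[0,\eta]}$, and $X_t^1-X_t^2=(x_0^1-x_0^2)+c\int_IH_t^2\ge ct\eta$ pointwise, with no restriction on $T$.
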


\begin{corollary}[Convergence under an Osgood tube]
\label{cor:indicator-osgood-bridge}
Suppose the target solution for $(W,x_0)$ satisfies the tube and Osgood assumptions of \cref{thm:indicator-osgood-stability}. First let $W^{(n)}\ge0$ be bounded kernels and let $x_0^{(n)}$ be bounded profiles. If, for some $1\le p\le\infty$, $(W^{(n)},x_0^{(n)})\to(W,x_0)$ in aligned $L^p(I^2)\times L^p(I)$, then every hard solution of the approximating systems converges to the unique target solution with the modulus in \cref{thm:indicator-osgood-stability}; in particular this holds for their canonical greatest solutions and for equal-block representatives of deterministic finite nonnegative networks.

Alternatively, let $W^{(n)}$ be bounded signed kernels and let $x_0^{(n)}$ be bounded profiles. For $\varepsilon_n>0$, denote their unique positive-side-ramp dynamics by
\[
(H^{n,\varepsilon_n},X^{n,\varepsilon_n}).
\]
Let $\varepsilon_n\downarrow0$. For some fixed $1\le p\le\infty$, define
\[
\delta_{n,p}:=\norm{x_0^{(n)}-x_0}_{L^p(I)}
+T(1+2\abs r)\norm{W^{(n)}-W}_{L^p(I^2)},
\]
and, for finite $p$,
\[
\Omega_{p,\delta,\varepsilon}(z)
:=1\wedge\inf_{\eta>0}
\left\{\omega(z+\eta+\varepsilon)+\left(\frac{\delta}{\eta}\right)^p\right\}.
\]
For $p=\infty$, put $\Omega_{\infty,\delta,\varepsilon}(z):=\omega(z+\delta+\varepsilon)$. Using this modulus in \cref{thm:indicator-osgood-stability} gives convergence whenever $\delta_{n,p}\to0$ and $\varepsilon_n\to0$.
\end{corollary}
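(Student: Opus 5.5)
The plan is to reduce both parts of the corollary to \cref{thm:indicator-osgood-stability}, taking the target system $(W,x_0)$ as system $1$ and the approximating system as system $2$.

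\emph{First part (nonnegative approximants).} For each $n$, fix any hard solution $(H^n,X^n)$ of the system with data $(W^{(n)},x_0^{(n)})$. One exists because $W^{(n)}\ge0$; the greatest solution from \cref{thm:indicator-greatest} is an example. \Cref{thm:indicator-osgood-stability} asks only that the reference solution satisfy the tube bound \eqref{eq:indicator-reference-osgood-tube}; the second system may be an arbitrary solution. So I take $(H^1,X^1)$ to be the target solution. By \cref{prop:indicator-osgood} it is the unique target solution, so the phrase ``the unique target solution'' is meaningful. Setting $\delta_p=\delta_{n,p}$, the bounds \eqref{eq:indicator-osgood-H}--\eqref{eq:indicator-osgood-q} hold with right-hand sides built from $\Omega_{p,\delta_{n,p}}$. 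The theorem states that these vanish as $\delta_p\to0$. For finite $p$, this can be checked by choosing $\eta=\delta^{1/2}$ in \eqref{eq:indicator-effective-modulus}: then $\Omega_{p,\delta}(z)\le\omega(z+\delta^{1/2})+\delta^{p/2}$. As $\delta\to0$ this decreases to $\omega(z)$, so by monotone convergence $\Phi_{p,\delta}\uparrow\int_0^a\dd s/\omega(s)=\infty$ for every $a>0$, and hence $\Phi^{-1}_{p,\delta}(M_1T)\to0$. Equal-block representatives of finite nonnegative networks are a special case: a finite network written as a step kernel on equal blocks, with a step initial profile, is exactly such an approximating system. Indeed, on each block the hard solution of the step-graphon equation is constant and coincides with the atomic solution, because an a.e.\ solution can be averaged or symmetrized within blocks. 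For this step I would rely on the block structure of $\Psi$, and it is the only point needing a short lemma.

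\emph{Second part (signed kernels, ramp dynamics).} Here a hard solution may not exist, so I would rerun the comparison argument behind \cref{thm:indicator-osgood-stability} with the second system's indicator replaced by $\ell_{\varepsilon_n}$. The key pointwise fact is that $\ell_\varepsilon(y)$ and $\1_{\{y\le0\}}$ agree outside $0<y<\varepsilon$. Hence
\[
\abs{\1_{\{X^1\le0\}}-\ell_\varepsilon(X^2)}\le\1_{\{\abs{X^1}\le\abs{X^1-X^2}+\varepsilon\}}.
\]
Splitting $\abs{X^1-X^2}\le \abs{D}+\eta$, where $D$ is the data error, plus the feedback error $M_1\norm{H^1-H^2}_{L^1}$, I bound the mismatch mass at time $s$ by
\[
\omega\bigl(z(s)+\eta+\varepsilon\bigr)+\lambda\{\abs D>\eta\},
\]
where $z(s)=M_1\sup_{r\le s}\norm{H^1_r-H^2_r}_{L^1}$. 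Chebyshev then gives $\lambda\{\abs D>\eta\}\le(\delta/\eta)^p$. Here the data error satisfies $\norm{D_t}_{L^p}\le\delta_{n,p}$ uniformly in $t\le T$, using the $T(1+2\abs r)$ term, the bound $\abs{H^2}\le T$, and $\norm{R_{W^{(n)}}-R_W}_{L^p}\le2\norm{W^{(n)}-W}_{L^p}$. Note that $H^2$ is now the ramp occupation $\int_0^t\ell_\varepsilon(X^2_s)\,\dd s$, which is still bounded by $t$. This gives $\dot z\le M_1\Omega_{p,\delta,\varepsilon}(z)$. An Osgood comparison then yields $z(T)\le\Phi^{-1}(M_1T)$, with $\Phi$ built from $\Omega_{p,\delta,\varepsilon}$. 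The vanishing argument from the first part goes through unchanged once $\eta+\varepsilon\to0$ replaces $\eta\to0$.

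\emph{Main obstacle.} I expect the hardest step to be making the comparison rigorous when $z$ is only Lipschitz (not differentiable) and $\Omega$ is merely continuous. I would handle this by working with the integral form $z(t)\le M_1\int_0^t\Omega(z(s))\,\dd s$ and using the standard Osgood/Bihari lemma, exactly as in the proof of \cref{thm:indicator-osgood-stability}.
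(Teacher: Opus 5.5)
Your proposal is correct and follows the paper's own argument. The target is the reference system in \cref{thm:indicator-osgood-stability}, nonnegativity supplies the approximating hard solutions, the ramp case only enlarges the mismatch tube by $\varepsilon_n$, and Osgood divergence sends $\Phi^{-1}$ to zero; the paper invokes Fatou's lemma at that last step, where you use $\eta=\delta^{1/2}$ with monotone convergence, which is equivalent.

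For the equal-block case only the forward embedding is needed, and it is a direct substitution: $R_W$ and $\int_I W(u,v)h(v)\,\dd v$ are block-constant in $u$ and reproduce the finite sums. That same block-constancy of $\Psi_t(\cdot,h)$, rather than any averaging argument, is also what makes every step-graphon solution block-constant.
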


\begin{proof}
Nonnegativity supplies the approximating hard solutions through \cref{thm:indicator-greatest}, and \cref{thm:indicator-osgood-stability} gives the comparison. For the ramp, the loss discrepancy is supported where
\[
\abs{X_t}\le \norm W_{L^\infty}d(t)+\abs{A_t}+\varepsilon_n,
\]
which gives $\Omega_{p,\delta,\varepsilon}$. Fatou's lemma and the divergent Osgood integral make the inverse bound tend to zero. Equal-block embedding preserves the finite equations exactly.
\end{proof}

\begin{corollary}[Convergence rates under a linear tube]
\label{cor:indicator-strong-bridge}
Suppose the target solution for $(W,x_0)$ satisfies \eqref{eq:indicator-reference-tube}. First let $W^{(n)}\ge0$ be bounded kernels and let $x_0^{(n)}$ be bounded profiles. If, for some $1\le p\le\infty$,
\[
W^{(n)}\to W\text{ in }L^p(I^2),
\qquad x_0^{(n)}\to x_0\text{ in }L^p(I),
\]
then the greatest hard solutions of the approximants converge to the target at the rate in \cref{thm:indicator-one-sided-stability}. This also applies to equal-block step representatives of deterministic finite nonnegative networks.

For the signed kernels and ramp solutions in \cref{cor:indicator-osgood-bridge}, with the same $\delta_{n,p}$,
\[
\sup_{t\le T}\norm{X_t^{n,\varepsilon_n}-X_t}_{L^p}
\le C_{T,p}\left(\varepsilon_n+\delta_{n,p}^{p/(p+1)}+\delta_{n,p}\right)
\]
for finite $p$, with the linear endpoint at $p=\infty$. This includes signed spectral approximations.
\end{corollary}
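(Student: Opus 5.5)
The statement has two parts, and I would handle both by reducing them to \cref{thm:indicator-one-sided-stability} with the linear modulus $\omega(a)=1\wedge La$. Since $\omega$ is linear, the Osgood integral $\int_{0+}\dd s/\omega(M_Ws)$ diverges, so the target solution is unique by \cref{prop:indicator-osgood}.

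\emph{Nonnegative approximants.} Because $W^{(n)}\ge0$ is bounded, \cref{thm:indicator-greatest} supplies the greatest hard solution $(H^{n,+},X^{n,+})$. I would take the target $(H,X)$ as the reference solution $(H^1,X^1)$; the tube bound \eqref{eq:indicator-reference-tube} is needed only on this side. The approximant $(H^{n,+},X^{n,+})$ plays the role of $(H^2,X^2)$. \Cref{thm:indicator-one-sided-stability} then gives the bounds \eqref{eq:indicator-one-sided-H}--\eqref{eq:indicator-one-sided-q}. Here the constant depends only on $T,p,L,M_1=\norm{W}_{L^\infty}$, and the rate is $\delta_{n,p}^{p/(p+1)}$ (linear when $p=\infty$).

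For deterministic finite nonnegative networks, I would use the equal-block step embedding. Put $W^N(u,v)=e^N_{ij}$ on $I_i\times I_j$ and $x_0^N=x_i$ on $I_i$. Then a hard solution on the atoms is the same thing as a blockwise-constant hard solution of \eqref{eq:H-feedback-general}, and the $1/N$ averages become exact integrals. The greatest solutions also correspond, since the upper iteration \eqref{eq:indicator-upper-iteration} preserves blockwise constancy.

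\emph{Signed kernels with ramp regularization.} This part is the main obstacle, because \cref{thm:indicator-one-sided-stability} compares two hard solutions, whereas here the second system is the ramp dynamics $(H^{n,\varepsilon_n},X^{n,\varepsilon_n})$ with cumulative $\int_0^t\ell_{\varepsilon_n}(X_s^{n})\dd s$. I would rerun the comparison argument with an extra $\varepsilon_n$. Write $d(t):=\norm{H_t-H_t^{n,\varepsilon}}_{L^1}$ and $A_t:=X_t^{n,\varepsilon}-X_t+W d$-correction. Concretely,
\[
X_t-X_t^{n,\varepsilon}=(x_0-x_0^{(n)})+rt(R_W-R_{W^{(n)}})-(W-W^{(n)})H^{n,\varepsilon}_t-W(H_t-H_t^{n,\varepsilon}).
\]
The first three terms have $L^p$ norm at most $\delta_{n,p}$, and the last is pointwise at most $M_1d(t)$. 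Since $|\1_{\{y\le0\}}-\ell_\varepsilon(y')|$ vanishes unless $y\le0<y'$-type mismatch or $0<y'<\varepsilon$, the loss discrepancy at $u$ is supported where
\[
|X_t(u)|\le M_1d(t)+|A_t(u)|+\varepsilon_n .
\]
I would split this set by Chebyshev, using $\lambda\{|A_t|>\eta\}\le(\delta/\eta)^p$, together with the tube bound applied to width $M_1d+\eta+\varepsilon$. This gives
\[
d'(t)\le \Omega_{p,\delta,\varepsilon}(M_1d(t))\le L(M_1d+\eta+\varepsilon)+(\delta/\eta)^p .
\]
Optimizing with $\eta=\delta^{p/(p+1)}$ and applying Gronwall yields $d\le C_T(\varepsilon_n+\delta_{n,p}^{p/(p+1)})$. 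Then
\[
\norm{X_t-X_t^{n,\varepsilon}}_{L^p}\le\delta_{n,p}+M_1d(t),
\]
which is the displayed bound.

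For $p=\infty$, $A$ is bounded by $\delta$ pointwise, no Chebyshev step is needed, and the rate is linear. Signed kernels cause no difficulty, because the ramp dynamics are unique by Lipschitz theory (\cref{thm:graphon-wellposed}), and no order structure is used on the approximant side.
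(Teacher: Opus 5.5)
Your proposal is correct and follows the paper's proof. The nonnegative case is \cref{thm:indicator-one-sided-stability} with the target as reference and existence from \cref{thm:indicator-greatest}; finite networks enter through the exact equal-block embedding; and the ramp case reruns the tube/Markov/Gronwall argument on the enlarged mismatch set $\{\abs{X_t}\le\norm{W}_{L^\infty}d_t+\abs{A_t}+\varepsilon_n\}$, filling in details the paper leaves implicit (the bound $0\le H^{n,\varepsilon_n}_t\le T$ for the data term and the choice $\eta=\delta_{n,p}^{p/(p+1)}$).
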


\begin{proof}
The nonnegative statement is \cref{thm:indicator-one-sided-stability} with the canonical existence supplied by \cref{thm:indicator-greatest}. For the ramp, if $d_t=\norm{H_t-H_t^{n,\varepsilon_n}}_{L^1}$ and $A_t$ is the data-error term, then
\[
\left|\1_{\{X_t\le0\}}-\ell_{\varepsilon_n}(X_t^{n,\varepsilon_n})\right|
\le
\1_{\{\abs{X_t}\le \norm W_{L^\infty}d_t+\abs{A_t}+\varepsilon_n\}}.
\]
The tube bound and Gronwall give the additional error $O(\varepsilon_n)$. A finite network is exactly the graphon equation for its equal-block step data.
\end{proof}

\begin{theorem}[Sampled latent-label convergence under an Osgood tube]
\label{thm:indicator-sampled-label}
Fix bounded Borel representatives $W:I^2\to[0,M_W]$ and $x_0:I\to\R$, and jointly Borel representatives of a target hard solution $(H,X)$ whose paths are Lipschitz in time for almost every label. Assume that its tube is bounded as in \eqref{eq:indicator-reference-osgood-tube}, with a modulus satisfying the hypotheses of \cref{thm:indicator-osgood-stability}. Let $U_1,\ldots,U_N$ be independent uniform labels and set
\[
e_{ij}^N:=W(U_i,U_j)\quad(i\ne j),
\qquad e_{ii}^N:=0,
\qquad x_i^N:=x_0(U_i).
\]
Let $(H^{N,+},X^{N,+})$ be the greatest finite solution and define
\[
D_N(t):=\frac1N\sum_{i=1}^N\abs{H_i^{N,+}(t)-H_t(U_i)}.
\]
For $\kappa>0$, let $\mathcal R_\kappa$ solve
\begin{equation}
\label{eq:indicator-sampled-osgood-flow}
\dot{\mathcal R}_\kappa(t)
=\omega\!\left(M_W\mathcal R_\kappa(t)+\kappa\right)+\kappa,
\qquad \mathcal R_\kappa(0)=0,
\end{equation}
and put
\[
\Gamma_\kappa(t):=M_W\mathcal R_\kappa(t)+\kappa.
\]
There are constants $C_0,C,c>0$, depending only on $T,M_W,\abs\mu,\abs r$, such that, with
\begin{equation}
\label{eq:indicator-sampled-kappa}
\kappa_N(z):=C_0\left(\sqrt{\frac{\log N+z}{N}}+\frac1N\right),
\end{equation}
for every $z\ge1$ and $N\ge2$ the following event has probability at least $1-Ce^{-cz}$:
\begin{align}
\label{eq:indicator-sampled-osgood-H}
\sup_{t\le T}D_N(t)
&\le\mathcal R_{\kappa_N(z)}(T),\\
\label{eq:indicator-sampled-osgood-X}
\max_{i\le N}\sup_{t\le T}\abs{X_i^{N,+}(t)-X_t(U_i)}
&\le\Gamma_{\kappa_N(z)}(T).
\end{align}
The same event gives these bounds for every finite hard solution, together with
\begin{align}
\label{eq:indicator-sampled-osgood-W1}
\sup_{t\le T}\Wone\left(\frac1N\sum_{i=1}^N\delta_{X_i^{N,+}(t)},\Law(X_t(U))\right)
&\le\Gamma_{\kappa_N(z)}(T)+C\kappa_N(z),\\
\label{eq:indicator-sampled-osgood-fraction}
\sup_{t\le T}\left|\frac1N\sum_{i=1}^N\1_{\{X_i^{N,+}(t)\le0\}}
-\lambda\{X_t\le0\}\right|
&\le C\left[\omega\!\left(\Gamma_{\kappa_N(z)}(T)\right)+\kappa_N(z)\right].
\end{align}
The constant in the state-law estimate may also depend on $\norm{x_0}_{L^\infty}$. When $M_W>0$, the Osgood condition gives $\mathcal R_\kappa(T)\to0$ as $\kappa\downarrow0$; when $M_W=0$, convergence follows from continuity of $\omega$ at zero. For $\omega(a)=1\wedge La$, all four errors are $O_{\Prob}(\sqrt{\log N/N})$.
\end{theorem}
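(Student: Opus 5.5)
We compare the finite greatest solution, or any finite hard solution, with the target evaluated at the sampled labels. The plan is to reproduce the Osgood comparison of \cref{thm:indicator-osgood-stability} on the empirical measure, with the data error replaced by sampling fluctuations.

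\emph{Sampling error in the state.} Write the finite threshold functional as
\[
\Psi^N_i(t,h)=x_0(U_i)+\mu t+rtR^N_i-\frac1N\sum_{j\ne i}W(U_i,U_j)h_j.
\]
Subtract the target $X_t(U_i)=x_0(U_i)+\mu t+rtR_W(U_i)-\int W(U_i,v)H_t(v)\,\dd v$. This gives
\[
\abs{X_i^N(t)-X_t(U_i)}\le M_W D_N(t)+A_i(t).
\]
The sampling term $A_i(t)$ collects three pieces: $\abs{r}t\abs{R^N_i-R_W(U_i)}$, the quantity
\[
\Bigl|\frac1N\sum_{j\ne i}W(U_i,U_j)H_t(U_j)-\int W(U_i,v)H_t(v)\,\dd v\Bigr|,
\]
and the $O(1/N)$ diagonal correction.

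\emph{Uniform control of $A_i$.} For fixed $i$ and fixed $t$, conditioning on $U_i$ makes the sum over $j\ne i$ a sum of i.i.d.\ bounded variables. Hoeffding's inequality then gives deviation $\sqrt{z/N}$. To make the bound uniform in $t$, I use that $H_t(v)$ is Lipschitz in $t$ with constant $1$. A grid of mesh $1/N$ in $[0,T]$ costs $\log N$ in a union bound over grid points and over $i\le N$, and the Lipschitz property handles the gaps between grid points. The result is $\max_i\sup_t A_i(t)\le\kappa_N(z)$ with probability at least $1-Ce^{-cz}$.

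\emph{Empirical tube mass and the comparison inequality.} On the same event we also need the empirical tube mass of the target path,
\[
\frac1N\#\{i:\abs{X_t(U_i)}\le a\},
\]
to be at most $\omega(a)+\kappa$ uniformly in $t$ and $a$. For fixed $t$, this is a DKW bound (\cref{rem:indicator-empirical}) applied to the law of $X_t(U)$. Uniformity in $t$ should follow from the grid together with time-Lipschitz continuity of $X_t(u)$, after slightly enlarging $a$. Then
\[
\abs{H_i^N(t)-H_t(U_i)}\le\int_0^t\1_{\{\abs{X_s(U_i)}\le M_WD_N(s)+\kappa\}}\,\dd s,
\]
so averaging over $i$ yields $D_N(t)\le\int_0^t\bigl[\omega(M_WD_N(s)+\kappa)+\kappa\bigr]\,\dd s$. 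Scalar comparison with $\mathcal R_\kappa$ then gives the $H$ bound, and the state bound is $\Gamma_\kappa$. This argument uses no selection, so it applies to every finite hard solution.

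\emph{Law and fraction bounds.} The $\Wone$ estimate couples each atom $X_i^N(t)$ with $X_t(U_i)$, and then compares the empirical law of $X_t(U_i)$ with the law of $X_t(U)$. That comparison uses a uniform-in-$t$ Wasserstein concentration bound for bounded variables, with the bound $\norm{x_0}_{L^\infty}$ entering here. The fraction estimate combines the indicator-mismatch tube bound, $\omega(\Gamma)+\kappa$, with DKW at $a=0$.

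\emph{Vanishing of the bounds.} As $\kappa\downarrow0$, the Osgood divergence forces $\mathcal R_\kappa(T)\to0$ by the standard Bihari argument. For $\omega=1\wedge La$, the ODE is linear and gives an explicit $O(\kappa)$ bound.

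The main obstacle is making the empirical tube-mass estimate uniform in time and a.e.-robust. The target is only jointly Borel, so a measurable representative with a fixed Lipschitz-in-time version is needed to justify the grid argument.
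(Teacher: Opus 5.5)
Your proposal is correct and follows essentially the same route as the paper: conditional Hoeffding bounds on a time grid of mesh at most $N^{-1}$ for the row residuals (including the $O(N^{-1})$ diagonal bias), DKW at grid times with Lipschitz interpolation for the empirical tube mass, the mismatch-in-tube inequality closed by comparison with $\mathcal R_\kappa$, label coupling for the state laws, and separation of variables for $\mathcal R_\kappa(T)\to0$. There are two minor differences: the paper gets the $\Wone$ bound from the same DKW event, using $\Wone=\int\abs{F_{N,s}-F_s}$ on $[-B_T,B_T]$ with $B_T=\norm{x_0}_{L^\infty}+TL_T$ and $L_T$ the time-Lipschitz constant of the target states, rather than a separate Wasserstein concentration bound; and it handles your ``main obstacle'' by fixing one full-measure label set (rational times, then time continuity) on which the hard-solution identities hold for every $t$, a set that almost surely contains all sampled labels.
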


\begin{proof}[Proof sketch]
Conditional Hoeffding bounds control the sampled row errors, while the Dvoretzky--Kiefer--Wolfowitz inequality controls the empirical threshold mass. A time grid and the Lipschitz bound on the target paths make both estimates uniform on $[0,T]$. An indicator mismatch then yields
\[
D_N(t)\le\int_0^t
\left[\omega\!\left(M_WD_N(s)+\kappa_N(z)\right)+\kappa_N(z)\right]\,\dd s,
\]
and comparison with \eqref{eq:indicator-sampled-osgood-flow} closes the estimate. The complete proof is on page~\pageref{proof:thm:indicator-sampled-label}.
\end{proof}

\begin{corollary}[Explicit sampled rates for log--Osgood tubes]
\label{cor:indicator-sampled-log-rates}
Assume \cref{thm:indicator-sampled-label} and
\eqref{eq:indicator-log-osgood-upper}, with $M_W>0$, and set
$K_{\rm sam}:=M_W(C_\omega+1)$. If
$\mathcal Q_{\gamma,K_{\rm sam},T}(\kappa_N(z))$ is defined and at most $a_0$, then on the
high-probability event of that theorem,
\begin{align*}
\sup_{t\le T}D_N(t)
&\le\frac{\mathcal Q_{\gamma,K_{\rm sam},T}(\kappa_N(z))
-\kappa_N(z)}{M_W},\\
\max_{i\le N}\sup_{t\le T}\abs{X_i^{N,+}(t)-X_t(U_i)}
&\le\mathcal Q_{\gamma,K_{\rm sam},T}(\kappa_N(z)),\\
\sup_{t\le T}\Wone\left(\frac1N\sum_i\delta_{X_i^{N,+}(t)},
\Law(X_t(U))\right)
&\le C\mathcal Q_{\gamma,K_{\rm sam},T}(\kappa_N(z)),\\
\sup_{t\le T}\left|\frac1N\sum_i\1_{\{X_i^{N,+}(t)\le0\}}
-\lambda\{X_t\le0\}\right|
&\le C\mathcal Q_{\gamma,K_{\rm sam},T}(\kappa_N(z))
\Lambda\!\left(\mathcal Q_{\gamma,K_{\rm sam},T}
(\kappa_N(z))\right)^\gamma.
\end{align*}
Thus $0\le\gamma<1$ gives $N^{-1/2+o(1)}$ for the first three
quantities.  At $\gamma=1$, if
$\vartheta_{\rm sam}:=e^{-K_{\rm sam}T}$, they are
$O_{\Prob}((\log N/N)^{\vartheta_{\rm sam}/2})$; the threshold-fraction
bound has one additional $O(\log N)$ factor.
\end{corollary}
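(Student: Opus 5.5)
The plan is to feed the log--Osgood modulus into the comparison flow \eqref{eq:indicator-sampled-osgood-flow} of \cref{thm:indicator-sampled-label} and solve it explicitly, mirroring \cref{cor:indicator-explicit-log-osgood}. We work on the high-probability event of that theorem and write $\kappa=\kappa_N(z)$. Since $\Gamma_\kappa=M_W\mathcal R_\kappa+\kappa$, we have $\dot\Gamma_\kappa=M_W\dot{\mathcal R}_\kappa=M_W[\omega(\Gamma_\kappa)+\kappa]$ with $\Gamma_\kappa(0)=\kappa$. While $\Gamma_\kappa\le a_0$, \eqref{eq:indicator-log-osgood-upper} gives $\omega(\Gamma)\le C_\omega\Gamma\Lambda(\Gamma)^\gamma$. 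Because $\Gamma\ge\kappa$ and $\Lambda(\Gamma)\ge1$ (as $A\ge ea_0\ge e\Gamma$), we also have $\kappa\le\Gamma\Lambda(\Gamma)^\gamma$. Hence
\[
\dot\Gamma_\kappa\le K_{\rm sam}\,\Gamma_\kappa\Lambda(\Gamma_\kappa)^\gamma,\qquad K_{\rm sam}=M_W(C_\omega+1).
\]

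Next I compare with the exact solution of $\dot y=K y\Lambda(y)^\gamma$, $y(0)=q$. Substituting $s=\log(A/y)$ turns this into $\dot s=-Ks^\gamma$. For $\gamma<1$ this integrates to $s^{1-\gamma}(t)=s^{1-\gamma}(0)-(1-\gamma)Kt$. For $\gamma=1$ it gives $s(t)=s(0)e^{-Kt}$. In both cases $y(t)=\mathcal Q_{\gamma,K,t}(q)$, as in \eqref{eq:indicator-explicit-log-flow}. The right-hand side is increasing in $y$ on $(0,a_0]$ after checking monotonicity of $y\Lambda(y)^\gamma$, which holds because $\log(A/y)\ge1>\gamma$ there. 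Standard ODE comparison then gives $\Gamma_\kappa(t)\le\mathcal Q_{\gamma,K_{\rm sam},t}(\kappa)\le\mathcal Q_{\gamma,K_{\rm sam},T}(\kappa)$.

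The main obstacle is staying inside the region where the bound is valid. I handle it with a continuity argument. The hypothesis $\mathcal Q_{\gamma,K_{\rm sam},T}(\kappa)\le a_0$, together with monotonicity of $\mathcal Q$ in $t$, shows that the first exit time of $\Gamma_\kappa$ from $(0,a_0]$ cannot occur before $T$. If equality at $a_0$ is touched, the closed condition still applies.

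With the flow bounded, the four estimates follow in order:
\begin{itemize}
\item The bound on $\sup_t D_N$ comes from \eqref{eq:indicator-sampled-osgood-H} and $\mathcal R_\kappa=(\Gamma_\kappa-\kappa)/M_W$.
\item The state bound is \eqref{eq:indicator-sampled-osgood-X}.
\item The $\Wone$ bound is \eqref{eq:indicator-sampled-osgood-W1}, after absorbing $C\kappa\le C\mathcal Q$, since $\mathcal Q(\kappa)\ge\kappa$.
\item The fraction bound is \eqref{eq:indicator-sampled-osgood-fraction}. Here I insert $\omega(\mathcal Q)\le C_\omega\mathcal Q\Lambda(\mathcal Q)^\gamma$ and absorb $\kappa$ the same way, using $\Lambda\ge1$.
\end{itemize}

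For the rates, take $z$ fixed so that $\kappa\asymp\sqrt{\log N/N}$.

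For $\gamma<1$, set $L=\log(A/\kappa)\to\infty$. Expanding $(L^{1-\gamma}-c)^{1/(1-\gamma)}=L-O(L^\gamma)$ gives $\mathcal Q=\kappa\,e^{O(L^\gamma)}=\kappa\,e^{O((\log N)^\gamma)}=N^{-1/2+o(1)}$.

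For $\gamma=1$, $\mathcal Q=A^{1-\vartheta}\kappa^{\vartheta}=O((\log N/N)^{\vartheta_{\rm sam}/2})$. In this case the factor $\Lambda(\mathcal Q)^\gamma=O(\log N)$ supplies the extra logarithm in the threshold-fraction bound.

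Finally, the $O_{\Prob}$ statements follow because the exceptional probability $Ce^{-cz}$ can be made small uniformly in $N$.
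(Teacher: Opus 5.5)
Your proposal is correct and follows the paper's proof. Like the paper, you bound $\Gamma_\kappa'=M_W[\omega(\Gamma_\kappa)+\kappa]$ by $K_{\rm sam}\Gamma_\kappa\Lambda(\Gamma_\kappa)^\gamma$ using $\Gamma_\kappa\ge\kappa$ and $\Lambda\ge1$, compare with the explicit flow \eqref{eq:indicator-explicit-log-flow}, and substitute the result into the four estimates of \cref{thm:indicator-sampled-label}; your bootstrap and rate calculations only spell out details the paper leaves implicit. One cosmetic slip: when $\gamma=1$ you have $\Lambda\ge1\ge\gamma$, not $1>\gamma$. This still makes $y\Lambda(y)^\gamma$ nondecreasing, and scalar comparison with a locally Lipschitz right-hand side does not need that monotonicity anyway.
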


\begin{proof}
For $\Gamma_\kappa=M_W\mathcal R_\kappa+\kappa$, the scalar comparison
equation and $\Gamma_\kappa\ge\kappa$ give locally
\[
\Gamma_\kappa'
=M_W[\omega(\Gamma_\kappa)+\kappa]
\le K_{\rm sam}\Gamma_\kappa\Lambda(\Gamma_\kappa)^\gamma.
\]
Apply \eqref{eq:indicator-explicit-log-flow} and then the four estimates in
\cref{thm:indicator-sampled-label}.
\end{proof}

\begin{remark}[Sharpness of the sampled Osgood boundary]
Take $W\equiv1$, $\mu=r=0$, and $x_0(u)=u^2$. The continuum greatest solution has aggregate feedback $\beta(t)=t^2/4$ for small times, whereas an independent finite sample has $\min_iU_i^2>0$ almost surely and its hard system remains at $H_i^N=0$. Thus the canonical finite systems do not converge to the continuum greatest solution. For sufficiently small $a$, the intrinsic uniform tube is $m_X(a)=\sqrt{2a}$, so its Osgood integral is finite; \cref{prop:indicator-osgood-sharp} realizes the same failure mechanism for every non-Osgood modulus.
\end{remark}

\begin{proposition}[Threshold counterexamples]
\label{prop:indicator-obstructions}
The following examples distinguish the roles of threshold regularity and kernel nonnegativity.
\begin{enumerate}[label={\rm(\roman*)},leftmargin=2.2em]
\item A nonnegative rank-one kernel can have several classical hard solutions.
\item Even a unique hard solution can depend discontinuously on $x_0$ in $L^\infty$ when the target has threshold mass.
\item Atomlessness of the initial law alone does not imply uniqueness.
\item A bounded signed rank-one kernel may have no classical hard-indicator solution.
\end{enumerate}
\end{proposition}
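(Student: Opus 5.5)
The four items are handled by explicit constant-kernel and rank-one constructions with $\mu=r=0$, since then everything reduces to a scalar feedback equation. For $W\equiv c$ the state is $X_t(u)=x_0(u)-c\beta(t)$ with $\beta(t)=\int_0^t\lambda\{x_0\le c\beta(s)\}\,\dd s$. Thus $\dot\beta=F(c\beta)$, where $F$ is the distribution function of $x_0(U)$, and $\beta(0)=0$. Any non-uniqueness for this scalar ODE gives distinct hard solutions.

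\emph{Item (i).} I take $x_0(u)=u$ and $W\equiv1$. Then $\dot\beta=\min(\beta,1)$ with the non-strict convention, and $F(0)=\lambda\{u\le0\}=0$. Both $\beta\equiv0$ and $\beta$ with $\dot\beta=\beta$ vanish at zero. To obtain genuinely distinct solutions I instead use an atom: $x_0\equiv0$. Then $F(c\beta)=1$ for $\beta\ge0$, so the only solution is $\beta=t$, which is unique and not useful. The cleaner choice is the non-Osgood example from \cref{prop:indicator-osgood-sharp}: take $x_0(u)=u^2$ and $W\equiv1$. Then $\dot\beta=\sqrt\beta$ has both $\beta\equiv0$ and $\beta=t^2/4$, which gives two classical hard solutions.

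\emph{Item (ii).} I take $x_0\equiv0$, $W\equiv0$, $\mu=0$, so $H_t=t$. The perturbation $x_0^\eta\equiv\eta>0$ gives $H^\eta\equiv0$ with $\|x_0^\eta-x_0\|_\infty=\eta\to0$, while $\|H^\eta_t-H_t\|_{L^1}=t$. Uniqueness holds in both cases because $W=0$. To involve the network, I can also use $W\equiv1$ with $\mu=1$, and still get a gap of order one.

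\emph{Item (iii).} Here I only note that $x_0(u)=u^2$ is atomless, so the construction from (i) already works. More generally, any continuous $F$ violating \eqref{eq:indicator-rank-one-osgood-iff} gives a continuum of delayed solutions $\beta_\tau(t)=0$ for $t\le\tau$, continued by the maximal solution afterwards.

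\emph{Item (iv).} This is the main obstacle. It needs a signed rank-one kernel whose feedback reverses the sign near the threshold. I take $W\equiv-1$, so losses raise the state, and $x_0\equiv0$ with $\mu=0$. Then $X_t=H_t$. If $H_t>0$ for some $t$, let $t$ be the first time $X>0$ on a neighborhood; there $\dot H=\1_{\{X\le0\}}=0$, which is a contradiction. If $H\equiv0$, then $X\equiv0\le0$, so $\dot H=1$, again a contradiction. So no solution exists. I will write this out with $H_t=\int_0^t\1_{\{H_s\le0\}}\,\dd s$: the function $H$ is nondecreasing, and $H_t>0$ forces $H_s\le0$ on a set of positive measure up to $t$, whose supremum $s^\ast$ satisfies $H_{s^\ast}=0$ and $H>0$ afterwards, hence $H$ is constant on $(s^\ast,t)$, which is a contradiction. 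This step needs care with the a.e. statements; everything else is direct substitution.
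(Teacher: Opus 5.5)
Your proposal is correct. Item (iii) is exactly the paper's example. Items (i), (ii) and (iv) use different constructions, and the differences are worth noting.

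In (i), the paper uses a threshold atom: $W\equiv w>0$, $x_0=0$ and drift $0<\mu<w$. Then $H\equiv0$ (state $\mu t>0$) and $H_t=t$ (state $(\mu-w)t\le0$) both solve. Your attempt with $x_0\equiv0$ failed only because you kept $\mu=0$. Reusing the $u^2$ example is logically enough, but it merges (i) with (iii). The paper instead separates non-uniqueness caused by threshold atoms from non-uniqueness in an atomless non-Osgood layer. Drop the false start $x_0(u)=u$: there $\dot\beta=\beta$ with $\beta(0)=0$ forces $\beta\equiv0$. Also note that $\beta=t^2/4$ holds only until $\beta=1$ at $t=2$, after which $\beta=t-1$.

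In (ii), your $W\equiv0$ example does produce an order-one jump in $H$ from a unique target, so it suffices for the statement. However, $\norm{X^\eta_t-X_t}_{L^\infty}=\eta$: with $W=0$ the state is always Lipschitz in $x_0$, so only the occupation time jumps. The paper's example, $W\equiv w>0$ with $\mu=0$, uses the feedback to turn the indicator jump into an order-one jump of the state itself ($-wt$ versus $1/n$). Your proposed network variant $W\equiv1$, $\mu=1$ does not work. It is the boundary case $\mu=w$ of the atom example, and the target then has the continuum of solutions $H_t=t\wedge\tau$, so it is not unique. You need $\mu=0$ there.

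In (iv), the paper takes $\mu=-1/2$. Then $y=H-t/2$ has velocity $+\tfrac12$ below the threshold and $-\tfrac12$ above it, and the chain rule for $\abs{y}$ gives the contradiction. Your choice $\mu=0$ gives $H_t=\int_0^t\1_{\{H_s\le0\}}\,\dd s$; state first that $H$ is independent of $u$, because the state is. Your monotonicity argument is then correct and more elementary. What it gives up is robustness. Under the strict convention $\1_{\{x<0\}}$, your equation is solved by $H\equiv0$, so your obstruction is an artifact of the closed threshold at a contact point. The paper's sliding-mode example has no classical solution under either convention. Its positive-ramp limit also exhibits the fractional relaxed rate $1/2$ discussed in the relaxed-closure remark.
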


\begin{proof}[Proof sketch]
For (i), take $W\equiv w>0$, $x_0=0$, $r=0$, and $0<\mu<w$: both $H=0$ and $H_t=t$ solve the equation. For (ii), take $W\equiv w>0$, $x_0=0$, $\mu=r=0$ and perturb $x_0$ to $1/n$. For (iii), take $W=1$, $x_0(u)=u^2$, and $\mu=r=0$, which reduces to $\dot\beta=\sqrt\beta$ and admits delayed solutions. For (iv), take $W=-1$, $x_0=0$, $r=0$, and $\mu=-1/2$; the scalar equation switches with incompatible velocities on the two sides of zero. Details are on page~\pageref{proof:prop:indicator-obstructions}.
\end{proof}

\begin{corollary}[Fixed-graphon stability with respect to the initial profile]
\label{cor:graphon-indicator-initial-stability}
Let $W\in L^\infty(I^2)$ be fixed and let $x_0^1,x_0^2\in L^\infty(I)$. Suppose that the two data pairs $(W,x_0^1)$ and $(W,x_0^2)$ jointly satisfy \cref{ass:graphon-indicator-density} on $[0,T]$ with a common bound $M_\rho^W(T,T)$; that is, the density bound holds uniformly for both initial profiles and all admissible $h$. Let $H^i$ and $X^i$ be the corresponding solutions from \cref{thm:graphon-indicator-general}. Then there exists $C_T<\infty$, depending only on $T$, $|r|$, $\norm{W}_{L^\infty}$, and $M_\rho^W(T,T)$, such that
\[
\sup_{0\le t\le T}\norm{H_t^1-H_t^2}_{L^1(I)}
\le C_T\norm{x_0^1-x_0^2}_{L^\infty(I)},
\]
and consequently
\[
\sup_{0\le t\le T}\norm{X_t^1-X_t^2}_{L^1(I)}
\le \bigl(1+\norm{W}_{L^\infty}C_T\bigr)\norm{x_0^1-x_0^2}_{L^\infty(I)}.
\]
\end{corollary}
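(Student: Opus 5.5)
We compare the two cumulative-distress profiles directly through the feedback equation \eqref{eq:H-feedback-general}. The kernel $W$ and the imbalance term $rtR_W$ are common to both data pairs, so the threshold functionals differ by
\[
\Psi_s(u,H_s^1)-\Psi_s(u,H_s^2)=x_0^1(u)-x_0^2(u)-\int_I W(u,v)\bigl(H_s^1(v)-H_s^2(v)\bigr)\,\dd v .
\]
Write $\delta:=\norm{x_0^1-x_0^2}_{L^\infty}$ and $d(s):=\norm{H_s^1-H_s^2}_{L^1}$. The difference above is then bounded pointwise in $u$ by $\delta+\norm{W}_{L^\infty}d(s)$. By \cref{thm:graphon-indicator-general}, both $H^i$ satisfy $0\le H^i_s\le s\le T$, so $h=H_s^2$ is admissible in \cref{ass:graphon-indicator-density} with $C=T$. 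That assumption applies to the pair $(W,x_0^2)$ by hypothesis.

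The key step is a threshold-tube estimate. The indicators $\1_{\{\Psi_s(u,H_s^1)\le0\}}$ and $\1_{\{\Psi_s(u,H_s^2)\le0\}}$ can disagree only where
\[
\abs{\Psi_s(u,H_s^2)}\le \delta+\norm{W}_{L^\infty}d(s).
\]
The scalar variable $\Psi_s(U,H_s^2)$ has density at most $M_\rho^W(T,T)$. Hence the Lebesgue measure of this set is at most $2M_\rho^W(T,T)\bigl(\delta+\norm{W}_{L^\infty}d(s)\bigr)$. Integrating the feedback equation in time and in $u$, and using Fubini, gives
\[
d(t)\le\int_0^t\bigl\|\1_{\{\Psi_s(\cdot,H_s^1)\le0\}}-\1_{\{\Psi_s(\cdot,H_s^2)\le0\}}\bigr\|_{L^1}\dd s
\le 2M_\rho^W\int_0^t\bigl(\delta+\norm{W}_{L^\infty}d(s)\bigr)\dd s .
\]
The function $d$ is continuous because $H^i\in C([0,T];L^1)$. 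Gronwall's lemma then yields
\[
d(t)\le 2M_\rho^W T\,e^{2M_\rho^W\norm{W}_{L^\infty}T}\,\delta,
\]
which is the first claim with $C_T$ depending only on $T$, $\norm{W}_{L^\infty}$ and $M_\rho^W(T,T)$. The constant is in particular allowed to depend on $|r|$ as stated.

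For the state estimate we use \eqref{eq:X-from-H-general}. The difference is $X_t^1-X_t^2=x_0^1-x_0^2-\int_I W(\cdot,v)(H_t^1-H_t^2)(v)\,\dd v$. Its $L^1$ norm is at most $\norm{x_0^1-x_0^2}_{L^1}+\norm{W}_{L^\infty}d(t)$, and $\norm{x_0^1-x_0^2}_{L^1}\le\delta$. Combining this with the first bound gives $(1+\norm{W}_{L^\infty}C_T)\delta$.

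The only delicate point is measurability and the "a.e. $u$" bookkeeping in the mismatch set. We take jointly measurable representatives from \cref{thm:graphon-indicator-general}, so that the set is jointly measurable in $(s,u)$ and Fubini applies. Using the density bound for only one of the two profiles at each time suffices. This avoids needing any comparison or ordering between the solutions, so no sign condition on $W$ is needed.
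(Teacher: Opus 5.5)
Your proof is correct and is essentially the paper's direct argument. You bound the threshold-functional gap pointwise by $\delta+\norm{W}_{L^\infty}d(s)$, control the indicator mismatch through the density bound on a tube of that width, close with Gronwall, and read off the state bound from the pointwise estimate; the only cosmetic difference is that you centre the tube on the second solution rather than the first, which the hypothesis allows.
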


\begin{proof}[Proof sketch]
The density bound gives a linear tube with constant $2M_\rho^W(T,T)$. Apply the $p=\infty$ case of \cref{thm:indicator-one-sided-stability} with $W_1=W_2=W$. A direct proof is on page~\pageref{proof:cor:graphon-indicator-initial-stability}.
\end{proof}

\begin{remark}[Degenerate threshold layers]
\label{rem:graphon-indicator-density-not-automatic}
For $W\equiv0$, $x_0\equiv0$, and $\mu=r=0$, the unique solution is $X_t\equiv0$, $H_t\equiv t$. Here $\Psi_t(U,h)\equiv0$ has an atomic law, so uniqueness can hold even when \cref{ass:graphon-indicator-density} fails.
\end{remark}

\begin{proposition}[Branchwise $L^1$--$C^1$ transversality implies \cref{ass:graphon-indicator-density}]
\label{prop:graphon-indicator-transverse}
Fix $T>0$ and $C>0$. Let $0=u_0<u_1<\cdots<u_J=1$ and set $I_j:=(u_{j-1},u_j)$. Assume that $x_0$ is $C^1$ on each $I_j$. For every branch $I_j$, assume that
\[
u\longmapsto W(u,\cdot),
\qquad
u\longmapsto W(\cdot,u)
\]
are $C^1$ maps from $I_j$ into $L^1(I)$. Denote their $L^1$ derivatives by $D_1W(u,\cdot)$ and $D_2W(\cdot,u)$, respectively, and suppose
\[
A_1:=\sup_{u\in\cup_j I_j}\norm{D_1W(u,\cdot)}_{L^1(I)}<\infty,
\qquad
A_2:=\sup_{u\in\cup_j I_j}\norm{D_2W(\cdot,u)}_{L^1(I)}<\infty.
\]
If, on each branch,
\[
\abs{x_0'(u)}\ge m_0 > C A_1 + |r|T(A_1+A_2),
\qquad u\in I_j,
\]
then, for every $t\in[0,T]$ and every $h\in L^\infty(I)$ with $\norm{h}_{L^\infty}\le C$, the map $u\mapsto\Psi_t(u,h)$ is $C^1$ on each branch, has at most $J$ monotone branches, and satisfies
\[
\abs{\partial_u\Psi_t(u,h)}
\ge m:=m_0-CA_1-|r|T(A_1+A_2)>0.
\]
Consequently \cref{ass:graphon-indicator-density} holds with $M_\rho^W(T,C)=J/m$.
\end{proposition}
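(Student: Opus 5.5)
The plan is to differentiate $\Psi_t(u,h)$ in $u$ term by term on a fixed branch $I_j$, bound every term except $x_0'$ by the stated constants, and then apply the branchwise change-of-variables argument from \cref{prop:factorized-density-transverse}. Fix $t\in[0,T]$ and $h$ with $\norm{h}_{L^\infty}\le C$, and write
\[
\Psi_t(u,h)=x_0(u)+\mu t+rtR_W(u)-\int_I W(u,v)h(v)\,\dd v .
\]

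\emph{Step 1: the two network terms are $C^1$ on $I_j$.} Consider first
\[
G_h(u):=\int_I W(u,v)h(v)\,\dd v,
\]
which is the composition of $u\mapsto W(u,\cdot)\in L^1(I)$ with the bounded linear functional $g\mapsto\int g h$. Since $\abs{\int g h}\le C\norm{g}_{L^1}$, the chain rule for Fréchet derivatives gives that $G_h$ is $C^1$ on $I_j$, with $G_h'(u)=\int_I D_1W(u,v)h(v)\,\dd v$ and $\abs{G_h'(u)}\le CA_1$. Next, $R_W(u)=\int_I W(u,v)\,\dd v-\int_I W(v,u)\,\dd v$ is the difference of two row/column integrals. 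Each is the composition of an $L^1$-valued $C^1$ map with the functional $g\mapsto\int g$, which has norm $1$. Hence $R_W$ is $C^1$ on $I_j$ with $\abs{R_W'(u)}\le A_1+A_2$. The main thing to check here is notational: the column map $u\mapsto W(\cdot,u)$ is assumed $C^1$ into $L^1(I)$, and integrating it in the first variable gives $\int_I W(v,u)\,\dd v$, so the derivative bound $A_2$ applies exactly as written.

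\emph{Step 2: the lower bound on the derivative.} On $I_j$ we get
\[
\partial_u\Psi_t(u,h)=x_0'(u)+rtR_W'(u)-G_h'(u).
\]
By the triangle inequality and $t\le T$,
\[
\abs{\partial_u\Psi_t(u,h)}\ge m_0-\abs{r}T(A_1+A_2)-CA_1=m>0 .
\]
Since $x_0'$ is continuous and never zero on the interval $I_j$, it has constant sign there. The lower bound then shows that $\partial_u\Psi_t$ is also nonzero and continuous on $I_j$, so it too has constant sign. Thus $\Psi_t(\cdot,h)$ is strictly monotone on each $I_j$, which gives at most $J$ monotone branches. The endpoints $u_j$ form a null set and can be ignored.

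\emph{Step 3: conclusion.} The density bound now follows verbatim from the proof of \cref{prop:factorized-density-transverse}. On each branch, the pushforward of Lebesgue measure under a strictly monotone $C^1$ map with $\abs{\text{derivative}}\ge m$ has density at most $1/m$. Summing over at most $J$ branches gives a density bounded by $J/m$, uniformly in $t\in[0,T]$ and in $h$. This is \cref{ass:graphon-indicator-density} with $M_\rho^W(T,C)=J/m$.

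I expect the only delicate point to be Step 1: one must justify differentiating under the integral using only the $L^1$-valued $C^1$ hypothesis, with $h$ merely bounded. The Fréchet chain-rule argument above handles this directly, without any pointwise regularity of $W$.
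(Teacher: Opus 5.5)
Your proposal is correct and follows the paper's proof step for step. You differentiate the two $L^1$ pairings on each branch and bound the network contribution by $CA_1+|r|T(A_1+A_2)$. You then use continuity plus the strict lower bound to get constant sign on each branch, and finish with the branchwise change of variables from \cref{prop:factorized-density-transverse}. Your Fr\'echet chain-rule justification, composing the $L^1$-valued $C^1$ maps with the bounded functionals $g\mapsto\int gh$ and $g\mapsto\int g$, just spells out what the paper asserts in one line. The remark that $x_0'$ itself has constant sign is harmless but not needed.
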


\begin{proof}[Proof sketch]
Differentiating the $L^1$ pairings bounds the network contribution to $\partial_u\Psi_t$ by $CA_1+|r|T(A_1+A_2)$. The remaining margin is $m>0$, so each branch is monotone. Change of variables then gives the density bound $J/m$; see page~\pageref{proof:prop:graphon-indicator-transverse}.
\end{proof}

\begin{remark}[Economic meaning of $A_1$ and $A_2$]
\label{rem:A1A2-economic}
The constants $A_1$ and $A_2$ measure how exposure rows and columns vary with institutional type. The condition requires variation in initial buffers to dominate the variation induced by the network over $[0,T]$. A smaller margin $m$ permits more mass near the distress threshold, as quantified by $J/m$.
\end{remark}

\subsection{Indicator approximation under uniform transversality}
\label{sec:graphon-indicator-bridge}

Uniform transversality gives a linear stability estimate for simultaneous $L^\infty$ perturbations of the kernel and initial profile.

\begin{definition}[Uniformly transverse family]
\label{def:uniformly-transverse-family}
Fix $T>0$. A family $\mathfrak F=\{(W^{\alpha},x_0^{\alpha})\}_{\alpha\in\mathcal A}\subset L^\infty(I^2)\times L^\infty(I)$ is called $(J,m)$-\emph{uniformly transverse} on $[0,T]$ if there exists a partition $0=u_0<u_1<\cdots<u_J=1$ such that for every $\alpha\in\mathcal A$, every $t\in[0,T]$, and every $h\in L^\infty(I)$ with $\norm{h}_{L^\infty}\le T$, the threshold profile
\[
\Psi_t^{\alpha}(u,h):=x_0^{\alpha}(u)+\mu t+r tR_{W^{\alpha}}(u)-\int_I W^{\alpha}(u,v)h(v)\,\dd v
\]
is piecewise $C^1$ on the branch partition $\{I_j\}_{j=1}^J$, has at most $J$ monotone branches, and satisfies
\[
\abs{\partial_u\Psi_t^{\alpha}(u,h)}\ge m
\qquad\text{for all }u\in \bigcup_{j=1}^J I_j.
\]
\end{definition}

\begin{proposition}[Geometric level-set control under transversality]
\label{prop:level-set-control}
Let $f:I\to\R$ be piecewise $C^1$ on a partition with at most $J$ monotone branches, and assume that $\abs{f'(u)}\ge m>0$ on each branch. Then, for every $\eta\ge0$,
\[
\lambda\bigl(\{u\in I:\abs{f(u)}\le \eta\}\bigr)\le \frac{2J}{m}\,\eta.
\]
Consequently, if $g:I\to\R$ is measurable and $\norm{f-g}_{L^\infty(I)}\le \eta$, then
\[
\norm{\1_{\{f\le 0\}}-\1_{\{g\le 0\}}}_{L^1(I)}\le \frac{2J}{m}\,\eta.
\]
\end{proposition}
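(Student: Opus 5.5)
The plan is to argue branch by branch and then sum, as in the proof of \cref{prop:factorized-density-transverse}. Fix $\eta\ge0$ and a partition of $I$ into $n\le J$ intervals $I_1,\dots,I_n$ such that $f|_{I_j}$ is $C^1$ and strictly monotone with $\abs{f'}\ge m$. Finitely many partition endpoints are null, so it suffices to bound $\lambda(\{u\in I_j:\abs{f(u)}\le\eta\})$ for each $j$ and add.

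On a single branch, $g_j:=f|_{I_j}$ is a strictly monotone $C^1$ bijection onto the interval $g_j(I_j)$. Its inverse is $C^1$ with $\abs{(g_j^{-1})'}\le 1/m$. The set $\{u\in I_j:\abs{f(u)}\le\eta\}$ equals $g_j^{-1}([-\eta,\eta]\cap g_j(I_j))$, which is an interval. By the mean value theorem applied to $g_j^{-1}$ (equivalently, the change-of-variables formula, which gives density at most $1/m$), its length is at most $(1/m)\cdot 2\eta$. Summing over at most $J$ branches gives $\lambda\{\abs f\le\eta\}\le 2J\eta/m$.

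For the consequence, I would note that if $\norm{f-g}_{L^\infty}\le\eta$, then at a.e.\ $u$ where $\1_{\{f\le0\}}$ and $\1_{\{g\le0\}}$ differ, $f$ and $g$ lie on opposite sides of zero (non-strictly on one side). Hence $\abs{f(u)}\le\abs{f(u)-g(u)}\le\eta$. To check this: if $f(u)\le0<g(u)$ then $0\le -f(u)<g(u)-f(u)$, and symmetrically in the other case. So the mismatch set is contained, up to a null set, in $\{\abs f\le\eta\}$. The $L^1$ norm of the difference of indicators equals the measure of the mismatch set, and the first part then bounds it by $2J\eta/m$.

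There is no serious obstacle. The only care needed is measure-theoretic bookkeeping: the partition endpoints and the null set where $\abs{f-g}\le\eta$ fails. The case $\eta=0$ is also covered, since each branch meets $\{f=0\}$ in at most one point.
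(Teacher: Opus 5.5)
Your proposal is correct and follows the paper's proof essentially verbatim. On each branch the inverse of $f$ is $1/m$-Lipschitz, so the branch contributes at most $2\eta/m$ to $\lambda\{\abs{f}\le\eta\}$, and an indicator mismatch forces $\abs{f(u)}\le\abs{f(u)-g(u)}\le\eta$ almost everywhere; your extra bookkeeping (partition endpoints, the exceptional null set, the case $\eta=0$) only makes explicit what the paper leaves implicit.
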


\begin{proof}
The inverse of $f$ on each branch is $1/m$-Lipschitz, so the preimage of $[-\eta,\eta]$ has length at most $2\eta/m$. Summing over the branches proves the first bound. A mismatch of the indicators implies $|f(u)|\le|f(u)-g(u)|\le\eta$ almost everywhere, which gives the second.
\end{proof}

\begin{theorem}[Indicator graphon systems on uniformly transverse families]
\label{thm:uniformly-transverse-stability}
Let
\[
\mathfrak F=\{(W^{\alpha},x_0^{\alpha})\}_{\alpha\in\mathcal A}
\]
be a $(J,m)$-uniformly transverse family on $[0,T]$. Set
\[
M_W:=\sup_{\alpha\in\mathcal A}\norm{W^{\alpha}}_{L^\infty(I^2)}<\infty.
\]
Then, for every $\alpha\in\mathcal A$, the indicator-loss graphon equation \eqref{eq:graphon-equation} with data $(W^{\alpha},x_0^{\alpha})$ has a unique solution $(H^{\alpha},X^{\alpha})$ with
\[
H^{\alpha}\in C([0,T];L^1(I))\cap L^\infty([0,T]\times I),
\qquad
X^{\alpha}\in C([0,T];L^\infty(I)).
\]
Moreover, for any $\alpha_1,\alpha_2\in\mathcal A$ there exists $C_T<\infty$, depending only on $T$, $J$, $m$, $M_W$, and $|r|$, such that
\begin{equation}
\label{eq:two-dataset-H-stability}
\sup_{0\le t\le T}\norm{H_t^{\alpha_1}-H_t^{\alpha_2}}_{L^1(I)}
\le C_T\Bigl(\norm{x_0^{\alpha_1}-x_0^{\alpha_2}}_{L^\infty(I)}+\norm{W^{\alpha_1}-W^{\alpha_2}}_{L^\infty(I^2)}\Bigr),
\end{equation}
and consequently
\begin{equation}
\label{eq:two-dataset-X-stability}
\sup_{0\le t\le T}\norm{X_t^{\alpha_1}-X_t^{\alpha_2}}_{L^1(I)}
\le C_T\Bigl(\norm{x_0^{\alpha_1}-x_0^{\alpha_2}}_{L^\infty(I)}+\norm{W^{\alpha_1}-W^{\alpha_2}}_{L^\infty(I^2)}\Bigr).
\end{equation}
\end{theorem}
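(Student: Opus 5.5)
For each $\alpha$, existence and uniqueness follow from \cref{thm:graphon-indicator-general}. By \cref{prop:level-set-control}, or equivalently the change-of-variables argument of \cref{prop:factorized-density-transverse}, the $(J,m)$-transversality of $u\mapsto\Psi_t^\alpha(u,h)$ for every admissible $h$ with $\norm{h}_{L^\infty}\le T$ gives a density bound $J/m$ for $\Psi_t^\alpha(U,h)$. Hence \cref{ass:graphon-indicator-density} holds with $C=T$ and $M_\rho^W(T,T)=J/m$, uniformly in $\alpha$. This yields the unique pair $(H^\alpha,X^\alpha)$ with $0\le H_t^\alpha\le t$ and $X^\alpha\in C([0,T];L^\infty(I))$.

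For the stability estimate \eqref{eq:two-dataset-H-stability}, I work with the feedback equation \eqref{eq:H-feedback-general} and compare the two threshold profiles. Set $d(t):=\norm{H_t^{\alpha_1}-H_t^{\alpha_2}}_{L^1}$ and $\Delta:=\norm{x_0^{\alpha_1}-x_0^{\alpha_2}}_{L^\infty}+\norm{W^{\alpha_1}-W^{\alpha_2}}_{L^\infty}$. Then
\[
H_t^{\alpha_1}(u)-H_t^{\alpha_2}(u)=\int_0^t\bigl(\1_{\{\Psi_s^{\alpha_1}(u,H_s^{\alpha_1})\le0\}}-\1_{\{\Psi_s^{\alpha_2}(u,H_s^{\alpha_2})\le0\}}\bigr)\,\dd s .
\]
For fixed $s$, I decompose the difference of threshold profiles into a data part and a feedback part:
\[
\Psi_s^{\alpha_1}(u,H_s^{\alpha_1})-\Psi_s^{\alpha_2}(u,H_s^{\alpha_2})
=\bigl[\Psi_s^{\alpha_1}(u,H_s^{\alpha_2})-\Psi_s^{\alpha_2}(u,H_s^{\alpha_2})\bigr]
-\int_I W^{\alpha_1}(u,v)\bigl(H_s^{\alpha_1}-H_s^{\alpha_2}\bigr)(v)\,\dd v .
\]
The data part is bounded in $L^\infty$ by $\norm{x_0^{\alpha_1}-x_0^{\alpha_2}}_{L^\infty}+2|r|T\norm{W^{\alpha_1}-W^{\alpha_2}}_{L^\infty}+T\norm{W^{\alpha_1}-W^{\alpha_2}}_{L^\infty}\le(1+T+2|r|T)\Delta$, using $\norm{H_s^{\alpha_2}}_{L^\infty}\le T$. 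The feedback part is bounded pointwise in $u$ by $M_Wd(s)$, because $\abs{\int W^{\alpha_1}(u,v)g(v)\,\dd v}\le M_W\norm{g}_{L^1}$.

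Both parts are therefore uniform in $u$, so $\norm{\Psi_s^{\alpha_1}(\cdot,H_s^{\alpha_1})-\Psi_s^{\alpha_2}(\cdot,H_s^{\alpha_2})}_{L^\infty}\le\eta_s:=(1+T+2|r|T)\Delta+M_Wd(s)$. I then apply \cref{prop:level-set-control} with $f=\Psi_s^{\alpha_1}(\cdot,H_s^{\alpha_1})$, which is $(J,m)$-transverse because $\norm{H_s^{\alpha_1}}_{L^\infty}\le T$, and $g=\Psi_s^{\alpha_2}(\cdot,H_s^{\alpha_2})$. This gives an indicator mismatch of at most $(2J/m)\eta_s$ in $L^1$. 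Integrating in $u$ and $s$ gives
\[
d(t)\le\frac{2J}{m}\int_0^t\bigl[(1+T+2|r|T)\Delta+M_Wd(s)\bigr]\,\dd s .
\]
Since $d$ is bounded and measurable (indeed Lipschitz by \cref{thm:graphon-indicator-general}), Gronwall's lemma gives $d(t)\le C_T\Delta$ with $C_T=\frac{2J}{m}T(1+T+2|r|T)e^{2JM_WT/m}$, which depends only on the allowed quantities.

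Finally, \eqref{eq:two-dataset-X-stability} follows from the state representation \eqref{eq:X-from-H-general}. Writing $X^{\alpha_1}-X^{\alpha_2}$ as the data difference plus $W^{\alpha_1}(H^{\alpha_2}-H^{\alpha_1})$ bounds its $L^1$ norm by $(1+T+2|r|T)\Delta+M_Wd(t)$; after enlarging $C_T$, this is the stated bound.

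The main obstacle is to apply the level-set control to the right object. Transversality is assumed only along admissible $h$ with $\norm{h}\le T$, so the perturbation must be measured in $L^\infty$ around a transverse profile. This is why the feedback difference is controlled pointwise via $\norm{W}_{L^\infty}\norm{\cdot}_{L^1}$ rather than in $L^1$, and it is also why $L^\infty$ data norms, and not weaker $L^p$ norms, appear in the statement.
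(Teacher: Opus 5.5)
Your proposal is correct and matches the paper's proof essentially step for step. You get existence and uniqueness from the $J/m$ density bound via \cref{thm:graphon-indicator-general}, which is exactly the paper's proof sketch; the appendix instead reruns the Picard argument with the Lipschitz constant $2JM_W/m$ from \cref{prop:level-set-control}, and the two are equivalent. The stability part is the same as the paper's: the $L^\infty$ split of $\Psi_s^{\alpha_1}(\cdot,H_s^{\alpha_1})-\Psi_s^{\alpha_2}(\cdot,H_s^{\alpha_2})$ into a data term and a feedback term $M_W d(s)$, level-set control around the transverse profile $f=\Psi_s^{\alpha_1}(\cdot,H_s^{\alpha_1})$, and Gronwall. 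Your data constant $(1+T+2|r|T)\Delta$ is only a slightly coarser form of the paper's $\norm{x_0^{\alpha_1}-x_0^{\alpha_2}}_{L^\infty}+T(1+2|r|)\norm{W^{\alpha_1}-W^{\alpha_2}}_{L^\infty}$.
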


\begin{proof}[Proof sketch]
Branchwise change of variables gives a density bound $J/m$, so \cref{thm:graphon-indicator-general} supplies existence and uniqueness. For two members of the family, \cref{prop:level-set-control} bounds the indicator difference by the initial-profile error, the kernel error, and $M_W\norm{H_t^{\alpha_1}-H_t^{\alpha_2}}_{L^1}$. Gronwall's lemma gives the two estimates; see page~\pageref{proof:thm:uniformly-transverse-stability}.
\end{proof}

\begin{theorem}[Restricted-family indicator bridge theorem]
\label{thm:restricted-indicator-bridge}
Let $\{(W^{(K)},x_0^{(K)})\}_{K\ge 1}\cup\{(W,x_0)\}$ be a $(J,m)$-uniformly transverse family on $[0,T]$. Assume in addition that
\[
M_W:=\sup_{K\ge1}\norm{W^{(K)}}_{L^\infty(I^2)}\vee\norm{W}_{L^\infty(I^2)}<\infty.
\]
Let $(H^{K,\mathrm{ind}},X^{K,\mathrm{ind}})$ and $(H^{\mathrm{ind}},X^{\mathrm{ind}})$ denote the corresponding indicator-loss graphon solutions. Then there exists $C_T<\infty$, depending only on $T$, $J$, $m$, $M_W$, and $|r|$, such that
\begin{equation}
\label{eq:restricted-indicator-bridge-H}
\sup_{0\le t\le T}\norm{H_t^{K,\mathrm{ind}}-H_t^{\mathrm{ind}}}_{L^1(I)}
\le C_T\Bigl(\norm{x_0^{(K)}-x_0}_{L^\infty(I)}+\norm{W^{(K)}-W}_{L^\infty(I^2)}\Bigr),
\end{equation}
and
\begin{equation}
\label{eq:restricted-indicator-bridge-X}
\sup_{0\le t\le T}\norm{X_t^{K,\mathrm{ind}}-X_t^{\mathrm{ind}}}_{L^1(I)}
\le C_T\Bigl(\norm{x_0^{(K)}-x_0}_{L^\infty(I)}+\norm{W^{(K)}-W}_{L^\infty(I^2)}\Bigr).
\end{equation}
In particular, if $U\sim\mathrm{Unif}(I)$, then
\[
\sup_{0\le t\le T}\Wone\bigl(\Law(X_t^{K,\mathrm{ind}}(U)),\Law(X_t^{\mathrm{ind}}(U))\bigr)
\le C_T\Bigl(\norm{x_0^{(K)}-x_0}_{L^\infty(I)}+\norm{W^{(K)}-W}_{L^\infty(I^2)}\Bigr).
\]
\end{theorem}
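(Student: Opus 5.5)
The restricted-family bridge is a direct specialization of \cref{thm:uniformly-transverse-stability}, with one added step converting the $L^1$ state estimate into the Wasserstein bound.

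Take the index set $\mathcal A:=\mathbb N\cup\{\infty\}$, with $(W^{\alpha},x_0^{\alpha})=(W^{(K)},x_0^{(K)})$ for $\alpha=K$ and $(W^{\infty},x_0^{\infty})=(W,x_0)$. By hypothesis this family is $(J,m)$-uniformly transverse on $[0,T]$, and $\sup_\alpha\norm{W^\alpha}_{L^\infty}=M_W<\infty$.

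\Cref{thm:uniformly-transverse-stability} then does most of the work. It gives existence and uniqueness of the indicator solutions $(H^{K,\mathrm{ind}},X^{K,\mathrm{ind}})$ and $(H^{\mathrm{ind}},X^{\mathrm{ind}})$. It also gives, for the pair $\alpha_1=K$, $\alpha_2=\infty$, the bounds \eqref{eq:two-dataset-H-stability} and \eqref{eq:two-dataset-X-stability}. These are exactly \eqref{eq:restricted-indicator-bridge-H} and \eqref{eq:restricted-indicator-bridge-X}. The constant $C_T$ depends only on $T,J,m,M_W,|r|$, so it is uniform in $K$.

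The only point I would check is this $K$-uniformity, since the statement of \cref{thm:uniformly-transverse-stability} says ``for any $\alpha_1,\alpha_2$ there exists $C_T$''. To confirm the constant does not depend on the chosen pair, I would rerun the sketched Gronwall argument. Write $d_t=\norm{H_t^{K}-H_t}_{L^1}$. The threshold profiles differ in $L^\infty$ by at most
\[
\norm{x_0^{(K)}-x_0}_{L^\infty}+T(1+2|r|)\norm{W^{(K)}-W}_{L^\infty}+M_Wd_t .
\]
\Cref{prop:level-set-control}, applied to the target profile, bounds the indicator mismatch by $\tfrac{2J}{m}$ times this quantity. Integrating in time and applying Gronwall gives $d_t\le C_T(\cdot)$ with $C_T=\tfrac{2J}{m}T(1+2|r|)e^{2JM_WT/m}$, which involves only the listed parameters. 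The state bound then follows from
\[
X^K_t-X_t=(x_0^{(K)}-x_0)+rt(R_{W^{(K)}}-R_W)-W^{(K)}H^K_t+WH_t .
\]
Adding and subtracting $W^{(K)}H_t$ and using $0\le H\le T$ bounds its $L^1$ norm by the data errors plus $M_Wd_t$.

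For the Wasserstein bound, couple the two laws through the common uniform label $U$. For each $t$,
\[
\Wone\bigl(\Law(X_t^{K,\mathrm{ind}}(U)),\Law(X_t^{\mathrm{ind}}(U))\bigr)\le \E\abs{X_t^{K,\mathrm{ind}}(U)-X_t^{\mathrm{ind}}(U)}=\norm{X_t^{K,\mathrm{ind}}-X_t^{\mathrm{ind}}}_{L^1(I)} .
\]
Taking the supremum over $t\le T$ and applying \eqref{eq:restricted-indicator-bridge-X} gives the final display. I expect no real obstacle.
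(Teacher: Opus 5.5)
Your proposal is correct and matches the paper's proof: specialize \cref{thm:uniformly-transverse-stability} to $\alpha_1=K$, $\alpha_2=\infty$, then couple both state laws through the common label $U$ to get the Wasserstein bound. One small slip in your explicit Gronwall constant: the coefficient of $\norm{W^{(K)}-W}_{L^\infty}$ should carry an extra factor $T$, so for the $H$-bound $C_T=\tfrac{2JT}{m}\max\{1,T(1+2|r|)\}e^{2JM_WT/m}$, which still depends only on the listed parameters.
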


\begin{proof}
Apply \cref{thm:uniformly-transverse-stability} with $\alpha_1=K$ and $\alpha_2=\infty$. The Wasserstein estimate follows by coupling both laws through the same $U\sim\mathrm{Unif}(I)$.
\end{proof}

\begin{remark}[$L^p$ approximation]
\label{rem:restricted-bridge-scope}
The common transversality bound gives a linear $L^\infty$ estimate. For nonnegative approximants converging in aligned $L^p$, $1\le p<\infty$, \cref{cor:indicator-strong-bridge} instead gives the sharp exponent $p/(p+1)$ from a linear tube along the target solution.
\end{remark}

\begin{corollary}[Sampled finite-\(N\) to graphon bridge on admissible uniformly transverse families]
\label{cor:indicator-sampled-bridge}
Assume \cref{thm:restricted-indicator-bridge} and the following sampling conditions.

\begin{enumerate}[label={\rm(\roman*)},leftmargin=2.4em]
\item Each \(W^{(K)}\) admits a bounded-factor representation
\[
W^{(K)}(u,v)=\frac1K\sum_{k=1}^K a_k^{(K)}(u)b_k^{(K)}(v),
\qquad
\sup_{K\ge1}\max_{1\le k\le K}
\left(
\norm{a_k^{(K)}}_{L^\infty}
+
\norm{b_k^{(K)}}_{L^\infty}
\right)<\infty.
\]

\item If $U\sim\mathrm{Unif}(I)$, then
\[
\mu_0^{(K)}=\Law\bigl(x_0^{(K)}(U),a_1^{(K)}(U),\ldots,a_K^{(K)}(U),b_1^{(K)}(U),\ldots,b_K^{(K)}(U)\bigr).
\]
This pushforward law identifies the rank-$K$ limit with the graphon solution for $(W^{(K)},x_0^{(K)})$.

\item The fixed-rank sampled indicator estimate of \cref{thm:indicator-finiteN} holds for the laws \(\mu_0^{(K)}\) with constants independent of \(K\). For instance, this is ensured by the conditional-density condition \eqref{eq:conditional-density} of \cref{rem:density-sufficient} with constants uniform in \(K\), together with the uniform factor bounds above.

\item For $N\ge2$, let $\mu_0^{N,K}$ be the empirical measure of $N$ i.i.d.\ samples from $\mu_0^{(K)}$, and let $\nu_t^{N,K,\ind}$ be the state law associated with a measurable selection satisfying the hypotheses of \cref{thm:indicator-finiteN}. Let $\nu_t^{\ind}$ denote the state law of the target graphon indicator solution.
\end{enumerate}

Then there exists \(C_T<\infty\), independent of \(N\) and \(K\), such that
\[
\begin{aligned}
\E\left[
\sup_{0\le t\le T}
\Wone\left(\nu_t^{N,K,\ind},\nu_t^{\ind}\right)
\right]
&\le
C_T\sqrt{\frac{K\log N}{N}}
+
C_T\E\Wone(\mu_0^{N,K},\mu_0^{(K)})
\\
&\quad+
C_T\left(
\norm{x_0^{(K)}-x_0}_{L^\infty(I)}
+
\norm{W^{(K)}-W}_{L^\infty(I^2)}
\right).
\end{aligned}
\]
\end{corollary}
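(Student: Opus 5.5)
The argument is a triangle inequality through the rank-$K$ graphon state law $\nu_t^{K,\ind}:=\Law(X_t^{K,\ind}(U))$:
\[
\Wone(\nu_t^{N,K,\ind},\nu_t^{\ind})\le \Wone(\nu_t^{N,K,\ind},\nu_t^{K,\ind})+\Wone(\nu_t^{K,\ind},\nu_t^{\ind}).
\]
We take the supremum over $t\in[0,T]$ on each term separately and then take expectations. The first term carries the sampling error and the second the truncation error. This mirrors the proof of \cref{thm:bridge}, with \cref{thm:finite-rank-w1} replaced by \cref{thm:indicator-finiteN} and \cref{thm:graphon-wellposed} replaced by \cref{thm:restricted-indicator-bridge}.

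\emph{Step 1: identify the rank-$K$ limit with the graphon solution.} By (i), $W^{(K)}$ has factors $a^{(K)},b^{(K)}$. \Cref{prop:factorized-kernel}, with $\Theta=\{1,\dots,K\}$ and uniform weights $1/K$, rewrites the graphon equation for $(W^{(K)},x_0^{(K)})$ as the rank-$K$ feedback system. In that system the type law is the pushforward $\mu_0^{(K)}$ from (ii), and the factor means $\bar a_k,\bar b_k$ are the integrals of $a_k^{(K)},b_k^{(K)}$ over $I$, so $R_{W^{(K)}}(u)$ equals $K\Lambda(z)/r$ evaluated at the type of $u$.

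Uniqueness is needed on both sides. On the graphon side, \cref{thm:restricted-indicator-bridge} (through \cref{thm:uniformly-transverse-stability}) gives a unique hard solution. On the rank-$K$ side, (iii) supplies the density condition, so \cref{thm:indicator} gives a unique solution $\beta^{(K)}$. Setting $\beta_k^{(K)}(t):=\int_I b_k^{(K)}(v)H_t^{K,\ind}(v)\,\dd v$ turns the graphon solution into a solution of the rank-$K$ system, so it must be $\beta^{(K)}$. Consequently, if $z=(x_0^{(K)}(u),a^{(K)}(u),b^{(K)}(u))$, then $X_t^{K,\ind}(u)=X_t(z)$. Hence the state law $\nu_t^{K,\ind}$ is the first marginal of the joint law $(\Xi_t)_\#\mu_0^{(K)}$ appearing in \cref{thm:indicator-finiteN}.

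\emph{Step 2: sampling term.} Projecting onto the state coordinate is $1$-Lipschitz for the $\ell^1$ norm, so it does not increase $\Wone$. Applying the second estimate of \cref{thm:indicator-finiteN} to $\mu_0^{(K)}$ then bounds
\[
\E\sup_t\Wone(\nu_t^{N,K,\ind},\nu_t^{K,\ind})\le C_T\sqrt{K\log N/N}+C_T\E\Wone(\mu_0^{N,K},\mu_0^{(K)}).
\]
By (iii) the constant $C_T$ does not depend on $K$. The moment condition $\E|X_0|<\infty$ required there holds automatically, because $x_0^{(K)}$ is bounded.

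\emph{Step 3: truncation term.} Couple the two laws through the same $U$. Then $\Wone(\nu_t^{K,\ind},\nu_t^{\ind})\le\norm{X_t^{K,\ind}-X_t^{\ind}}_{L^1}$, and the Wasserstein estimate of \cref{thm:restricted-indicator-bridge} gives the deterministic bound $C_T(\norm{x_0^{(K)}-x_0}_{L^\infty}+\norm{W^{(K)}-W}_{L^\infty})$ uniformly in $K$. Adding Steps 2 and 3 proves the corollary.

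The main obstacle is Step 1. Two points need care: checking that the $1/K$ normalization and the imbalance term of the factorized graphon match $\Lambda$ in \eqref{eq:Rlimit} exactly, and ensuring that the constants in \cref{thm:indicator-finiteN} are genuinely uniform in $K$. The latter is why (iii) is stated as an assumption rather than derived. I would verify the identification by direct substitution and cite (iii) for the uniformity.
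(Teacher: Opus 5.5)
Your proposal is correct and is essentially the paper's proof. It uses the triangle inequality through $\nu_t^{K,\ind}$, the state-coordinate projection of the joint-law bound in \cref{thm:indicator-finiteN} with $K$-uniform constants from (iii), and the coupled bound from \cref{thm:restricted-indicator-bridge}; your Step~1 spells out the identification that the paper takes directly from assumption~(ii).

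There is one small algebraic slip in Step~1. Since $R_{W^{(K)}}(u)=\frac1K\sum_{k}\bigl(a_k^{(K)}(u)\bar b_k-b_k^{(K)}(u)\bar a_k\bigr)$, the correct identity is $rR_{W^{(K)}}(u)=\Lambda(z(u))$. There is no extra factor $K$, and you should not divide by $r$, which fails when $r=0$.
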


\begin{proof}
Let \(\nu_t^{K,\ind}:=\Law(X_t^{K,\ind}(U))\) be the deterministic rank-\(K\) graphon indicator state law. By assumption, this is the same deterministic rank-\(K\) limit law associated with the type distribution \(\mu_0^{(K)}\). Projecting the joint-law estimate from \cref{thm:indicator-finiteN} onto the state coordinate gives
\[
\E\left[
\sup_{0\le t\le T}
\Wone\left(\nu_t^{N,K,\ind},\nu_t^{K,\ind}\right)
\right]
\le
C_T\sqrt{\frac{K\log N}{N}}
+
C_T\E\Wone(\mu_0^{N,K},\mu_0^{(K)}),
\]
with \(C_T\) independent of \(K\) by the uniform hypotheses. On the other hand, \cref{thm:restricted-indicator-bridge} gives
\[
\sup_{0\le t\le T}
\Wone\left(\nu_t^{K,\ind},\nu_t^{\ind}\right)
\le
C_T\left(
\norm{x_0^{(K)}-x_0}_{L^\infty(I)}
+
\norm{W^{(K)}-W}_{L^\infty(I^2)}
\right).
\]
The claim follows from the triangle inequality.
\end{proof}
\begin{remark}[Sampling schemes and growth of \(K\)]
\label{rem:indicator-sampled-bridge-sampling}
When the factors determine the latent label up to finitely many values, the conditional initial-capital law is atomic. A block construction instead gives a bounded conditional density.

Fix a partition $I=\bigcup_{j=1}^J I_j$ up to its endpoints and set $\delta:=\min_j\lambda(I_j)>0$. Let $A_j,B_j:[0,1]\to[0,M]$ be continuous. For $u\in I_j$, define $a(u,\theta)=A_j(\theta)$ and $b(u,\theta)=B_j(\theta)$, and let $\theta_k^{(K)}=(k-\tfrac12)/K$. Set
\[
a_k^{(K)}(u)=a(u,\theta_k^{(K)}),\qquad
b_k^{(K)}(u)=b(u,\theta_k^{(K)}),\qquad
W^{(K)}(u,v)=\frac1K\sum_{k=1}^K a_k^{(K)}(u)b_k^{(K)}(v).
\]
Because there are finitely many block pairs, midpoint Riemann sums give uniform convergence to
\[
W(u,v)=\int_0^1 a(u,\theta)b(v,\theta)\,\dd\theta.
\]
Let $x_0^{(K)}=x_0$, where $x_0\in L^\infty(I)$ is piecewise $C^1$, strictly monotone on each $I_j$, and $|x_0'|\ge m_0>0$ on every branch. The kernel-dependent terms in the threshold profile are constant on each $I_j$; hence the family is $(J,m_0)$-uniformly transverse. If $U\sim\mathrm{Unif}(I)$ and $\mu_0^{(K)}$ is the pushforward law specified in assumption~(ii), then the factor vector is constant on each partition interval. Conditional on a factor value, the law of $U$ is normalized Lebesgue measure on the union of intervals carrying that value. Write $A^{(K)}=(a_1^{(K)}(U),\ldots,a_K^{(K)}(U))$ and $B^{(K)}=(b_1^{(K)}(U),\ldots,b_K^{(K)}(U))$, and let $\rho_K=\Law(A^{(K)},B^{(K)})$. A branchwise change of variables gives
\[
\left\|
(\alpha,\gamma)\longmapsto
\bigl\|f_{X_0\mid A^{(K)},B^{(K)}}(\cdot\mid\alpha,\gamma)\bigr\|_{L^\infty(\R)}
\right\|_{L^\infty(\rho_K)}
\le \frac{1}{\delta m_0},
\]
uniformly in $K$. Indeed, if a factor value occurs on $s$ intervals, their total length is at least $s\delta$, while the unnormalized density is at most $s/m_0$. The nonnegative factors also give nonnegative sampled exposure matrices, so \cref{cor:indicator-canonical-sample} supplies a Borel selection. Thus all hypotheses of \cref{cor:indicator-sampled-bridge} hold for this construction.

If \(K=K(N)\) grows with \(N\), the bound in \cref{cor:indicator-sampled-bridge} yields convergence provided
\[
\sqrt{\frac{K(N)\log N}{N}}\to0,
\qquad
\E\Wone(\mu_0^{N,K(N)},\mu_0^{(K(N))})\to0,
\]
and
\[
\norm{x_0^{(K(N))}-x_0}_{L^\infty(I)}
+
\norm{W^{(K(N))}-W}_{L^\infty(I^2)}
\to0.
\]

\end{remark}

\begin{corollary}[Block kernels with smooth trigonometric corrections]
\label{cor:trigonometric-family}
Let $0=u_0<u_1<\cdots<u_J=1$ and write
\[
W(u,v)=W^{\mathrm{blk}}(u,v)+W^{\mathrm{sm}}(u,v),
\qquad
W^{\mathrm{blk}}(u,v)=\sum_{i,j=1}^J c_{ij}\1_{I_i}(u)\1_{I_j}(v),
\]
where $W^{\mathrm{sm}}$ is $C^1$ on each rectangle $I_i\times I_j$ and admits a factorized trigonometric expansion
\[
W^{\mathrm{sm}}(u,v)=\sum_{m=1}^\infty \sigma_m\phi_m(u)\psi_m(v)
\]
with $\phi_m,\psi_m\in C^1(I)$ and
\[
\sum_{m=1}^\infty |\sigma_m|\Bigl(\norm{\phi_m}_{L^\infty}\norm{\psi_m}_{L^\infty} + \norm{\phi_m'}_{L^\infty}\norm{\psi_m}_{L^\infty} + \norm{\phi_m}_{L^\infty}\norm{\psi_m'}_{L^\infty}\Bigr)<\infty.
\]
Define the partial sums
\[
W^{(K)}(u,v):=W^{\mathrm{blk}}(u,v)+\sum_{m=1}^K \sigma_m\phi_m(u)\psi_m(v).
\]
If $x_0\in L^\infty(I)$ is piecewise $C^1$ on the same branch partition and satisfies
\[
\inf_{u\in \cup_j I_j}\abs{x_0'(u)}
> T A_1^\ast + |r|T(A_1^\ast+A_2^\ast),
\]
where
\[
A_1^\ast:=\sup_{K\ge 1}\sup_{u\in\cup_j I_j}\int_I \abs{\partial_1 W^{(K)}(u,v)}\,\dd v,
\qquad
A_2^\ast:=\sup_{K\ge 1}\sup_{u\in\cup_j I_j}\int_I \abs{\partial_2 W^{(K)}(v,u)}\,\dd v,
\]
then $\{(W^{(K)},x_0)\}_{K\ge 1}\cup\{(W,x_0)\}$ is a $(J,m)$-uniformly transverse family for some $m>0$. Hence \cref{thm:restricted-indicator-bridge} applies. The periodic Gaussian example in \cref{sec:numerics-piecewise-indicator,sec:numerics-piecewise-indicator-trig-bridge} has this form.
\end{corollary}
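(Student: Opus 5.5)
The plan is to reduce the corollary to \cref{prop:graphon-indicator-transverse}, applied with $C=T$ to every member of the family $\{(W^{(K)},x_0)\}_{K\ge1}\cup\{(W,x_0)\}$. That proposition uses one fixed branch partition and gives the monotone-branch and derivative lower bound $m:=m_0-TA_1-|r|T(A_1+A_2)$ for all $t\le T$ and $\norm{h}_{L^\infty}\le T$. This is exactly the $(J,m)$-uniform transversality of \cref{def:uniformly-transverse-family}. The whole argument therefore comes down to three checks. First, each kernel in the family has $C^1$ row and column section maps $I_j\to L^1(I)$. Second, the constants $A_1,A_2$ of the proposition are bounded by $A_1^\ast,A_2^\ast$ uniformly in the family, including the limit $W$. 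Third, $m_0:=\inf|x_0'|$ strictly exceeds $TA_1^\ast+|r|T(A_1^\ast+A_2^\ast)$, which is the hypothesis.

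The block part is the first thing to handle. For $u\in I_i$ the section $W^{\mathrm{blk}}(u,\cdot)=\sum_j c_{ij}\1_{I_j}$ does not depend on $u$, and likewise for the columns. So on each open branch the block part is $C^1$ into $L^1$ with zero derivative. It contributes nothing to $A_1,A_2$, and $R_{W^{\mathrm{blk}}}$ is constant on each branch. For the smooth part I will work with the series. Each term $\sigma_m\phi_m(u)\psi_m(\cdot)$ is a $C^1$ map $u\mapsto L^1(I)$ with derivative $\sigma_m\phi_m'(u)\psi_m$. Its $L^1$ norm is at most $|\sigma_m|\norm{\phi_m'}_\infty\norm{\psi_m}_\infty$, and the column terms are handled the same way. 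The summability hypothesis makes the series of values and the series of derivatives converge uniformly in $u$ in $L^1(I)$, and in fact uniformly in $L^\infty(I^2)$. A uniformly convergent series of $C^1$ Banach-valued maps whose derivative series also converges uniformly is $C^1$, with termwise derivative. This gives $C^1$ regularity for $W^{(K)}$ and for $W$ on each branch.

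For the bound on $W$ itself, note that the derivative sections of $W^{(K)}$ converge in $L^1$ to those of $W$, uniformly in $u$. Hence
\[
\norm{D_1W(u,\cdot)}_{L^1}=\lim_K\norm{\partial_1W^{(K)}(u,\cdot)}_{L^1}\le A_1^\ast,
\]
and similarly for $A_2$. I read $\partial_1W^{(K)}$ on each rectangle as the classical derivative, which coincides with the $L^1$ derivative because the block part is locally constant. Finiteness of $A_i^\ast$ follows from the summability hypothesis. Uniform boundedness $\sup_K\norm{W^{(K)}}_\infty<\infty$ follows from $\max|c_{ij}|+\sum|\sigma_m|\norm{\phi_m}_\infty\norm{\psi_m}_\infty$. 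The applicability of \cref{thm:restricted-indicator-bridge} is then immediate.

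I expect the main obstacle to be the care needed at this termwise-differentiation step. The issue is justifying that the $L^1$-valued derivative of the infinite series exists on the open branches and is bounded by $A_1^\ast$ at every $u$, not merely almost everywhere, since \cref{prop:graphon-indicator-transverse} requires a pointwise lower bound on $|\partial_u\Psi_t|$. The uniform convergence argument above settles this. Once it is in place, $\partial_u\Psi_t(u,h)=x_0'(u)+rtR_W'(u)-\int D_1W(u,v)h(v)\,\dd v$. The network term is bounded by $TA_1^\ast+|r|T(A_1^\ast+A_2^\ast)$, which yields the margin $m>0$.
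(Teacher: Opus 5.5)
Your proposal is correct and follows the paper's proof. Both apply \cref{prop:graphon-indicator-transverse} with the common partition and $C=T$ to every member of the family, and both use uniform convergence of the series and of its termwise derivatives to show that $A_1^\ast,A_2^\ast$ dominate the derivative bounds of all partial sums and of the limit kernel. You additionally spell out points the paper leaves implicit: the locally constant sections of the block part, the Banach-valued termwise differentiation, and the uniform bound on $\sup_K\norm{W^{(K)}}_{L^\infty}$ needed for \cref{thm:restricted-indicator-bridge}.
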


\begin{proof}
The series assumptions imply uniform convergence of $W^{(K)}$ to $W$ in $L^\infty(I^2)$ and uniform convergence of the branchwise first derivatives on each smooth rectangle. In particular, the quantities $A_1^\ast$ and $A_2^\ast$ are finite and dominate the corresponding derivative-integral bounds for both the limit kernel and all its partial sums. Applying \cref{prop:graphon-indicator-transverse} with the common partition and these uniform bounds yields the claim.
\end{proof}

\section{Financial network interpretations}

\Cref{tab:examples} records several stylized correspondences between financial-network architectures and low-rank kernels. The number of represented channels bounds the algebraic rank. The sovereign-overlap construction in \cref{sec:real-data} applies this interpretation to disclosed EBA holdings, first for six banks and then for the cleaned full $116$-institution sample.

\begin{table}[H]
\centering
\small
\renewcommand{\arraystretch}{1.22}
\setlength{\tabcolsep}{6pt}
\caption{Stylized financial architectures and their low-rank kernels.}
\label{tab:examples}
\begin{tabularx}{\textwidth}{>{\raggedright\arraybackslash}p{0.22\textwidth}>{\raggedright\arraybackslash}p{0.34\textwidth}>{\centering\arraybackslash}p{0.10\textwidth}>{\raggedright\arraybackslash}X}
\toprule
Architecture & Kernel or factorization & Rank bound & Interpretation \\
\midrule
Homogeneous mean field & $W(u,v)\equiv 1$ or $W(u,v)=a(u)b(v)$ with $a\equiv b\equiv 1$ & $1$ & Equal exposure to aggregate distress. \\
Core-periphery or tiered payment network & $W(u,v)=a_1(u)b_1(v)+a_2(u)b_2(v)$ & $\le2$ & Two directions of transmission between core and periphery \cite{boss2004,craig2014}. \\
Multi-CCP market & $W(u,v)=\sum_{m=1}^M c_m(u)d_m(v)$ & $\le M$ & One loss-allocation channel per clearing venue \cite{veraart2025}. \\
Multiplex bank--NBFI system & $W=\sum_{\ell=1}^L \omega_\ell W_\ell$ & layer dependent & Separate lending, funding, clearing, and common-asset layers \cite{aldasoro2017,bcbs2025nbfi}. \\
\bottomrule
\end{tabularx}
\end{table}

\subsection{Rank one: generalized mean field}

The kernel \(W(u,v)=a(u)b(v)\) has one feedback coordinate, as in \cref{ex:meanfield-to-rank2}. The sender loading \(b\) measures contribution to channel distress and the receiver loading \(a\) measures exposure to it. Constant loadings recover homogeneous interaction; concentrated loadings describe a shared channel dominated by a few institutions.

\subsection{Core-periphery and tiered interbank structures}

Interbank markets can exhibit a tiered structure in which a small set of banks intermediates a larger peripheral population \cite{boss2004,craig2014}. A two-factor specification can represent losses transmitted from the periphery to the core and losses redistributed from the core to peripheral institutions. The resulting kernel has rank at most two, and the dynamics close through two macroscopic feedback coordinates.

\subsection{Multiple CCPs}

Consider a market with multiple central counterparties (CCPs)\@. Suppose that $c_m(u)$ measures the sensitivity of institution $u$ to losses allocated through CCP $m$, while $d_m(v)$ measures the contribution of institution $v$ to stress transmitted through that CCP\@. Then
\[
W(u,v)=\sum_{m=1}^M c_m(u)d_m(v)
\]
has rank at most $M$, where $M$ is the number of clearing venues. Each CCP contributes a factor, with shared or dependent loadings reducing the effective rank. The multiple-CCP framework of Veraart and Aldasoro \cite{veraart2025} motivates this representation. Bank clearing services for NBFIs provide a further application of these overlap channels \cite{bcbs2025nbfi}. Time-dependent kernels would also represent changes in margin and liquidity demands.

\subsection{Multiplex exposures and bank--NBFI networks}

A bank may be linked to the same counterparty through lending, repo, derivatives, collateral, payments, and common holdings. Bank--NBFI linkages add funding, clearing, and risk-transfer channels \cite{bcbs2025nbfi}. Representing each layer by a kernel gives

\[
W=\sum_{\ell=1}^L \omega_\ell W_\ell.
\]
If layer $\ell$ has rank $r_\ell$, the aggregate rank is at most $\sum_\ell r_\ell$, with possible reductions when factors are shared across layers. This representation is consistent with the multiplex empirical literature \cite{aldasoro2017}; the rank parameter then bounds the number of linearly independent transmission channels represented by the model.

The same factor construction applies beyond banking; insurance, supply-chain, and energy-clearing interpretations of the loadings are collected in \cref{sec:other-applications}.

\section{Numerical experiments}
\label{sec:numerics}

We examine finite-population approximation, kernel truncation, directed imbalance, and threshold regularization. Section~\ref{sec:real-data} constructs factors from disclosed EBA sovereign exposures; \cref{sec:appendix-robustness} reports sampling and discretization checks.

\subsection{Simulation design}

For the exact low-rank examples we use the threshold indicator $\ell(x)=\1_{\{x\le 0\}}$, solve both the finite-$N$ system \eqref{eq:finite-network} and the limiting feedback equation \eqref{eq:feedback-ode} by explicit Euler with step size $\Delta t=5\times 10^{-4}$, and specify finitely many bank groups $g=1,\dots,G$. Conditional on the group, the initial capital is Gaussian, $X_0\mid g\sim N(m_g,\sigma_g^2)$, while the factor loadings $a_{g,k}$ and $b_{g,k}$ are deterministic. The graphon experiments use $\Delta t=2\times 10^{-3}$. Step-halving and grid-refinement results appear in \cref{tab:discretization-checks}. Since the first four examples set $r=0$, the limiting feedback satisfies
\begin{equation}
\label{eq:numerical-beta}
\dot\beta_k(t)=\sum_{g=1}^G p_g b_{g,k}
\Phi_{\mathrm N}\!\left(\frac{-m_g-\mu t+\frac{1}{K}\sum_{\ell=1}^K a_{g,\ell}\beta_\ell(t)}{\sigma_g}\right),
\qquad \beta_k(0)=0,
\end{equation}
where $\Phi_{\mathrm N}$ denotes the standard normal distribution function.

Throughout this section, $h_t$ denotes the instantaneous threshold fraction. For the finite and limiting factor models, respectively, set
\[
h_t^N:=\frac1N\sum_i\1_{\{X_t^{i,N}\le0\}},
\qquad
h_t:=\mu_0\bigl(\{z:X_t(z)\le0\}\bigr).
\]
In the graphon experiments, $h_t:=\lambda\{u\in I:X_t(u)\le0\}$.

Group counts are rounded to total $N$, and initial conditions within each group are placed at Gaussian quantiles, giving $\Wone(\mu_0^N,\mu_0)\to0$. The factorized matrices allow the $N$-institution system to be evaluated through grouped threshold counts. Numerical suprema are maxima over recorded time levels. Independent sampling is considered in \cref{sec:appendix-robustness-mc}.

The infinite-series kernel experiment isolates truncation from threshold regularization and uses the piecewise-linear loss $\ell_\varepsilon$ from \eqref{eq:elleps} with $\varepsilon=0.02$, together with the directed kernel
\[
W(u,v)=\sum_{m\ge 1}\sigma_m a_m(u)b_m(v),\qquad \sigma_m=0.35\,m^{-1.8},
\]
where
\[
a_m(u)=\max\{0.05,\ 1+0.45\sin(2\pi m u)+0.15\cos(2\pi(m+1)u)\},
\]
\[
b_m(v)=\max\{0.05,\ 1+0.40\cos(2\pi m v+0.6)+0.12\sin(2\pi(m+2)v)\}.
\]
The initial profile is
\[
x_0(u)=0.13+0.08\sin(2\pi u)-0.04\cos(6\pi u),\qquad \mu=-0.006,
\]
with $r=0$ and $T=2$. The numerical reference uses $30$ modes at the $2000$ midpoint nodes $u_j=(j-\tfrac12)/2000$. Rank-$K$ truncations keep only the first $K$ modes, with $K\in\{1,2,4,6,8,12,16,24\}$. The spatial discretization was further checked on $1200$-, $2000$-, and $2400$-point grids; see \cref{tab:discretization-checks}.

To visualize directed imbalance, we use a two-group rank-one model with equal group masses, factor loadings
\[
a=(2.0,\,0.6), \qquad b=(0.6,\,2.0),
\]
group means $(0.50,0.45)$, common within-group standard deviation $0.18$, and we compare $r=0.05$ with $r=0$. Finally, for the multiplex heatmap we vary the multiplier $\omega_2$ applied to the second-layer factor $a_2b_2$ relative to the baseline two-layer specification $\omega_2=1$, together with a downward shift $\Delta_{\mathrm{NBFI}}$ in the NBFI group mean initial buffer. The baseline NBFI mean is $0.28$; $\Delta_{\mathrm{NBFI}}=0.08$ lowers it to $0.20$.

\begin{table}[t]
\centering
\small
\renewcommand{\arraystretch}{1.18}
\setlength{\tabcolsep}{6pt}
\caption{Uniform path errors for the quantile-matched exact low-rank experiments. The errors are $\sup_{0\le t\le T}|h_t^N-h_t|$. All four kernels are exactly low rank; the table compares the finite and limiting trajectories at the stated Euler step.}
\label{tab:numerics}
\begin{tabularx}{\textwidth}{>{\raggedright\arraybackslash}X>{\centering\arraybackslash}p{0.06\textwidth}>{\centering\arraybackslash}p{0.08\textwidth}>{\centering\arraybackslash}p{0.12\textwidth}>{\centering\arraybackslash}p{0.13\textwidth}>{\centering\arraybackslash}p{0.13\textwidth}}
\toprule
Example & $K$ & $T$ & Limiting $h(T)$ & $N=400$ error & $N=1600$ error \\
\midrule
Rank-one generalized mean field & 1 & 2.0 & 0.222 & 0.0047 & 0.0016 \\
Core-periphery & 2 & 2.5 & 0.283 & 0.0087 & 0.0023 \\
Multiple CCPs with overlap & 2 & 2.3 & 0.473 & 0.0156 & 0.0014 \\
Multiplex bank--NBFI network & 2 & 2.2 & 0.393 & 0.0040 & 0.0006 \\
\bottomrule
\end{tabularx}
\end{table}

\subsection{Finite-population approximation under exact low rank}\label{sec:numerics-exact-low-rank}

\begin{figure}[t]
\centering
\includegraphics[width=\textwidth]{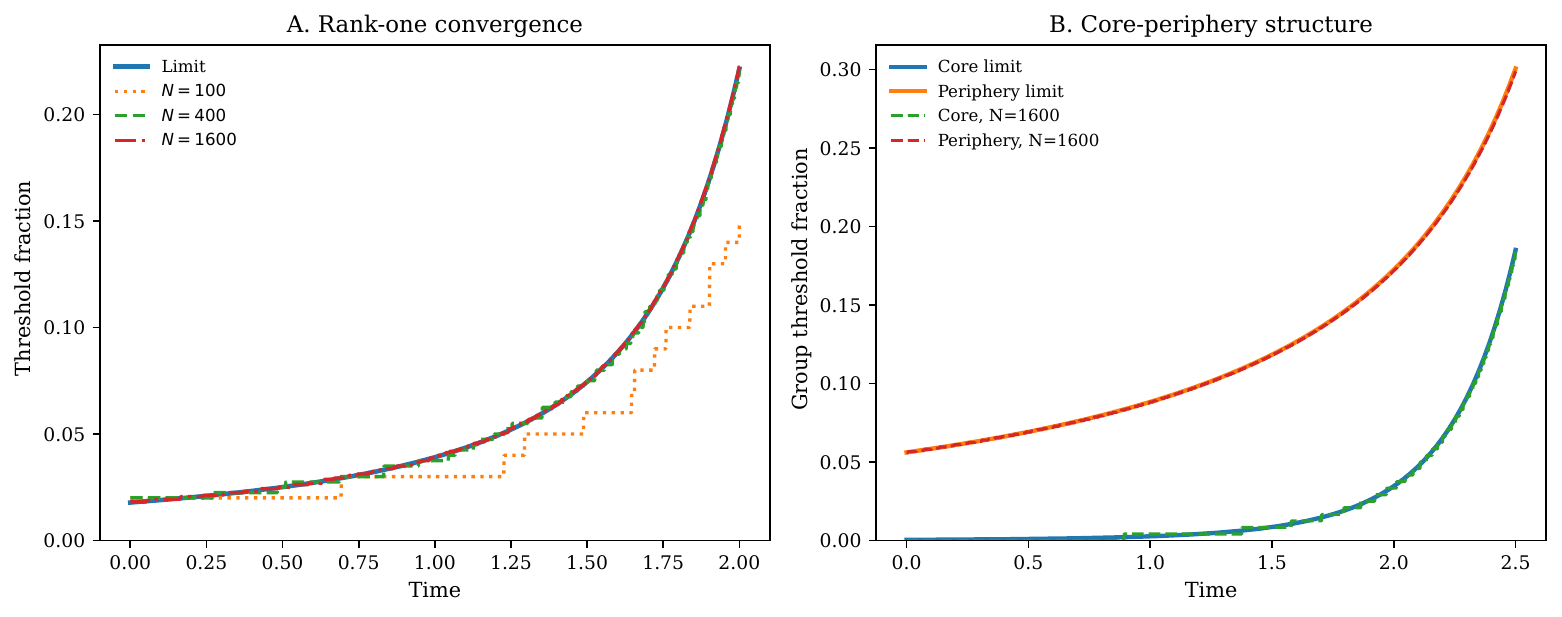}
\caption{Rank-one and core--periphery approximations. Left: the rank-one threshold fraction approaches the deterministic limit as $N$ increases. Right: the group-resolved finite-$N$ trajectories approach the limiting core and periphery paths.}
\label{fig:numerics-1}
\end{figure}

In \cref{tab:numerics}, the uniform error at $N=1600$ ranges from $6.2\times10^{-4}$ to $2.3\times10^{-3}$ across the four exact low-rank examples. The i.i.d.\ results in \cref{sec:appendix-robustness-mc} have substantially wider dispersion.

The limiting rank-one path has terminal threshold fraction $22.16\%$. At $N=1600$, its uniform discrepancy is $1.64\times10^{-3}$, about $2.6$ jumps of size $1/N$.

The core--periphery panel reports group-specific paths. The periphery crosses the threshold earlier and has terminal limiting fraction $30.0\%$; the corresponding core fraction is $18.5\%$. The distinct group paths reflect the two directional transmission factors.

The multiple-CCP and multiplex examples show the same finite-population convergence pattern, with cross-venue transmission through overlapping clearing membership and layer-dependent amplification; the figures and discussion are collected in \cref{app:extra-examples}.

\subsection{Truncation, directedness, and amplification}

\begin{figure}[t]
\centering
\includegraphics[width=\textwidth]{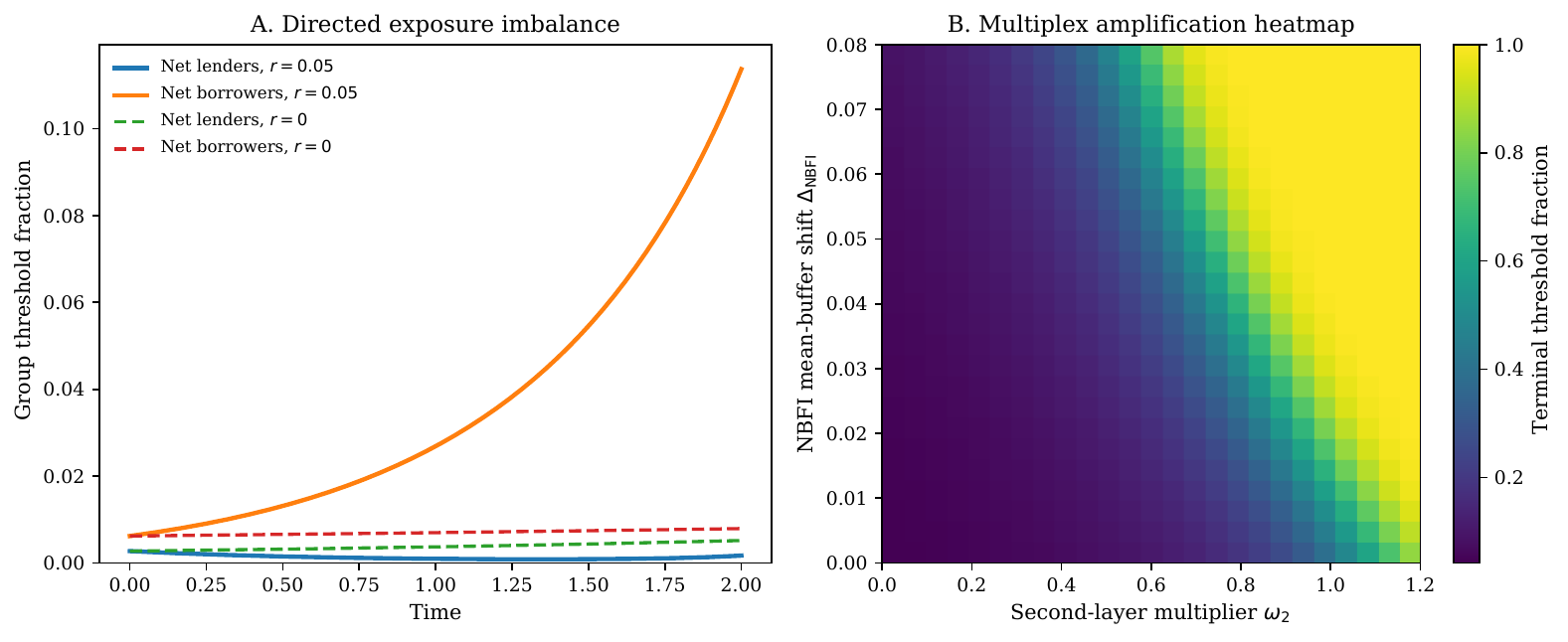}
\caption{Directedness and structural amplification. Left: a nonzero imbalance parameter $r$ sharply separates net lenders from net borrowers even when the contagion channel is otherwise rank one. Right: a heatmap for the multiplex example. The horizontal axis is the multiplier $\omega_2$ applied to the baseline second-layer factor ($\omega_2=1$ is the baseline two-layer specification), and the vertical axis is the downward shift $\Delta_{\mathrm{NBFI}}$ in the NBFI group mean initial buffer, measured in the same units as $X$.}
\label{fig:numerics-4}
\end{figure}

The graphon truncation experiment for a directed 30-mode reference kernel is reported in \cref{app:graphon-truncation}; its empirical-scale counterpart is the factor-aligned comparison of \cref{sec:full-sample}.

The left panel of \cref{fig:numerics-4} exhibits the directed imbalance mechanism carried by $rR_W$. In the two-group lender/borrower specification, setting $r=0.05$ lowers the terminal threshold fraction of net lenders from $0.52\%$ to $0.17\%$ and raises that of net borrowers from $0.79\%$ to $11.37\%$. In this controlled comparison, the change is generated by the directed imbalance term $rR_W$.

In the right panel of \cref{fig:numerics-4}, terminal threshold fractions rise sharply when stronger second-layer interaction is combined with lower NBFI buffers. The one-layer approximation misses this additional funding-channel amplification.

\subsection{Indicator-loss graphon experiments}
\label{sec:numerics-piecewise-indicator-bridge}

We compare hard and smoothed dynamics, Fourier truncations, and the two one-sided ramp selectors.

\subsubsection{A non-factorized piecewise-smooth indicator illustration}
\label{sec:numerics-piecewise-indicator}

Consider the following non-factorized example, which satisfies the hypotheses of \cref{thm:graphon-indicator-general,prop:graphon-indicator-transverse}. Let $I_1=[0,\frac12)$ and $I_2=[\frac12,1]$, and define the directed kernel
\[
W(u,v)=B_{ij}+0.16\exp\!\bigl(-10\sin^2(\pi(u-v))\bigr), \qquad (u,v)\in I_i\times I_j,
\]
with block matrix
\[
B=\begin{pmatrix}
0.38 & 0.24\\
0.31 & 0.36
\end{pmatrix}.
\]
The directed block part has finite rank; the smooth periodic term has infinitely many nonzero Fourier modes. We take $r=0$, $\mu=-0.06$, $T=2$, and the piecewise linear initial profile
\[
x_0(u)=
\begin{cases}
-0.18+1.95u, & 0\le u<\frac12,\\
0.82+1.75\bigl(u-\frac12\bigr), & \frac12\le u\le 1
\end{cases}.
\]
The initial profile has jump $0.025$ at $u=\frac12$ and minimum branch slope
\[
m_0=\min\{1.95,1.75\}=1.75.
\]
Because the block part of $W$ is constant on each rectangle, the constants $A_1$ and $A_2$ from \cref{prop:graphon-indicator-transverse} come entirely from the smooth periodic term. Writing $x=u-v$ and differentiating the smooth term gives
\[
\partial_1 W_{\mathrm{sm}}(u,v)
=
-3.2\pi \sin(\pi x)\cos(\pi x)\exp\!\bigl(-10\sin^2(\pi x)\bigr),
\qquad x=u-v,
\]
so by periodicity
\[
A_1=A_2
=\int_0^1 3.2\pi\,\abs{\sin(\pi x)\cos(\pi x)}\,\exp\!\bigl(-10\sin^2(\pi x)\bigr)\,\dd x
=0.32(1-e^{-10})\approx0.320.
\]
In the present experiment $r=0$, $T=2$, and the theorem-level ball radius is $C=T=2$, so the sufficient condition from \cref{prop:graphon-indicator-transverse} reduces to
\[
m_0 > C A_1,
\qquad\text{i.e.}\qquad
1.75 > 2\times 0.320 \approx 0.640.
\]
The transversality margin is therefore about $1.11$. We solve \eqref{eq:H-feedback-general}--\eqref{eq:graphon-indicator-H} on midpoint nodes $u_j=(j-\tfrac12)/2000$ using equal quadrature weights and explicit Euler with $\Delta t=2\times10^{-3}$. We compare the hard indicator with the positive-side ramp \eqref{eq:elleps} at $\varepsilon\in\{0.04,0.02,0.01\}$.

\begin{table}[H]
\centering
\small
\renewcommand{\arraystretch}{1.15}
\setlength{\tabcolsep}{6pt}
\caption{Indicator and smoothed dynamics for the non-factorized piecewise-smooth graphon. Errors compare the smoothed and hard-indicator Euler solutions on the same grid.}
\label{tab:piecewise-indicator-summary}
\begin{tabularx}{\textwidth}{>{\raggedright\arraybackslash}X>{\centering\arraybackslash}p{0.12\textwidth}>{\centering\arraybackslash}p{0.16\textwidth}>{\centering\arraybackslash}p{0.18\textwidth}>{\centering\arraybackslash}p{0.18\textwidth}}
\toprule
Regime & $\varepsilon$ & terminal threshold fraction & $\sup_{0\le t\le T}\|X_t^\varepsilon-X_t^{\rm ind}\|_{L^1}$ & $\sup_{0\le t\le T}|h_t^\varepsilon-h_t^{\rm ind}|$ \\
\midrule
Indicator & -- & 0.2195 & 0 & 0 \\
Smoothed & 0.04 & 0.2255 & $9.40\times 10^{-3}$ & $6.00\times 10^{-3}$ \\
Smoothed & 0.02 & 0.2225 & $4.70\times 10^{-3}$ & $3.00\times 10^{-3}$ \\
Smoothed & 0.01 & 0.2210 & $2.35\times 10^{-3}$ & $1.50\times 10^{-3}$ \\
\bottomrule
\end{tabularx}
\end{table}

\begin{figure}[H]
\centering
\includegraphics[width=\textwidth]{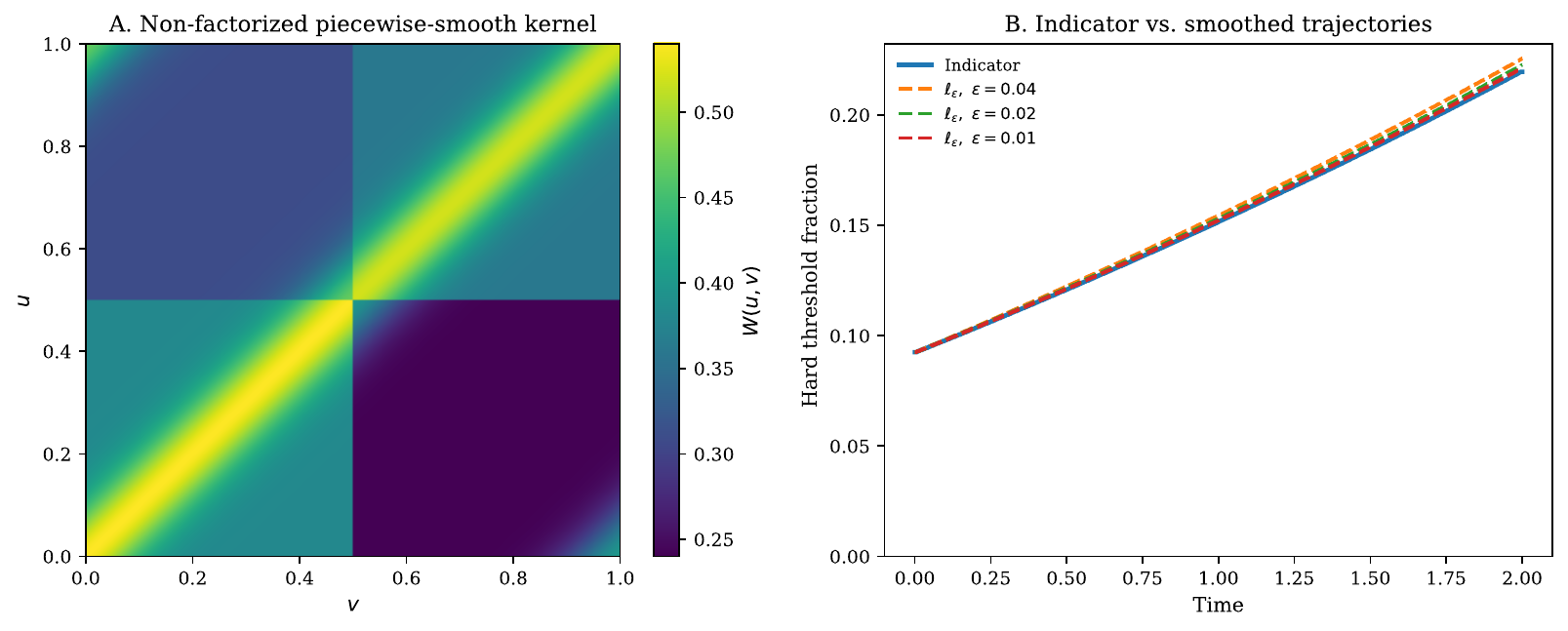}
\caption{A non-factorized piecewise-smooth indicator-loss graphon illustration. Left: the directed kernel $W(u,v)$. Right: instantaneous threshold fraction trajectories for the indicator-loss dynamics and for the regularized dynamics with $\ell_\varepsilon$. The smoothed trajectories approach the indicator trajectory as $\varepsilon\downarrow0$.}
\label{fig:piecewise-indicator}
\end{figure}

Both state and threshold-fraction errors approximately halve when $\varepsilon$ is halved (\cref{tab:piecewise-indicator-summary}). The threshold-fraction grid-refinement discrepancy is $5.0\times10^{-4}$, below the reported smoothing errors.

Section~\ref{sec:numerics-piecewise-sensitivity} reports the sensitivity of this construction to the transversality margin $m$.

\subsubsection{Indicator bridge along a structure-preserving trigonometric family}
\label{sec:numerics-piecewise-indicator-trig-bridge}

We keep $W^{\mathrm{blk}}$ exact and truncate the smooth component as
\[
W^{\mathrm{sm},(M)}(u,v)=0.16e^{-5}\left[I_0(5)+2\sum_{m=1}^M I_m(5)\cos(2\pi m(u-v))\right],
\]
where $I_m$ is the modified Bessel function of the first kind. Here $M$ counts positive Fourier frequencies, each of which contributes a sine--cosine pair in a real factorization. Parseval's identity gives the uniform derivative bound
\[
A_1^*,A_2^*\le\|\partial_1W^{\mathrm{sm}}\|_{L^2(I^2)}
=\left(0.256\pi^2e^{-10}I_1(10)\right)^{1/2}<0.554.
\]
Thus $TA_1^*<1.108<1.75=m_0$, which verifies \cref{cor:trigonometric-family}. We compare $M\in\{1,2,4,8,16\}$ with the full kernel on the same midpoint grid and time step.

\begin{table}[H]
\centering
\small
\renewcommand{\arraystretch}{1.12}
\setlength{\tabcolsep}{6pt}
\caption{Indicator bridge along a structure-preserving trigonometric approximation family. The block part of the kernel is kept exact; only the smooth periodic component is truncated. State and threshold errors compare the truncated and full-kernel Euler solutions on the same grid.}
\label{tab:piecewise-indicator-bridge}
\begin{tabularx}{\textwidth}{>{\centering\arraybackslash}p{0.13\textwidth}>{\centering\arraybackslash}p{0.20\textwidth}>{\centering\arraybackslash}p{0.19\textwidth}>{\centering\arraybackslash}p{0.18\textwidth}>{\centering\arraybackslash}X}
\toprule
Retained modes $M$ & $\|W^{(M)}-W\|_{L^\infty}$ & $\sup_{0\le t\le T}\|X_t^{(M)}-X_t\|_{L^1}$ & $\sup_{0\le t\le T}|h_t^{(M)}-h_t|$ & terminal threshold fraction \\
\midrule
1  & $7.82\times 10^{-2}$ & $6.26\times 10^{-3}$ & $3.50\times 10^{-3}$ & 0.2230 \\
2  & $4.04\times 10^{-2}$ & $2.74\times 10^{-3}$ & $3.00\times 10^{-3}$ & 0.2225 \\
4  & $7.13\times 10^{-3}$ & $2.11\times 10^{-4}$ & $5.00\times 10^{-4}$ & 0.2195 \\
8  & $5.42\times 10^{-5}$ & $8.67\times 10^{-7}$ & $5.00\times 10^{-4}$ & 0.2195 \\
16 & $5.76\times 10^{-11}$ & $6.15\times 10^{-13}$ & 0 & 0.2195 \\
\bottomrule
\end{tabularx}
\end{table}

\begin{figure}[H]
\centering
\includegraphics[width=\textwidth]{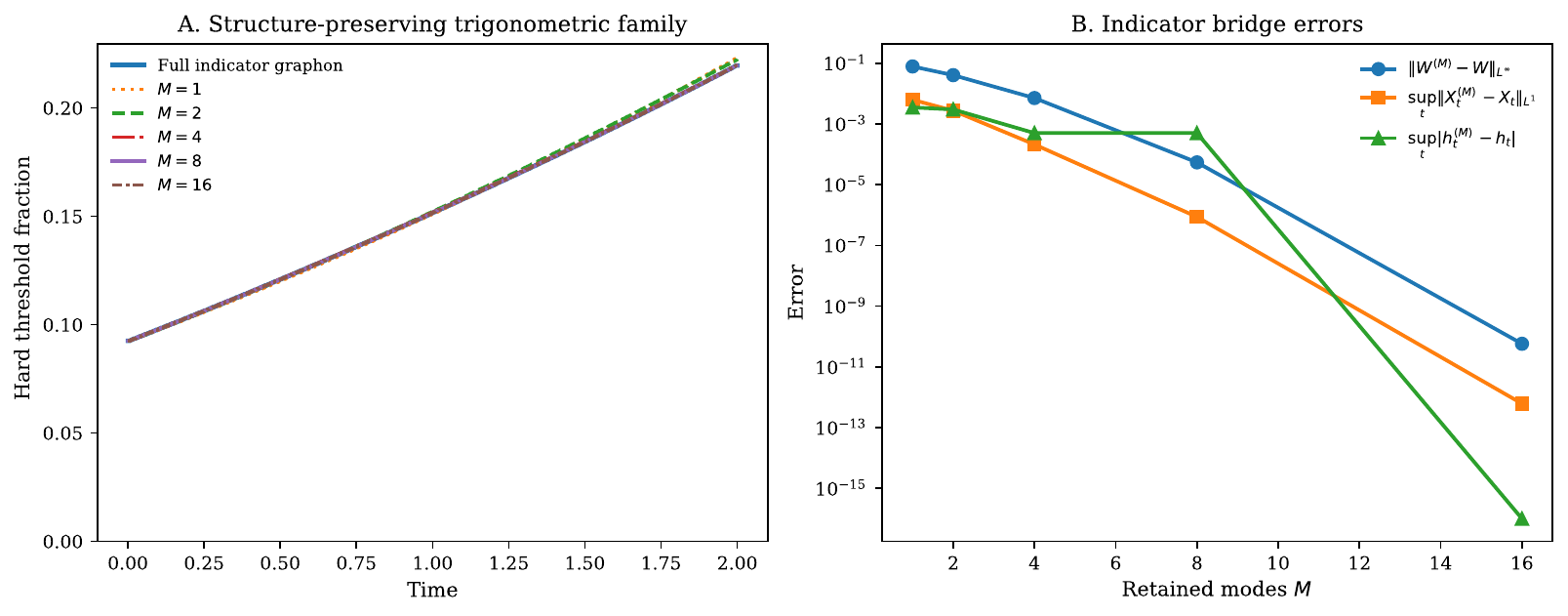}
\caption{Indicator bridge along a structure-preserving trigonometric family. Left: indicator threshold-fraction trajectories for the full kernel and for the Fourier-truncated families. Right: kernel $L^\infty$ and state $L^1$ errors. The zero threshold error at $M=16$ is displayed at the bottom of the logarithmic scale.}
\label{fig:piecewise-indicator-bridge}
\end{figure}

\Cref{fig:piecewise-indicator-bridge,tab:piecewise-indicator-bridge} show rapid convergence along the trigonometric family. By $M=4$--$8$, the threshold-fraction error reaches the $5\times10^{-4}$ grid scale even as the state error continues to fall.

\subsubsection{The Osgood boundary and ramp selection}
\label{sec:numerics-osgood-boundary}

The sharpness mechanism can be isolated without spatial discretization.
Take $W\equiv1$, $\mu=r=0$, and $x_0=F^{\leftarrow}$, where $F$ is a
continuous distribution function on $[0,1]$.  With
$\beta(t)=\int_IH_t(u)\,\dd u$, the hard equation and its positive-side
ramp regularization reduce exactly to
\begin{equation}
\label{eq:numerics-osgood-scalar}
\dot\beta=F(\beta),
\qquad
\dot\beta_\varepsilon
=\frac1\varepsilon\int_0^\varepsilon
F(\beta_\varepsilon+s)\,\dd s,
\qquad
\beta(0)=\beta_\varepsilon(0)=0.
\end{equation}
We use $F(a)=a$, $F(a)=a\log(e/a)$, and $F(a)=\sqrt a$ on $[0,1]$.
The first two reciprocal integrals diverge and have the unique hard path
$\beta=0$. On the observation interval $0\le t\le1$, the square-root case has the delayed family
$\beta_\tau(t)=\frac14(t-\tau)_+^2$, $\tau\ge0$, whose greatest member is
$\beta^+(t)=t^2/4$.

We solve \eqref{eq:numerics-osgood-scalar} on $[0,1]$ for
$\varepsilon=2^{-4},\ldots,2^{-12}$ using exact antiderivatives of $F$ and
the adaptive DOP853 solver with relative tolerance $2\times10^{-12}$, absolute tolerance $2\times10^{-14}$, and 2001 equally spaced output times. For every tested width, $\beta_{\varepsilon/2}\le\beta_\varepsilon$ held at all recorded time points. In the linear regime, the numerical path agrees with the exact expression
$\beta_\varepsilon(t)=\frac\varepsilon2(e^t-1)$ to
$3.47\times10^{-17}$.  Fitted slopes over the four smallest widths are
$1.0000$, $0.3650$, and $0.4997$; the borderline value agrees with the
comparison exponent $e^{-1}=0.3679$.  The square-root curves converge to
the greatest solution, while the negative-side ramp remains at the minimal
solution zero.

\begin{figure}[H]
\centering
\includegraphics[width=\textwidth]{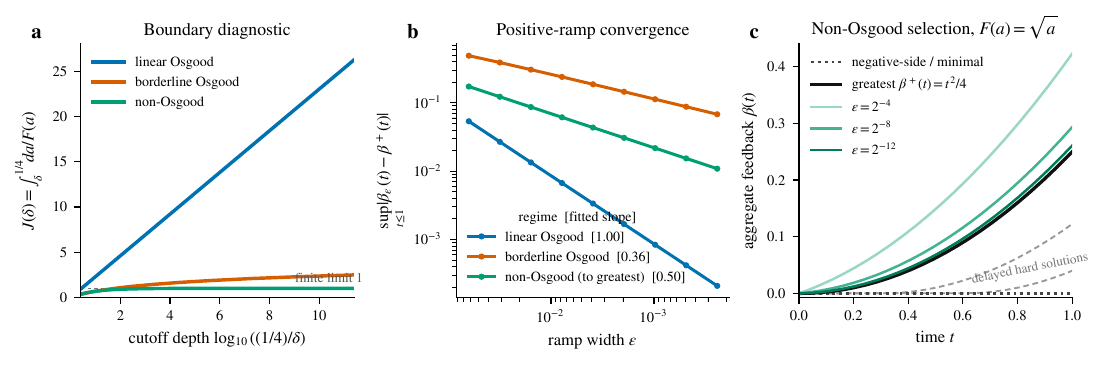}
\caption{The Osgood boundary and one-sided ramp selection in the constant
rank-one model. Left: the truncated reciprocal integral diverges for the
linear and logarithmic tubes but converges for the square-root tube. Middle:
the ordered positive-side ramps converge to the selected hard path; brackets
show fitted log--log slopes. Right: in the non-Osgood regime the positive
ramp selects the greatest solution, the negative ramp selects the minimal
solution, and delayed hard paths lie between them.}
\label{fig:osgood-boundary}
\end{figure}

\subsection{Empirical factor construction from EBA sovereign exposures}
\label{sec:real-data}

The data come from the European Banking Authority's Autumn 2025 EU-wide transparency exercise \cite{eba2025}, with reference date 30 June 2025. We use item 2520810, the disclosed direct on-balance-sheet total gross carrying amount of non-derivative financial assets to sovereign counterparties, summed over maturities. The release contains 119 named institutions and an aggregate \emph{Other banks} record with placeholder identifier \texttt{xxxxxxxxxxxxxxxxxxxx}; the aggregate record is removed from cross-sectional calculations. Five reported values in the interval $[-0.001,0)$ million EUR are set to zero under the stated cleaning rule. The sum of disclosed item-2520810 values for the 119 named institutions is EUR~3.9143 trillion.

We first use six banks and four sovereign factors to display the construction. The banks are BNP Paribas, Deutsche Bank, UniCredit, Banco Santander, ING Groep, and Intesa Sanpaolo; the selected factors are France, Germany, Italy, and the Netherlands. Table~\ref{tab:holdings} gives the corresponding holdings in million EUR. The factors represent common sovereign-exposure channels.

Let $s_{i,k}$ denote the holding of institution $i$ in selected factor $k$. Define
\begin{equation}
\label{eq:a-b}
a_{i,k}^{\mathrm{emp}}
=
\frac{s_{i,k}}{\sum_{\ell=1}^K s_{i,\ell}},
\qquad
b_{i,k}^{\mathrm{emp}}
=
\frac{s_{i,k}}{N^{-1}\sum_{m=1}^N s_{m,k}}.
\end{equation}
Thus $a_{i,k}^{\mathrm{emp}}$ is a \emph{conditional} portfolio share among the selected factors, whereas $b_{i,k}^{\mathrm{emp}}$ measures institution $i$'s contribution to factor $k$ relative to the cross-sectional mean. In particular,
\[
\sum_{k=1}^K a_{i,k}^{\mathrm{emp}}=1,
\qquad
\frac1N\sum_{i=1}^N b_{i,k}^{\mathrm{emp}}=1.
\]
The selected four factors account for between $12.7\%$ and $47.1\%$ of the six institutions' full disclosed item-2520810 totals; the loadings in \eqref{eq:a-b} describe composition within the retained channels. Table~\ref{tab:coverage} reports the corresponding coverage and the Italian share under the full-book and selected-factor denominators.

We define the common-exposure kernel by
\begin{equation}
\label{eq:empirical-e}
e_{ij}^{N,\mathrm{emp}}
=
\lambda_E\sum_{k=1}^K a_{i,k}^{\mathrm{emp}}b_{j,k}^{\mathrm{emp}},
\qquad
\lambda_E:=\frac{c_E}{K}=0.075
\quad (K=4,\ c_E=0.30).
\end{equation}
The preceding normalizations give the row-mean identity
\[
\frac1N\sum_{j=1}^N e_{ij}^{N,\mathrm{emp}}
=
\lambda_E\sum_{k=1}^K a_{i,k}^{\mathrm{emp}}
=
\lambda_E.
\]
Diagonal entries represent an institution's exposure to and contribution through the same common sovereign channel; they cancel from the imbalance term but remain in the loss term.

\begin{table}[H]
\centering
\caption{Sovereign holdings used in the six-institution illustration (million EUR).}
\label{tab:holdings}
\resizebox{.70\textwidth}{!}{%
\begin{tabular}{lrrrr}
\toprule
Institution & France & Germany & Italy & Netherlands\\
\midrule
BNP Paribas & 54{,}638 & 17{,}862 & 23{,}455 & 608\\
Deutsche Bank & 17{,}962 & 14{,}776 & 22{,}414 & 677\\
UniCredit & 7{,}121 & 17{,}405 & 45{,}760 & 16\\
Banco Santander & 7{,}898 & 1{,}642 & 17{,}284 & 221\\
ING Groep & 7{,}288 & 9{,}514 & 2{,}339 & 4{,}630\\
Intesa Sanpaolo & 12{,}810 & 2{,}175 & 42{,}616 & 1{,}108\\
\bottomrule
\end{tabular}}
\end{table}

\begin{table}[H]
\centering
\small
\caption{Coverage of the selected four factors and the Italian share under two denominators. Percentages are computed from the cleaned EBA extract.}
\label{tab:coverage}
\begin{tabular}{lrrr}
\toprule
Institution & four-factor coverage & Italy/full book & Italy/selected four\\
\midrule
BNP Paribas & 26.4\% & 6.4\% & 24.3\%\\
Deutsche Bank & 30.9\% & 12.4\% & 40.1\%\\
UniCredit & 47.1\% & 30.6\% & 65.1\%\\
Banco Santander & 12.7\% & 8.1\% & 63.9\%\\
ING Groep & 23.4\% & 2.3\% & 9.8\%\\
Intesa Sanpaolo & 44.0\% & 31.9\% & 72.6\%\\
\bottomrule
\end{tabular}
\end{table}

Using each institution's full disclosed item-2520810 total as the denominator in $a_{i,k}^{\mathrm{emp}}$ jointly changes the receiver loadings, initial buffers in \eqref{eq:initial-buffer}, and row strength; \cref{tab:six-bank-denominator} compares the resulting dynamics.

\begin{table}[H]
\centering
\small
\caption{Six-institution comparison of conditional-factor and full-book specifications.}
\label{tab:six-bank-denominator}
\begin{tabular}{lcc}
\toprule
Specification & terminal mean smoothed distress & terminal threshold fraction\\
\midrule
Conditional four-factor shares & 0.1694 & $2/6$\\
Full-book denominator & 0 & $0/6$\\
\bottomrule
\end{tabular}
\end{table}

\begin{figure}[H]
\centering
\includegraphics[width=.70\textwidth]{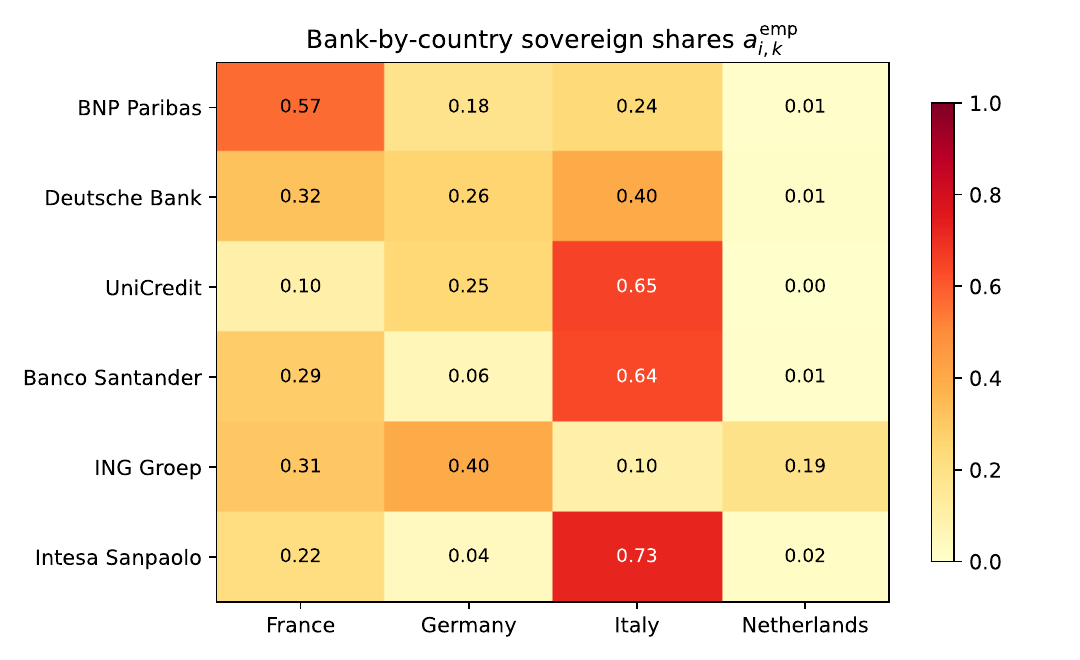}
\caption{Conditional sovereign shares $a_{i,k}^{\mathrm{emp}}$ among the four selected factors. These shares describe composition within the selected subset; \cref{tab:coverage} gives the subset's coverage of each institution's full disclosed item-2520810 total.}
\label{fig:shares}
\end{figure}

\subsubsection{Dynamic stress calculation}

To isolate the factor mechanism, we use
\begin{equation}
\label{eq:empirical-dynamics}
X_t^{i,N}
=
x_i^N-
\frac1N\sum_{j=1}^N e_{ij}^{N,\mathrm{emp}}
\int_0^t \ell_\varepsilon^{-}(X_s^{j,N})\,\dd s,
\qquad
\ell_\varepsilon^{-}(x)=\max\{0,\min\{1,-x/\varepsilon\}\},
\end{equation}
with $\varepsilon=0.05$, $\mu=r=0$, $T=3$, and explicit-Euler step $\Delta t=10^{-3}$. The ramp lies on $(-\varepsilon,0)$, so $\ell_\varepsilon^{-}(0)=0$ and $\ell_\varepsilon^{-}\le\1_{\{x\le0\}}$. Its Lipschitz constant is $1/\varepsilon$.

The initial buffers are prescribed by
\begin{equation}
\label{eq:initial-buffer}
x_i^N
=x_\ast-\sum_{k=1}^K\eta_k a_{i,k}^{\mathrm{emp}},
\qquad
x_\ast=0.40,
\qquad
\eta=(0.35,0.15,0.45,0.05).
\end{equation}
The parameters $x_\ast$, $\lambda_E$, and $\eta$ specify the stress scenario. Because \eqref{eq:empirical-e} has an exact four-factor representation, \cref{lem:reformulation} gives
\[
X_t^{i,N}=x_i^N-\frac1K\sum_{k=1}^K a_{i,k}^{\mathrm{emp}}\beta_k^N(t),
\qquad
\beta_k^N(t)=c_E\int_0^t\frac1N\sum_{j=1}^N
b_{j,k}^{\mathrm{emp}}\ell_\varepsilon^{-}(X_s^{j,N})\,\dd s.
\]
The direct and reduced solvers agree to machine precision. Under the conditional four-factor specification, Intesa Sanpaolo begins below the threshold and Banco Santander crosses it dynamically. The full-book specification has positive initial buffers and hence zero loss throughout.

At fixed $x_\ast=0.40$ and fixed stress vector $\eta$, halving the channel intensity leaves the terminal threshold set unchanged, whereas doubling it moves three additional institutions below the threshold; see \cref{tab:six-bank-intensity}. This displays the nonlinear amplification caused by stronger common-exposure interaction.

\begin{table}[H]
\centering
\small
\caption{Six-institution sensitivity to the common-exposure intensity, with $x_\ast=0.40$ and $\eta$ fixed. The baseline is $c_E=0.30$, equivalently $\lambda_E=c_E/4=0.075$.}
\label{tab:six-bank-intensity}
\begin{tabular}{cccc}
\toprule
Multiplier of $c_E$ & $\lambda_E$ & terminal mean smoothed distress & terminal threshold fraction\\
\midrule
$0.5$ & $0.0375$ & $0.0750$ & $2/6$\\
$1.0$ & $0.0750$ & $0.1694$ & $2/6$\\
$2.0$ & $0.1500$ & $0.5682$ & $5/6$\\
\bottomrule
\end{tabular}
\end{table}

For truncation ranks $q=1,2,3$, we compare the singular-value approximations with the exact rank-$4$ kernel. The leading singular values are approximately
\[
0.4890,\qquad 0.0949,\qquad 0.0487,\qquad 0.0167.
\]
All three truncated kernels are entrywise nonnegative in this small sample.

\begin{table}[H]
\centering
\caption{Low-rank truncations of the six-institution common-exposure kernel. The final column reports the maximum gap between mean smoothed-distress paths.}
\label{tab:truncation}
\small
\setlength{\tabcolsep}{3pt}
\renewcommand{\arraystretch}{1.15}
\begin{tabularx}{\textwidth}{r*{4}{>{\centering\arraybackslash}X}}
\toprule
Rank & Relative Frobenius error & Terminal mean distress & Terminal threshold fraction & Max path gap\\
\midrule
1 & 0.2155 & 0.1203 & 0.3333 & 0.0491\\
2 & 0.1027 & 0.1565 & 0.3333 & 0.0129\\
3 & 0.0334 & 0.1565 & 0.3333 & 0.0129\\
4 (exact) & 0.0000 & 0.1694 & 0.3333 & 0.0000\\
\bottomrule
\end{tabularx}
\end{table}

\begin{figure}[H]
\centering
\includegraphics[width=.97\textwidth]{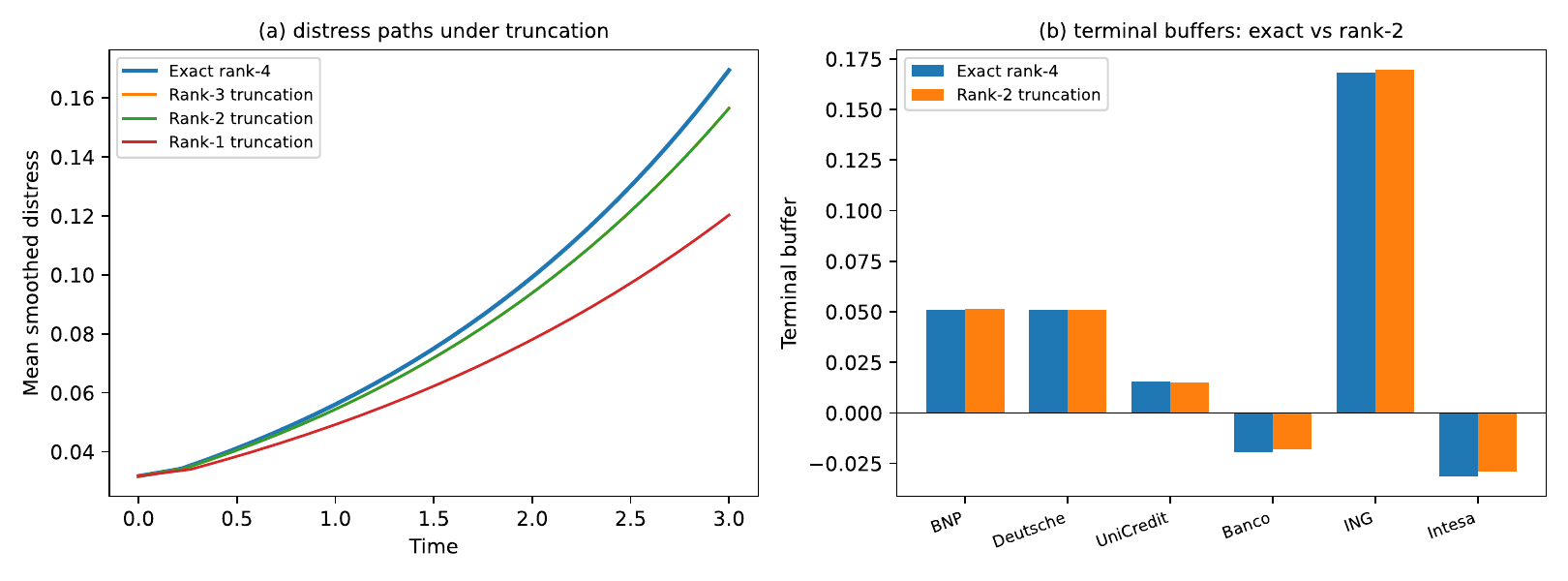}
\caption{Low-rank truncation of the six-institution common-exposure kernel. (a) Mean smoothed distress for the exact rank-$4$ kernel and its rank-$1$, rank-$2$, and rank-$3$ singular-value truncations. (b) Terminal buffers for the exact kernel and the rank-$2$ surrogate.}
\label{fig:distress}
\end{figure}

\subsubsection{A priori envelopes for a uniform Italian-book rescaling}
\label{sec:sensitivity-envelope}

For $\delta\in[-0.20,0.20]$, rescale every institution's reported Italian book by
\begin{equation}
\label{eq:italy-rescaling}
s_{i,\mathrm{IT}}\longmapsto(1+\delta)s_{i,\mathrm{IT}},
\qquad i=1,\ldots,N,
\end{equation}
and recompute \eqref{eq:a-b}, \eqref{eq:empirical-e}, and \eqref{eq:initial-buffer}. This rescales holding quantities and changes the conditional portfolio composition. Because every Italian holding is multiplied by the same factor, $b_{i,\mathrm{IT}}^{\mathrm{emp}}$ is unchanged; the perturbation acts through the receiver loadings $a^{\mathrm{emp}}$ and the initial buffers.

Let $(\widetilde x,\widetilde e,\widetilde X)$ denote the perturbed data and solution, and set $\Delta x=\widetilde x-x$, $\Delta e=\widetilde e-e$, and $\Delta X_t^i=\widetilde X_t^i-X_t^i$. The entries $e_{ij}$ and $\widetilde e_{ij}$ below are the unscaled weights appearing inside the $N^{-1}$ sum in \eqref{eq:empirical-dynamics}. The Gronwall argument gives
\begin{equation}
\label{eq:stability-envelope}
\max_{1\le i\le N}|\Delta X_t^i|
\le
\bigl(\norm{\Delta x}_\infty+t\bar d\ell_\ast\bigr)e^{L_\ell\bar e t},
\qquad
\bar d:=\max_i\frac1N\sum_j|\Delta e_{ij}|,
\quad
\bar e:=\max_i\frac1N\sum_j|e_{ij}|.
\end{equation}
Here $\ell_\ast=1$ and $L_\ell=1/\varepsilon$. Since the baseline kernel is nonnegative and has row mean $\lambda_E=0.075$, the exponential factor is $\exp(L_\ell\lambda_E T)=e^{4.5}\approx90$ at $T=3$.

\begin{figure}[H]
\centering
\includegraphics[width=.92\textwidth]{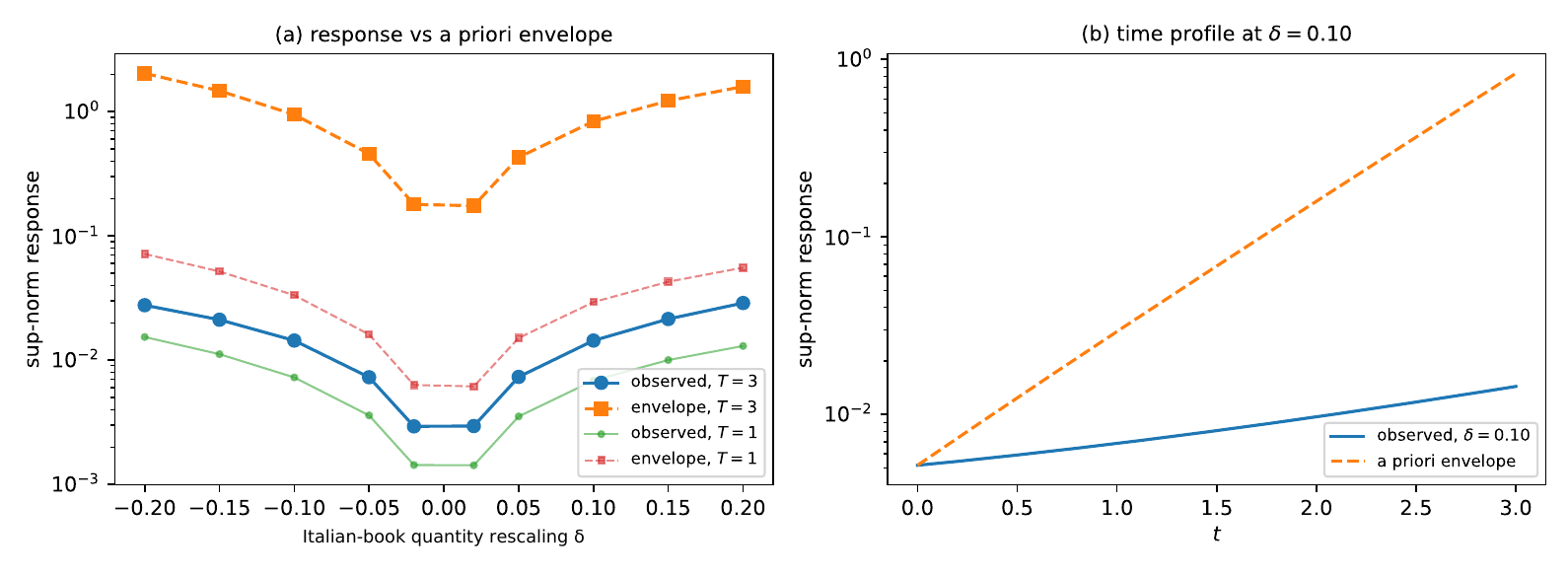}
\caption{A priori envelope and simulated response for the Italian-book rescaling \eqref{eq:italy-rescaling}. The envelope-to-response ratio is about $55$--$74$ at $T=3$ and about $4$--$5$ at $T=1$.}
\label{fig:stress}
\end{figure}

The simulated state response is approximately linear for small $|\delta|$; threshold classifications change at terminal zero crossings.

\subsubsection{Algebraic rank, spectral compression, and stress-aligned approximation}
\label{sec:full-sample}

After removing the aggregate record and three named institutions with no positive disclosed values for item 2520810, the calculation contains $N=116$ institutions. We retain the $K=38$ counterparty buckets whose aggregate shares are at least $0.1\%$; they cover $99.704\%$ of the EUR~3.9143 trillion named-institution total. Across institutions, retained-bucket coverage has minimum $8.61\%$, fifth percentile $92.81\%$, median $100\%$, and mean $97.45\%$. Aggregating excluded disclosed values into an additional \emph{Other sovereigns} factor gives $K=39$, terminal mean smoothed distress $0.02459$ instead of $0.02462$, and the same terminal threshold fraction $4/116$. The aggregate terminal statistics change little under this check, although the maximum state-path difference of $0.0503$ shows that institution-level paths can be sensitive for banks with low retained-bucket coverage.

To preserve the row-mean interaction strength of the six-institution example, we hold $\lambda_E=0.075$ fixed, which gives $c_E=K\lambda_E=2.85$. The four nonzero entries of the stress vector are assigned to France, Germany, Italy, and the Netherlands, with the remaining 34 coordinates set to zero. The direct $116$-dimensional solver and the exact $38$-coordinate reduction differ by $2.7\times10^{-15}$ in sup norm. The reported baseline has four institutions initially below the threshold, terminal threshold fraction $4/116$, and terminal mean smoothed distress $0.0246$. A time-step refinement changes the state path by $2.3\times10^{-6}$ in sup norm.

Holding $c_E=0.30$ fixed instead would reduce the per-factor intensity to $\lambda_E=0.30/38\approx0.00789$; with all other choices unchanged, the terminal mean smoothed distress falls to $0.0138$, while the terminal threshold fraction remains $4/116$. This comparison shows how the scale convention affects smoothed distress even when the terminal threshold fraction is unchanged.

The full-sample kernel has algebraic rank at most 38. Eleven singular modes account for $90\%$ of its Frobenius energy and 19 for $99\%$. Under the selected stress, the factor-aligned rank-$2$ surrogate retaining France and Italy has sup-norm path error $7.20\times10^{-4}$, whereas the generic rank-$12$ SVD surrogate has error $1.12\times10^{-2}$; see \cref{fig:fullsample-rank}. Negative entries account for $3.0\%$ of the rank-$2$ SVD surrogate and $16.3\%$ of the rank-$12$ surrogate, with negative $L^1$ mass shares of $0.77\%$ and $4.03\%$, respectively. The SVD surrogates are covered by the signed-kernel Lipschitz analysis; the factor-aligned surrogates remain nonnegative.

\begin{figure}[H]
\centering
\includegraphics[width=.97\textwidth]{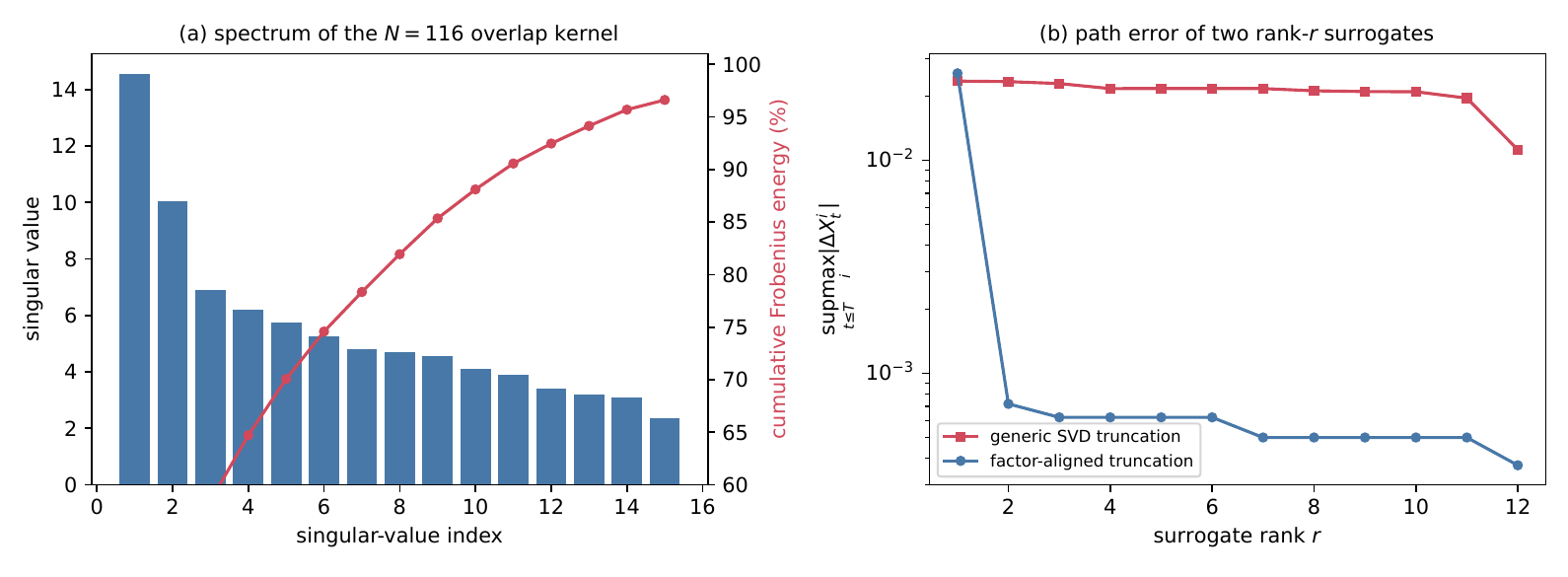}
\caption{The cleaned full-sample kernel ($N=116$, $K=38$). (a) Singular values and cumulative Frobenius energy. (b) Sup-norm path errors for generic SVD and factor-aligned truncations. The factor-aligned rank-$2$ surrogate has error $7.20\times10^{-4}$, while the generic rank-$12$ SVD surrogate has error $1.12\times10^{-2}$.}
\label{fig:fullsample-rank}
\end{figure}

For the resampling experiment, let $\mu_0$ be the uniform law on the 116 observed types and draw empirical type laws of sizes $N\in\{8,16,32,64,128,256,512\}$. The one-dimensional terminal state-law error uses the exact empirical Wasserstein formula. With 200 replications at each size, the fitted slopes are $-0.426$ for the smoothed feedback-path error and $-0.487$ for the smoothed terminal state-law error. The corresponding indicator slopes are $-0.441$ and $-0.483$. \Cref{thm:finite-rank-w1} covers the smoothed experiment. The indicator slopes measure resampling convergence for this discrete 116-type population.

\begin{figure}[H]
\centering
\includegraphics[width=.97\textwidth]{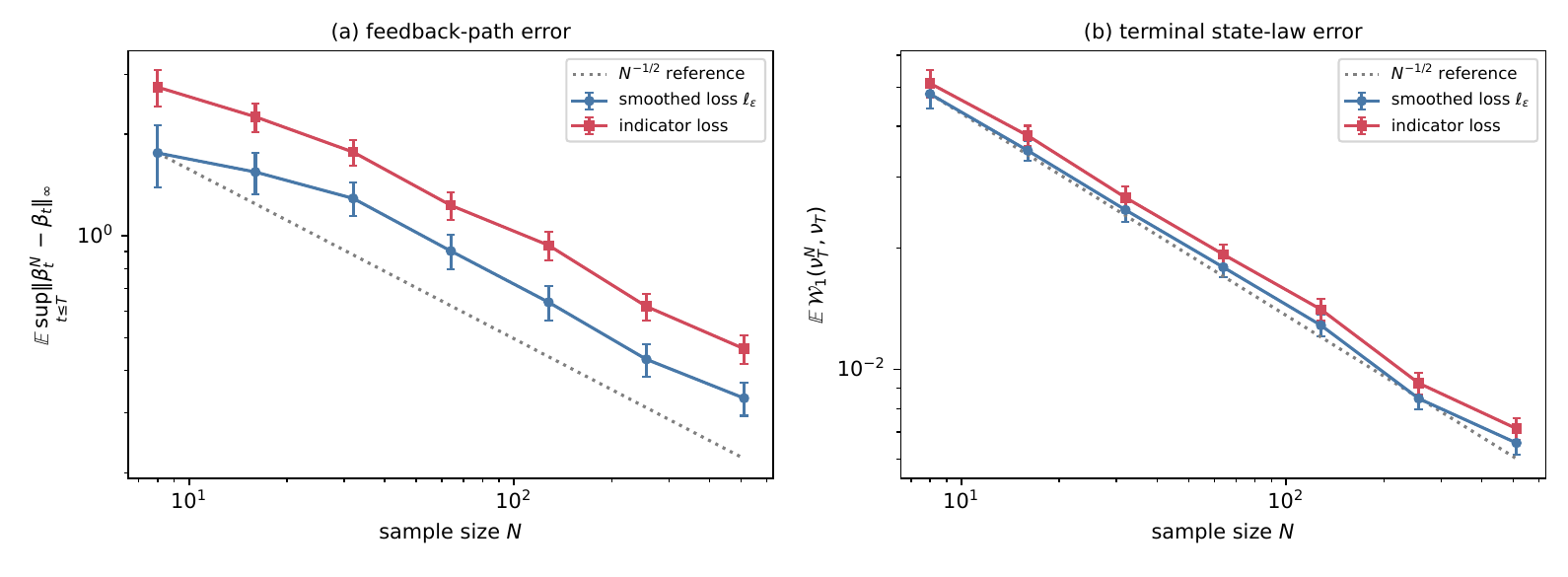}
\caption{Resampling from the cleaned 116-type empirical population: mean errors and $1.96$ standard-error bars over 200 replications, with the $N^{-1/2}$ reference rate. Here $\nu_t$ denotes the one-dimensional state law.}
\label{fig:fullsample-rate}
\end{figure}

\paragraph{Numerical files.}
The supplementary scripts reproduce the Osgood experiment, the six-bank calculations from the rounded holdings in \cref{tab:holdings}, and the piecewise-smooth and 30-mode graphon experiments. The Gaussian-mixture parameter files, complete directedness settings, EBA extract, and full-sample processing and simulation scripts are not included in the archive.

\section{Concluding remarks}
\label{sec:conclusion}

A factor representation reduces the network dynamics to finitely many feedback coordinates. Wasserstein and aligned kernel estimates then separate population sampling from exposure approximation. For the hard threshold, nonnegative kernels admit a greatest cumulative-distress solution selected by positive-side regularization. An Osgood condition along one reference path gives uniqueness and quantitative stability; branchwise conditions make this regularity verifiable from the model inputs. Rank-one counterexamples identify the boundary of uniqueness criteria based on threshold-layer mass.

The occupation-time rule permits recovery. Extensions to absorbing losses, stochastic outside-book dynamics, and time-varying exposures would connect the present reduction to default, liquidity, and margin models.

\appendix

\section{Proofs of the main results}
\label{app:proofs}

This appendix gives the proofs deferred from Sections~\ref{sec:finite-rank-limit} and~\ref{sec:graphon}, followed by the model-derivation arguments.

\subsection{Proofs for Section~\ref{sec:finite-rank-limit}}
\label{app:proofs-sec3}

\begin{proof}[Proof of \Cref{thm:finite-rank-w1}]\label{proof:thm:finite-rank-w1}
Fix $T>0$ and let $\pi^N\in\Gamma(\mu_0^N,\mu_0)$ be an optimal coupling for $\Wone$. Define
\[
D_t^N:=\int_{\mathcal Z\times\mathcal Z} \abs{X_t^N(z)-X_t(\tilde z)}\,\pi^N(\dd z,\dd \tilde z).
\]
From the two instances of the state map \eqref{eq:Xlimit-map}, driven by $\mu_0^N$ and $\mu_0$ respectively,
\begin{align*}
\abs{X_t^N(z)-X_t(\tilde z)}
&\le |x-\tilde x| + t\abs{\Lambda^N(z)-\Lambda(\tilde z)}
+ \frac{1}{K}\sum_{k=1}^K |a_k-\tilde a_k|\,|\beta_k(t)| \\
&\qquad + \frac{1}{K}\sum_{k=1}^K |a_k|\,|\beta_k^N(t)-\beta_k(t)|.
\end{align*}
Because $|a_k|,|b_k|\le M$ and $|m_k(t)|,|m_k^N(t)|\le M\ell_\ast$, we have $|\beta_k(t)|,|\beta_k^N(t)|\le TM\ell_\ast$. Moreover, by definition of the $\ell^1$ metric on $\mathcal Z$,
\[
\abs{\bar a_k^N-\bar a_k}\le \int |a_k-\tilde a_k|\,\pi^N(\dd z,\dd\tilde z)
\le \Wone(\mu_0^N,\mu_0),
\]
and similarly for $\bar b_k^N-\bar b_k$. Expanding the imbalance difference,
\begin{align*}
\Lambda^N(z)-\Lambda(\tilde z)
= \frac{r}{K}\sum_{k=1}^K \Bigl[
a_k(\bar b_k^N-\bar b_k)+\bar b_k(a_k-\tilde a_k)
-b_k(\bar a_k^N-\bar a_k)-\bar a_k(b_k-\tilde b_k)
\Bigr].
\end{align*}
Using $|a_k|,|b_k|,|\bar a_k|,|\bar b_k|\le M$ together with the bounds above, we obtain
\begin{align*}
\int \abs{\Lambda^N(z)-\Lambda(\tilde z)}\,\pi^N(\dd z,\dd\tilde z)
&\le \frac{|r|}{K}\sum_{k=1}^K \Bigl[
M\abs{\bar b_k^N-\bar b_k}
+M\!\int |a_k-\tilde a_k|\,\pi^N(\dd z,\dd\tilde z) \\
&\hspace{7em}
+M\abs{\bar a_k^N-\bar a_k}
+M\!\int |b_k-\tilde b_k|\,\pi^N(\dd z,\dd\tilde z)
\Bigr] \\
&\le 4|r|M\,\Wone(\mu_0^N,\mu_0).
\end{align*}

Integrating the previous pointwise estimate against $\pi^N$, we obtain
\begin{equation}
\label{eq:D-before-m}
D_t^N \le C_T\Wone(\mu_0^N,\mu_0) + \frac{M}{K}\sum_{k=1}^K \int_0^t |m_k^N(s)-m_k(s)|\,\dd s.
\end{equation}
Next,
\begin{align*}
|m_k^N(s)-m_k(s)|
&=\Bigl|\int b_k\ell(X_s^N(z))\,\mu_0^N(\dd z)-\int \tilde b_k\ell(X_s(\tilde z))\,\mu_0(\dd\tilde z)\Bigr| \\
&\le \int \abs{b_k\ell(X_s^N(z))-\tilde b_k\ell(X_s(\tilde z))}\,\pi^N(\dd z,\dd\tilde z) \\
&\le \ell_\ast\Wone(\mu_0^N,\mu_0) + ML_\ell D_s^N.
\end{align*}
Substituting into \eqref{eq:D-before-m} and absorbing constants yields
\[
D_t^N \le C_T\Wone(\mu_0^N,\mu_0) + C\int_0^t D_s^N\,\dd s.
\]
Gronwall's lemma gives
\[
\sup_{0\le t\le T} D_t^N \le C_T\Wone(\mu_0^N,\mu_0).
\]
Finally, by the definition of pushforward and the coupling $(\Xi_t^N,\Xi_t)_\#\pi^N$,
\begin{align*}
\Wone(\nu_t^N,\nu_t)
&\le \int \bigl|\Xi_t^N(z)-\Xi_t(\tilde z)\bigr|\,\pi^N(\dd z,\dd\tilde z) \\
&\le D_t^N + \int (|a-\tilde a|_1+|b-\tilde b|_1)\,\pi^N(\dd z,\dd\tilde z) \\
&\le D_t^N + \Wone(\mu_0^N,\mu_0),
\end{align*}
and \eqref{eq:w1-main} follows.
\end{proof}

\begin{proof}[Proof of \Cref{thm:transport-pde}]\label{proof:thm:transport-pde}
Since \(\beta_k(t)=\int_0^t m_k(s)\,\dd s\) and \(|m_k(s)|\le M\ell_\ast\), each \(\beta_k\) is absolutely continuous and \(\dot\beta_k(t)=m_k(t)\) for a.e. \(t\). By construction,
\[
X_t(z)=x+\Theta_t(a,b),
\]
so \(\Xi_t(x,a,b)=(x+\Theta_t(a,b),a,b)\). This map is a translation in the \(x\)-variable, hence the pushforward density is exactly \eqref{eq:density-explicit}. The formula \eqref{eq:mk-density} follows from the same change of variables.

It remains to justify the transport equation. Let \(\varphi\in C_c^1([0,T)\times\mathcal Z)\). For fixed \((x,a,b)\), the map
\[
t\mapsto \varphi(t,x+\Theta_t(a,b),a,b)
\]
is absolutely continuous, and for a.e. \(t\),
\[
\frac{\dd}{\dd t}\varphi(t,x+\Theta_t(a,b),a,b)
=
\partial_t\varphi(t,x+\Theta_t(a,b),a,b)
+\dot\Theta_t(a,b)\partial_x\varphi(t,x+\Theta_t(a,b),a,b).
\]
Since \(\dot\Theta_t(a,b)=v_f(t,a,b)\) for a.e. \(t\), integrating in time and then integrating against \(\mu_0\) gives \eqref{eq:transport-weak-form}; the terminal term vanishes because \(\varphi\) is compactly supported in \([0,T)\). Thus \eqref{eq:transport-pde} holds in the distributional sense.

For $t_n\to t$, continuity of $\Theta$ and compact support of $\Phi$ imply
\[
\norm{f(t_n,\cdot)-f(t,\cdot)}_{L^1(\mathcal Z)}\longrightarrow0
\]
by dominated convergence. The static factor support is unchanged by the translation, so
\[
|m_k(t_n)-m_k(t)|
\le M\ell_\ast\norm{f(t_n,\cdot)-f(t,\cdot)}_{L^1(\mathcal Z)}\longrightarrow0.
\]
Thus $\Theta$ is $C^1$ in time, and \eqref{eq:density-explicit} gives
\[
\partial_t f(t,x,a,b)=-\dot\Theta_t(a,b)\partial_x f(t,x,a,b).
\]
Because \(v_f=\dot\Theta_t\) and \(v_f\) is independent of \(x\), this is equivalent to the classical equation \eqref{eq:transport-pde}.
\end{proof}

\begin{proof}[Proof of \Cref{thm:indicator}]\label{proof:thm:indicator}
Fix $T>0$ and set $\rho_\ast=M_\rho(T,MT)$. The feedback satisfies $|F_k|\le M$, so every solution remains in $\mathcal C_T=[-MT,MT]^K$. For $\beta,\widetilde\beta\in\mathcal C_T$,
\[
|\Psi_t(z,\beta)-\Psi_t(z,\widetilde\beta)|
\le M\norm{\beta-\widetilde\beta}_\infty=:\delta.
\]
An indicator mismatch is contained in $\{|\Psi_t(z,\beta)|\le\delta\}$. Hence
\[
|F_k(t,\beta)-F_k(t,\widetilde\beta)|
\le 2M\rho_\ast\delta
=2M^2\rho_\ast\norm{\beta-\widetilde\beta}_\infty.
\]
The field is measurable in time and Lipschitz on the cube. Compose it with coordinatewise projection onto $\mathcal C_T$ to obtain a globally Lipschitz bounded field. Its unique solution from zero satisfies $|\beta_k(t)|\le Mt$ and therefore solves the original equation on $[0,T]$. Since $T$ is arbitrary, the solution is global.
\end{proof}

\begin{proof}[Proof of \Cref{lem:indicator-vc}]\label{proof:lem:indicator-vc}
The thresholds are affine halfspaces in $\R^{2K+1}$, whose VC dimension is at most $d=2K+2$. For a fixed coordinate $k$ and halfspace $H$, the subgraph of $b_k\mathbf1_H$ is
\[
\{(z,y):z\in H,\ y<b_k\}
\ \cup\ \{(z,y):z\notin H,\ y<0\}.
\]
This finite Boolean combination of halfspaces, followed by the finite union over $k$, gives the asserted VC-subgraph property.

For measurability, let $\mathcal H_0$ be the countable class of affine halfspaces with rational coefficients, and set
\[
\mathcal G_0:=\{b_k\mathbf1_H:H\in\mathcal H_0,\ 1\le k\le K\},
\qquad
Z_{N,K}:=\sup_{g\in\mathcal G_0}|(\mathbb P_N-\mathbb P)g|.
\]
Every closed affine halfspace is the pointwise limit of halfspaces in $\mathcal H_0$: approximate its coefficients with error $o(n^{-1})$ and move its constant term outward by $n^{-1}$. This also includes boundary points. Bounded convergence then gives the claimed domination by $Z_{N,K}$ on the full-measure factor-bounded set.

On $N$ points, the number of distinct vectors $(g(Z_i))_{i\le N}$ is at most $K(N+1)^d$, by the VC growth bound. Symmetrization and the Rademacher maximal inequality therefore give
\[
\E Z_{N,K}
\le 2M\sqrt{\frac{2\log(2K(N+1)^d)}{N}}
\le C(M)\sqrt{\frac{K\log N}{N}},\qquad N\ge2.
\]
These inequalities apply to the countable class $\mathcal G_0$; see \cite[Secs.~2.2, 2.3, and~2.6]{vdvw1996}.
\end{proof}

\begin{proof}[Proof of \Cref{thm:indicator-finiteN}]\label{proof:thm:indicator-finiteN}
Let $\beta^N$ be a measurable selection satisfying the hypotheses of the theorem. The estimates below are pathwise and uniform over all such selections.

Let $\alpha=(\bar a_1,\dots,\bar a_K)$ and $\gamma=(\bar b_1,\dots,\bar b_K)$ denote the population means, and similarly $\alpha^N=(\bar a_1^N,\dots,\bar a_K^N)$ and $\gamma^N=(\bar b_1^N,\dots,\bar b_K^N)$ for the sample. For any $\alpha',\gamma'\in[-M,M]^K$ define
\[
F_k^{N,\alpha',\gamma'}(t,\beta):=\int g_{t,\beta,\alpha',\gamma',k}(z)\,\mathbb P_N(\dd z),
\qquad
F_k^{\alpha',\gamma'}(t,\beta):=\int g_{t,\beta,\alpha',\gamma',k}(z)\,\mathbb P(\dd z).
\]
Then $F_k^N=F_k^{N,\alpha^N,\gamma^N}$ and the deterministic feedback is $F_k=F_k^{\alpha,\gamma}$. Therefore
\begin{align}
\label{eq:indicator-F-split-revised}
\sup_{t,\beta,k}|F_k^N(t,\beta)-F_k(t,\beta)|
&\le \sup_{t,\beta,k,\alpha',\gamma'} |F_k^{N,\alpha',\gamma'}(t,\beta)-F_k^{\alpha',\gamma'}(t,\beta)| \\
&\quad + \sup_{t,\beta,k}|F_k^{\alpha^N,\gamma^N}(t,\beta)-F_k^{\alpha,\gamma}(t,\beta)|.
\end{align}
By \cref{lem:indicator-vc}, the first term is bounded by the measurable variable $Z_{N,K}$, with
\begin{equation}
\label{eq:indicator-empirical-process-revised}
\E Z_{N,K}\le C\sqrt{\frac{K\log N}{N}}.
\end{equation}
It remains to control the second term in \eqref{eq:indicator-F-split-revised}, a population-measure comparison to which the bounded-density hypothesis applies. Set
\[
\eta_N:=\max_{1\le j\le K}\max\bigl\{|\bar a_j^N-\bar a_j|,\,|\bar b_j^N-\bar b_j|\bigr\}.
\]
For every $z=(x,a,b)$,
\begin{align*}
\abs{\Psi_t^{\alpha^N,\gamma^N}(z,\beta)-\Psi_t^{\alpha,\gamma}(z,\beta)}
&= \frac{|r|t}{K}\left|\sum_{j=1}^K \bigl(a_j(\bar b_j^N-\bar b_j)-b_j(\bar a_j^N-\bar a_j)\bigr)\right| \\
&\le 2|r|TM\,\eta_N=:\delta_N.
\end{align*}
By \cref{rem:density-sufficient}, the scalar threshold projection associated with any fixed $\alpha',\gamma'\in[-M,M]^K$ and $\beta\in\mathcal C_T$ has density bounded by $\bar M$. An indicator mismatch lies in the $\delta_N$-tube around the population threshold. Therefore
\[
\sup_{t,\beta,k}|F_k^{\alpha^N,\gamma^N}(t,\beta)-F_k^{\alpha,\gamma}(t,\beta)|
\le 2M\bar M\,\delta_N
\le 4|r|TM^2\bar M\,\eta_N.
\]
Since $|a_j|,|b_j|\le M$, Hoeffding's inequality and a union bound yield
\begin{equation}
\label{eq:indicator-mean-concentration-revised}
\E\eta_N \le C\sqrt{\frac{\log(2K)}{N}} \le C\sqrt{\frac{K\log N}{N}},
\end{equation}
for a constant $C=C(M)$ and all $N\ge2$. Set $A_N:=Z_{N,K}+4|r|TM^2\bar M\eta_N$. Then
\begin{equation}
\label{eq:indicator-uniform-F-bound-revised}
\sup_{t,\beta,k}|F_k^N(t,\beta)-F_k(t,\beta)|\le A_N,
\qquad
\E A_N\le C_T\sqrt{\frac{K\log N}{N}}.
\end{equation}

Now let $\Delta_T^N:=\sup_{0\le t\le T}\norm{\beta^N(t)-\beta(t)}_\infty$. For each $t\in[0,T]$,
\begin{align*}
\norm{\beta^N(t)-\beta(t)}_\infty
&\le \int_0^t \norm{F^N(s,\beta^N(s))-F(s,\beta(s))}_\infty\,\dd s \\
&\le \int_0^t \sup_{\beta\in\mathcal C_T}\norm{F^N(s,\beta)-F(s,\beta)}_\infty\,\dd s
 + \int_0^t \norm{F(s,\beta^N(s))-F(s,\beta(s))}_\infty\,\dd s.
\end{align*}
By the proof of \cref{thm:indicator}, the deterministic indicator feedback map is Lipschitz on $\mathcal C_T$ with constant $L_T:=2M^2\bar M$. Hence
\[
\Delta_T^N \le T A_N+ L_T\int_0^T \Delta_s^N\,\dd s.
\]
Taking expectations, using \eqref{eq:indicator-uniform-F-bound-revised}, and applying Gronwall's lemma yields
\[
\E\Delta_T^N \le C_T\sqrt{\frac{K\log N}{N}}.
\]
This proves the feedback-path estimate.

For the state-law bound, first observe that the map $\Xi_t$ is Lipschitz in $z$ with respect to the $\ell^1$ metric on $\mathcal Z$, with a constant depending only on $T$, $M$, and $|r|$; indeed,
\[
|\Xi_t(z)-\Xi_t(\tilde z)| \le C_T |z-\tilde z|_1.
\]
Therefore
\begin{equation}
\label{eq:indicator-pushforward-w1-revised}
\sup_{0\le t\le T} \Wone\bigl((\Xi_t)_\#\mu_0^N,(\Xi_t)_\#\mu_0\bigr)
\le C_T\Wone(\mu_0^N,\mu_0).
\end{equation}
Next, since $\Xi_t^N$ and $\Xi_t$ are evaluated on the same sampled atoms,
\begin{align*}
\Wone\bigl((\Xi_t^N)_\#\mu_0^N,(\Xi_t)_\#\mu_0^N\bigr)
&\le \frac1N\sum_{i=1}^N \abs{X_t^N(Z_i)-X_t(Z_i)} \\
&\le t\sup_{1\le i\le N}\abs{\Lambda^N(Z_i)-\Lambda(Z_i)}
+M\norm{\beta^N(t)-\beta(t)}_\infty \\
&\le 2|r|TM\eta_N+M\norm{\beta^N(t)-\beta(t)}_\infty.
\end{align*}
The expectation of the additional imbalance term is controlled by \eqref{eq:indicator-mean-concentration-revised} and is absorbed by the same $\sqrt{K\log N/N}$ rate. Taking the supremum over $t\in[0,T]$, then expectations, and combining with \eqref{eq:indicator-pushforward-w1-revised} proves the final bound.
\end{proof}

\subsection{Proofs for Section~\ref{sec:graphon}}
\label{app:proofs-sec4}

\begin{proof}[Proof of \Cref{thm:graphon-wellposed}]\label{proof:thm:graphon-wellposed}
Fix $T>0$. Set
\[
X_t^{(0)}(u):=x_0(u)+\mu t+r tR_W(u)
\]
and define inductively
\[
X_t^{(n+1)}(u):=x_0(u)+\mu t+r tR_W(u)-\int_0^t (\Gamma_W X_s^{(n)})(u)\,\dd s.
\]
Each Picard iterate is continuous as an $L^\infty(I)$-valued path and hence strongly measurable; the time integral below is therefore a Bochner integral in $L^\infty(I)$. Since
\[
\abs{R_W(u)}\le 2\norm{W}_{L^\infty},
\qquad
\abs{(\Gamma_W\varphi)(u)}
\le \int_I \abs{W(u,v)}\,\abs{\ell(\varphi(v))}\,\dd v
\le \norm{W}_{L^\infty}\ell_\ast,
\]
each iterate belongs to \(C([0,T];L^\infty(I))\) and is uniformly bounded there. Moreover,
\begin{align*}
\norm{\Gamma_W\varphi-\Gamma_W\psi}_{L^\infty}
&\le \sup_{u\in I}\int_I \abs{W(u,v)}\,\abs{\ell(\varphi(v))-\ell(\psi(v))}\,\dd v \\
&\le \norm{W}_{L^\infty}L_\ell\norm{\varphi-\psi}_{L^1}
\le \norm{W}_{L^\infty}L_\ell\norm{\varphi-\psi}_{L^\infty}.
\end{align*}
Hence, with
\[
\Delta_t^{(n)}:=\sup_{0\le s\le t}\norm{X_s^{(n+1)}-X_s^{(n)}}_{L^\infty},
\]
we obtain
\[
\Delta_t^{(n)}\le \norm{W}_{L^\infty}L_\ell\int_0^t \Delta_s^{(n-1)}\,\dd s.
\]
A standard factorial estimate shows that $(X^{(n)})_n$ is Cauchy in $C([0,T];L^\infty(I))$. The limit therefore belongs to $C([0,T];L^\infty(I))$ and solves \eqref{eq:graphon-equation}. Uniqueness in $C([0,T];L^\infty(I))$ follows from the same estimate applied to the difference of two solutions.

For stability, write
\begin{align*}
X_t^1(u)-X_t^2(u)
&=x_0^1(u)-x_0^2(u)+rt\bigl(R_{W_1}(u)-R_{W_2}(u)\bigr) \\
&\qquad -\int_0^t \bigl((\Gamma_{W_1}X_s^1)(u)-(\Gamma_{W_2}X_s^2)(u)\bigr)\,\dd s.
\end{align*}
Using
\[
\norm{R_{W_1}-R_{W_2}}_{L^1}\le 2\norm{W_1-W_2}_{L^1(I^2)}
\]
and
\[
\norm{\Gamma_{W_1}\varphi-\Gamma_{W_2}\psi}_{L^1}
\le \ell_\ast\norm{W_1-W_2}_{L^1(I^2)} + M_WL_\ell\norm{\varphi-\psi}_{L^1},
\]
we find
\[
\norm{X_t^1-X_t^2}_{L^1}
\le \norm{x_0^1-x_0^2}_{L^1} + (2|r|T+T\ell_\ast)\norm{W_1-W_2}_{L^1(I^2)}
+ M_WL_\ell\int_0^t \norm{X_s^1-X_s^2}_{L^1}\,\dd s.
\]
Gronwall's lemma yields \eqref{eq:graphon-stability}.
\end{proof}

\begin{proof}[Proof of \Cref{thm:bridge}]\label{proof:thm:bridge}
Because the rank-$K$ graphon equation with kernel $W^{(K)}$ is exactly the continuum version of the rank-$K$ factor model, \cref{thm:finite-rank-w1} applied to the pair $(\mu_0^{N,K},\mu_0^{(K)})$ gives
\[
\sup_{0\le t\le T}\Wone(\nu_t^{N,K},\nu_t^K)\le C_T\,\Wone(\mu_0^{N,K},\mu_0^{(K)}),
\]
after projecting the joint law of $(X_t,a,b)$ onto the state coordinate. Next, couple $\nu_t^K$ and $\nu_t$ by the common random variable $U$. Then
\[
\Wone(\nu_t^K,\nu_t)\le \E\abs{X_t^K(U)-X_t(U)}=\norm{X_t^K-X_t}_{L^1(I)},
\]
and \cref{thm:graphon-wellposed} yields
\[
\sup_{0\le t\le T}\Wone(\nu_t^K,\nu_t)
\le C_T\Bigl(\norm{W^{(K)}-W}_{L^1(I^2)}+\norm{x_0^{(K)}-x_0}_{L^1(I)}\Bigr).
\]
The claim follows from the triangle inequality.
\end{proof}

\begin{proof}[Proof of \Cref{prop:graphon-indicator-smoothing}]\label{proof:prop:graphon-indicator-smoothing}
For each $K$, the graphon model with kernel $W^{(K)}$ is exactly the continuum version of the rank-$K$ factor model generated by $\mu_0^{(K)}$. Let $\beta^{K,\varepsilon}$ and $\beta^{K,\mathrm{ind}}$ denote the corresponding feedback coordinates for the smoothed and indicator losses. Writing $F_k^{K,\varepsilon}$ and $F_k^{K,\mathrm{ind}}$ for the associated feedback right-hand sides, observe that for every \(y\in\R\),
\[
\abs{\ell_\varepsilon(y)-\mathbf 1_{\{y\le0\}}}
\le \mathbf 1_{\{0<y<\varepsilon\}}.
\]
Together with the uniform density hypothesis, this implies
\[
\sup_{\beta\in[-MT,MT]^K}\abs{F_k^{K,\varepsilon}(t,\beta)-F_k^{K,\mathrm{ind}}(t,\beta)}
\le M M_\rho(T,MT)\,\varepsilon,
\qquad 1\le k\le K.
\]
By \cref{thm:indicator}, the indicator field is Lipschitz on $[-MT,MT]^K$ with constant $2M^2M_\rho(T,MT)$, uniformly in $K$. Therefore
\[
\norm{\beta^{K,\varepsilon}(t)-\beta^{K,\mathrm{ind}}(t)}_\infty
\le M M_\rho(T,MT)\,t\,\varepsilon + 2M^2M_\rho(T,MT)\int_0^t \norm{\beta^{K,\varepsilon}(s)-\beta^{K,\mathrm{ind}}(s)}_\infty\,\dd s,
\]
so Gronwall's lemma yields
\[
\sup_{0\le t\le T}\norm{\beta^{K,\varepsilon}(t)-\beta^{K,\mathrm{ind}}(t)}_\infty\le C_T\varepsilon
\]
with $C_T$ independent of $K$. Returning to the state map and using $\max_k\norm{a_k^{(K)}}_{L^\infty}\le M$, we obtain
\begin{align*}
\sup_{0\le t\le T}\norm{X_t^{K,\varepsilon}-X_t^{K,\mathrm{ind}}}_{L^1(I)}
&\le \sup_{0\le t\le T}\norm{X_t^{K,\varepsilon}-X_t^{K,\mathrm{ind}}}_{L^\infty(I)} \\
&\le M\sup_{0\le t\le T}\norm{\beta^{K,\varepsilon}(t)-\beta^{K,\mathrm{ind}}(t)}_\infty \\
&\le C_T\varepsilon.
\end{align*}
The Wasserstein bound follows by coupling both laws through the same $U\sim\mathrm{Unif}(I)$.
\end{proof}

\begin{proof}[Proof of \Cref{prop:block-L1-approx}]\label{proof:prop:block-L1-approx}
The bound $\norm{\Pi_KW}_{L^\infty}\le \norm{W}_{L^\infty}$ is immediate from Jensen's inequality on each block average. Since $\Pi_KW$ is constant on each rectangle $I_i^{(K)}\times I_j^{(K)}$, the associated integral operator maps $L^2(I)$ into the $K$-dimensional space of functions that are constant on the partition $\{I_j^{(K)}\}_{j=1}^K$; hence its rank is at most $K$. 

To prove $L^1$ convergence, fix $\varepsilon>0$ and choose $g\in C(I^2)$ such that $\norm{W-g}_{L^1(I^2)}<\varepsilon$. The block-averaging operator is an $L^1$ contraction, so
\[
\norm{\Pi_K(W-g)}_{L^1(I^2)}\le \norm{W-g}_{L^1(I^2)}<\varepsilon.
\]
Uniform continuity of $g$ on the compact square implies $\norm{\Pi_Kg-g}_{L^1(I^2)}\to0$. Consequently,
\[
\norm{\Pi_KW-W}_{L^1(I^2)}
\le 2\varepsilon+\norm{\Pi_Kg-g}_{L^1(I^2)},
\]
and the claim follows by first taking $K\to\infty$ and then $\varepsilon\downarrow0$.

If $W$ is piecewise $C^1$ on finitely many rectangles with bounded first derivatives, then the cells entirely contained in one smooth rectangle contribute $O(K^{-1})$ by the mean-value theorem, while the cells intersecting the finitely many rectangle boundaries occupy total area $O(K^{-1})$ and contribute at most $2\norm{W}_{L^\infty}$ there. Summing these two contributions yields $\norm{\Pi_KW-W}_{L^1(I^2)}=O(K^{-1})$.
\end{proof}

\begin{proof}[Proof of \Cref{thm:factorized-indicator}]\label{proof:thm:factorized-indicator}
If $B_\ast=0$, then $F_t\equiv0$ and $C\equiv0$. Otherwise set
\[
A_\ast:=\norm{a}_{L^\infty},\qquad \nu_\ast:=\nu(\Theta),\qquad
\rho_\ast:=M_\rho^\Theta(T,TB_\ast),\qquad
D_\ast:=|\mu|+2|r|\norm{W}_{L^\infty}.
\]
For $c,\widetilde c$ bounded by $TB_\ast$ and $s,t\in[0,T]$,
\[
|\Psi_t(u,c)-\Psi_s(u,\widetilde c)|
\le A_\ast\norm{c-\widetilde c}_{L^1(\Theta,\nu)}+D_\ast|t-s|=:\delta
\quad\text{for a.e. }u.
\]
An indicator mismatch lies in $\{|\Psi_t(u,c)|\le\delta\}$, whose measure is at most $2\rho_\ast\delta$. Consequently
\begin{equation}
\label{eq:factorized-feedback-joint-continuity}
\norm{F_t(c)-F_s(\widetilde c)}_{L^1(\Theta,\nu)}
\le 2\nu_\ast B_\ast\rho_\ast
\bigl(A_\ast\norm{c-\widetilde c}_{L^1(\Theta,\nu)}+D_\ast|t-s|\bigr).
\end{equation}
In particular, $F_t(c)$ is jointly norm-continuous on this ball.

Work on the closed subset
\[
\mathcal B_T^\Theta:=
\{C\in C([0,T];L^1(\Theta,\nu)):|C_t|\le tB_\ast\ \nu\text{-a.e. for every }t\}.
\]
Closedness follows because each order interval $\{|c|\le tB_\ast\}$ is closed in $L^1$. For $C\in\mathcal B_T^\Theta$, \eqref{eq:factorized-feedback-joint-continuity} makes $s\mapsto F_s(C_s)$ continuous in $L^1$. Thus
\[
(\mathcal TC)_t:=\int_0^t F_s(C_s)\,\dd s
\]
is a Bochner integral, and $|F_s(C_s)|\le B_\ast$ shows that $\mathcal T$ preserves $\mathcal B_T^\Theta$. With $L=2\nu_\ast B_\ast\rho_\ast A_\ast$, successive Picard differences satisfy
\[
\sup_{s\le t}\norm{C_s^{n+1}-C_s^n}_{L^1}
\le L\int_0^t\sup_{v\le s}\norm{C_v^n-C_v^{n-1}}_{L^1}\,\dd s.
\]
The resulting factorial bound gives a fixed point, and the same estimate with Gronwall's lemma proves uniqueness.

The map $c\mapsto\int_\Theta a(\cdot,\theta)c(\theta)\nu(\dd\theta)$ has norm at most $A_\ast$ from $L^1(\Theta,\nu)$ to $L^\infty(I)$. Therefore \eqref{eq:factorized-graphon} defines $X\in C([0,T];L^\infty(I))$, and Fubini's theorem gives the indicator graphon equation. Conversely, any graphon solution defines $C$ by \eqref{eq:Ctheta}; this path belongs to $\mathcal B_T^\Theta$ and solves the same feedback equation. Uniqueness of $C$ then gives uniqueness of $X$.
\end{proof}

\begin{proof}[Proof of \Cref{thm:graphon-indicator-general}]\label{proof:thm:graphon-indicator-general}
Set $\rho_\ast=M_\rho^W(T,T)$, $M_W=\norm W_{L^\infty}$, and $D_\ast=|\mu|+2|r|M_W$. Define
\[
G_t(h)(u):=\mathbf1_{\{\Psi_t(u,h)\le0\}}.
\]
For $h,\widetilde h$ bounded by $T$,
\[
|\Psi_t(u,h)-\Psi_s(u,\widetilde h)|
\le M_W\norm{h-\widetilde h}_{L^1}+D_\ast|t-s|=:\delta
\quad\text{for a.e. }u.
\]
An indicator mismatch lies in $\{|\Psi_t(u,h)|\le\delta\}$. The density assumption gives
\[
\norm{G_t(h)-G_s(\widetilde h)}_{L^1}
\le2\rho_\ast\bigl(M_W\norm{h-\widetilde h}_{L^1}+D_\ast|t-s|\bigr).
\]
Thus $G$ is jointly norm-continuous and Lipschitz in $h$ on the admissible ball.

Let
\[
\mathcal B_T:=\{H\in C([0,T];L^1(I)):0\le H_t\le t\text{ a.e. for every }t\}.
\]
This is closed in $C([0,T];L^1(I))$. For $H\in\mathcal B_T$, the path $s\mapsto G_s(H_s)$ is continuous in $L^1$, so
\[
(\mathcal TH)_t:=\int_0^t G_s(H_s)\,\dd s
\]
is a Bochner integral. Since $0\le G_s(H_s)\le1$, the map preserves $\mathcal B_T$ and its images are $1$-Lipschitz in time. Picard iteration from zero satisfies
\[
\Delta_n(t)\le2\rho_\ast M_W\int_0^t\Delta_{n-1}(s)\,\dd s,
\qquad
\Delta_n(t):=\sup_{s\le t}\norm{H_s^{n+1}-H_s^n}_{L^1}.
\]
The factorial estimate gives a fixed point in $\mathcal B_T$, and Gronwall's lemma gives uniqueness.

Define $X$ by \eqref{eq:X-from-H-general}. Boundedness of $W$ makes $h\mapsto Wh$ continuous from $L^1(I)$ to $L^\infty(I)$, hence $X\in C([0,T];L^\infty(I))$. Choose a jointly measurable representative of $X$ and set
\[
\widehat H_t(u):=\int_0^t\mathbf1_{\{X_s(u)\le0\}}\,\dd s.
\]
The fixed-point identity shows that $\widehat H_t=H_t$ in $L^1$ for every $t$. This representative satisfies $0\le\widehat H_t(u)\le t$ on a common full-measure set. Fubini's theorem now gives \eqref{eq:graphon-equation}. Conversely, every indicator graphon solution produces a fixed point through \eqref{eq:Ht-definition}, proving uniqueness of the state profile.
\end{proof}

\begin{proof}[Proof of \Cref{cor:graphon-indicator-initial-stability}]\label{proof:cor:graphon-indicator-initial-stability}
Write $\eta=\norm{x_0^1-x_0^2}_{L^\infty}$, $M_W=\norm W_{L^\infty}$, and $\rho_\ast=M_\rho^W(T,T)$. Along the two solutions,
\[
|X_t^1(u)-X_t^2(u)|\le\eta+M_W\norm{H_t^1-H_t^2}_{L^1}
\quad\text{for a.e. }u.
\]
A mismatch of the indicators therefore lies in the tube around $X_t^1$ of this width. Its measure is at most $2\rho_\ast(\eta+M_W\norm{H_t^1-H_t^2}_{L^1})$, giving
\[
\norm{H_t^1-H_t^2}_{L^1}
\le2\rho_\ast\eta t+2\rho_\ast M_W\int_0^t\norm{H_s^1-H_s^2}_{L^1}\,\dd s.
\]
Gronwall's lemma yields the cumulative-profile bound. The displayed pointwise estimate then gives the state bound.
\end{proof}

\begin{proof}[Proof of \Cref{prop:graphon-indicator-transverse}]\label{proof:prop:graphon-indicator-transverse}
Fix $t\in[0,T]$ and $h\in L^\infty(I)$ with $\norm{h}_{L^\infty}\le C$. On a branch $I_j$, the $L^1$--$C^1$ assumptions imply
\[
\frac{\dd}{\dd u}\int_I W(u,v)h(v)\,\dd v
=\int_I D_1W(u,v)h(v)\,\dd v
\]
and
\[
\frac{\dd}{\dd u}R_W(u)
=\int_I D_1W(u,v)\,\dd v-
\int_I D_2W(v,u)\,\dd v.
\]
Thus $u\mapsto\Psi_t(u,h)$ is $C^1$ on $I_j$ and
\[
\partial_u\Psi_t(u,h)
=x_0'(u)+rt\,\partial_uR_W(u)-
\int_I D_1W(u,v)h(v)\,\dd v.
\]
The $L^1$ derivative bounds give
\begin{align*}
\abs{\partial_u\Psi_t(u,h)}
&\ge \abs{x_0'(u)}-|r|t(A_1+A_2)-C A_1\\
&\ge m_0-|r|T(A_1+A_2)-C A_1=m>0.
\end{align*}
Continuity and the strict lower bound imply that the derivative has constant sign on each branch, so $\Psi_t(\cdot,h)$ has at most $J$ monotone branches. The branchwise change-of-variables argument from \cref{prop:factorized-density-transverse} then yields the density bound $J/m$ for $\Psi_t(U,h)$, $U\sim\lambda$.
\end{proof}

\begin{proof}[Proof of \Cref{thm:uniformly-transverse-stability}]\label{proof:thm:uniformly-transverse-stability}
Fix $\alpha\in\mathcal A$ and define
\[
G_t^{\alpha}(h)(u):=\1_{\{\Psi_t^{\alpha}(u,h)\le 0\}}.
\]
If $h,\widetilde h\in L^\infty(I)$ satisfy $\norm{h}_{L^\infty},\norm{\widetilde h}_{L^\infty}\le T$, then
\[
\abs{\Psi_t^{\alpha}(u,h)-\Psi_t^{\alpha}(u,\widetilde h)}
\le \int_I \abs{W^{\alpha}(u,v)}\abs{h(v)-\widetilde h(v)}\,\dd v
\le M_W\norm{h-\widetilde h}_{L^1(I)}.
\]
Applying \cref{prop:level-set-control} to $f=\Psi_t^{\alpha}(\cdot,h)$ and $g=\Psi_t^{\alpha}(\cdot,\widetilde h)$ yields
\[
\norm{G_t^{\alpha}(h)-G_t^{\alpha}(\widetilde h)}_{L^1(I)}
\le \frac{2J M_W}{m}\norm{h-\widetilde h}_{L^1(I)}.
\]
The Picard iteration argument from the proof of \cref{thm:graphon-indicator-general} therefore applies verbatim with this Lipschitz constant, giving existence and uniqueness of $(H^{\alpha},X^{\alpha})$ for every $\alpha\in\mathcal A$.

Now fix $\alpha_1,\alpha_2\in\mathcal A$ and write
\[
\eta_{12}:=\norm{x_0^{\alpha_1}-x_0^{\alpha_2}}_{L^\infty(I)} + T(1+2|r|)\norm{W^{\alpha_1}-W^{\alpha_2}}_{L^\infty(I^2)}.
\]
Because
\[
\norm{R_{W^{\alpha_1}}-R_{W^{\alpha_2}}}_{L^\infty(I)}\le 2\norm{W^{\alpha_1}-W^{\alpha_2}}_{L^\infty(I^2)},
\]
and $0\le H_t^{\alpha_2}\le t\le T$ a.e. (hence $\norm{H_t^{\alpha_2}}_{L^1(I)}\le T$), we obtain for a.e. $u\in I$,
\begin{align*}
\abs{\Psi_t^{\alpha_1}(u,H_t^{\alpha_1})-\Psi_t^{\alpha_2}(u,H_t^{\alpha_2})}
&\le \norm{x_0^{\alpha_1}-x_0^{\alpha_2}}_{L^\infty(I)} + |r|t\norm{R_{W^{\alpha_1}}-R_{W^{\alpha_2}}}_{L^\infty(I)} \\
&\quad + \int_I \abs{W^{\alpha_1}(u,v)}\abs{H_t^{\alpha_1}(v)-H_t^{\alpha_2}(v)}\,\dd v \\
&\quad + \int_I \abs{W^{\alpha_1}(u,v)-W^{\alpha_2}(u,v)}\abs{H_t^{\alpha_2}(v)}\,\dd v \\
&\le \eta_{12}+M_W\norm{H_t^{\alpha_1}-H_t^{\alpha_2}}_{L^1(I)}.
\end{align*}
Applying \cref{prop:level-set-control} to the function $f=\Psi_t^{\alpha_1}(\cdot,H_t^{\alpha_1})$ and the perturbation $g=\Psi_t^{\alpha_2}(\cdot,H_t^{\alpha_2})$ gives
\[
\norm{G_t^{\alpha_1}(H_t^{\alpha_1})-G_t^{\alpha_2}(H_t^{\alpha_2})}_{L^1(I)}
\le \frac{2J}{m}\Bigl(\eta_{12}+M_W\norm{H_t^{\alpha_1}-H_t^{\alpha_2}}_{L^1(I)}\Bigr).
\]
Since
\[
H_t^{\alpha_i}(u)=\int_0^t G_s^{\alpha_i}(H_s^{\alpha_i})(u)\,\dd s,
\]
we obtain
\[
\norm{H_t^{\alpha_1}-H_t^{\alpha_2}}_{L^1(I)}
\le \frac{2JT}{m}\eta_{12} + \frac{2J M_W}{m}\int_0^t \norm{H_s^{\alpha_1}-H_s^{\alpha_2}}_{L^1(I)}\,\dd s.
\]
Gronwall's lemma proves \eqref{eq:two-dataset-H-stability}. Finally,
\begin{align*}
\norm{X_t^{\alpha_1}-X_t^{\alpha_2}}_{L^1(I)}
&\le \norm{x_0^{\alpha_1}-x_0^{\alpha_2}}_{L^1(I)} + |r|t\norm{R_{W^{\alpha_1}}-R_{W^{\alpha_2}}}_{L^1(I)} \\
&\quad + \norm{W^{\alpha_1}}_{L^\infty(I^2)}\norm{H_t^{\alpha_1}-H_t^{\alpha_2}}_{L^1(I)} + T\norm{W^{\alpha_1}-W^{\alpha_2}}_{L^\infty(I^2)} \\
&\le \eta_{12}+M_W\norm{H_t^{\alpha_1}-H_t^{\alpha_2}}_{L^1(I)},
\end{align*}
which gives \eqref{eq:two-dataset-X-stability}.
\end{proof}

\subsection{Elementary facts for the model derivation}
\label{app:derivation-proofs}

Here $e_{ij}\ge0$ are the bilateral exposures, $\theta\in(0,1]$ is the loss rate, $\mathbb D_0=\{i:x_i\le0\}$, and $X^{(k)},\mathbb D_k$ are defined by \eqref{eq:cascade}.

\begin{proposition}\label{prop:cascade-maxsol}
The default sets are nondecreasing, $\mathbb D_k\subseteq\mathbb D_{k+1}$ for all $k\ge0$, and the buffers satisfy $X^{(k+1)}\le X^{(k)}$ for $k\ge1$. The default sets are constant from round $N$ onward and the buffers from round $N+1$ onward. The terminal vector $X^{(N+1)}$ is the greatest solution of \eqref{eq:cascade-fixedpoint} in the coordinatewise order on $\R^N$.
\end{proposition}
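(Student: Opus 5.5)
The plan is an induction on the round index, followed by a standard greatest-fixed-point comparison. Write $c_i:=\sum_j(e_{ij}-e_{ji})$, so that for $i\notin\mathbb D_0$ we have $X_i^{(k+1)}=x_i+c_i-\theta\sum_j e_{ij}\1_{\mathbb D_k}(j)$, while $X_i^{(k)}=x_i$ for $i\in\mathbb D_0$ and every $k$. I will read the convention as $X^{(0)}:=x$, with $\mathbb D_0$ computed from it.

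\emph{Monotonicity.} First, $\mathbb D_0\subseteq\mathbb D_1$: for $i\in\mathbb D_0$ the buffer is frozen at $x_i\le0$, so $i\in\mathbb D_1$. Next, suppose $\mathbb D_{k-1}\subseteq\mathbb D_k$ for some $k\ge1$. Since $e_{ij}\ge0$ and $\theta>0$, the loss term $\theta\sum_j e_{ij}\1_{\mathbb D_k}(j)$ is at least the corresponding term with $\mathbb D_{k-1}$. Hence $X^{(k+1)}\le X^{(k)}$ coordinatewise; frozen coordinates are equal. It follows that $\mathbb D_{k+1}=\{X^{(k+1)}\le0\}\supseteq\{X^{(k)}\le0\}=\mathbb D_k$. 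This gives both monotonicity claims, with the buffer decrease holding for $k\ge1$. The case $k=0$ is excluded because $X^{(1)}-x=c-\theta(\cdots)$ need not be nonpositive.

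\emph{Stabilization.} The update is deterministic and depends only on $\mathbb D_k$. Therefore, if $\mathbb D_{k+1}=\mathbb D_k$, then $X^{(k+2)}=X^{(k+1)}$, and all later sets and buffers are constant. Before stabilization each step adds at least one index to a subset of $\{1,\dots,N\}$ that already contains $\mathbb D_0$. So strict growth can happen at most $N$ times, and by a pigeonhole count of strict inclusions we get $\mathbb D_N=\mathbb D_{N+1}=\cdots$; the degenerate case $\mathbb D_0=\emptyset$ gives no smaller bound. The buffers are then constant from $X^{(N+1)}$ onward.

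\emph{Fixed point.} Put $X^*:=X^{(N+1)}$ and $\mathbb D^*:=\mathbb D_N=\mathbb D_{N+1}=\{X^*\le0\}$. For $i$ with $x_i>0$, the formula gives $X_i^*=x_i+c_i-\theta\sum_j e_{ij}\1_{\{X_j^*\le0\}}$. For $x_i\le0$, $X_i^*=x_i$. These are exactly the two cases of \eqref{eq:cascade-fixedpoint}, so $X^*$ is a solution.

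\emph{Maximality.} This is the main step. Let $Y$ be any solution and set $\mathbb E:=\{Y\le0\}$. I claim by induction that $\mathbb D_k\subseteq\mathbb E$ and $Y\le X^{(k+1)}$ for all $k\ge0$. For the base case, if $x_i\le0$ then $Y_i=x_i\le0$, so $\mathbb D_0\subseteq\mathbb E$. For the inductive step, assume $\mathbb D_k\subseteq\mathbb E$. For $x_i>0$,
\[
Y_i=x_i+c_i-\theta\sum_je_{ij}\1_{\mathbb E}(j)\le x_i+c_i-\theta\sum_je_{ij}\1_{\mathbb D_k}(j)=X_i^{(k+1)},
\]
and the two sides agree on frozen coordinates. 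Hence $\mathbb D_{k+1}=\{X^{(k+1)}\le0\}\subseteq\{Y\le0\}=\mathbb E$. Taking $k=N$ gives $Y\le X^{(N+1)}$.

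The only delicate points are bookkeeping. Institutions in $\mathbb D_0$ must be frozen rather than updated, which is why the first-round buffer comparison fails. Nonnegativity of $e_{ij}$ is essential: it makes the loss map antitone in the default set.
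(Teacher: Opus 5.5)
Your proof is correct and follows the paper's argument essentially step for step. You use induction on the round for monotonicity of the default sets and buffers (with the comparison starting at $k\ge1$), a count of strict inclusions for stabilization, direct verification of the fixed-point equation, and the inductive comparison $\mathbb D_k\subseteq\{Y\le0\}$, $Y\le X^{(k+1)}$ for maximality.
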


\begin{proof}
\emph{Step 1: monotonicity.} For $i\in\mathbb D_0$ we have $X_i^{(k)}=x_i\le0$ for every $k$, so $\mathbb D_0\subseteq\mathbb D_k$ for all $k$; in particular $\mathbb D_0\subseteq\mathbb D_1$. Assume $\mathbb D_{k-1}\subseteq\mathbb D_k$ for some $k\ge1$. For $i\notin\mathbb D_0$, subtracting the two updates in \eqref{eq:cascade} gives
\[
X_i^{(k+1)}-X_i^{(k)}
=-\theta\sum_{j=1}^N e_{ij}\bigl(\1_{\mathbb D_k}(j)-\1_{\mathbb D_{k-1}}(j)\bigr)\le0,
\]
since $e_{ij}\ge0$ and $\1_{\mathbb D_k}\ge\1_{\mathbb D_{k-1}}$ pointwise; for $i\in\mathbb D_0$ both iterates equal $x_i$. Hence $X^{(k+1)}\le X^{(k)}$, and therefore
$\mathbb D_{k+1}=\{i:X_i^{(k+1)}\le0\}\supseteq\{i:X_i^{(k)}\le0\}=\mathbb D_k$, closing the induction.

\emph{Step 2: stabilization.} If $\mathbb D_{k+1}=\mathbb D_k$ for some $k$, then \eqref{eq:cascade} returns the same vector at every later round, so both sequences are constant from round $k+1$ on. There are at most $N$ strict inclusions in $\mathbb D_0\subseteq\mathbb D_1\subseteq\cdots$. Thus $\mathbb D_N=\mathbb D_{N+1}$, and one further update gives the terminal buffer vector $X^{(N+1)}$.

\emph{Step 3: $X^{(N+1)}$ solves \eqref{eq:cascade-fixedpoint}.} By the definition of \eqref{eq:cascade}, the update is applied exactly to the coordinates with $x_i>0$, so for every $k\ge0$,
\[
X_i^{(k+1)}=x_i+\1_{\{x_i>0\}}\Bigl[\sum_{j=1}^N\bigl(e_{ij}-e_{ji}\bigr)-\theta\sum_{j=1}^N e_{ij}\,\1_{\mathbb D_k}(j)\Bigr],
\qquad 1\le i\le N.
\]
By Step 2, $\mathbb D_N=\mathbb D_{N+1}=\{j:X_j^{(N+1)}\le0\}$, so taking $k=N$ above shows that $X^{(N+1)}$ satisfies \eqref{eq:cascade-fixedpoint}.

\emph{Step 4: greatest solution.} Let $X$ be any solution of \eqref{eq:cascade-fixedpoint}. For $j\in\mathbb D_0$, the fixed point forces $X_j=x_j\le0$, so $\1_{\{X_j\le0\}}\ge\1_{\mathbb D_0}(j)$ for all $j$. For $x_i>0$ this yields
\[
X_i=x_i+\Bigl[\sum_{j}\bigl(e_{ij}-e_{ji}\bigr)-\theta\sum_{j}e_{ij}\,\1_{\{X_j\le0\}}\Bigr]
\le x_i+\Bigl[\sum_{j}\bigl(e_{ij}-e_{ji}\bigr)-\theta\sum_{j}e_{ij}\,\1_{\mathbb D_0}(j)\Bigr]
=X_i^{(1)},
\]
while $X_i=x_i=X_i^{(1)}$ for $x_i\le0$; hence $X\le X^{(1)}$ coordinatewise and therefore $\1_{\{X_j\le0\}}\ge\1_{\mathbb D_1}(j)$. Iterating the same comparison gives $X\le X^{(k+1)}$ and $\1_{\{X\le0\}}\ge\1_{\mathbb D_{k+1}}$ for every $k$, so in particular $X\le X^{(N+1)}$. Thus $X^{(N+1)}$ dominates every solution and is the greatest fixed point.
\end{proof}

\begin{proposition}\label{prop:mechanism-wellposed}
The stopped system \eqref{eq:mechanism-ode} admits a unique solution.
\end{proposition}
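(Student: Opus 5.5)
The plan is an event-driven construction that advances from one threshold crossing to the next. On each interval between crossings the system \eqref{eq:mechanism-ode} is affine with constant velocities, so it can be solved explicitly. Order the institutions by their stopping times. At time $0$, let $S_0:=\{i:x_i\le0\}$. For these institutions $\tau_i=0$, so their states stay frozen at $x_i$. The active set is $A_0:=\{i:x_i>0\}$, and the frozen set $S_0$ is exactly the set of $j$ with $\1_{\{X^j\le0\}}=1$, since frozen states never move. Given a current time $t_k$, a frozen set $S_k$ and active states $X_{t_k}^i>0$ for $i\in A_k$, each active institution moves with the constant velocity
\[
v_i^{(k)}:=\mu_i+r\sum_j(e_{ij}-e_{ji})-\theta r\sum_{j\in S_k}e_{ij}.
\]
This holds as long as no active institution hits zero. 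Define $t_{k+1}:=\min_{i\in A_k:v_i^{(k)}<0}\bigl(t_k+X_{t_k}^i/|v_i^{(k)}|\bigr)\wedge T$. Let $S_{k+1}$ be $S_k$ together with every active index whose state reaches $0$ at $t_{k+1}$, so several institutions may hit simultaneously. Each step freezes at least one new institution or reaches $T$, so there are at most $N$ steps. Concatenating the pieces gives a continuous, piecewise-affine path that satisfies \eqref{eq:mechanism-ode}. Here $\tau_i=t_{k+1}$ exactly for the newly frozen indices, and active states stay strictly positive on $[t_k,t_{k+1})$.

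For uniqueness, take any solution $Y$ with stopping times $\sigma_i$, and show by induction on $k$ that $Y=X$ on $[0,t_k]$ and $\{i:\sigma_i\le t_k\}=S_k$. The base case holds because $\sigma_i=0$ exactly when $x_i\le0$. For the inductive step, let $s:=\min_{i\in A_k}\sigma_i\wedge t_{k+1}$. On $[t_k,s)$, every $i\in A_k$ has $Y^i>0$, because $\sigma_i$ is the first hitting time and $Y^i$ is continuous. Every $j\in S_k$ has $Y^j$ frozen at a value $\le0$, namely $x_j\le0$ or exactly $0$ at the hitting time. So the integrand of $Y^i$ equals $v_i^{(k)}$ almost everywhere, and $Y^i=X^i$ on $[t_k,s]$. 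If $s<t_{k+1}$, some $i\in A_k$ has $Y^i_s=X^i_s\le0$, which contradicts $X^i>0$ before $t_{k+1}$. Hence $s=t_{k+1}$. The hitting set at $t_{k+1}$ agrees because the paths coincide there. Finally, after $\tau_i$ the state $X^i$ is constant, so the indicator $\1_{\{X^j_s\le0\}}$ used by others is determined.

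The main subtlety I expect is the convention at the threshold. A frozen institution with $X^j=0$ must count as distressed, which the non-strict indicator guarantees. An institution with zero velocity sitting exactly at a positive level never hits, and $t_{k+1}$ handles this through the minimum over $v_i<0$. I also need the stopping time to agree with the first time the state reaches $0$, which requires continuity, and that follows from the bounded integrand. With these points settled, existence and uniqueness follow by finite induction.
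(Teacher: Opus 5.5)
Your proposal is correct and follows the paper's proof essentially step for step. Both build the solution event by event, with affine dynamics between successive threshold crossings and at most $N$ such crossings, and both prove uniqueness by induction: on each inter-crossing interval the indicators of any solution are forced to equal $\1_{S_k}$, so that solution coincides with the constructed one up to the next crossing. Your explicit velocity formula for $t_{k+1}$ and the contradiction step giving $\min_{i\in A_k}\sigma_i\ge t_{k+1}$ spell out details that the paper states more briefly.
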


\begin{proof}
\emph{Existence.} Set $\tau_{(0)}:=0$, $\overline{\mathbb D}_0:=\mathbb D_0$, and $X^{(0),i}_{\tau_{(0)}}:=x_i$. For $i\in\mathbb D_0$ the equation forces $\tau_i=0$ and $X_t^i\equiv x_i$. Given $\tau_{(m)}<T$ and the cumulative default set $\overline{\mathbb D}_m$, terminate the construction if $\overline{\mathbb D}_m=\{1,\ldots,N\}$. Otherwise, define, for $i\notin\overline{\mathbb D}_m$ and $t\ge\tau_{(m)}$, the affine functions
\[
X_t^{(m+1),i}
:=X^{(m),i}_{\tau_{(m)}}
+\int_{\tau_{(m)}}^t\Bigl[\mu_i+r\sum_{j=1}^N\bigl(e_{ij}-e_{ji}\bigr)-\theta r\sum_{j=1}^N e_{ij}\,\1_{\overline{\mathbb D}_m}(j)\Bigr]\dd s,
\]
and set
\[
\tau_{(m+1)}:=\inf\Bigl\{t\ge\tau_{(m)}:\min_{i\notin\overline{\mathbb D}_m}X_t^{(m+1),i}=0\Bigr\}\wedge T,
\qquad
\Delta\mathbb D_{m+1}:=\bigl\{i\notin\overline{\mathbb D}_m:X^{(m+1),i}_{\tau_{(m+1)}}=0\bigr\},
\]
and $\overline{\mathbb D}_{m+1}:=\overline{\mathbb D}_m\cup\Delta\mathbb D_{m+1}$. Since every $i\notin\overline{\mathbb D}_m$ satisfies $X^{(m),i}_{\tau_{(m)}}>0$ and the drift is bounded, $\tau_{(m+1)}>\tau_{(m)}$. Paste the pieces: for $i\in\mathbb D_0$, take $X^i\equiv x_i$ and $\tau_i=0$; for $i\in\Delta\mathbb D_m$ with $m\ge1$, take $X^i_t:=\sum_{l\le m}X_t^{(l),i}\1_{[\tau_{(l-1)},\tau_{(l)})}(t)$, frozen at $0$ from $\tau_{(m)}$ on, and $\tau_i:=\tau_{(m)}$; for institutions that never default, take the full concatenation up to $T$ and $\tau_i:=T$. On each interval $[\tau_{(m)},\tau_{(m+1)})$ the indicator processes satisfy $\1_{\{X_s^j\le0\}}=\1_{\overline{\mathbb D}_m}(j)$---solvent institutions are strictly positive there and defaulted ones are frozen at a nonpositive value---so the pasted process satisfies \eqref{eq:mechanism-ode}. Whenever $\tau_{(m+1)}<T$ the set $\Delta\mathbb D_{m+1}$ is nonempty, so $\overline{\mathbb D}_m$ grows strictly, and the construction reaches $T$ after at most $N$ steps.

\emph{Uniqueness.} Let $(X,\tau)$ be any solution of \eqref{eq:mechanism-ode}. For $i\in\mathbb D_0$ the equation forces $X^i\equiv x_i$ and $\tau_i=0$. If every institution belongs to $\mathbb D_0$, this already proves uniqueness. Otherwise, let $\tau^{(1)}:=\min_{i\notin\overline{\mathbb D}_0}\tau_i$. On $[0,\tau^{(1)})$ every $i\notin\overline{\mathbb D}_0$ satisfies $X_s^i>0$, so the indicators in \eqref{eq:mechanism-ode} equal $\1_{\overline{\mathbb D}_0}$, and each $X^i$ coincides with the affine function $X^{(1),i}$ above. Consequently $\tau^{(1)}=\tau_{(1)}$, and the set of institutions reaching zero at that time is exactly $\Delta\mathbb D_1$. If $\tau^{(1)}=T$, the solution coincides with the constructed one. Otherwise, repeat the argument on $[\tau_{(1)},\tau^{(2)})$ with $\overline{\mathbb D}_1$ in place of $\overline{\mathbb D}_0$; after at most $N$ iterations the solution is identified with the constructed one on all of $[0,T]$.
\end{proof}

\subsection{Indicator solutions and stability}
\label{app:indicator-completion}

\begin{proof}[Proof of \Cref{cor:indicator-canonical-sample}]
\label{proof:cor:indicator-canonical-sample}
Let
\[
G_N:=\left\{(Z_1,\ldots,Z_N):
\begin{array}{l}
|a_{i,k}^N|,|b_{i,k}^N|\le M\quad(1\le i\le N,\ 1\le k\le K),\\
e_{ij}^N\ge0\quad(1\le i,j\le N)
\end{array}\right\}.
\]
This is a Borel set of full probability. Fix a sample in $G_N$ and write
\[
q_i(t):=x_i^N+\mu t+\frac{rt}{N}\sum_{j=1}^N(e_{ij}^N-e_{ji}^N),
\qquad
X_i(t)=q_i(t)-\frac1N\sum_{j=1}^Ne_{ij}^NH_j(t).
\]
On the finite probability space define
\[
(\mathcal T_NH)_i(t):=\int_0^t
\1_{\{q_i(s)-N^{-1}\sum_je_{ij}^NH_j(s)\le0\}}\,\dd s.
\]
Entrywise nonnegativity makes $\mathcal T_N$ order preserving. Starting from $H_i^{(0)}(t)=t$ therefore gives a decreasing sequence $H^{(n+1)}=\mathcal T_NH^{(n)}$. Its limit is a fixed point by the monotone argument in the proof of \cref{thm:indicator-greatest} below and dominates every other fixed point.

Every finite iterate is jointly Borel in the sample and time and has continuous, $1$-Lipschitz paths. Evaluation at rational times therefore makes it a Borel $C([0,T];\R^N)$-valued map. The common Lipschitz bound upgrades the pointwise monotone limit to uniform convergence in time. Extending the iterates and limit by zero on $G_N^c$ gives a Borel path on the whole sample space. Its feedback $\beta^{N,+}$ is Borel and lies in $[-MT,MT]^K$ for every sample. On $G_N$, \cref{lem:reformulation} identifies it with a solution of the sampled rank-$K$ equation, so \cref{thm:indicator-finiteN} applies almost surely.
\end{proof}

\begin{proof}[Proof of \Cref{thm:indicator-greatest}]
\label{proof:thm:indicator-greatest}
If $H\le K$, nonnegativity of $W$ gives $WH\le WK$, and therefore
\[
\Psi_t(\cdot,H)\ge\Psi_t(\cdot,K),
\qquad
\mathcal T_WH\le\mathcal T_WK.
\]
Every image of $\mathcal T_W$ lies below the top profile $H_t^{(0)}=t$, so the upper iterates decrease. Define
\[
H_t^+(u):=\inf_{n\ge0}H_t^{(n)}(u).
\]
All iterates are jointly measurable and pointwise $1$-Lipschitz in time; these properties pass to $H^+$. Taking the countable intersection of the full-measure order sets at rational times and then using the common Lipschitz bound gives one full-measure set on which all time-indexed inequalities hold.

Since $H^{(n)}\downarrow H^+$ and $W\ge0$, monotone convergence gives
\[
WH^{(n)}\downarrow WH^+,
\qquad
\Psi(\cdot,H^{(n)})\uparrow\Psi(\cdot,H^+).
\]
For every scalar sequence $a_n\uparrow a$,
\[
\1_{\{a_n\le0\}}\downarrow\1_{\{a\le0\}}.
\]
This includes $a=0$, since then $a_n\le0$ for every $n$. Dominated convergence in time yields
\[
H^+=\lim_nH^{(n+1)}
=\lim_n\mathcal T_WH^{(n)}
=\mathcal T_WH^+.
\]
Thus $H^+$ satisfies the hard-indicator integral equation. If $H$ is any other fixed point, then $H\le H^{(0)}$ and order preservation gives $H\le H^{(n)}$ for every $n$, hence $H\le H^+$. Applying the nonnegative operator $W$ shows that the associated state for $H^+$ is the smallest.

Pointwise convergence and $0\le H^{(n)}-H^+\le T$ give $L^p$ convergence at every time. The functions $t\mapsto\norm{H_t^{(n)}-H_t^+}_{L^p}$ are uniformly equicontinuous, because both profiles are pointwise $1$-Lipschitz. A finite time net upgrades the convergence to $C([0,T];L^p)$. Finally,
\[
\sup_{t\le T}\norm{W(H_t^{(n)}-H_t^+)}_{L^\infty}
\le\norm W_{L^\infty}
\sup_{t\le T}\norm{H_t^{(n)}-H_t^+}_{L^1},
\]
which gives the asserted state regularity and convergence.
\end{proof}

\begin{proof}[Proof of \Cref{thm:indicator-positive-ramp}]
\label{proof:thm:indicator-positive-ramp}
For fixed $\varepsilon$, the ramp feedback is globally Lipschitz in $L^1$ and order preserving, so it has a unique fixed point $H^{\varepsilon,+}$. Starting the corresponding order iteration at the top profile gives a decreasing sequence converging to that fixed point. Since
\[
0<\varepsilon'<\varepsilon
\quad\Longrightarrow\quad
\ell_{\varepsilon'}(x)\le\ell_\varepsilon(x)
\quad\text{for all }x,
\]
comparison of the upper iterations yields $H^{\varepsilon',+}\le H^{\varepsilon,+}$.

Every hard solution $H$ is a subsolution of the regularized map because $\ell_\varepsilon\ge\1_{\{x\le0\}}$. Iterating the regularized map from $H$ and using uniqueness gives $H\le H^{\varepsilon,+}$; in particular, $H^+\le H^{\varepsilon,+}$. Fix a deterministic sequence $\varepsilon_m\downarrow0$ and let
\[
\bar H:=\lim_{m\to\infty}H^{\varepsilon_m,+},
\qquad
\bar X:=x_0+\mu t+rtR_W-W\bar H.
\]
Then $\bar H\ge H^+$ and the regularized states increase to $\bar X$. If $\bar X(t,u)\le0$, every regularized state at $(t,u)$ is nonpositive and the ramp equals one. If $\bar X(t,u)>0$, then $X^{\varepsilon_m,+}(t,u)>\varepsilon_m$ for all sufficiently large $m$, so the ramp equals zero. Therefore
\[
\ell_{\varepsilon_m}(X^{\varepsilon_m,+})
\longrightarrow\1_{\{\bar X\le0\}}
\quad\text{pointwise a.e.}
\]
Dominated convergence makes $\bar H$ a hard solution. Maximality gives $\bar H\le H^+$, hence equality. The norm and state convergences along $\varepsilon_m$ follow exactly as in the preceding proof. Finally, if $0<\varepsilon<\varepsilon_m$, monotonicity gives
\[
0\le H^{\varepsilon,+}-H^+\le H^{\varepsilon_m,+}-H^+.
\]
This proves convergence of the full family as $\varepsilon\downarrow0$.
\end{proof}

\begin{proof}[Proof of \Cref{prop:indicator-osgood}]
\label{proof:prop:indicator-osgood}
If $M_W=0$, the state is independent of the cumulative profile, so its indicator determines $H$ uniquely. Assume $M_W>0$. Let $(K,Y)$ be another solution and set $d_t:=\norm{H_t-K_t}_{L^1}$. The common data give
\[
\abs{X_t(u)-Y_t(u)}\le M_Wd_t.
\]
Whenever the two indicators differ, $\abs{X_t(u)}\le\abs{X_t(u)-Y_t(u)}$. Hence
\[
\norm{\1_{\{X_t\le0\}}-\1_{\{Y_t\le0\}}}_{L^1}
\le\omega(M_Wd_t),
\qquad
d_t\le\int_0^t\omega(M_Wd_s)\,\dd s.
\]
The Bihari--Osgood lemma and the stated integral condition give $d\equiv0$, and then $X=Y$.
\end{proof}

\begin{proof}[Proof of \Cref{prop:indicator-osgood-sharp}]
\label{proof:prop:indicator-osgood-sharp}
Let $F(a)=\lambda\{u:x_0(u)\le a\}$ and put $\beta(t):=\int_IH_t(u)\,\dd u$. For $W\equiv c$ and $\mu=r=0$, the state is
\[
X_t(u)=x_0(u)-c\beta(t).
\]
Fubini's theorem therefore gives
\[
\dot\beta(t)=F(c\beta(t)),
\qquad \beta(0)=0.
\]
Conversely, every absolutely continuous solution of this scalar equation reconstructs a hard solution through
\[
H_t(u)=\int_0^t\1_{\{x_0(u)\le c\beta(s)\}}\,\dd s.
\]
The scalar Osgood criterion proves \eqref{eq:indicator-rank-one-osgood-iff}. If the integral is finite, define
\[
\Phi(y):=\int_0^y\frac{\dd s}{F(cs)}.
\]
Since $F(cy)>0$ for every $y>0$ and $F\le1$, the function $\Phi$ is finite, strictly increasing, and unbounded. For every $\tau\in[0,T)$,
\[
\beta_\tau(t)=
\begin{cases}
0,&t\le\tau,\\
\Phi^{-1}(t-\tau),&t>\tau
\end{cases}
\]
is a solution on $[0,T]$. To realize a prescribed non-Osgood modulus $\omega$, extend its restriction near zero to a continuous distribution function on a bounded interval and take the generalized inverse as $x_0$. The zero-feedback solution then has $X_t=x_0$ and tube mass $\omega(a)$ near zero, while the delayed solutions give nonuniqueness.
\end{proof}

\begin{proof}[Proof of \Cref{prop:indicator-primitive-osgood}]
\label{proof:prop:indicator-primitive-osgood}
Fix an admissible profile $h$, a time $t$, a branch $I_j$, and an ordered
pair $u<v$ outside the null sets in
\eqref{eq:indicator-primitive-initial}--\eqref{eq:indicator-primitive-network}.
Since $0\le h_t\le t\le T$,
\begin{align*}
\sigma_j\bigl(X_t^h(v)-X_t^h(u)\bigr)
={}&\sigma_j\bigl(x_0(v)-x_0(u)\bigr)
+\sigma_jrt\bigl(R_W(v)-R_W(u)\bigr)\\
&-\sigma_j\int_I
\bigl(W(v,z)-W(u,z)\bigr)h_t(z)\,\dd z\\
\ge{}&\chi_j(v-u)-|r|T\abs{R_W(v)-R_W(u)}\\
&-T\norm{W(v,\cdot)-W(u,\cdot)}_{L^1(I)}\\
\ge{}&(1-\theta_j)\chi_j(v-u).
\end{align*}
The estimate and its null set are uniform over admissible $h$ and $t$.

Fix $q\in\R$ and $a\ge0$, and set
$E_j=\{u\in I_j:\abs{X_t^h(u)-q}\le a\}$.  For almost every ordered
pair $u<v$ in $E_j$,
\[
(1-\theta_j)\chi_j(v-u)
\le\abs{X_t^h(v)-X_t^h(u)}\le2a,
\]
so
\[
v-u\le\chi_j^{\leftarrow}\!\left(\frac{2a}{1-\theta_j}\right).
\]
If $E_j$ has positive measure, let $\alpha_j$ and $\beta_j$ be its
essential infimum and supremum.  For every $\varepsilon>0$, the portions of
$E_j$ within $\varepsilon$ of these two endpoints have positive measure.
The preceding almost-everywhere pairwise bound yields
\[
\beta_j-\alpha_j-2\varepsilon
\le\chi_j^{\leftarrow}\!\left(\frac{2a}{1-\theta_j}\right).
\]
Letting $\varepsilon\downarrow0$ and using
$\lambda(E_j)\le\beta_j-\alpha_j$ proves the branch estimate.  Summing over
$j$ gives \eqref{eq:indicator-primitive-uniform-tube}.

The capped inverse of a continuous strictly increasing function is
continuous, nondecreasing, zero at zero, and positive away from zero.  Thus
$\omega_{\rm pr}$ is an admissible tube modulus.  If
$M_W=\norm W_{L^\infty}>0$, then
\[
\int_{0+}\frac{\dd s}{\omega_{\rm pr}(M_Ws)}
=\frac1{M_W}\int_{0+}\frac{\dd a}{\omega_{\rm pr}(a)}=\infty.
\]
Apply \cref{prop:indicator-osgood}.  If $M_W=0$, the feedback is absent;
if $W\ge0$, existence follows from \cref{thm:indicator-greatest}.
\end{proof}

\begin{proof}[Proof of \Cref{cor:indicator-primitive-log-osgood}]
\label{proof:cor:indicator-primitive-log-osgood}
If
$\chi(s)\ge cs[\log(A/s)]^{-\gamma}$ for small $s$, choose $C_0>1/c$
and set $q(y)=C_0y[\log(B/y)]^\gamma$, with $B$ fixed sufficiently large.
For all small $y$, the point $q(y)$ lies in the valid range and
$\log(A/q(y))\le\log(B/y)$.  Hence
\[
\chi(q(y))
\ge cC_0y\frac{[\log(B/y)]^\gamma}
{[\log(A/q(y))]^\gamma}>y.
\]
Strict monotonicity gives
$\chi^{\leftarrow}(y)\le C_0y[\log(B/y)]^\gamma$.  Apply this estimate to
$2a/(1-\theta_j)$ and sum over the finitely many branches to obtain
\eqref{eq:indicator-primitive-log-tube}.  Since $\bar\gamma\le1$,
\[
\int_{0+}\frac{\dd a}{a[\log(A/a)]^{\bar\gamma}}=\infty.
\]
This proves the Osgood assertion and the stated consequences.
\end{proof}

\begin{proof}[Proof of \Cref{thm:indicator-osgood-stability}]
\label{proof:thm:indicator-osgood-stability}
Write $\Delta W:=W_1-W_2$ and
\[
A_t:=(x_0^1-x_0^2)+rt(R_{W_1}-R_{W_2})-(\Delta W)H_t^2.
\]
Jensen's inequality and $0\le H_t^2\le T$ give
\[
\norm{R_{W_1}-R_{W_2}}_{L^p}
\le2\norm{\Delta W}_{L^p(I^2)},
\qquad
\norm{(\Delta W)H_t^2}_{L^p}
\le T\norm{\Delta W}_{L^p(I^2)}.
\]
Consequently,
\begin{equation}
\label{eq:indicator-state-split-proof}
\sup_{t\le T}\norm{A_t}_{L^p}\le\delta_p,
\qquad
X_t^1-X_t^2=-W_1(H_t^1-H_t^2)+A_t.
\end{equation}
Set $d(t):=\norm{H_t^1-H_t^2}_{L^1}$. This function is absolutely continuous, and its derivative is bounded a.e. by the instantaneous indicator mismatch. Moreover,
\[
\abs{W_1(H_t^1-H_t^2)(u)}\le M_1d(t).
\]
For finite $p$, fix $\eta>0$. On $\{\abs{A_t}\le\eta\}$, an indicator mismatch implies $\abs{X_t^1}\le M_1d(t)+\eta$. The reference tube estimate and Markov's inequality therefore yield
\[
d'(t)
\le\omega(M_1d(t)+\eta)+\left(\frac{\delta_p}{\eta}\right)^p
\quad\text{a.e.}
\]
Taking the infimum over $\eta$ and using the trivial upper bound $1$ gives
\[
d'(t)\le\Omega_{p,\delta_p}(M_1d(t)).
\]
If $\delta_p=0$, \cref{prop:indicator-osgood} gives equality of the solutions; the generalized inverse in the statement is then zero. For $\delta_p>0$, the effective modulus is continuous, nondecreasing, and strictly positive, so its integral $\Phi_{p,\delta_p}$ is strictly increasing and unbounded. For $M_1>0$, scalar separation gives
\[
\int_0^{M_1d(t)}\frac{\dd s}{\Omega_{p,\delta_p}(s)}\le M_1t,
\]
which is \eqref{eq:indicator-osgood-H}. The state split gives \eqref{eq:indicator-osgood-X}, and the optimized mismatch estimate gives \eqref{eq:indicator-osgood-q}. For $p=\infty$, $\abs{A_t}\le\delta_\infty$ a.e., so $d'\le\omega(M_1d+\delta_\infty)$ and the same argument uses $\Phi_{\infty,\delta}$.

For every fixed $s>0$, $\Omega_{p,\delta}(s)\to\omega(s)$ as $\delta\downarrow0$. Hence Fatou's lemma and Osgood divergence give, for every $a>0$,
\[
\liminf_{\delta\downarrow0}
\int_0^a\frac{\dd s}{\Omega_{p,\delta}(s)}=\infty.
\]
It follows that $\Phi_{p,\delta}^{-1}(M_1T)\to0$. Choosing $\eta=\sqrt\delta$ in \eqref{eq:indicator-effective-modulus} also makes the indicator bound vanish. If $M_1=0$, the state split has no feedback term and directly gives the three elementary bounds stated in the theorem.
\end{proof}

\begin{proof}[Proof of \Cref{cor:indicator-explicit-log-osgood}]
\label{proof:cor:indicator-explicit-log-osgood}
Let $d(t)=\norm{H_t^1-H_t^2}_{L^1}$ and
$y(t)=M_1d(t)+q_{p,\delta}$.  For finite $p$, use the mismatch estimate
from the preceding proof with the time-dependent choice $\eta=y(t)$.
Because $M_1d(t)=y(t)-q_{p,\delta}\le y(t)$,
\[
M_1d(t)+\eta\le2y(t),
\qquad
\left(\frac{\delta}{y(t)}\right)^p
=\frac{q_{p,\delta}^{p+1}}{y(t)^p}\le y(t).
\]
As long as $2y(t)\le a_0$, \eqref{eq:indicator-log-osgood-upper} and
$\Lambda(y)\ge1$ imply
\[
d'(t)\le(2C_\omega+1)y(t)\Lambda(y(t))^\gamma,
\]
and therefore
\begin{equation}
\label{eq:indicator-log-y-proof}
y'(t)\le Ky(t)\Lambda(y(t))^\gamma,
\qquad y(0)=q_{p,\delta}.
\end{equation}
For $p=\infty$, the same estimate follows directly from
$d'(t)\le\omega(M_1d(t)+\delta)$.

For equality in \eqref{eq:indicator-log-y-proof}, put
$S(t)=\Lambda(y(t))$.  Then $S'=-KS^\gamma$, and solving this scalar
equation gives \eqref{eq:indicator-explicit-log-flow}.  The assumed bound
at $T$ closes the local bootstrap.  The state split yields
\eqref{eq:indicator-explicit-log-X}.  Substituting $\eta=y(t)$ in the
mismatch estimate and using that $a\mapsto a\Lambda(a)^\gamma$ is
increasing on the local interval gives
\eqref{eq:indicator-explicit-log-q}.  Finally, for $0\le\gamma<1$,
\[
\mathcal Q_{\gamma,K,T}(q)
=q\exp\!\left(KT[\log(A/q)]^\gamma(1+o(1))\right)
=q^{1-o(1)},
\]
while the $\gamma=1$ formula is exact.  These give the stated rates.
\end{proof}

\begin{proof}[Proof of \Cref{thm:indicator-one-sided-stability}]
\label{proof:thm:indicator-one-sided-stability}
Use \eqref{eq:indicator-state-split-proof} and put $d_t:=\norm{H_t^1-H_t^2}_{L^1}$. Then
\[
\sup_t\norm{A_t}_{L^p}\le\delta_p,
\qquad |W_1(H_t^1-H_t^2)|\le M_1d_t.
\]

Fix $\eta>0$. On $\{|A_t|\le\eta\}$, an indicator mismatch implies
\[
\abs{X_t^1}\le M_1d_t+\eta.
\]
The reference tube estimate and Markov's inequality therefore yield
\begin{equation}
\label{eq:indicator-mismatch-proof}
\norm{\1_{\{X_t^1\le0\}}-\1_{\{X_t^2\le0\}}}_{L^1}
\le L(M_1d_t+\eta)+\frac{\delta_p^p}{\eta^p}.
\end{equation}
For $\delta_p>0$, choose $\eta=\delta_p^{p/(p+1)}$. Since
\[
d_t\le\int_0^t
\norm{\1_{\{X_s^1\le0\}}-\1_{\{X_s^2\le0\}}}_{L^1}\,\dd s,
\]
Gronwall gives
\[
\sup_{t\le T}d_t
\le(L+1)T e^{LM_1T}\delta_p^{p/(p+1)}.
\]
Substitution into \eqref{eq:indicator-mismatch-proof} gives the indicator bound. The state split gives
\[
\sup_{t\le T}\norm{X_t^1-X_t^2}_{L^p}
\le\delta_p+M_1\sup_{t\le T}d_t,
\]
which proves the state bound in $L^p$. If $\delta_p=0$, let $\eta\downarrow0$. For $p=\infty$, $\abs{A_t}\le\delta_\infty$ a.e., so the same argument gives linear bounds.

To prove optimality for all three estimates, take $W_1=W_2\equiv c>0$, $\mu=r=0$, $x_0^1(u)=u$, and fix $T<1/c$. Then $H^1=0$, $X^1(u)=u$, and the tube estimate holds with $L=1$. For $0<b<1$ set
\[
x_0^{2,b}(u)=
\begin{cases}
-u,&0<u<b,\\
u,&b\le u\le1
\end{cases}.
\]
Then $H_t^{2,b}=t\1_{(0,b)}$ on $[0,T]$: the perturbed state is negative on $(0,b)$, while for $u\ge b$ it is at least $b(1-cT)>0$. Consequently,
\[
\norm{H_t^{2,b}-H_t^1}_{L^1}=bt,
\qquad
\norm{x_0^{2,b}-x_0^1}_{L^p}
=\frac{2}{(p+1)^{1/p}}b^{1+1/p}.
\]
The indicator mismatch is $b$, and
\[
\norm{X_t^{2,b}-X_t^1}_{L^p}
\ge\norm{X_t^{2,b}-X_t^1}_{L^1}\ge cbt(1-b).
\]
The preceding upper bounds show that, at any fixed $t>0$, all three errors are of order $b\asymp\delta_p^{p/(p+1)}$.
\end{proof}

\begin{proof}[Proof of \Cref{thm:indicator-sampled-label}]
\label{proof:thm:indicator-sampled-label}
Choose a common full-measure set of labels on which the hard-solution identities hold for every $t\in[0,T]$. Such a set exists by first taking rational times and then using time continuity. On this set, $H$ is $1$-Lipschitz and $X$ is $L_T$-Lipschitz, where
\[
L_T:=|\mu|+2|r|M_W+M_W.
\]
Almost surely, every sampled label lies in this set. The upper finite iterates are Borel functions of the sample, and their monotone convergence is uniform in time. Thus the canonical path is a Borel map into $C([0,T];\mathbb R^N)$.

Write $R_i^N:=N^{-1}\sum_j(e_{ij}^N-e_{ji}^N)$ and define
\[
\rho_i^N(t):=rt\bigl(R_i^N-R_W(U_i)\bigr)
-\left[\frac1N\sum_je_{ij}^NH_t(U_j)
-\int_IW(U_i,v)H_t(v)\,\dd v\right].
\]
For $j\ne i$, put
\[
\zeta_{ij}(t):=rt\bigl(W(U_i,U_j)-W(U_j,U_i)\bigr)
-W(U_i,U_j)H_t(U_j).
\]
Conditional on $U_i$, these variables are independent, uniformly bounded, and uniformly Lipschitz in time. Their common conditional mean is $m_i(t):=\E[\zeta_{ij}(t)\mid U_i]$, and
\[
\rho_i^N(t)=\frac1N\sum_{j\ne i}
\left(\zeta_{ij}(t)-m_i(t)\right)-\frac{m_i(t)}N.
\]
The last term is the $O(N^{-1})$ bias from omitting the diagonal. Hoeffding's inequality \cite{hoeffding1963} on a time grid of mesh $\Delta_N\le N^{-1}$, a union bound over the rows and grid points, and time interpolation give
\begin{equation}
\label{eq:indicator-row-residual-proof}
\rho_N:=\max_i\sup_{t\le T}|\rho_i^N(t)|
\le C\left(\sqrt{\frac{\log N+z}{N}}+\frac1N\right)
\end{equation}
outside an event of probability at most $Ce^{-cz}$.

Let $F_{N,s}$ and $F_s$ be the empirical and population distribution functions of $X_s(U)$ at a grid time $s$. The Dvoretzky--Kiefer--Wolfowitz inequality \cite{massart1990} and a union bound give, on a common event of the same probability,
\[
\max_{s\text{ on the grid}}\sup_x|F_{N,s}(x)-F_s(x)|
\le\epsilon_N,
\qquad
\epsilon_N\le C\sqrt{\frac{\log N+z}{N}}.
\]
Set $h_N=L_T\Delta_N$. For any $t$, choose a grid point $s$ with $|t-s|\le\Delta_N$. Then, for every $a\ge0$,
\begin{equation}
\label{eq:indicator-empirical-tube-alltime}
\frac1N\sum_i\1_{\{|X_t(U_i)|\le a\}}
\le\frac1N\sum_i\1_{\{|X_s(U_i)|\le a+h_N\}}
\le\omega(a+h_N)+2\epsilon_N.
\end{equation}
The factor $2$ accounts for the two endpoints of the tube. Increase $C_0$ in \eqref{eq:indicator-sampled-kappa} so that
\[
\rho_N+h_N\le\kappa_N(z),\qquad 2\epsilon_N\le\kappa_N(z).
\]
Subtracting the state equations yields
\[
|X_i^{N,+}(t)-X_t(U_i)|\le M_WD_N(t)+\rho_N.
\]
An indicator mismatch lies in the target tube of this radius. Applying \eqref{eq:indicator-empirical-tube-alltime} and integrating gives
\[
D_N(t)\le\int_0^t
\left[\omega\!\left(M_WD_N(s)+\kappa_N(z)\right)+\kappa_N(z)\right]\,\dd s.
\]
Scalar comparison proves \eqref{eq:indicator-sampled-osgood-H}--\eqref{eq:indicator-sampled-osgood-X}. The event depends only on the sampled target paths, so the same argument applies to every finite hard solution.

For the state laws, couple the finite and sampled-target states through their labels. Their Wasserstein distance is bounded by \eqref{eq:indicator-sampled-osgood-X}. The target states lie in $[-B_T,B_T]$, with $B_T:=\norm{x_0}_{L^\infty}+TL_T$. At grid times,
\[
\Wone\left(\frac1N\sum_i\delta_{X_s(U_i)},\Law(X_s(U))\right)
\le2B_T\epsilon_N.
\]
Time interpolation adds at most $2h_N$, proving \eqref{eq:indicator-sampled-osgood-W1}.

For the threshold fraction, write $q(t):=\lambda\{X_t\le0\}$. Bracketing $\{X_t\le0\}$ by $\{X_s\le-h_N\}$ and $\{X_s\le h_N\}$ gives
\[
\sup_{t\le T}\left|\frac1N\sum_i\1_{\{X_t(U_i)\le0\}}-q(t)\right|
\le\omega(h_N)+\epsilon_N.
\]
Adding the finite-versus-target mismatch from \eqref{eq:indicator-empirical-tube-alltime}, and using $\rho_N+h_N\le\kappa_N(z)$, gives \eqref{eq:indicator-sampled-osgood-fraction}. These supremum errors are measurable. For the threshold fraction, the tube bound makes $q$ continuous, while the finite closed-threshold count is upper semicontinuous in time and in the state path; taking the supremum on the compact time interval, separately for the two signs of the error, gives Borel functionals.

Finally, for $M_W>0$, separation of variables gives
\[
\int_\kappa^{\Gamma_\kappa(T)}\frac{\dd a}{\omega(a)+\kappa}=M_WT.
\]
The Osgood integral forces $\Gamma_\kappa(T)\to0$. Conversely, if that integral is finite, choose $b>0$ with $\int_0^b\dd a/\omega(a)<M_WT$; then $\Gamma_\kappa(T)>b$ for every sufficiently small $\kappa$. When $M_W=0$, $\mathcal R_\kappa(T)=T[\omega(\kappa)+\kappa]\to0$. For $\omega(a)=1\wedge La$, linear comparison gives $\mathcal R_\kappa(T)\le C_T\kappa$.
\end{proof}

\begin{proof}[Proof of \Cref{prop:indicator-obstructions}]
\label{proof:prop:indicator-obstructions}
For (i), let $W\equiv w>0$, $x_0=0$, $r=0$, and $0<\mu<w$. Both
\[
H_t\equiv0,\quad X_t=\mu t,
\qquad\text{and}\qquad
H_t\equiv t,\quad X_t=(\mu-w)t
\]
are hard solutions.

For (ii), take $W\equiv w>0$, $\mu=r=0$, and $x_0=0$. Since $H\ge0$, the equation forces $H_t=t$ and $X_t=-wt$, so the hard solution is unique. If $x_0^n\equiv1/n$, activation would require $H$ to reach $1/(nw)$ while its derivative is still zero; hence the unique solution is $H_t^n=0$ and $X_t^n=1/n$. Thus both cumulative distress and state have an order-one jump despite uniform convergence of the initial profiles.

For (iii), take $W\equiv1$, $\mu=r=0$, and $x_0(u)=u^2$. With $\beta(t)=\int_IH_t(u)\,\dd u$, the equation becomes
\[
\dot\beta(t)=\lambda\{u:u^2\le\beta(t)\}=\sqrt{\beta(t)},
\qquad \beta(0)=0,
\]
as long as $\beta\le1$. For every delay $\tau\in[0,T)$, a solution on $[0,T]$ is
\[
\beta_\tau(t)=
\begin{cases}
0,&t\le\tau,\\
(t-\tau)^2/4,&\tau<t\le\tau+2,\\
t-\tau-1,&t>\tau+2.
\end{cases}
\]
The last branch uses $\dot\beta=1$ after all institutions have crossed the threshold. The initial law is atomless, but its square-root tube modulus fails the Osgood condition.

For (iv), take $W\equiv-1$, $x_0=0$, $r=0$, and $\mu=-1/2$. The state is independent of $u$, so every solution would satisfy
\[
H'(t)=\1_{\{H(t)-t/2\le0\}}.
\]
For $y(t)=H(t)-t/2$ this reads
\[
y'=\frac12\ \text{on }\{y\le0\},
\qquad
y'=-\frac12\ \text{on }\{y>0\}.
\]
An absolutely continuous function has derivative zero a.e. on the level set $\{y=0\}$, so that set must be null. The chain rule then gives $(\abs y)'=-1/2$ a.e. and hence $\abs{y(t)}=-t/2$, a contradiction. Thus no classical hard solution exists.
\end{proof}

\begin{remark}[Relaxed signed-kernel closure]
For a bounded signed kernel, Lipschitz ramps admit a natural compact closure in the relaxed occupation-rate formulation
\[
q_t(u)\in
\begin{cases}
\{1\},&X_t(u)<0,\\
[0,1],&X_t(u)=0,\\
\{0\},&X_t(u)>0,
\end{cases}
\qquad
H_t=\int_0^tq_s\,\dd s,
\qquad X=x_0+\mu t+rtR_W-WH.
\]
The rates $q^\varepsilon$ have a weak-star convergent subsequence in $L^\infty$ with limit $q$. Set $H_t=\int_0^tq_s\,\dd s$. The paths $H^\varepsilon$ are uniformly bounded and Lipschitz in time in $L^2(I)$, and converge weakly to $H_t$ at each time. The Hilbert--Schmidt operator $W$ maps each bounded set into a relatively compact set, so Arzel\`a--Ascoli gives a further subsequence with $WH^\varepsilon\to WH$ in $C([0,T];L^2(I))$. Hence $X^\varepsilon\to X$ strongly, and almost everywhere after another subsequence. Away from $X=0$, the ramp rates converge to the hard indicator; on $X=0$, their weak-star limit remains in $[0,1]$. This proves the stated inclusion. If the space--time contact set has measure zero, the limit is a classical hard solution. In the signed counterexample above, the positive-side ramp gives
\[
X_t^\varepsilon=\frac{\varepsilon}{2}(1-e^{-t/\varepsilon}),
\qquad H_t^\varepsilon=\frac t2+X_t^\varepsilon.
\]
Its limit is the relaxed path $H_t=t/2$, $X_t=0$.
\end{remark}

\section{Sampling and discretization checks}
\label{sec:appendix-robustness}

This appendix reports three numerical checks: i.i.d.\ sampling for the indicator-loss examples, time-step and spatial-grid refinement, and an empirical convergence-rate comparison for the fixed-rank finite-$N$ indicator theorem.

\subsection{Independent sampling and quantile matching}
\label{sec:appendix-robustness-mc}

For each exact low-rank architecture we sample complete bank types i.i.d.\ from the corresponding group mixture, simulate the finite-$N$ indicator-loss system at $N=1600$, and repeat the experiment $200$ times. \Cref{fig:mc-benchmark} shows the Monte Carlo mean trajectory and the reported $95\%$ sampling band, together with the terminal distributions across all four examples. \Cref{tab:mc-summary} reports the corresponding terminal statistics.

\begin{table}[H]
\centering
\footnotesize
\renewcommand{\arraystretch}{1.14}
\setlength{\tabcolsep}{1pt}
\caption{Terminal threshold fractions from $200$ i.i.d.\ replications at $N=1600$, with reported $95\%$ sampling ranges. The deterministic benchmark is the limiting path.}
\label{tab:mc-summary}
\begin{tabularx}{\textwidth}{>{\raggedright\arraybackslash}X>{\centering\arraybackslash}p{0.13\textwidth}>{\centering\arraybackslash}p{0.09\textwidth}>{\centering\arraybackslash}p{0.09\textwidth}>{\centering\arraybackslash}p{0.16\textwidth}}
\toprule
Example & \shortstack{Deterministic\\limit} & MC mean & MC median & 95\% sampling range \\
\midrule
Rank-one generalized mean field & 0.222 & 0.232 & 0.213 & [0.075,\,0.532] \\
Core-periphery & 0.283 & 0.286 & 0.275 & [0.163,\,0.467] \\
Multiple CCPs with overlap & 0.473 & 0.466 & 0.457 & [0.149,\,0.744] \\
Multiplex bank--NBFI network & 0.393 & 0.413 & 0.390 & [0.146,\,0.788] \\
\bottomrule
\end{tabularx}
\end{table}

\begin{figure}[H]
\centering
\includegraphics[width=\textwidth]{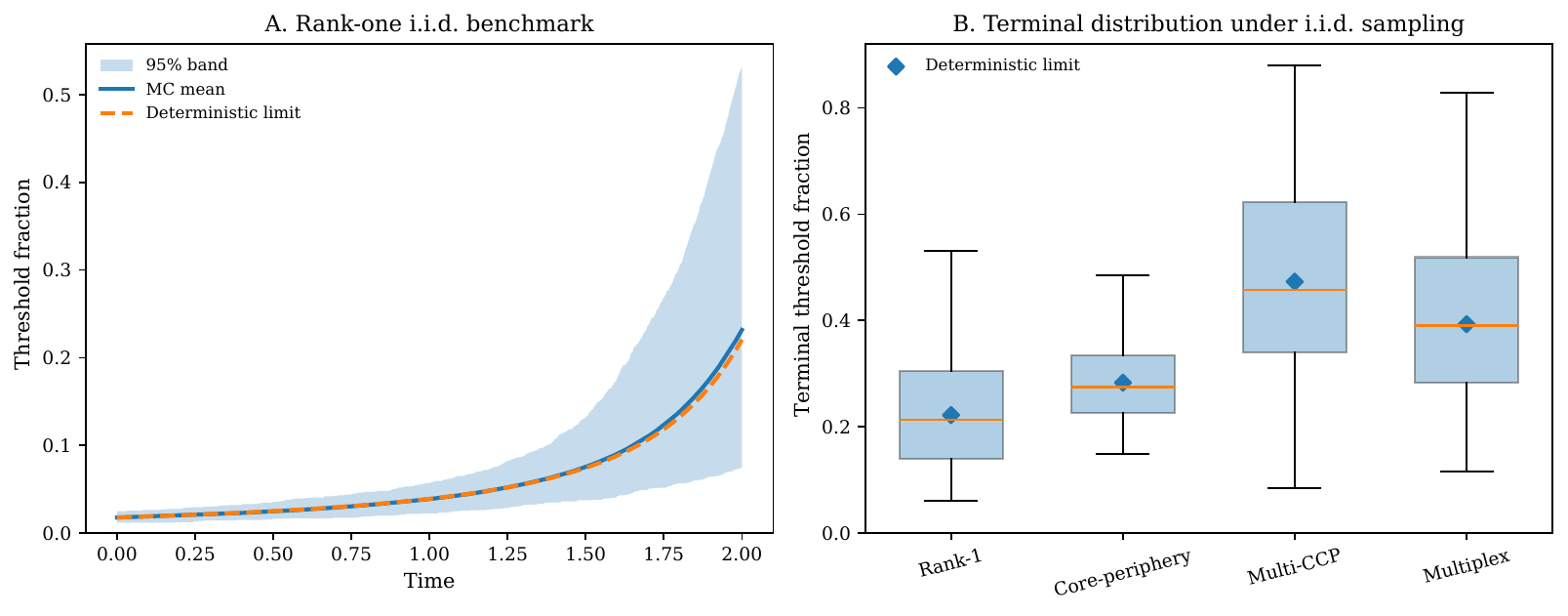}
\caption{Monte Carlo results for i.i.d.\ sampling in the exact low-rank indicator-loss experiments. Left: in the rank-one example, the deterministic limit is close to the Monte Carlo mean trajectory, while the $95\%$ band records substantial finite-sample dispersion. Right: boxplots of the terminal threshold fraction across the four exact low-rank examples, with the deterministic benchmark marked by diamonds.}
\label{fig:mc-benchmark}
\end{figure}

\subsection{Time-step and spatial-grid refinement}

\Cref{tab:discretization-checks} collects the discretization checks. For the exact low-rank examples, halving the explicit-Euler step from $\Delta t=5\times 10^{-4}$ to $\Delta t=2.5\times 10^{-4}$ changes the deterministic threshold-fraction paths by at most $4.63\times 10^{-4}$ and the quantile-matched finite-$N$ paths by at most $1.25\times 10^{-3}$. For the graphon truncation experiment, changing $\Delta t$ from $2\times10^{-3}$ to $10^{-3}$ changes the 30-mode reference state by $3.66\times 10^{-4}$ in $\sup_t\|\cdot\|_{L^1}$. Write $\bar\ell_{\Delta t}(t):=\int_I\ell_\varepsilon(X_{\Delta t}(t,u))\,\dd u$ for the mean smoothed-distress fraction. We retain $\Delta t=2\times 10^{-3}$ in the graphon figures.

For a spatial grid of size $n$, let $B_m^{(n)}(t)$ be the Euler feedback coefficients, including the weights $\sigma_m$. Reconstruct the 30-mode state map as
\[
X_n(t,u)=x_0(u)+\mu t-\sum_{m=1}^{30}a_m(u)B_m^{(n)}(t).
\]
We compare both maps at $12000$ common midpoint labels and approximate their $L^1$ distance by the mean absolute difference. This gives $6.93\times10^{-8}$ for grids $1200$ and $2000$, and $1.13\times10^{-8}$ for grids $2000$ and $2400$, both below the smallest reported truncation state error $2.80\times10^{-3}$ at $K=24$. The corresponding instantaneous threshold-fraction differences are at the $10^{-3}$ scale because thresholding is performed on different grids. For the piecewise-smooth indicator example, the threshold-fraction difference between the $1200$- and $2000$-point grids is $5.0\times10^{-4}$. On the $2000$-point grid, halving the time step from $0.002$ to $0.001$ gives a state $L^1$ discrepancy of $3.33\times10^{-5}$ and a threshold-fraction discrepancy of $5.0\times10^{-4}$, measured at the common time levels.

\begin{table}[H]
\centering
\small
\renewcommand{\arraystretch}{1.16}
\setlength{\tabcolsep}{5pt}
\caption{Discretization diagnostics for the numerical experiments. The exact low-rank rows compare $\Delta t=5\times 10^{-4}$ against $\Delta t=2.5\times 10^{-4}$. The graphon rows compare both time-step and spatial-grid refinements for the $30$-mode reference used in \cref{fig:numerics-3}, and the last row reports the additional grid check for the non-factorized indicator example of \cref{sec:numerics-piecewise-indicator}.}
\label{tab:discretization-checks}
\begin{tabularx}{\textwidth}{>{\raggedright\arraybackslash}p{0.29\textwidth}>{\raggedright\arraybackslash}X>{\centering\arraybackslash}p{0.18\textwidth}}
\toprule
Check & Metric & Value \\
\midrule
Exact low-rank limit & $\max \sup_t |h_{5\times 10^{-4}}(t)-h_{2.5\times 10^{-4}}(t)|$ across the four examples & $4.63\times 10^{-4}$ \\
Exact low-rank finite $N=1600$ & $\max \sup_t |h^N_{5\times 10^{-4}}(t)-h^N_{2.5\times 10^{-4}}(t)|$ across the four examples & $1.25\times 10^{-3}$ \\
Graphon $30$-mode reference & $\sup_t \|X_{0.002}(t)-X_{0.001}(t)\|_{L^1}$ & $3.66\times 10^{-4}$ \\
Graphon $30$-mode reference & $\sup_t |\bar\ell_{0.002}(t)-\bar\ell_{0.001}(t)|$ for the smoothed distress fraction & $2.06\times 10^{-3}$ \\
Grid refinement & $\sup_t \|X_{1200}(t)-X_{2000}(t)\|_{L^1}$ & $6.93\times 10^{-8}$ \\
Grid refinement & $\sup_t \|X_{2000}(t)-X_{2400}(t)\|_{L^1}$ & $1.13\times 10^{-8}$ \\
Grid refinement, threshold fraction & $\sup_t |h_{1200}(t)-h_{2000}(t)|$ & $1.33\times 10^{-3}$ \\
Grid refinement, threshold fraction & $\sup_t |h_{2000}(t)-h_{2400}(t)|$ & $1.17\times 10^{-3}$ \\
Piecewise-smooth indicator example & $\sup_t |h^{\mathrm{ind}}_{1200}(t)-h^{\mathrm{ind}}_{2000}(t)|$ & $5.0\times 10^{-4}$ \\
\bottomrule
\end{tabularx}
\end{table}

\begin{table}[H]
\centering
\small
\renewcommand{\arraystretch}{1.14}
\setlength{\tabcolsep}{4pt}
\caption{Per-example time-step sensitivity for the exact low-rank $N=1600$ experiments. The first data column repeats the $N=1600$ finite-$N$ path error from \cref{tab:numerics}; the second and third columns show, respectively, the sup-norm and terminal discrepancies induced by halving the exact low-rank time step from $5\times 10^{-4}$ to $2.5\times 10^{-4}$.}
\label{tab:exact-dt-by-example}
\begin{tabularx}{\textwidth}{>{\raggedright\arraybackslash}X>{\centering\arraybackslash}p{0.16\textwidth}>{\centering\arraybackslash}p{0.18\textwidth}>{\centering\arraybackslash}p{0.18\textwidth}}
\toprule
Example & finite-$N$ path error at $N=1600$ & $\sup_t |h^N_{5\times 10^{-4}}(t)-h^N_{2.5\times 10^{-4}}(t)|$ & $|h^N_{5\times 10^{-4}}(T)-h^N_{2.5\times 10^{-4}}(T)|$ \\
\midrule
Rank-one generalized mean field & $1.64\times 10^{-3}$ & $1.25\times 10^{-3}$ & $6.25\times 10^{-4}$ \\
Core-periphery & $2.30\times 10^{-3}$ & $1.25\times 10^{-3}$ & $6.25\times 10^{-4}$ \\
Multiple CCPs with overlap & $1.36\times 10^{-3}$ & $1.25\times 10^{-3}$ & $6.25\times 10^{-4}$ \\
Multiplex bank--NBFI network & $6.22\times 10^{-4}$ & $1.25\times 10^{-3}$ & $0$ \\
\bottomrule
\end{tabularx}
\end{table}

The finite-$N$ indicator paths have jump size $1/N=6.25\times10^{-4}$. Shifts in threshold-crossing times can therefore produce one- or two-jump uniform discrepancies while leaving terminal fractions almost unchanged. The errors in \cref{tab:numerics} compare paths at the stated time step. In the multiplex example, the step-refinement discrepancy exceeds the reported finite-$N$ error.

\subsection{Empirical fixed-rank indicator convergence rate}
\label{sec:appendix-poc-rate}

To complement \cref{thm:indicator-finiteN}, we estimate the sampled finite-$N$ feedback error empirically in the rank-one benchmark. For $N\in\{100,200,400,800,1600,3200\}$ we run $200$ independent i.i.d.\ replications, compute
\[
\widehat e(N):=\frac1{200}\sum_{m=1}^{200} \sup_{0\le t\le T}\abs{\beta_t^{N,(m)}-\beta_t},
\]
and compare the resulting log--log slope with the $\sqrt{\log N/N}$ scale in the upper bound of \cref{thm:indicator-finiteN} for fixed rank. The refinement tables report threshold-fraction errors; resolving time-discretization bias in this feedback metric requires a separate refinement of $\beta$.

\begin{figure}[H]
\centering
\includegraphics[width=0.78\textwidth]{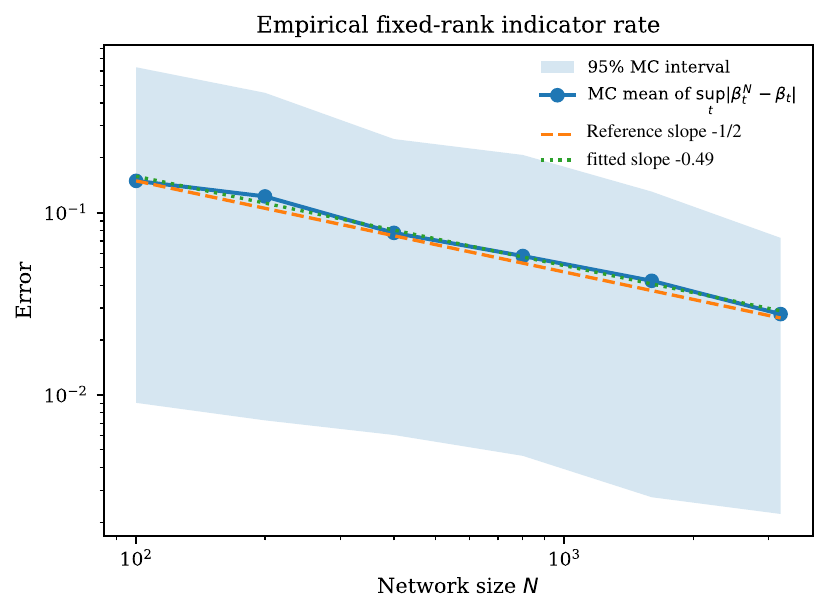}
\caption{Empirical convergence-rate check for the fixed-rank indicator theorem in the rank-one benchmark. The vertical axis reports the Monte Carlo mean of $\sup_{0\le t\le T}|\beta_t^N-\beta_t|$ over $200$ i.i.d.\ replications at each $N$; the shaded band shows the reported $95\%$ sampling range. The fitted log--log slope is approximately $-0.49$, consistent with the $\sqrt{\log N/N}$ upper-bound scale in \cref{thm:indicator-finiteN} when $K$ is fixed.}
\label{fig:indicator-poc-rate}
\end{figure}

\section{Additional examples and numerical material}
\label{app:additional}

This appendix contains additional applications and numerical diagnostics: further interpretations of the factor construction, the multiple-CCP and multiplex experiments, the graphon truncation experiment, and a sensitivity analysis for the transversality margin.

\subsection{Beyond banking: other applications}
\label{sec:other-applications}

The same state and factor construction has several interpretations in other dense networks. In insurance and reinsurance networks, $X$ may represent a solvency buffer and the kernel may encode ceded-risk or retrocession exposures. In trade-credit supply chains, $X$ may represent working capital and the factors may correspond to dominant tiers or platform intermediaries. In centrally cleared commodity and energy markets, a small number of clearing venues or margin channels yields a factor structure analogous to the multi-CCP model.

\subsection{Multiple-CCP and multiplex examples}
\label{app:extra-examples}

\begin{figure}[t]
\centering
\includegraphics[width=\textwidth]{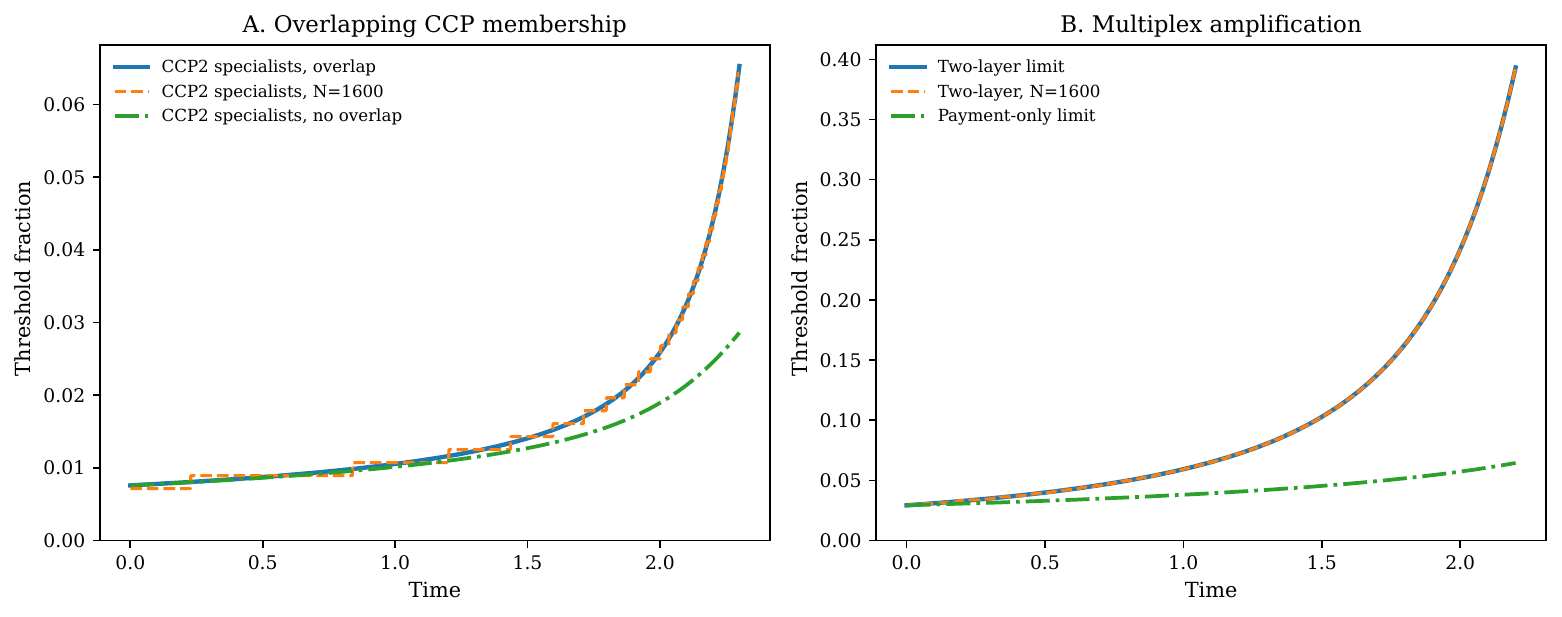}
\caption{Multiple-CCP and multiplex examples. Left: threshold-fraction paths with overlapping CCP membership and in the no-overlap benchmark. Right: paths for the two-layer bank--NBFI model and the payment-only rank-one benchmark.}
\label{fig:numerics-2}
\end{figure}

The left panel of \cref{fig:numerics-2} isolates overlap-induced cross-venue transmission in the multi-CCP example. Under the overlap specification, the terminal threshold fraction of CCP2 specialists is $6.6\%$, whereas the benchmark with neither overlap nor dual membership gives $2.9\%$. The difference reflects transmission through shared clearing membership.

In the multiplex example, the terminal threshold fraction is $6.5\%$ for the payment-only rank-one benchmark and $39.3\%$ for the two-layer model. The additional funding factor strengthens the feedback.

\subsection{The graphon truncation experiment}
\label{app:graphon-truncation}

\begin{figure}[t]
\centering
\includegraphics[width=\textwidth]{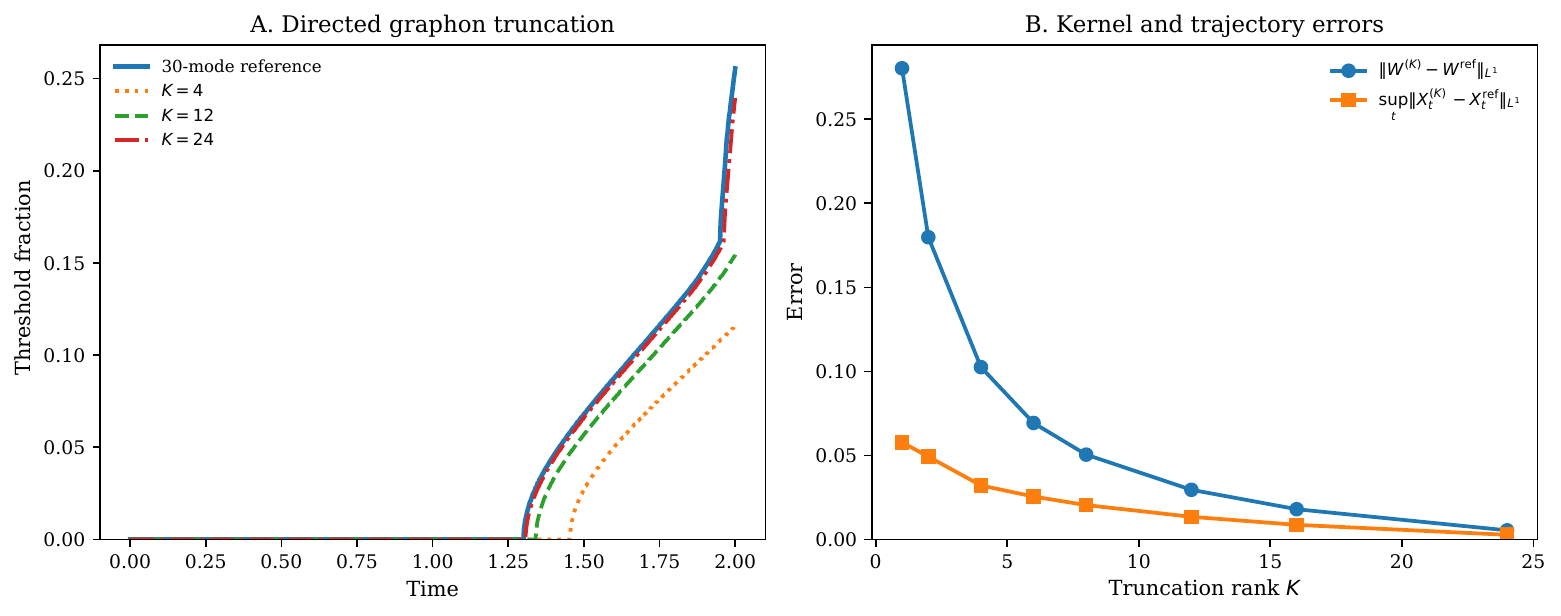}
\caption{Graphon truncation for a finite directed 30-mode reference generated from the infinite-series construction. Left: threshold-fraction paths for the reference and rank-$K$ truncations on a 2000-point grid. Right: kernel $L^1$ errors and pathwise state errors. The comparison illustrates the $L^1$-stability component of \cref{thm:graphon-wellposed}.}
\label{fig:numerics-3}
\end{figure}

The dynamics use the smoothed loss $\ell_\varepsilon$, and all displayed kernel errors are measured against the finite 30-mode reference. At $K=4,12,24$, those errors are $1.03\times10^{-1}$, $2.96\times10^{-2}$, and $5.44\times10^{-3}$; the corresponding pathwise state errors are $3.22\times10^{-2}$, $1.35\times10^{-2}$, and $2.80\times10^{-3}$. The infinite-series tail beyond mode 30 has exact $L^1$ mass $0.35\,\zeta(1.8,31)=2.8412548\times10^{-2}$, which must be added when measuring error against the infinite kernel. Relative to the infinite kernel, the rank-$4$, rank-$12$, and rank-$24$ kernel errors are $0.130959933$, $0.057980716$, and $0.033853361$. The common-label grid-refinement errors in \cref{tab:discretization-checks} remain below the reported state-truncation errors.

\subsection{Sensitivity to the transversality margin}
\label{sec:numerics-piecewise-sensitivity}

The sufficient condition in \cref{prop:graphon-indicator-transverse} becomes more demanding as the transversality margin
\[
m = m_0-CA_1-|r|T(A_1+A_2)
\]
shrinks. We evaluate this dependence numerically by rerunning the non-factorized experiment with the kernel fixed, $r=0$, $C=T=2$, and three branchwise slope configurations. In each case the left branch slope exceeds the right branch slope by $0.20$, the jump $x_0(\frac12+)-x_0(\frac12-)=0.025$ is kept fixed, and the smoothing error is measured at $\varepsilon=0.01$. Since $A_1\approx 0.320$, the theorem-level threshold is $CA_1\approx 0.640$.

\begin{table}[H]
\centering
\small
\renewcommand{\arraystretch}{1.12}
\setlength{\tabcolsep}{6pt}
\caption{Sensitivity of the non-factorized indicator experiment to the transversality margin. All three regimes satisfy the sufficient condition in \cref{prop:graphon-indicator-transverse}, while the margin $m_0-CA_1$ becomes progressively smaller.}
\label{tab:piecewise-sensitivity}
\begin{tabularx}{\textwidth}{>{\raggedright\arraybackslash}X>{\centering\arraybackslash}p{0.12\textwidth}>{\centering\arraybackslash}p{0.17\textwidth}>{\centering\arraybackslash}p{0.18\textwidth}>{\centering\arraybackslash}p{0.18\textwidth}}
\toprule
Regime & $m_0$ & $m_0-CA_1$ & terminal threshold fraction & $\sup_{0\le t\le T}\|X_t^{\varepsilon}-X_t^{\rm ind}\|_{L^1}$ at $\varepsilon=0.01$ \\
\midrule
Baseline & 1.75 & 1.11 & 0.2195 & $2.35\times 10^{-3}$ \\
Moderate & 0.90 & 0.26 & 0.4915 & $4.86\times 10^{-3}$ \\
Near-critical & 0.70 & 0.06 & 0.6035 & $5.34\times 10^{-3}$ \\
\bottomrule
\end{tabularx}
\end{table}

The smoothing discrepancy increases as the margin decreases (\cref{tab:piecewise-sensitivity}), in line with the $J/m$ density bound. The change also alters the initial profile and terminal distressed fraction, so the experiment measures their combined effect.

\section*{Acknowledgments}

The author thanks Professors Jin Ma and Jianfeng Zhang for their guidance on his Ph.D. thesis and on the development of this work, and Professor Zimu Zhu for helpful discussions.

\end{document}